\newtheorem{theorem}{Theorem}[section]
\newtheorem{corollary}[theorem]{Corollary}
\newtheorem{proposition}[theorem]{Proposition}
\newtheorem{heuristic algorithm}{Heuristic Algorithm}
\newtheorem{lemma}[theorem]{Lemma}
\newtheorem{definition}{Definition}
\newtheorem{claim}[theorem]{Claim}
\newtheorem{observation}[theorem]{Observation}
\newtheorem{fact}{Fact}
\newtheorem{approximation algorithm}{Approximation Algorithm}
\newcommand{\poly}{\mathsf{poly}}
\newcommand{\dilation}{\mbox{\tt d}}
\newcommand{\congestion}{\mbox{\tt c}}
\newcommand{\congedge}{\eta}
\newcommand{\congest}{${\mathsf{CONGEST}}$}
\newcommand{\NonTreeCover}{\mathsf{NonTreeCover}}
\newcommand{\TreeCover}{\mathsf{TreeCover}}
\newcommand{\GraphCover}{\mathsf{CycleCover}}
\def\cA{{\mathcal A}}
\def\cC{{\mathcal C}}
\newcommand{\depth}{\mathsf{Depth}}
\newcommand{\ceil}[1]{\ensuremath{\left\lceil#1\right\rceil}}
\newcommand{\floor}[1]{\ensuremath{\left\lfloor#1\right\rfloor}}
\newcommand{\notS}{\overline{S}}
\newcommand{\cset}{\mathcal{C}}
\newcommand{\fset}{\mathcal{F}}
\newcommand{\tset}{\mathcal{T}}
\newcommand{\set}[1]{\{#1\}}
\newcommand{\pset}{{\mathcal{P}}}
\newcommand{\rset}{{\mathcal{R}}}
\newcommand{\qset}{{\mathcal{Q}}}
\newcommand{\mset}{{\mathcal{M}}}
\newcommand{\tqset}{\tilde{\mathcal{Q}}}
\newenvironment{proofof}[1]{\noindent{\bf Proof of #1.}}%
{\hfill\stopproof}
\renewenvironment{proof}{\par \smallskip{\bf Proof:}}{\hfill\stopproof}
\newcommand{\expect}[2][]{\text{\bf E}_{#1}\left [#2\right]}
\newcommand{\prob}[2][]{\text{\bf Pr}_{#1}\left [#2\right]}
\newcommand{\eps}{\epsilon}
\newenvironment{properties}[2][0]
{
	\begin{enumerate} \setcounter{enumi}{#1}}{\end{enumerate}}
\newcounter{note}[section]
\newcommand{\mynote}[2][red]{\textcolor{#1}{\sc\bf{[#2]}}}
\def\stopproof{\square}
\def\square{\vbox{\hrule height.2pt\hbox{\vrule width.2pt height5pt \kern5pt
\vrule width.2pt} \hrule height.2pt}}
\newtheorem*{rep@theorem}{\rep@title}
\newcommand{\newreptheorem}[2]{%
	\newenvironment{rep#1}[1]{%
		\def\rep@title{#2 \ref{##1}}%
		\begin{rep@theorem}}%
		{\end{rep@theorem}}}
\newcommand{\mail}{\operatorname{mail}}
\newcommand{\Mail}{\operatorname{Mail}}
\newcommand{\opt}{\text{OPT}}
\newcommand{\dist}{\operatorname{dist}}
\newcommand{\out}{\mathsf{out}}
\newcommand{\diam}{\operatorname{diam}}
\begin{document}

\begin{titlepage}
\title{On Packing Low-Diameter Spanning Trees}
\author{Julia Chuzhoy\thanks{Toyota Technological Institute at Chicago. Email: {\tt cjulia@ttic.edu}. Part of the work was done while the author was a Weston visiting professor in the Department of Computer Science and Applied Mathematics, Weizmann Institute of Science. Supported in part by NSF grant CCF-1616584.}  \and Merav Parter\thanks{The Weizmann Institute of Science, Israel. Email: {\tt merav.parter@weizmann.ac.il}. Supported in part by an ISF grant (no. 2084/18).} \and Zihan Tan\thanks{Computer Science Department, University of Chicago. Email: {\tt zihantan@uchicago.edu}. Part of the work was done while the author was visiting  the Department of Computer Science and Applied Mathematics, Weizmann Institute of Science. Supported in part by NSF grant CCF-1616584.} 
}
\maketitle

\thispagestyle{empty}
\begin{abstract}
Edge connectivity of a graph is one of the most fundamental graph-theoretic concepts.
The celebrated \emph{tree packing} theorem of Tutte and Nash-Williams from 1961 states that every $k$-edge connected graph $G$ contains a collection $\tset$ of $\lfloor k/2 \rfloor$ edge-disjoint \emph{spanning} trees, that we refer to as a \emph{tree packing}; the \emph{diameter} of the tree packing $\tset$ is the largest diameter of any tree in $\tset$. A desirable property of a tree packing, that is both sufficient and necessary for leveraging the high connectivity of a graph in distributed communication networks, is that its diameter is low. Yet, despite extensive research in this area, it is still unclear how to compute a tree packing, whose diameter is sublinear in $|V(G)|$,  in a low-diameter graph $G$, or alternatively how to show that such  a packing does not exist. 
%
%
%
%
%
In this paper, we provide first non-trivial upper and lower bounds on the diameter of tree packing.
We start by showing that, for every $k$-edge connected $n$-vertex graph $G$ of diameter $D$, there is a tree packing $\tset$ containing $\Omega(k)$ trees, of diameter $O((101k\log n)^D)$, with edge-congestion  at most $2$.
 
 Karger's edge sampling technique demonstrates that, if $G$ is a $k$-edge connected graph, and $G[p]$ is a subgraph of $G$ obtained by sampling each edge of $G$ independently with probability $p=\Theta(\log n/k)$, then with high probability $G[p]$ is connected. We extend this result to show that the diameter of $G[p]$ is bounded by $O(k^{D(D+1)/2})$ with high probability. This immediately implies that for every $k$-edge connected $n$-vertex graph $G$ of diameter $D$, there is a tree packing $\tset$ containing $\Omega(k/\log n)$ edge-disjoint trees of diameter at most $O(k^{D(D+1)/2})$ each. 

 We complement the above two results by showing that they are nearly tight: namely, that there is a $k$-edge connected graph of diameter $2D$, such that any packing of $k/\alpha$ trees with edge-congestion $\eta$ contains at least one tree of diameter $\Omega\left ((k/(2\alpha \eta D))^{D}\right )$, for any $k,\alpha$ and $\eta$.

 Lastly, we show that if, for every pair $u,v$ of vertices in a given graph $G$, there is a collection of $k$ edge-disjoint paths  connecting $u$ to $v$, of length at most $D$ each,  then we can efficiently compute a tree packing of size $k$, diameter $O(D\log n)$, and edge-congestion $O(\log n)$. 

We provide several applications of low-diameter tree packing in the settings of distributed network optimization. In particular, we show $o(\sqrt{n})$-round algorithms for problems such as MST and approximate minimum cut for graphs with $n^{\epsilon}$ edge connectivity and constant diameter. Finally, we illustrate several applications to the setting of secure distributed algorithms in which the adversary is allowed to collide with $\Omega(k/\log n)$ edges in a $k$-edge connected graph. 
\end{abstract}
%
\end{titlepage}

\pagenumbering{gobble}
\tableofcontents
\newpage 
\pagenumbering{arabic}

\section{Introduction}

 Edge connectivity of a graph is one of the most basic graph theoretic parameters, with various applications to network reliability and information dissemination. 
A key tool for leveraging high edge connectivity of a given graph is \emph{tree packing}: a large collection of spanning trees that are (nearly) edge-disjoint. 
A celebrated result of Tutte \cite{tutte1961problem} and Nash-Williams \cite{nash1961edge} shows that for every $k$-edge connected graph, there is a tree packing $\tset$ containing $\lfloor k/2 \rfloor$ edge-disjoint trees. This beautiful theorem has numerous algorithmic applications, but unfortunately it provides no guarantee on the diameter of the individual trees in $\tset$. In the worst case, trees in $\tset$ may have diameter that is as large as $\Omega(|V(T)|)$, even if the diameter of the original graph is very small. 
Given a graph $G$ and a collection $\tset$ of trees in $G$, we say that the trees in $\tset$ are \emph{edge-disjoint} iff every edge of $G$ lies in at most one tree of $\tset$, and we say that they cause \emph{edge-congestion} $\eta$ iff every edge of $G$ lies in at most $\eta$ trees of $\tset$. The \emph{diameter} of a tree-packing $\tset$ is the maximum diameter of any tree in $\tset$.

The diameter of a graph is a central graph measure that determines the round complexity of distributed algorithms for various central graph problems, including minimum spanning tree, global minimum cut, shortest $s$-$t$ path, and so on. All these problems admit a trivial lower bound of $\Omega(D)$ for the round complexity (where $D$ is the diameter of the graph), and in fact  
a stronger lower bound of $\Omega(D+\sqrt{n})$,  which is almost tight for general $n$-vertex graphs, that was shown by Das-Sarma et al.~\cite{sarma2012distributed}.
%
%
Despite attracting a significant amount of attention over the last decade (see e.g., \cite{pritchard2011fast,ghaffari2013distributed,nanongkai2014almost,Censor-HillelGK14,censor2014new,Kuhn14,ghaffari2015distributed,censor2015tight,dory2018distributed,daga2019distributed}), algorithms that exploit large edge connectivity of the input graph in the distributed setting
are quite rare. The only examples that we are aware of are  recent algorithms for minimum cut by Daga et al. \cite{daga2019distributed} and by Ghaffari et al. \cite{ghaffari2019faster}.


%
%

Censor-Hillel et al. \cite{Censor-HillelGK14} presented several distributed algorithms, that, given a $k$-edge connected $n$-vertex graph of diameter $D$, computes a fractional tree packing of $\Omega(k/\log n)$ trees that are fractionally edge-disjoint\footnote{In the fractional setting, each tree $T$ in the packing has a weight $w(T)$ and for each edge $e$, the sum of weights of all trees that contain $e$ is at most $1$.} in $\widetilde{O}(D+\sqrt{n})$ rounds. These trees have been used to parallelize the flow of information, obtaining nearly optimal \emph{throughput} for store-and-forward algorithms\footnote{In this class of algorithms, the nodes can only forward the messages they receive (e.g., network coding is not allowed).}. However, as these trees might have diameter as large as $\Omega(n)$ in the worst case, it is not clear how to use them in order to improve the \emph{round complexity} of the problem at hand, as opposed to improving the throughput. In particular, in terms of optimizing the number of communication rounds, 
it may still be preferable to send the entire information over a single BFS tree rather than spreading it over \emph{many} trees of potentially \emph{large} diameter. 
%

The problem of computing a low-diameter tree packing was studied later by Ghaffari \cite{ghaffari2015distributed} from the perspective of optimization. Specifically, he studied the multi-message broadcast problem, where a designated source vertex is required to send $k$ messages to all other nodes in the network. Denoting by $\opt(G)$ the minimum number of rounds required for the broadcast on an input graph $G$, he constructed a tree packing of size $k$, where both the diameter and the congestion are bounded by $\widetilde{O}(\opt(G))$. 
While this approach provides a nearly optimal broadcast scheme, it does not provide absolute upper bounds on the diameter of the tree packing, and moreover, the congestion caused by the tree packing can be large.

A recent work of Ghaffari and Kuhn \cite{ghaffari2013distributed}  provides the following negative result for packing low-diameter trees into a graph: they show that for any large enough $n$ and any $k\geq 1$, there is a $k$-edge-connected  $n$-vertex graph of diameter $\Theta(\log n)$, such that, in any partitioning of the graph into spanning subgraphs, all but $O(\log n)$ of the subgraphs have diameter $\Omega(n/k)$. 
In light of this result, it is natural to consider the following key question:
\begin{quote}
\emph{(1) Is it possible to compute a tree packing whose diameter is strongly \emph{sublinear} in $|V(G)|$, provided that the diameter of the input graph $G$ is \emph{sublogarithmic} in $|V(G)|$?}
\end{quote}
Our second key question aims at crystallizing the main challenge to computing low-diameter tree packing. So far, we have compared the diameter of the tree packing to the diameter of the original graph. However, as observed above, the results of \cite{ghaffari2013distributed} indicate that there may be a large gap between these two measures, even for graphs whose diameter is logarithmic in $n$. A more natural reference point is the following. We say that a graph $G$ is \emph{$(k,D)$-connected}, iff for every pair $u,v\in V(G)$ of distinct vertices, there are $k$ edge-disjoint paths connecting $u$ to $v$ in $G$, such that the length of each path is bounded by $D$. Clearly, if there is a tree packing of edge-disjoint trees of diameter at most $D$ into $G$, then $G$ must be $(k,D)$-connected. The question is whether the reverse is also true, if we allow a small congestion and a small slack in the diameter of the trees.
%
The celebrated result of Tutte and Nash-Williams shows that, if every pair of vertices in $G$ has $k$ edge-disjoint paths connecting them, then there are  $\floor{k/2}$ edge-disjoint spanning trees in $G$. However, this result is not length-preserving, in the sense that the tree paths may be much longer than the original paths connecting pairs of vertices. Our goal is then to provide such a length-preserving transformation from collections of short edge-disjoint paths connecting pairs of nodes in $G$ to a low-diameter tree packing.
%
\begin{quote}
\emph{(2) Given a $(k,D)$-connected graph $G$, can one obtain a tree packing of $\widetilde \Omega(k)$ trees of diameter $\widetilde O(D)$ into $G$, with small edge-congestion?}
\end{quote}
In this paper, we address both questions. For the first question, we show two efficient algorithms, that, given a $k$-edge connected $n$-vertex graph $G$ of diameter at most $D$, construct a low-diameter tree packing. 
We complement this result by an almost matching lower bound. We address the second question by providing an efficient algorithm, that, given a $(k,D)$-connected graph $G$, computes a collection of $k$ spanning trees of diameter at most $O(D\log n)$ each, that cause edge-congestion of $O(\log n)$.

\subsection{Our Results}
Our graph-theoretic results consider two main settings: in the first setting, the input graph is $k$-edge connected, and has diameter at most $D$; in the second setting, the input graph is $(k,D)$-connected. 
We only consider unweighted graphs, that is, all edge lengths are unit. Graphs are allowed to have parallel edges, unless we explicitly state that the graph is simple.
Throughout the paper, we use the term \emph{efficient algorithm} to refer to a sequential algorithm whose running time is polynomial in its input size.

\paragraph{Packing Trees into Low-Diameter Graphs.}
We prove the following two theorems that allow us to pack low-diameter trees into low-diameter graphs. 

\begin{theorem}\label{thm:small-depth-congestion-two}
	There is an efficient randomized algorithm, that, given any positive integers $D,n,k$, and an $n$-vertex $k$-edge-connected graph $G$ of diameter at most $D$, computes a collection $\tset'=\{T'_1,\ldots,T'_{\floor{k/2}}\}$ of $\floor{k/2}$ spanning trees of $G$, such that  each edge of $G$ appears in at most two of the trees in $\tset'$, and,  with high probability, each tree $T'_i\in \tset'$ has diameter $O((101k\ln n)^{D})$.
\end{theorem}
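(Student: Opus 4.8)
The plan is to start from the classical Tutte--Nash-Williams packing, which gives $\lfloor k/2\rfloor$ edge-disjoint spanning trees $\tset=\{T_1,\dots,T_{\lfloor k/2\rfloor}\}$ in $G$, but with no control on their diameters. We then fix each tree $T_i$ separately, paying an extra factor of $2$ in congestion so that the modified tree $T'_i$ is allowed to reuse one more edge of $G$ per edge. The key structural fact we will exploit is that although a single tree $T_i$ may be a long path-like object, the $k$-edge-connectivity and the diameter-$D$ bound of $G$ give us, for every pair of vertices, many short detours in $G$ that bypass any given tree. The goal is to recursively ``shortcut'' $T_i$ so that its diameter shrinks to $O((101 k\ln n)^D)$ while only doubling the load on $G$-edges.

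Concretely, here is the order of operations I would carry out. First, root $T_i$ arbitrarily and consider a balanced separator edge or centroid, splitting $T_i$ into two (or a constant number of) subtrees $T_i^{(1)},T_i^{(2)}$, each with at most a constant fraction of the vertices; recurse on each piece to obtain low-diameter spanning trees of each piece (using $G$-edges, not just $T_i$-edges). Then reconnect the pieces: pick one representative vertex in each piece and route between them using a short path in $G$ — but a short $G$-path may re-use edges already heavily loaded, so instead I would use the $(k,D)$-type connectivity that $k$-edge-connectivity plus diameter $D$ supplies. More precisely, I would argue that between any two vertices $u,v$ there is a path of length $O(D\cdot(\text{something})^{D})$ in the graph $G$ minus the already-built trees, because removing at most $\lfloor k/2\rfloor$ trees from a $k$-edge-connected graph still leaves enough connectivity, and a BFS-type / ball-growing argument bounds the length of the surviving path in terms of $D$ and $k$. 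The recursion depth is $O(\log n)$, and at each level the diameter gets multiplied by the length of the reconnecting path; setting the per-level blow-up to roughly $101k\ln n$ and tracking the multiplicative accumulation across $D$ ``layers'' of the ball-growing argument yields the claimed $O((101k\ln n)^D)$ bound. The congestion doubling comes from the fact that each original edge of $T_i$ stays in $T'_i$ possibly once, and each edge used as a shortcut is charged to at most one other tree, globally.

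The main obstacle — and the step I expect to be most delicate — is controlling the \emph{length} of the reconnecting paths simultaneously with the \emph{congestion}. Using an arbitrary short path in $G$ is easy but destroys edge-disjointness; insisting on edges entirely outside all previously built trees is clean for congestion but the resulting detours could be long, and one must show that in a diameter-$D$, $k$-edge-connected graph, a detour avoiding a sparse (tree-like) set of edges still has length only $k^{O(D)}$. I would handle this via an iterated neighborhood-expansion argument: grow balls of radius $1,2,\dots,D$ around the two endpoints in the residual graph; at each step the boundary must expand by a factor related to $k$ (else one finds a small cut contradicting $k$-edge-connectivity), and after $D$ steps the two balls must intersect because the original graph has diameter $D$ — this is exactly where the $(\cdot)^D$ exponent and a bound like $101k\ln n$ on the base enter, with the $\ln n$ slack absorbing the union-bound / high-probability statement needed because the Tutte--Nash-Williams trees (or a randomized variant of the shortcutting) are handled probabilistically. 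A secondary technical point is making the recursion terminate with the right base case and ensuring the constants compose so that the final exponent is exactly $D$ rather than $D\log n$; this requires the reconnection length, not the recursion depth, to be the quantity raised to the $D$-th power.
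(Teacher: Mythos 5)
Your high-level plan (start from Tutte--Nash-Williams, then repair each tree using the small diameter and high connectivity of $G$) matches the paper's starting point, but the repair mechanism you propose has several genuine gaps. The central missing piece is your detour lemma: you claim that between any two vertices there is a path of length $k^{O(D)}$ in $G$ minus the already-built trees, justified by a ball-growing argument in which ``the boundary must expand by a factor related to $k$, else one finds a small cut contradicting $k$-edge-connectivity.'' This does not follow: a cut that is small \emph{in the residual graph} does not contradict $k$-edge-connectivity of $G$, because the removed spanning trees can cross a given cut many times and may account for all but one of its $k$ edges. Likewise, balls grown in the residual graph need not intersect after $D$ steps just because $G$ has diameter $D$ --- the residual distance can be far larger than $D$ --- so the argument conflates distances in $G$ with distances in the residual graph. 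A second gap is the diameter accounting for the centroid recursion: with recursion depth $O(\log n)$, reconnecting two repaired pieces of diameter $d$ by a path of length $L$ gives diameter roughly $2d+L$ per level, hence a multiplicative $2^{O(\log n)}=\poly(n)$ blow-up (or $L^{O(\log n)}$ under your ``multiplied at each level'' accounting); you acknowledge the exponent must end up being $D$ rather than involving the recursion depth, but you give no mechanism that achieves this. Third, the congestion bound of $2$ is unsupported: repairing one tree requires many reconnection paths (one per recursion node), and nothing in the proposal prevents the same $G$-edge from being used as a shortcut by many recursion nodes and by many different trees; ``each shortcut edge is charged to at most one other tree, globally'' is asserted, not arranged.

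For contrast, the paper avoids all three issues by never routing in a residual graph at all. It takes the $\lfloor k/2\rfloor$ edge-disjoint trees from (a constructive version of) Tutte--Nash-Williams, computes one BFS tree $T^*$ of depth at most $D$, and randomly assigns each edge of $T^*$ to one of the trees (each tree thus receives an independent $p=1/(k-1)$ sample of $E(T^*)$). The congestion bound of $2$ is then immediate --- each edge lies in at most one packing tree plus possibly one tree via $T^*$ --- and the whole difficulty is concentrated in a single probabilistic lemma (Theorem 3.2): adding a random $p$-sample of the edges of a depth-$D$ tree to any connected graph yields diameter at most $(101\ln n/p)^D$ with high probability, proved by induction on $D$. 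If you want to salvage your route, you would essentially need to prove a deterministic analogue of that lemma for the residual graph together with a global edge-disjointness scheme for the shortcut paths, neither of which your current argument supplies.
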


As we show later, the diameter bound of Theorem~\ref{thm:small-depth-congestion-two} is close to the best possible. 
Unfortunately, the trees in the packing provided by Theorem \ref{thm:small-depth-congestion-two} may share edges.
Next, we generalize the classical result of Karger \cite{karger1999random}  to obtain a packing of completely edge-disjoint trees of small diameter, in the following theorem.

\begin{theorem}
	\label{thm:Karger_diameter-main}
	There is an efficient randomized algorithm that, given an $n$-vertex $k$-edge-connected graph $G$ of diameter at most $D$, such that $k>1000\ln n$, computes a collection $\{T_1,\ldots,T_{r}\}$ of $r=\Omega(k/\ln n)$ edge-disjoint spanning trees of $G$, such that with probability $1-1/\poly(n)$, each resulting tree $T_i$ has diameter $O(k^{D(D+1)/2})$.
\end{theorem}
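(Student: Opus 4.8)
The plan is to reduce Theorem~\ref{thm:Karger_diameter-main} to a diameter-aware strengthening of Karger's random-sampling theorem (the ``extension'' promised in the introduction), and to obtain the reduction by a random edge-coloring. Concretely, set $r=\lfloor k/(C\ln n)\rfloor$ for a large enough constant $C$; since $k>1000\ln n$ this is $\Omega(k/\ln n)$. Color every edge of $G$ independently and uniformly with one of $r$ colors, and let $G_i$ be the subgraph spanned by the color-$i$ edges, so that the $G_i$ are pairwise edge-disjoint and each $G_i$ is distributed exactly as $G[p]$ with $p=1/r\ge C\ln n/k$. Assuming the key lemma below --- that for $pk\ge C\ln n$ the graph $G[p]$ is, with probability $1-1/\poly(n)$, a connected spanning subgraph of diameter $O(k^{D(D+1)/2})$ --- a union bound over the color classes shows that with probability $1-1/\poly(n)$ every $G_i$ is connected with diameter $O(k^{D(D+1)/2})$. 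Outputting a BFS tree $T_i$ of each $G_i$ (rooted at an arbitrary vertex) gives $r$ edge-disjoint spanning trees with $\diam(T_i)\le 2\diam(G_i)=O(k^{D(D+1)/2})$, and all steps run in polynomial time.

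It remains to prove the key lemma: if $G$ is $k$-edge-connected with diameter at most $D$ and $pk\ge C\ln n$, then $H:=G[p]$ has diameter $O(k^{D(D+1)/2})$ with probability $1-1/\poly(n)$. The starting point is Karger's analysis. By the cut-counting bound, $G$ has at most $n^{2\alpha}$ cuts of value at most $\alpha k$; a fixed cut of value $\alpha k$ retains no sampled edge with probability at most $(1-p)^{\alpha k}\le n^{-C\alpha}$, and more generally retains fewer than half its expected number of sampled edges with probability $n^{-\Omega(C\alpha)}$ by a Chernoff bound. Summing over $\alpha$, we obtain the event $\mathcal{E}$ that \emph{every} cut $(S,\overline S)$ of $G$ keeps at least $\tfrac12 p\,|\partial_G(S)|\ge\tfrac{C}{2}\ln n$ sampled edges; $\mathcal{E}$ holds with probability $1-1/\poly(n)$ and in particular already implies that $H$ is connected. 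We condition on $\mathcal{E}$.

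To bound the diameter, fix a source $v$ and let $S_j=B_G(v,j)$ be its BFS balls, so $S_0=\{v\}$ and $S_D=V$. The goal is to prove by induction on $j$ that $B_H(v,\rho_j)\supseteq S_j$ for a sequence satisfying $\rho_0=1$ and $\rho_j\le (Ck)^{j}\,\rho_{j-1}$, which telescopes to $\rho_D=O(k^{D(D+1)/2})$; a final union bound over the $n$ choices of $v$ then finishes the lemma. In the inductive step one has already reached all of $S_{j-1}$ within $H$-radius $\rho_{j-1}$ and must route, inside $H$, from this region to every vertex of the next layer $L_j=S_j\setminus S_{j-1}$. Event $\mathcal{E}$ guarantees that, as long as the reached region is a proper subset of $V$, it has $\Omega(\ln n)$ sampled boundary edges and therefore strictly grows at each step; the content of the argument is to show that it cannot grow ``one vertex at a time'' for more than $(Ck)^{j}\rho_{j-1}$ steps before it absorbs a fresh vertex of $L_j$. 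Here one uses that the relevant cuts lie inside $S_j$ --- a set of $G$-radius $j$ around $v$ whose induced subgraph, although not $k$-edge-connected near its outer boundary, \emph{is} $k$-edge-connected on the interior $S_{j-1}$, since every vertex of $S_{j-1}$ has all its $G$-edges inside $S_j$. Re-running the cut-counting estimate at this ``scale $j$'' is what costs a factor $k^{O(j)}$ per layer and produces the exponent $1+2+\cdots+D=D(D+1)/2$.

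The main obstacle is exactly this last step. Because $B_H(v,r)$ is a random set determined by the edge sampling, one cannot union-bound a ball-growth statement over all $2^n$ vertex subsets; as in Karger's proof one must restrict to cuts of bounded value and then argue, scale by scale in lockstep with the BFS layers of $G$, that the sampled graph's ball cannot stay ``thin'' across a layer for too long. Controlling this, and in particular driving the per-layer blow-up down to $k^{O(j)}$ --- anything larger would spoil the claimed exponent --- is the main technical work; the coloring reduction, the passage to BFS trees, and the basic cut-counting and Chernoff estimates are routine.
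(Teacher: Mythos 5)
Your reduction step matches the paper exactly: partition $E(G)$ into $r=\Theta(k/\ln n)$ random color classes, note each class is distributed as $G[1/r]$, take a BFS tree of each class, and union bound. The gap is in what you call the ``key lemma'' (the paper's Theorem~\ref{thm:Karger_diameter}): that $G[p]$ has diameter $O(k^{D(D+1)/2})$ with high probability. This is the entire content of the theorem, and your proposal does not prove it --- you set up the ball-growth induction $B_H(v,\rho_j)\supseteq S_j$ with $\rho_j\le (Ck)^j\rho_{j-1}$, but then explicitly concede that establishing the per-layer blow-up is ``the main technical work'' left undone. The only mechanism you actually derive from event $\mathcal{E}$ is that the sampled ball gains at least one vertex per step, which yields nothing better than a diameter bound of $O(n)$. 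Moreover, the one concrete structural claim you offer to make the layer-$j$ cut-counting go through is false: the fact that every vertex of $S_{j-1}$ has all its $G$-edges inside $S_j$ gives those vertices degree at least $k$ in $G[S_j]$, but it does not make $G[S_j]$ $k$-edge-connected ``on the interior'' --- the $k$ edge-disjoint paths certifying connectivity in $G$ between two vertices of $S_{j-1}$ may leave $S_j$ entirely, so cuts of $G[S_j]$ separating interior vertices can be arbitrarily small. You also correctly identify, but do not resolve, the adaptivity problem: the reached region in $H$ is a random set correlated with the sampling, so Karger-style cut counting cannot simply be re-run ``at scale $j$.''

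For comparison, the paper avoids ball-growth and cut-counting at intermediate scales altogether. It fixes a BFS tree $T$ of $G$ of depth $D$ and re-expresses the sampling of $G[p]$ in $D+1$ phases: phase $0$ samples the non-tree edges, and phase $i$ samples the tree edges between levels $D-i$ and $D-i+1$, working bottom-up. The workhorse is Theorem~\ref{thm:random_tree_planting_generalized} (``random tree planting''): if $H$ is connected with $V(H)\subseteq V(T)$ and $R\sim\mathcal{D}(T,p)$, then all pairwise distances in $H\cup R$ are at most $(101\ln n/p)^{D}$ w.h.p., proved by induction on the tree depth. Lemma~\ref{lem:diameter_recursion} then inducts over phases: a pair joined at phase $i$ is connected by a path through phase-$(i-1)$ clusters, these clusters are contracted to a path graph on level-$(D-i)$ vertices, the tree-planting theorem applied to the subtree on levels $\le D-i$ (using only the randomness of later phases, which sidesteps your adaptivity issue) gives a short path there, and each contracted step is expanded using the inductive distance bound $M_{i-1}$, yielding $M_i=7N^{D-i}M_{i-1}$ and hence the exponent $D(D+1)/2$. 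If you want to salvage your route you would need a genuinely new argument for the per-layer bound; as written, the proposal proves the easy reduction and leaves the theorem's core unestablished.
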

We note that while the diameter bound in Theorem \ref{thm:Karger_diameter-main} is slightly weaker than that obtained in Theorem \ref{thm:small-depth-congestion-two}, and the number of the spanning trees is somewhat lower, its advantage is that the resulting trees are guaranteed to be edge-disjoint. Moreover, the algorithm in Theorem \ref{thm:Karger_diameter-main} is very simple: we construct $r$ graphs $G_1,\ldots,G_r$ with $V(G_i)=V(G)$ for all $i$, by sampling every edge of $G$ into one of these graphs independently. We then compute a spanning tree $T_i$ in each such graph $G_i$, and show that its diameter is suitably bounded. As such, this algorithm is easy to use in the distributed setting.

Lastly, we show that our upper bounds are close to the best possible if $k\gg D$, by proving the following lower bound.

\begin{theorem}
	\label{thm:diameterlowerbound}
	For all positive integers $n,k,D,\eta,\alpha$ such that $k/(4D\alpha \eta)$ is an integer and $n\geq 3k\cdot \left (\frac{k}{2D\alpha\eta}\right)^D$, there exists a $k$-edge connected simple graph $G$ on $n$ vertices of diameter at most $2D+2$, such that, for any collection $\tset=\{T_1,\ldots,T_{k/\alpha}\}$ of $k/\alpha$ spanning trees of $G$ that causes edge-congestion at most $\eta$, some tree $T_i\in \tset$ has diameter at least $\frac 1 4 \cdot \left(\frac{k}{2D\alpha\eta} \right)^D$.
\end{theorem}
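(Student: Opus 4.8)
The plan is to build $G$ by a recursive, $D$-level construction and to prove the diameter bound by a matching recursive ``layer-peeling'' argument. Set $m:=k/(2D\alpha\eta)$; the hypothesis $k/(4D\alpha\eta)\in\mathbb{Z}$ makes $m$, $k/\alpha=2Dm\eta$, and the ``router sizes'' below integers, which is what makes the tight counting inequalities exact. Let $G_0$ be a small $k$-edge-connected base gadget of diameter $\le 2$ with designated terminals $s_0,t_0$ joined by an edge (e.g.\ $K_{k+1}$ with two named vertices). Given $G_i$ with terminals $s_i,t_i$, form $G_{i+1}$ by taking $m$ vertex-disjoint copies $G_i^{(1)},\dots,G_i^{(m)}$, chaining them by identifying the $t$-terminal of copy $a$ with the $s$-terminal of copy $a+1$, and attaching a \emph{router}: a clique on $k+1$ vertices joined by single edges to exactly $k/(\alpha\eta)$ designated vertices of the chain, together with ``shortcut'' edges joining $s_{i+1}:=s_i^{(1)}$ and $t_{i+1}:=t_i^{(m)}$ to all the sub-$s$-terminals and all the sub-$t$-terminals respectively. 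Put $G:=G_D$, $s:=s_D$, $t:=t_D$, and enlarge one router clique to absorb the leftover $n-|V(G_D)|\ge0$ vertices.

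Then I would verify, by induction on $i$, the routine structural facts: $G_i$ is simple and $k$-edge-connected (each copy is, and the shared junctions and the $(k+1)$-clique router leave no cut of size below $k$); $\diam(G_i)\le 2i+2$ (the shortcut edges keep the eccentricity of $s_i$ at most $i+1$, and $\diam(G_i)\le 2\,\mathrm{ecc}(s_i)$); $|V(G_i)|\le 3km^i$ by a geometric sum with $|V(G_0)|=O(k)$; and, after deleting the routers and shortcut edges at all levels, the remaining graph contains a ``spine'' --- the concatenation over the $m^D$ bottom-level copies of the length-$1$ path between their terminals --- which is a genuine simple path of length exactly $m^D$ from $s$ to $t$. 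Padding then gives the required $k$-edge-connected simple graph on $n$ vertices of diameter $\le 2D+2$, with $\dist_G(s,t)\le 2D+2$.

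For the lower bound I would suppose, toward a contradiction, a collection $\mathcal{T}=\{T_1,\dots,T_{k/\alpha}\}$ of spanning trees of $G$ with edge-congestion $\le\eta$, all of diameter $<\tfrac14 m^D$, and argue top-down. Deleting the router-to-chain edges and the shortcut edges of $G_D$ yields a cut $C_D$ of size $k/(\alpha\eta)+O(m)=\Theta(Dm)$ whose removal isolates that apparatus from $S_D$, the clean chain of $m$ copies of $G_{D-1}$. Each of the $k/\alpha$ trees must use at least one edge of $C_D$ (to keep the router clique connected to the chain), while the total number of $C_D$-edge/tree incidences is at most $\eta|C_D|$, which is so nearly $k/\alpha$ that all but an $O(1/D)$-fraction of the trees use exactly one $C_D$-edge --- a router-to-chain edge --- and hence restrict to a spanning \emph{tree} of the clean chain $S_D$. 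I then descend into one copy $G_{D-1}^{(a)}$: since, after the deletions, the only vertices of that copy with a neighbour outside it are its two junctions, $T\cap V(G_{D-1}^{(a)})$ is a spanning forest of $G_{D-1}^{(a)}$ with at most two components. Iterating this step --- at each level deleting the current router-and-shortcut apparatus, and noting that a $\le2$-component forest either misses the new cut entirely (and then carries a spanning tree of the sub-chain, on which the recursion restarts cleanly) or meets it once (and restricts to another $\le2$-component forest) or meets it $\ge2$ times (which, by the same congestion count, happens for only a bounded fraction of the forests) --- runs through all $D$ levels. By the chaining at every scale, some tree of $\mathcal{T}$ is then forced to span a contiguous block of $\Omega(m)$ copies at each of the $D$ scales, hence contains a path of length $\Omega(m^D)\ge\tfrac14 m^D$, contradicting the diameter assumption.

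The step I expect to be the main obstacle is controlling the error that accumulates across the $D$ peeling levels: each level can cost a constant factor both in the number of trees we can still follow and in the component count we must tolerate, these factors multiply, and $\tfrac14 m^D=(m/4^{1/D})^D$ leaves only a per-level slack of $1+O(1/D)$. Getting this to go through is exactly what fixes the parameters of the statement --- the router+shortcut cut must have size $\Theta(k/(\alpha\eta))$ so the top-level crossing count is essentially tight, the branching must be $m=k/(2D\alpha\eta)$ so that dividing by the per-level loss still leaves a block of $\ge m/2$ copies to recurse on, and the hypotheses $k/(4D\alpha\eta)\in\mathbb{Z}$ and $n\ge 3k(k/(2D\alpha\eta))^D$ are precisely what make the tight inequalities exact and provide enough vertices. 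Everything else --- the structural verifications and the per-level bookkeeping of component counts --- is routine once the recursion is set up this way.
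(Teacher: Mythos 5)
Your construction is not $k$-edge connected for general $\alpha,\eta$, and this is not a repairable detail but a structural problem for your whole forcing mechanism. The router clique is joined to the rest of the graph by exactly $k/(\alpha\eta)$ single edges, so those edges form a cut of size $k/(\alpha\eta)<k$ whenever $\alpha\eta\ge 2$, contradicting the claim that ``the $(k+1)$-clique router leaves no cut of size below $k$.'' Worse, your counting step needs precisely this: ``each tree must use at least one edge of $C_D$, and congestion allows only $\eta\cdot k/(\alpha\eta)=k/\alpha$ incidences, so each tree uses exactly one'' works only because an entire component is hidden behind a cut of size $k/(\alpha\eta)$. If you enlarge the router's attachment to $\ge k$ edges to restore $k$-edge connectivity, the congestion budget across that cut becomes $\eta k\gg k/\alpha$ and no tree is forced to do anything. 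The paper resolves exactly this tension differently: in its graph (a full $w$-ary tree of depth $D$ with $w=k/(2D\alpha\eta)$, vertices ordered by post-order traversal, plus $k$ parallel ``red'' edges between each consecutive pair $v_i,v_{i+1}$), every cut used in the argument contains $k$ parallel red edges, so connectivity is never compromised; what is forced across the cut is not a component's attachment but the path inside each tree between the first and last leaf, and any tree avoiding the at most $Dw$ low-capacity tree (``blue'') edges of that cut must route this path through the one specific location $v_i$, which is what makes it long.

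Second, the heart of your lower bound --- the $D$-level peeling with its multiplying per-level losses --- is exactly the step you leave unproved, and as sketched it does not obviously close: congestion is a single global bound, the ``bad'' trees differ from copy to copy and from scale to scale, and a fixed tree may be bad in different sub-copies at different levels, so the conclusion that ``some tree is forced to span a contiguous block of $\Omega(m)$ copies at each of the $D$ scales'' requires a genuine averaging argument over all $m^{D-i}$ copies at every level, not a per-level constant-fraction claim about the family. The paper sidesteps the recursion entirely with one global count: for each of the roughly $N/2$ leaf cuts $(\{v_1,\dots,v_i\},\{v_{i+1},\dots,v_N\})$, at most $Dw\eta\le k/(2\alpha)$ trees can cross on blue edges, so at least $k/(2\alpha)$ of the chosen leaf-to-leaf paths cross at that one red bundle; summing over all leaves gives total path length at least $Nk/(4\alpha)$, hence some single path has length at least $N/4\ge\frac14\left(\frac{k}{2D\alpha\eta}\right)^D$. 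If you want to salvage the recursive scheme, you would in effect have to reproduce this summation (over copies at every level, with per-copy congestion accounting) and, independently, replace the router by a mechanism that does not create a sub-$k$ cut; as written, both the construction and the key counting step have genuine gaps.
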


Note that, in particular, any collection $\tset$ of $\Omega(k)$ trees that are either edge-disjoint, or cause a constant edge-congestion, must contain a tree of diameter $\Omega\left(\left(\frac{k}{cD}\right )^D\right )$ for some constant $c$. Even if we are willing to allow a polylogarithmic edge-congestion, and to settle for $\Theta(k/\poly\log n)$ trees, at least one of the trees must have diameter $\Omega\left(\left(\frac{k}{D\poly\log n}\right )^D\right )$. 
Moreover, we show that the lower bound from Theorem \ref{thm:diameterlowerbound} continues to hold even for the weaker notion of \emph{edge-independent} trees\footnote{A collection $\tset$ of spanning trees is edge-independent, iff all trees in $\tset$ are rooted at the same vertex $v^*$, and for every vertex $v\in V(G)$, if we denote by $\pset(v)$ the collection of paths that contains, for each tree $T\in \tset$, the unique path connecting $v$ to $v^*$ in $T$, then all paths in $\pset(v)$ are edge-disjoint.
}, introduced in \cite{itai1988multi}.

\paragraph{Packing Trees into $(k,D)$-connected Graphs.}

We next consider $(k,D)$-connected graphs and show an algorithm that computes a tree packing, that is near-optimal in both the number of trees and in the diameter.
\begin{theorem}
	\label{thm:packing_spanning_trees_in_(k,D)_with_congestion}
	There is an efficient randomized algorithm, that, given any positive integers $D,k,n$ with $k\leq n$, and a $(k,D)$-connected $n$-vertex graph $G$,  computes a collection $\tset=\{T_1,\ldots,T_{k}\}$ of $k$ spanning trees of $G$, such that, for each $1\le \ell\le k$,  tree $T_{\ell}$ has diameter at most $O(D\log n)$, and with probability at least $1-1/\poly(n)$, each edge of $G$ appears in $O(\log n)$ trees of $\tset$.
\end{theorem}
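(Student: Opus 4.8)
Since a tree of depth $h$ has diameter at most $2h$, it suffices to build $k$ spanning trees $T_1,\dots,T_k$ of $G$, all rooted at a common vertex $r$, each of depth $O(D\log n)$, with every edge of $G$ appearing in $O(\log n)$ of them w.h.p. I build all $k$ trees together by a hierarchical ``cluster-merging'' process with $L=\lceil\log_{2}n\rceil+1=O(\log n)$ phases. I maintain a partition $\mathcal P_t$ of $V(G)$ into \emph{clusters}, the same for all trees, starting from the partition $\mathcal P_0$ into singletons; each cluster $C$ carries a \emph{center} $c(C)\in C$, also common to all trees. The step from phase $t$ to $t+1$ halves the clusters deterministically: pick an arbitrary set of $\lfloor|\mathcal P_t|/2\rfloor$ clusters to be \emph{passive} and the rest \emph{active}; using $(k,D)$-connectivity (which in particular forces $\diam(G)\le D$, so all centers are pairwise within distance $D$) choose an assignment $\beta$ sending each passive cluster $A$ to an active cluster $\beta(A)$, and merge $A$ into $\beta(A)$, keeping $c(\beta(A))$ as the center of the merged cluster. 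After $L$ phases, $\mathcal P_L$ is a single cluster, whose center I declare to be $r$. This controls the number of phases deterministically; all randomness enters only in how the merges are realized in $G$.

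For each tree index $\ell$ and each merged pair $(A,\beta(A))$ created in phase $t$, add to the partial edge set $F^{(\ell)}_t$ of tree $\ell$ the $\ell$-th of $k$ edge-disjoint paths of length at most $D$ between $c(A)$ and $c(\beta(A))$ guaranteed by $(k,D)$-connectivity (which of the $k$ edge-disjoint path collections plays the role of the ``$\ell$-th'' one is part of the careful routing choice below). Then take $T_\ell$ to be a BFS tree of the final graph $F^{(\ell)}_L$ rooted at $r$. To see that $T_\ell$ has depth $O(D\log n)$, track for each vertex $x$ the centers $x=c_0(x),c_1(x),\dots,c_L(x)=r$ of the clusters of $\mathcal P_0,\mathcal P_1,\dots,\mathcal P_L$ containing $x$: whenever $x$'s cluster is passive in phase $t$, the paths added in phase $t$ contain a path of length at most $D$ from $c_t(x)$ to $c_{t+1}(x)$, and otherwise $c_{t+1}(x)=c_t(x)$; chaining these bounds, $\dist_{F^{(\ell)}_L}(x,r)\le LD=O(D\log n)$, so the BFS tree has depth $O(D\log n)$. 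This chaining argument is robust to the fact that the added paths may run through clusters other than $A$ and $\beta(A)$: such incidental merging only helps connectivity and does not affect the distance bound, so I do not need the partition to be refined tree-by-tree.

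Finally, the congestion. Each $T_\ell$ is a tree, so it uses every edge at most once; hence the congestion of an edge $e$ equals the number of indices $\ell$ with $e\in T_\ell$, which is at most the number of $\ell$ such that $e$ lies on one of the merge-paths routed for $\ell$ over all phases. The decisive observation is that for a \emph{single} merged pair $(A,\beta(A))$, $e$ lies on at most one of the $k$ edge-disjoint candidate paths between $c(A)$ and $c(\beta(A))$, so this pair charges $e$ for at most one index; thus the congestion of $e$ is at most the number of merged pairs (over all phases) one of whose $k$ candidate paths uses $e$. Bounding this by $O(\log n)$ is the crux of the proof, and it is precisely where the assignment $\beta$ and the choice of the $k$ candidate paths per pair must be made with care: a naive (e.g.\ uniformly random) choice can route the paths of many passive clusters through the same edge. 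I plan to choose, in each phase, the assignment $\beta$ together with the routing of all merged pairs as (a rounding of) an optimal solution to a min-congestion multicommodity routing among the current cluster centers, using $(k,D)$-connectivity and the bound $|E(G)|\ge kn/2$ to argue that each phase can be carried out with $O(1)$ expected congestion, and then apply a Chernoff/Azuma-type concentration bound together with a union bound over the $O(\log n)$ phases and all edges of $G$ to conclude that every edge is used by $O(\log n)$ trees with probability $1-1/\poly(n)$. The main difficulty is thus this per-phase congestion control — and, secondarily, making the $O(1)$-per-phase bound coexist with the $O(D\log n)$ depth bound across all $O(\log n)$ phases; a sharper per-phase routing analysis (or an extra idea exploiting that, after a phase's merges, few edges remain ``hot'') is the place where the real work lies.
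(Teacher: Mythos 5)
Your high-level architecture is essentially the one the paper uses: $O(\log n)$ rounds in which the set of ``unfinished'' vertices halves (the paper's layers $L_1,\dots,L_h$ with $|S'|\le |S|/2+1$), per-round routing of length-$O(D)$ paths from each shrinking vertex toward the surviving set, a chaining argument for the $O(D\log n)$ depth, and a Chernoff/union bound for the $O(\log n)$ edge congestion. However, the step you explicitly defer --- that each phase admits a routing in which every passive center has available roughly $k$ short paths (so that it can contribute one path to each of the $k$ trees) with only $O(1)$ edge congestion --- is the entire technical content of the theorem, and it does not follow from the ingredients you name. The bound $|E(G)|\ge kn/2$ controls only \emph{average} congestion; $(k,D)$-connectivity gives, for each pair separately, $k$ edge-disjoint short paths, but with $\Theta(n)$ merging pairs in a phase these collections can all pile up on the same edges, so your per-pair observation (``$e$ lies on at most one of the $k$ candidate paths of a single pair'') does not control the sum over pairs. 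Note also that to end up with only $O(\log n)$ trees per edge you need each merging center to carry $\Omega(k)$ flow-units' worth of short paths at constant congestion (one sampled path per tree), not a single path per pair; a fractional min-congestion multicommodity LP would find the optimum, but you give no argument that its optimum is $O(1)$, and with your fixed single-sink assignment $\beta$ it need not be.

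The paper proves exactly the missing statement: Theorem~\ref{thm: finding_flow_body} shows that in a $(k,D)$-connected graph, for any $U$ and extra vertex $s$, there is a flow on paths of length at most $2D$ with endpoints in $U\cup\{s\}$ in which every $u\in U$ sends at least $k$ units and every edge carries at most $2$ units; Corollary~\ref{cor: flow and bipartition} then rebalances it (by a local-search argument) into a bipartition $(S',S'')$ with $|S'|\le|S|/2+1$ and at least $k/2$ units from each vertex of $S''$ to $S'$, which is what drives both the halving and the per-phase congestion. The feasibility proof of the flow is nontrivial: pass to the line graph (plus $k$ copies of each terminal), take the $k$ edge-disjoint $u$--$s$ paths, and apply the Chuzhoy--Khanna prefix decomposition into canonical spiders and cycles (Lemma~\ref{lem:spider_connection}) to reroute them into short terminal-to-terminal connections with congestion $2$; efficiency additionally requires solving the exponential-size path LP via its dual with a bounded-hop shortest-path separation oracle. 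None of this is supplied by, or replaceable with, the counting you propose, so as written your proposal has a genuine gap precisely at the per-phase congestion bound; the remaining components (layering, independent per-vertex sampling of one path per tree proportional to flow, depth chaining, Chernoff plus union bound) mirror the paper's two phases and are fine.
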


\paragraph{Improved Distributed Algorithms for Highly Connected Graphs.}
We present several applications of low-diameter tree packing in the standard \congest\ model of distributed computation \cite{Peleg:2000}. By the proof of Theorem \ref{thm:Karger_diameter-main} and the $O(\log n)$-approximation algorithm for edge connectivity by \cite{ghaffari2013distributed}, we obtain the following result.

\begin{theorem}\label{lem:lambda}
	There is a randomized distributed algorithm, that, given an $n$-vertex graph $G$ of constant diameter $D=O(1)$ and an integer $\lambda$, with high probability solves the problem of $O(\log n)$-approximate verification of $\lambda$-edge connectivity in $G$ in $\poly(\lambda\cdot \log n)$ rounds. 
\end{theorem}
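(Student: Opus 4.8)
I would turn the edge-sampling algorithm behind Theorem~\ref{thm:Karger_diameter-main} into a verifier. First dispose of small $\lambda$: if $\lambda\le 1000\ln n$ then $\poly(\lambda\log n)=\poly\log n$, and I claim it then suffices to test whether $G$ is connected, since for $\lambda$ in this range an $O(\log n)$-approximate verification collapses to distinguishing connected from disconnected (once the approximation constant exceeds $1000\ln 2$, the lower threshold $\lambda/\Theta(\log n)$ falls below $1$). Testing connectivity is a single BFS exploration of $G$ from a fixed root $v^*\in V(G)$, say the minimum-ID vertex, which runs in $O(D)=O(1)$ rounds.

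For $\lambda>1000\ln n$ I would run the algorithm from the proof of Theorem~\ref{thm:Karger_diameter-main} with $k=\lambda$: assign each edge of $G$ independently to one of $r=\Omega(\lambda/\ln n)$ subgraphs $G_1,\dots,G_r$ on vertex set $V(G)$ (the two endpoints of an edge only need to agree on its index, so $O(1)$ rounds), then run inside each $G_i$ a BFS from $v^*$, truncated after $\tau=\Theta(\lambda^{D(D+1)/2})$ rounds; the $G_i$ are edge-disjoint, so all $r$ explorations run concurrently at unit congestion, costing $\poly(\lambda)$ rounds. Finally, over a BFS tree of $G$ (built in $O(D)=O(1)$ rounds) I would aggregate the single bit ``every one of the $r$ explorations reached every vertex'' and accept iff it equals $1$. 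The correctness I have in mind has two directions. If $\lambda(G)\ge\lambda$, then Theorem~\ref{thm:Karger_diameter-main} gives, with probability $1-1/\poly(n)$, a spanning tree $T_i$ of diameter $O(\lambda^{D(D+1)/2})$ in each $G_i$; since $T_i\subseteq G_i$ spans $V(G)$ we get $\diam(G_i)\le\diam(T_i)$, so the truncated BFS reaches all $n$ vertices and the algorithm accepts. If instead $\lambda(G)<r$, a minimum cut $(S,\overline{S})$ has fewer than $r$ crossing edges, so some $G_i$ receives none of them, is disconnected, and its BFS fails to reach all of $V(G)$, so the algorithm rejects. Thus the algorithm accepts (w.h.p.) whenever $\lambda(G)\ge\lambda$ and rejects whenever $\lambda(G)<r=\Omega(\lambda/\log n)$ --- an $O(\log n)$-approximate verification --- in $O(1)+\poly(\lambda)+O(1)=\poly(\lambda\log n)$ rounds, the only randomness being the edge assignment. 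Equivalently, in the language of \cite{ghaffari2013distributed}, the $r$ trees obtained are the tree packing underlying their $O(\log n)$-approximation of edge connectivity, which --- because Theorem~\ref{thm:Karger_diameter-main} bounds the diameter of each tree by $\poly(\lambda)$ --- becomes computable in $\poly(\lambda\log n)$ rounds in constant-diameter graphs.

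The main obstacle is the ``yes'' direction. An adversarial outcome of the edge sampling could leave some $G_i$ connected but of diameter $\omega(\poly(\lambda))$, which would defeat the truncated BFS even though $\lambda(G)\ge\lambda$; excluding this for all $r$ subgraphs simultaneously and with high probability is precisely what Theorem~\ref{thm:Karger_diameter-main} buys us, and is the only nonroutine ingredient. Everything else --- edge-consistent sampling between endpoints, running the $r$ truncated BFS explorations concurrently at unit congestion, and the $O(1)$-round global AND exploiting $\diam(G)=O(1)$ --- is standard in the \congest\ model; I would also double-check the approximation-constant bookkeeping so that the small-$\lambda$ branch and the $\lambda>1000\ln n$ branch glue into a single $O(\log n)$ factor.
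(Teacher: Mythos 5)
Your argument is correct, and it shares the paper's key ingredient — Theorem \ref{thm:Karger_diameter}/\ref{thm:Karger_diameter-main}, which guarantees that a subsampled subgraph of a $\lambda$-edge-connected diameter-$D$ graph is w.h.p.\ connected with diameter at most $\lambda^{D(D+1)/2}$, so that a truncated BFS plus an AND-aggregation over a BFS tree of $G$ serves as the connectivity test — but the verifier you build around it is genuinely different from the one in the paper (Theorem \ref{lem:approx-cut-upper-bound}). The paper runs $\Theta(\log n)$ independent experiments, each sampling every edge with probability $p=\Theta(\log n/\lambda)$, and accepts iff at least a $0.9$ fraction of the experiments yield a spanning truncated BFS tree; its NO direction is statistical (a small cut is missed by the sampling with constant probability, amplified by a Chernoff bound over the repetitions). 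You instead use a single random partition of $E(G)$ into $r=\Theta(\lambda/\log n)$ edge-disjoint parts — exactly the construction underlying Theorem \ref{thm:Karger_diameter-main} — check all parts concurrently, and accept iff every part passes; your NO direction is a deterministic pigeonhole: if the connectivity is below $r$, some part receives no edge of a minimum cut and is disconnected, so the algorithm rejects with probability $1$. This buys one-sided error, sidesteps the constant-factor delicacy in the paper's claim that the sampled graph is disconnected with probability at least $1/4$ at threshold exactly $\lambda/\log n$ (with $p=\Theta(\log n/\lambda)$ this needs a larger constant in the threshold, which both proofs must absorb into the $O(\log n)$), and makes the $r$ concurrent BFS executions congestion-free by edge-disjointness rather than via scheduling; the round complexity, $\poly(\lambda)+O(D)$ for constant $D$, matches the paper's $O(\lambda^{D(D+1)/2}\log n)$ up to polynomial factors, so both fit $\poly(\lambda\log n)$. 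Your explicit handling of $\lambda\le 1000\ln n$ (where, under the diameter-$D$ promise, the NO-promise set is empty and a plain connectivity check suffices) is a detail the paper leaves implicit, since Theorem \ref{thm:Karger_diameter} requires $\lambda>1000\ln n$.
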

This improves upon the state of the art bound of $O(\sqrt{n})$ for graphs with constant diameter $D \geq 3$, and $\lambda\leq n^c$ for some positive constant $c< 1/(2D^2)$. 
From now on, we restrict our attention to $k$-edge connected graphs with a constant diameter $D=O(1)$. 
We employ the modular approach for distributed optimization introduced by Ghaffari and Haeupler in \cite{ghaffari2016distributed} which is based on the notion of \emph{low-congestion shortcuts}. Roughly speaking, these shortcuts augment vertex-disjoint connected subgraphs by adding nearly-edge disjoint subsets of ``shortcut'' edges (that is, edges that reduce the diameter of each subgraph). 
Using our tree packing construction, we provide improved shortcuts for highly connected graphs of small diameter. This immediately leads  to  $o(\sqrt{n})$-round algorithms for several classical graph problems. For example, we prove the following:
\begin{theorem}\label{thm:mstdist}
	There is a randomized distributed algorithm, that, given a $k$-edge connected weighted $n$-vertex graph $G$ of diameter $D$, such that the nodes know an $O(\log n)$ approximation of $k$, computes an MST of $G$ in $\widetilde{O}(\min\{\sqrt{n/k}+n^{D/(2D+1)},n/k\})$ rounds with high probability. 
\end{theorem}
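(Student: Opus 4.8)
The plan is to reduce the problem, via the modular \emph{low-congestion shortcut} framework of Ghaffari and Haeupler~\cite{ghaffari2016distributed}, to a purely combinatorial shortcut-construction task. Recall that a shortcut for a partition of $V(G)$ into vertex-disjoint parts $P_1,\dots,P_N$, each inducing a connected subgraph, assigns to every part $P_i$ a subgraph $H_i\subseteq G$; its congestion is $\congestion=\max_{e\in E(G)}|\{i:e\in H_i\}|$, its dilation is $\dilation=\max_i\diam(G[P_i]\cup H_i)$, and once such a shortcut can be built distributedly for every partition that arises, the framework computes an MST in $\widetilde O(\congestion+\dilation)$ rounds. A Bor\r{u}vka-style execution only ever queries partitions in which, during phase $t$, every part has at least $2^{t}$ vertices and hence there are at most $N_t\le n/2^{t}$ parts; so it suffices to build, for any $N$-part partition, a shortcut of quality $\congestion+\dilation=\widetilde O\bigl(\sqrt{N/k}+n^{D/(2D+1)}\bigr)$, and to sum the resulting per-phase costs (the $\sqrt{N_t/k}$ terms form a geometric series, while the $n^{D/(2D+1)}$ term is repeated only $O(\log n)$ times). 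Separately I would build, by a more direct construction, a shortcut of quality $\widetilde O(n/k)$ for every partition; the algorithm runs both Bor\r{u}vka executions and returns the faster one, which accounts for the $\min$ in the statement. Knowing $k$ up to an $O(\log n)$ factor is enough, since $k$ enters only the choice of a few parameters and affects the bounds by logarithmic factors; and if $k\le 1000\ln n$ one simply invokes the classical $\widetilde O(D+\sqrt n)$ algorithm, which is already $\widetilde O(\sqrt{n/k})$, so assume $k>1000\ln n$.

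For the main term, the idea is to boost the textbook shortcut construction using the low-diameter tree packing. First run the algorithm of Theorem~\ref{thm:Karger_diameter-main} to obtain $r=\widetilde\Omega(k)$ edge-disjoint spanning trees $T_1,\dots,T_r$ of $G$ (or Theorem~\ref{thm:small-depth-congestion-two} for $\floor{k/2}$ spanning trees of edge-congestion $2$). In the classical construction each vertex adds to the shortcut of its own part a truncated prefix of its path toward a fixed root in a single BFS tree, where the truncation is chosen so that no edge carries more than $\approx\sqrt N$ such prefixes, yielding congestion and dilation $\widetilde O(D+\sqrt N)$. Here each vertex instead spreads its root-ward routing across the $r=\widetilde\Omega(k)$ (nearly) edge-disjoint trees, so the prefixes can be balanced over $\widetilde\Omega(k)$ times as many edges and the $\sqrt N$ congestion bound improves to $\sqrt{N/k}$. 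Summed over Bor\r{u}vka phases with $N_t\le n/2^t$, this yields the $\widetilde O(\sqrt{n/k})$ contribution. (If $G$ were in fact $(k,D)$-connected one could skip the truncation entirely and feed the packing of Theorem~\ref{thm:packing_spanning_trees_in_(k,D)_with_congestion} into the framework to get $\widetilde O(D+\sqrt{n/k})$ directly; the extra $n^{D/(2D+1)}$ term below is the price of having only $k$-edge-connectivity together with diameter $D$.)

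The crux --- and the step I expect to be the main obstacle --- is controlling the dilation. Routing along the packing trees as above makes the dilation as large as the diameter of the packing, which is $\widetilde O(k^{D})$ (Theorem~\ref{thm:small-depth-congestion-two}), and by the lower bound of Theorem~\ref{thm:diameterlowerbound} a diameter of roughly $k^{D}$ is unavoidable for \emph{any} constant-congestion packing of $\Omega(k)$ trees; this exceeds the target running time for an intermediate range of $k$, so one genuinely cannot route all the way along a packing tree. The plan is instead to truncate the root-ward routing at a carefully chosen depth $\ell$ and, spending the polylogarithmic slack in the congestion budget across a short hierarchy of such truncated routings, to bring the dilation down to $\widetilde O\bigl(n^{D/(2D+1)}\bigr)$; the exponent $D/(2D+1)$ is precisely what falls out of balancing the truncation depth against the congestion that would otherwise pile up near the roots of the packing trees. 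Making this trade-off precise --- reconciling the $\poly(k)$ diameter of the packing with a dilation that is only a favorable polynomial in $n$ --- is where essentially all of the work lies.

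Finally, the complementary $\widetilde O(n/k)$ bound, which matters only when $k$ is close to $n$, is obtained by a separate and comparatively routine shortcut construction; feeding both quality bounds into the $\widetilde O(\congestion+\dilation)$ round complexity of the framework, and running the two executions in parallel, yields the claimed $\widetilde O\bigl(\min\{\sqrt{n/k}+n^{D/(2D+1)},\,n/k\}\bigr)$-round algorithm, succeeding with high probability.
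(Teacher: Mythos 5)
Your overall route is the paper's: reduce to low-congestion shortcuts via the Ghaffari--Haeupler framework, handle the $\widetilde{O}(n/k)$ branch by a simple neighborhood-based shortcut, and spend all the effort on shortcuts of quality $\widetilde{O}(\sqrt{n/k}+n^{D/(2D+1)})$. But precisely that last step --- the heart of Theorem~\ref{thm:mstdist} --- is the part you leave open (``making this trade-off precise \dots is where essentially all of the work lies''), and the mechanism you sketch does not close it. If vertices route root-ward along trees of an (almost) edge-disjoint packing, the relevant depth is the diameter of the packing, which by Theorems~\ref{thm:small-depth-congestion-two} and~\ref{thm:Karger_diameter-main} is $k^{\Theta(D)}$ or $k^{\Theta(D^2)}$; a prefix truncated at some depth $\ell$ far below this neither keeps $G[S_i]\cup H_i$ connected nor yields any dilation bound, and Theorem~\ref{thm:diameterlowerbound} rules out replacing the packing by one of diameter $n^{D/(2D+1)}$ at near-constant congestion. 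So the proposed ``hierarchy of truncated routings'' is a genuine gap, not a technicality, and the idea of spreading per-vertex prefix routing over $\widetilde{\Omega}(k)$ edge-disjoint trees is not how the bound is actually achieved.

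The paper closes the gap with a different, tunable construction (Claim~\ref{cl:basic-dist-tool} inside Theorem~\ref{lem:dist-shortcuts-construction}): it builds $k$ spanning subgraphs $G_j=G[p]\cup T[\eta/k]$, where $T$ is a depth-$D$ BFS tree whose edges are oversampled at rate $\eta/k$; each $G_j$ then has diameter $O((101k\ln n/\eta)^D)$ while the whole family causes edge-congestion only $O(\eta\log n)$ --- i.e., edge-disjointness is deliberately traded for diameter, which is exactly the dial your plan lacks. Parts of cardinality at most $T_{small}$ receive no shortcut at all (their induced diameter is already at most $T_{small}$), so at most $n/T_{small}$ large parts remain, and each large part is assigned one \emph{entire} subgraph $G_{h(ID(S_i))}$ via a shared $O(\log n)$-wise independent seed; there is no per-vertex routing and no use of the Bor\r{u}vka phase structure (the quality bound holds for every partition, so your per-phase $\sqrt{N_t/k}$ refinement is unnecessary). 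Balancing the two knobs in two regimes --- $\eta=1$, $T_{small}=\sqrt{n/k}$ when $k\le n^{1/(2D+1)}/(101\ln n)$, and $\eta\approx 101\ln n\cdot k/n^{1/(2D+1)}$, $T_{small}=n^{D/(2D+1)}$ otherwise --- gives congestion and dilation $\widetilde{O}(\sqrt{n/k}+n^{D/(2D+1)})$, and together with the warmup ``all incident edges'' shortcut of quality $O(n/k)$ and the MST-from-shortcuts black box this yields the theorem. Without this congestion-versus-diameter trade-off (or an equally concrete substitute for your truncation hierarchy), your argument does not establish the $n^{D/(2D+1)}$ term.
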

If the nodes do not know an $O(\log n)$-approximation of the value of $k$, then such an approximation can be computed in $\poly(k \log n)$ rounds for $D=O(1)$ using Theorem \ref{lem:lambda}, w.h.p. For general graphs (of an arbitrary connectivity) with diameter $D=3,4$, Kitamura et al. \cite{kitamura2019low} showed nearly optimal constructions of MST's (based on shortcuts) with round complexities of $\widetilde{O}(n^{1/4})$ and  $\widetilde{O}(n^{1/3})$ respectively.
Turning to lower bounds, we slightly modify the construction of Lotker et al. \cite{LotkerPP06} to obtain a lower bound of $\Omega((n/k)^{1/3})$ rounds for computing an MST in $k$-edge connected graphs of diameter $4$, assuming that $k=O(n^{1/4})$.

Finally, we consider the basic task of \emph{information dissemination}, where a given source vertex $s$ is required to send $N$ bits of information to the designated target vertex $t$ in a $k$-edge connected $n$-vertex graph. This problem was first addressed in \cite{ghaffari2013distributed}, who showed a lower bound of $\Omega(\min\{N/\log^2 n,n/k\})$ rounds, provided that the diameter of the graph is $\Theta(\log n)$. Using our low-diameter tree packing we obtain the first improved upper bounds for sublogarithmic diameter. We also show a new lower bound for simple store-and-forward algorithms, for the regime where $D=o(\log n)$. 
\begin{theorem}\label{lem:dist-inf-diss}
There is a randomized distributed algorithm, that, given any $k$-edge connected $n$-vertex graph $G$ of diameter $D$ with a source vertex $s$ and a destination vertex $t$, sends an input sequence of $N$ bits from $s$ to $t$. The number of rounds is bounded by $\widetilde{O}(N^{1-1/(D+1)}+N/k)$ with high probability.  In addition, for all integers $n,N,D$ and $k\leq n$, there exists a $k$-edge connected $n$-vertex graph $G=(V,E)$ of diameter $2D$, and a pair $s,t$ of its vertices, such that sending $N$ bits from $s$ to $t$ in a store-and-forward manner requires at least $\Omega(\min\{(N/(D \log n))^{1-1/(D+1)},n/k\}+N/k+D)$ rounds.
\end{theorem}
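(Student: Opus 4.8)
The statement has two parts, an algorithmic upper bound and a matching lower bound for store-and-forward algorithms, and I would treat them separately. For the upper bound the idea is to build a low-diameter tree packing whose \emph{number of trees} is tuned to $N$ (rather than using all $k$ of the available trees), and then pipeline the $N$ input bits along the $s$--$t$ paths of these trees. For the lower bound the idea is to exhibit a variant of the hard instance behind Theorem~\ref{thm:diameterlowerbound} in which the number of (almost) edge-disjoint \emph{short} $s$--$t$ paths grows only polynomially slowly with the permitted length, and then to bound the throughput of any store-and-forward schedule via a flow-over-time argument.

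\textbf{Upper bound.} Since $G$ is $k$-edge connected, it is $k'$-edge connected for every $k'\le k$; set $k'=\min\{k,\ \ceil{(N/\log^D n)^{1/(D+1)}}\}$. Apply the \congest\ implementation of the algorithm of Theorem~\ref{thm:small-depth-congestion-two} to $G$ with connectivity parameter $k'$, obtaining $\floor{k'/2}$ spanning trees, each of diameter $O((101k'\ln n)^D)$, with every edge of $G$ on at most two of them; since the output diameter is $O((k'\log n)^D)$, this construction can be performed in $\widetilde{O}((k'\log n)^D)$ rounds, which by the choice of $k'$ is within the claimed budget. Now split the $N$-bit input at $s$ into $\floor{k'/2}$ blocks of $O(N/k')$ bits, and for each $j$ route block $j$ from $s$ to $t$ along the unique $s$--$t$ path in tree $T_j$, pipelining all blocks simultaneously; the congestion-$2$ guarantee means at most two blocks ever contend for a single edge, costing only a constant-factor slowdown. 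Hence the routing terminates in $O\!\left(N/(k'\log n)+(k'\log n)^D\right)=\widetilde{O}\!\left(N/k'+(k')^D\right)$ rounds. If $k'=\ceil{(N/\log^D n)^{1/(D+1)}}$ then both terms are $\widetilde{O}(N^{1-1/(D+1)})$; if $k'=k\le (N/\log^D n)^{1/(D+1)}$ then $(k')^D\le \widetilde{O}(N^{1-1/(D+1)})$ while $N/k'=N/k$. In all cases the total round complexity, including the construction, is $\widetilde{O}(N^{1-1/(D+1)}+N/k)$, and it holds with high probability by the guarantee of Theorem~\ref{thm:small-depth-congestion-two}.

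\textbf{Lower bound.} I would take the $D$-level recursive construction behind Theorem~\ref{thm:diameterlowerbound} and adapt it so that it directly controls short $s$--$t$ \emph{paths}: the target is a $k$-edge connected simple $n$-vertex graph $G$ of diameter $2D$ with designated vertices $s,t$ at distance $\Theta(D)$, such that for every length bound $L$ the maximum value of an $s$--$t$ flow using only paths of length at most $L$ is $\Phi(L)=O(D\cdot L^{1/D})$, valid for $L$ up to roughly $n/k$ (the $\min\{\cdot,n/k\}$ in the statement reflects the constraint $n\gtrsim k\,(k/2D)^D$ on how large this recursive gadget can be grown before the global $s$--$t$ cut of size $k$ becomes the binding constraint). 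Given such a $G$ and any $R$-round store-and-forward algorithm delivering $N$ bits from $s$ to $t$: each delivered bit is a token whose trajectory over the $R$ rounds is a walk of length at most $R$ from $s$ to $t$, and each edge carries $O(\log n)$ bits per round in each direction, so the entire schedule is a feasible $s$--$t$ flow over time with horizon $R$. By the classical theory of flows over time (max flow over time equals the best temporally-repeated static flow), the number of bits that can reach $t$ within $R$ rounds is at most $R\cdot \Phi(R)\cdot O(\log n)=O(D\cdot R^{1+1/D}\cdot\log n)$. Hence $N\le O(D R^{1+1/D}\log n)$, which rearranges to $R\ge \Omega\!\left((N/(D\log n))^{1-1/(D+1)}\right)$; combining this with the trivial bounds $\Omega(N/k)$ (the $s$--$t$ cut has $k$ edges, each carrying $O(\log n)$ bits per round) and $\Omega(D)$ (the $s$--$t$ distance), and capping the first term at $n/k$ as above, yields the stated bound.

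\textbf{Main obstacle.} The hard part is the lower-bound construction. Theorem~\ref{thm:diameterlowerbound} is phrased for spanning trees, and a single low-diameter tree supplies only \emph{one} short $s$--$t$ path, so a black-box reduction is too lossy; one must re-engineer the recursive gadget so that its length-bounded $s$--$t$ flow profile $\Phi(\cdot)$ obeys the polynomial law $\Theta(D L^{1/D})$ while simultaneously retaining $k$-edge-connectivity, diameter $2D$, and near-linear size, and then verify that the flow-over-time upper bound is tight enough to recover the exponent $1-1/(D+1)$ together with the precise placement of the $\min\{\cdot,n/k\}$ cap and of the additive $N/k$ and $D$ terms. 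On the upper-bound side the only non-routine point is confirming that the packing of Theorem~\ref{thm:small-depth-congestion-two}, instantiated with the scaled-down parameter $k'$, admits a \congest\ implementation running in $\widetilde{O}((k'\log n)^D)$ rounds; this should follow from the BFS-based structure of that algorithm, but needs to be checked.
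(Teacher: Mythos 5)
Your overall strategy for the upper bound (tune the size of a low-diameter tree packing to $N$, then pipeline the bits along the $s$--$t$ paths of the trees) is the same as the paper's, but the specific packing you invoke creates a real gap. You apply ``the \congest\ implementation of the algorithm of Theorem~\ref{thm:small-depth-congestion-two}''; no such implementation exists, and your justification (``BFS-based structure'') is not accurate: that algorithm first computes $\floor{k'/2}$ edge-disjoint spanning trees via Kaiser's proof of the Tutte--Nash-Williams theorem, a sequential augmentation-style procedure for which the paper gives no distributed counterpart, and only afterwards repairs diameters with a sampled BFS tree. The paper sidesteps exactly this issue: it uses the packing of Claim~\ref{cl:basic-dist-tool}, where each subgraph is $G_i=G[p]\cup T[\eta/k]$ for $p=\Theta(\log n/k)$ (Theorems~\ref{thm:Karger_diameter} and~\ref{thm:random_tree_planting}), so membership of every edge is decided locally and the trees are truncated BFS trees scheduled via Theorem~\ref{thm:delay}. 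The role of your scaled-down connectivity $k'$ is played there by the congestion parameter $\eta=\max\{1,k/\chi\}$ with $\chi=N^{1/(D+1)}/(101\ln n)$, which yields $k$ trees of diameter $(101k\ln n/\eta)^D$ and edge-congestion $O(\eta\log n)$; the same two-case balancing you perform then goes through verbatim. So your parameter choice is right, but the packing must be replaced by the locally-samplable one for the argument to be implementable in \congest.

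For the lower bound, your flow-over-time framing is a legitimate alternative to the paper's argument, but you have left precisely the load-bearing step open. Two remarks. First, no ``re-engineering'' of the gadget is needed: in the graph of Theorem~\ref{thm:diameterlowerbound} with tree-arity $w$, each leaf cut is crossed by at most $Dw$ tree edges (Observation~\ref{obs:cutedge}/Corollary~\ref{cor: number of cut edges}), and summing path lengths over the $\Theta(w^D)$ leaf cuts already gives the length-bounded flow bound you want, namely $\Phi(L)=O(Dw)$ for $L\le w^D/2$; choosing $w$ as a function of $N$ (roughly $(N/(D\log n))^{1/(D+1)}$, capped so that $3kw^D\le n$, which is where $\min\{\cdot,n/k\}$ comes from) is all that is required -- you do not need a single graph with $\Phi(L)=\Theta(DL^{1/D})$ for all $L$ simultaneously. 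Second, the paper's own proof avoids flows over time altogether: it takes the $N$ per-bit trajectories as $s$--$t$ paths, writes the per-edge message count as $\eta N/k$, randomly subsamples about $k/\log n$ of these paths so that their congestion drops to $\eta$ with high probability, invokes the long-path property of the gadget to lower-bound the dilation, and then optimizes over $\eta$ with a case analysis on $k$ (this is also where the additive $N/k$ and $D$ terms enter, via Observation~\ref{obs:N/k} and the $s$--$t$ distance). If you pursue your route, the remaining work is the constant-level bookkeeping showing that $R<\min\{(N/(D\log n))^{D/(D+1)},\,\Omega(n/k)\}$ contradicts $N\le O(R\cdot Dw\log n)$ for the chosen $w$, including the regime where the $n/k$ cap is binding; as written, that verification -- which you yourself flag as the main obstacle -- is missing, so the lower bound is not yet proved.
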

\paragraph{Applications to Secure Distributed Computation.} Recently, Parter and Yogev \cite{parter2019low} presented a general simulation result that converts any non-secure distributed algorithm to an equivalent secure algorithm, while paying a small overhead in the number of rounds. This transformation is based on the combinatorial graph structure of low-congestion cycle cover, namely, a collection of nearly edge-disjoint short cycles that cover all edges in the graph. The security provided by \cite{parter2019low} was limited to adversaries who can manipulate at most one edge of the graph in a given round; in fact if the graph is only $2$-edge connected, no stronger security guarantees, in terms of the number of edges that an adversary is allowed to corrupt is possible. 
In this paper we provide technical tools for handling stronger adversaries, who collude with $f(k)$ edges in a $k$-edge connected graph in each {given} round. In order to do so, we define a stronger variant of cycle cover that is adapted to the highly connected setting. This generalization is formalized by the notion of $k$-connected cycle cover, in which each edge in the graph is covered by $k$ almost-disjoint cycles.
Our key contribution is an algorithm that transforms any tree packing with $k$ trees of diameter $D$ into a $(k-1)$-connected cycle cover with cycle length $O(D\log n)$ and congestion $\widetilde{O}(k\log n)$. This yields a simple secure simulation of distributed algorithms in the presence of an adversary who colludes with $O(k/\log n)$ edges of the graph in each round\footnote{We note that an adversary may choose a different set of $O(k/\log n)$ edges to listen to or to corrupt in each round.}. 
Finally, we also use low-diameter tree packing to provide a simple store-and-forward algorithm for the problem of secure broadcast.

\subsection{Open Problems}\label{sec: open}
For brevity, let us say that a collection $\tset$ of spanning trees of a $(k,D)$-connected graph $G$ is an $(\alpha,D')$-packing iff $|\tset|\geq k/\alpha$ and the diameter of every tree in $\tset$ is at most $D'$. A major remaining open question is: for which values of $\alpha$ and $D'$ can we guarantee the existence of an $(\alpha,D')$-packing $\tset$ of edge-disjoint spanning tree in every $(k,D)$-connected graph. In particular, is the following statement true: every $(k,D)$-connected graph $G$ contains a collection of $\Omega(k/\poly\log n)$ edge-disjoint trees of diameter $O(D\cdot \poly\log n)$ each. The only upper bounds that we have are the ones guaranteed by Theorem \ref{thm:Karger_diameter-main}, and we do not have any lower bounds. We also do not have any upper bounds, except for those guaranteed by Theorem \ref{thm:small-depth-congestion-two}, if we allow a constant, or more generally any sub-logarithmic congestion. Additionally, obtaining an analogue of the algorithm from Theorem \ref{thm:packing_spanning_trees_in_(k,D)_with_congestion} in the distributed setting remains a very interesting open question. 

Finally, most of our results are mainly meaningful for the setting where $k=\Omega(\log n)$. It will be very interesting to consider the case of small connectivity $k=O(1)$. One can show that any $k$-edge connected graph with $k=O(1)$ of diameter $D$ is a $(k,\poly(D))$-connected graph. Is it possible to show that any $k$-edge-connected graph of diameter $D$, for some constant $k\geq 3$, has at least two edge-disjoint trees of depth at most $\poly(D)$?

\paragraph{Organization.} 
We start with preliminaries in Section~\ref{sec:prelim}.
We provide the proof of Theorem \ref{thm:small-depth-congestion-two}  in Section \ref{sec: overview-packing-cong2}, the proof of Theorem \ref{thm:Karger_diameter-main} in Section \ref{sec: proof of second main thm}, the proof of Theorem \ref{thm:diameterlowerbound} in Section \ref{sec: lower bound proof sketch}, and the proof of Theorem \ref{thm:packing_spanning_trees_in_(k,D)_with_congestion} in Section \ref{sec: kD connected packing sketch}. We discuss applications of our graph theoretic results to distributed computation in Section \ref{sec:dist}.


\section{Preliminaries}
\label{sec:prelim}

We use the notation $\log$ for logarithms to the base of $2$.  All graphs are finite and they do not have loops. By default, graphs are allowed to have parallel edges; graphs without parallel edges are explicitly called simple graphs.


Let $G=(V,E)$ be a graph.
For two disjoint subsets of its vertices $A,B \subseteq V$, we denote by $E_G(A,B)$ the set of edges in $G$ that have one endpoint in $A$ and the other endpoint in $B$, and denote by $\delta_G(A)$ the set of edges in $G$ that have exactly one endpoint in $A$. For a pair $u,v\in V(G)$ of vertices of $G$, we denote by $\dist_G(u,v)$ the length of the shortest path connecting $u$ to $v$ in $G$, and we denote by $\diam(G)$ the diameter of $G$, namely $\diam(G)=\max_{u,v\in V}\dist_G(u,v)$. For a path $P$ in $G$, we denote by $|P|$ its length, that is, the number of edges in $P$.
For a vertex $u \in V(G)$, let $\Gamma_G(u)$ be the set of  neighbors of $u$ in $G$. 

For two graphs $G,H$ 
we define their \emph{union graph} $G\cup H$ to be the graph whose vertex set is $V(G)\cup V(H)$ and whose edge set is $E(G)\cup E(H)$ (note that we allow $V(G)\cap V(H)$ to be non-empty). 

For a real number $p\in [0,1]$, let $\mathcal{D}(G,p)$ be the distribution of graphs, where the vertex set of the resulting graph is $V(G)$, and each edge of $G$ is included in the graph with probability $p$ independently from other edges.


We say that two paths $P$, $P'$ are \emph{edge-disjoint}, iff $E(P)\cap E(P')=\emptyset$.
We say that two paths $P$, $P'$ are \emph{internally disjoint}, iff for every vertex $v\in V(P)\cap V(P')$, $v$ is an endpoint of both paths. 
Given a set $\pset=\set{P_1,\ldots,P_r}$ of paths of $G$, we say that the paths of $\pset$ are \emph{edge-disjoint} iff every edge of $G$ belongs to at most one path of $\pset$, 
and we say that the paths of $\pset$ are \emph{internally disjoint} iff every pair of paths in $\pset$ are internally disjoint.
We say that the set $\pset$ of paths causes \emph{congestion $\eta$} iff every edge $e\in E(G)$ belongs to at most $\eta$ paths in $\pset$.


For a positive integer $k$, we say that a graph $G=(V,E)$ is \emph{$k$-edge-connected} iff, for every subset $E'\subseteq E$ of at most $k-1$ edges, $G\setminus E'$ is connected. Equivalently, $G$ is $k$-edge-connected iff for every pair $u,v\in V$ of its vertices, there is a set of $k$ edge-disjoint paths in $G$ connecting $u$ to $v$. We will also use the following stronger notion of connectedness.
\begin{definition}[$(k,D)$-connectivity]
Let $G$ be a graph, and let $k,D$ be two positive integers.
We say that $G$ is \emph{$(k,D)$-connected} iff for every pair $u,v\in V(G)$ of its nodes, there are $k$ edge-disjoint paths in $G$ connecting $u$ to $v$, such that the length of each of these paths is at most $D$.	
\end{definition}


Let $T$ be a tree rooted at $r$. 
For each integer $i\ge 0$, we say that a node $v\in V(T)$ is at the $i$th level of $T$ if the length of the unique path connecting $v$ to $r$ in $T$ is $i$. We let $V_i(T)$ be the set of all nodes that lie on the $i$th level of the tree $T$, and we denote $V_{\le i}(T)=\bigcup_{t=0}^iV_t(T)$. Therefore, the root lies at level $0$, the children of the root are at level $1$ and so on.
For a collection $\tset=\set{T_1,\ldots,T_r}$ of spanning trees of $G$, we say that the trees of $\tset$ are \emph{edge-disjoint} if every edge of $G$ belongs to at most one tree of $\tset$.
We say that the trees of $\tset$ are \emph{edge-independent}, if all the trees are rooted at a same vertex $v_0\in V(G)$, and for every vertex $v\in V(G)\setminus\{v_0\}$, if we denote by $\pset(v)$ the set of paths that contains, for each tree $T\in \tset$, the unique path connecting $v$ to $v_0$ in $T$, then all paths in $\pset(v)$ are edge-disjoint.
We say that the set $\tset$ of trees causes \emph{congestion $\eta$} iff every edge $e\in E(G)$ belongs to at most $\eta$ trees in $\tset$. 


\noindent{\bf Flows and cuts.}
Let $\pset$ be the set of all paths in $G$. A \emph{flow} $f$ in $G$ is defined to be an assignment of non-negative values $\{f(P)\}_{P\in\pset}$ to all paths $P\in \pset$. A path $P\in \pset$ is called a \emph{flow-path} of $F$ iff $f(P)>0$. The \emph{value} of the flow $f$ is $\sum_{P\in \pset}f(P)$. Let $P$ be a flow-path that originates at $u\in V(G)$ and terminates at $u'\in V(G)$. We say that the node $u$ \emph{sends} $f(P)$ units of flow to $u'$ along the path $P$. For each edge $e\in E(G)$, we define the congestion of the flow $f$ on the edge $e$ to be $\sum_{P\in \pset: e\in P}f(P)$, namely the total amount of flow of $f$ through $e$.
The total congestion of flow $f$ is the maximum congestion of $f$ on any edge of $G$.
A \emph{cut} in a graph $G$ is a bipartition of its vertex set $V$ into non-empty subsets. The \emph{value} of a cut $(S,V\setminus S)$ is $|E_G(S,V\setminus S)|$.

\section{Low-Diameter Tree Packing with Small Edge-Congestion: Proof of Theorem \ref{thm:small-depth-congestion-two}}\label{sec: overview-packing-cong2}

In this section we provide the proof of Theorem \ref{thm:small-depth-congestion-two}.

We start by showing that, if we are given a graph $G$, and a collection $\set{T_1,\ldots,T_k}$ of edge-disjoint spanning trees of $G$, such that the diameter of the tree $T_k$ is at most $2D$ (but other trees may have arbitrary diameters), then we can efficiently compute another collection $\set{T'_1,\ldots,T'_{k-1}}$ of edge-disjoint spanning trees of $G$, such that the diameter of each resulting tree $T'_i$ is bounded by $O((101k\ln n)^{D})$ with high probability.
\begin{theorem}
	\label{thm:diameter_fixing}
	There is an efficient randomized algorithm, that, given any positive integers $D,k,n$, an $n$-vertex graph $G$, and a collection $\set{T_1,\ldots,T_k}$ of $k$ spanning trees  of $G$, such that the trees $T_1,\ldots,T_{k-1}$ are edge-disjoint, and the diameter of $T_k$ is at most $2D$, computes a collection $\{T'_1,\ldots,T'_{k-1}\}$ of edge-disjoint spanning trees of $G$, such that, with probability at least $1-1/\poly(n)$, for each $1\le i\le k-1$, the diameter of tree $T'_i$ is bounded by  $O((101k\ln n)^{D})$. 
\end{theorem}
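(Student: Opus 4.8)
The plan is to proceed by induction on $D$ (equivalently, on the depth of $T_k$ after we root $T_k$ at its center, so that its depth is at most $D$ while its diameter stays $\le 2D$). First I would dispose of the trivial regime: if $(101k\ln n)^D\ge n$ then every spanning tree of $G$ already has diameter below the target, so I may simply output $T_1,\dots,T_{k-1}$; hence assume $(101k\ln n)^D<n$, which in particular forces $D<\ln n$ and $k<n$ and makes the final union bound affordable. For the base case $D\le 1$, the tree $T_k$ is a star centered at $r$. Color every vertex $v\ne r$ independently and uniformly with a label $\chi(v)\in[k-1]$, and obtain $T'_i$ from $T_i$ (rooted at $r$) by deleting, for each $v$ with $\chi(v)=i$, the edge from $v$ to its $T_i$-parent and inserting the star edge $(v,r)\in E(T_k)$ instead. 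Each such surgery keeps $T'_i$ a spanning tree, and the depth of a vertex $x$ in $T'_i$ is at most $1$ plus the $T_i$-distance from $x$ to its closest $\chi=i$ ancestor; a Chernoff bound along the $T_i$-root-path of $x$ shows this is $O(k\ln n)$ with probability $1-n^{-\Omega(1)}$, and edge-disjointness of the $T'_i$'s follows from edge-disjointness of the $T_i$'s together with the fact that each star edge is used by exactly one $T'_i$.

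For the inductive step $D\ge 2$, I would peel off the deepest level of $T_k$ by contraction. Let $L$ be the set of vertices at the deepest level $D$ of $T_k$; these are leaves, with $T_k$-parents at level $D-1$. Contract, for every vertex $u$ at level $D-1$ with $T_k$-children, the set $\{u\}\cup(\text{$T_k$-children of }u)$ into a single super-vertex with representative $u$, leaving all other vertices as singletons. This produces a graph $\hat G$ that is still $(k-1)$-edge-connected, a spanning tree $\hat T_k$ of $\hat G$ equal to the contracted image of $T_k$ (that is, $T_k$ with its level-$D$ leaves removed), which has depth $\le D-1$ and hence diameter $\le 2(D-1)$, and, from the contracted images of $T_1,\dots,T_{k-1}$, a collection of $k-1$ edge-disjoint spanning trees $\hat T_1,\dots,\hat T_{k-1}$ of $\hat G$ (take a spanning tree inside each contracted image). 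Applying the inductive hypothesis to $\hat G$ with $\hat T_1,\dots,\hat T_{k-1},\hat T_k$ yields edge-disjoint spanning trees $\hat S_1,\dots,\hat S_{k-1}$ of $\hat G$, each of diameter at most $C_{D-1}=O((101k\ln n)^{D-1})$.

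It then remains to lift the $\hat S_i$'s back to $G$, which is where the real work lies. Color each leaf $w\in L$ independently and uniformly with $\chi(w)\in[k-1]$. For each $i$, I would form $S_i$ as the union of: (i) the edges of $\hat S_i$, viewed as edges of $G$; (ii) for each leaf $w$ with $\chi(w)=i$, the $T_k$-edge $(w,p(w))$ to its parent; (iii) for each leaf $w$ with $\chi(w)\ne i$, the path in $T_i$ obtained by walking up from $w$ toward the $T_i$-root until the first vertex that is either a non-leaf of $T_k$ or a $\chi=i$ leaf is reached (followed, in the latter case, by that leaf's parent edge); and a small number of fix-up edges that attach any non-leaf representative not yet incident to $S_i$. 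The diameter bound then follows from a two-part estimate: among non-leaf vertices, distances are controlled by $C_{D-1}$ through $\hat S_i$, while every leaf reaches the non-leaf part within $O(k\ln n)$ steps, since every vertex met on the upward $T_i$-walk before we stop is a leaf of color $\ne i$ (probability $1-1/(k-1)$ at each step); a Chernoff/union bound over the $\le D$ recursion levels, the $\le n$ vertices, and the $k-1$ trees shows each such walk has length $O(k\ln n)$ with probability $1-1/\poly(n)$. This yields the recursion $C_D=O(k\ln n)\cdot C_{D-1}$, i.e. $C_D=O((101k\ln n)^{D})$.

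The step I expect to be the main obstacle is exactly this lifting: one must verify that each $S_i$ is a genuine spanning tree (connected and acyclic) and that the $S_i$'s are pairwise edge-disjoint. The difficulty is that $\hat S_i$ may attach to a contracted super-vertex $\{u\}\cup(\text{children})$ at an arbitrary port — possibly a leaf-child rather than the representative $u$ — which threatens to leave $u$ (or other leaves) disconnected from the rest of $S_i$, or to create a cycle once the intra-super-vertex connections of (ii) and (iii) are added. Getting this right requires a careful choice of ports, a correct order for the fix-ups, and a clean accounting that simultaneously exploits edge-disjointness of the $T_i$'s, disjointness of the color classes of $L$, and the fact that the $T_k$-star edges inside each super-vertex are untouched by $\hat S_i$. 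A secondary nuisance, which I would dispatch by a preprocessing remark, is that $T_k$ is not assumed edge-disjoint from $T_1,\dots,T_{k-1}$, so the $T_k$-edges used as shortcuts (in the base case and in part (ii)) must be restricted to those not already lying in some $T_i$, or the random labels biased accordingly on the overlapping edges.
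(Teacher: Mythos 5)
Your route is genuinely different from the paper's, and unfortunately the difference is where the trouble lies. The paper never inducts on the packing: it keeps each $T_i$ intact, randomly partitions $E(T_k)$ into $E_1,\ldots,E_{k-1}$, proves a self-contained lemma (for any connected graph $H$ and any depth-$D$ tree $T$ with $V(T)=V(H)$, adding each tree edge independently with probability $p$ makes the diameter at most $(101\ln n/p)^D$ w.h.p.), applies it with $p=1/(k-1)$ to each $T_i\cup E_i$, and only at the very end extracts $T_i'$ as a BFS tree of $T_i\cup E_i$. Edge-disjointness is then immediate from the partition (with $T_k$'s edges treated as parallel copies), and no contraction, lifting, or tree surgery is ever needed. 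In your proposal, by contrast, the step you yourself call the main obstacle --- that each lifted $S_i$ is a spanning tree and that the $S_i$'s are pairwise edge-disjoint --- is the crux of the argument and is left unproven; a sketch of ingredients (ports, fix-up edges, an ordering) is not a proof, and this is precisely the machinery the paper's formulation is designed to avoid, since during its induction only connectivity has to be maintained.

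There is also a second, more damaging gap: the diameter accounting in your inductive step is circular. You fix the contraction and invoke the inductive hypothesis \emph{before} the random coloring, so an edge of $\hat S_i$ between two super-vertices lifts to a single $G$-edge between arbitrary ports, and nothing guarantees that a port can reach the representative of \emph{its own} super-vertex cheaply inside $S_i$: the upward $T_i$-walk of a $\chi\neq i$ leaf ends, after $O(k\ln n)$ steps, at \emph{some} non-leaf vertex, possibly belonging to a far-away super-vertex. Consequently the claim that ``among non-leaf vertices, distances are controlled by $C_{D-1}$ through $\hat S_i$'' is unjustified: to traverse one $\hat S_i$-edge in $S_i$ you must route between two ports of a super-vertex, and the only available routes pass through other non-leaf vertices whose mutual $S_i$-distance is exactly the quantity being bounded, so the recursion $C_D=O(k\ln n)\cdot C_{D-1}$ does not close as written. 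The paper escapes this by reversing the order of randomness and coarsening: it first samples the level-$D$ tree edges, and only then defines the auxiliary graph $\hat H$ on $V_{\le D-1}$, placing an edge between two vertices exactly when they are already within distance $M+2$ in $H$ plus the sampled edges --- so every coarse edge is short by construction, and the induction is applied to that graph. To salvage your packing-level induction you would need an analogous device (define the coarse adjacency after the coloring, in terms of realized short connections), at which point you are essentially re-proving the paper's random-tree-planting lemma; the cleaner fix is to abandon trees as intermediate objects, carry connected subgraphs $T_i\cup E_i$ through the argument, and take BFS trees once at the end, which also disposes of your ``secondary nuisance'' about $T_k$ sharing edges with the other trees.
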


Theorem \ref{thm:small-depth-congestion-two} easily follows by combining Theorem \ref{thm:diameter_fixing} with the results of
Kaiser~\cite{kaiser2012short}, who gave a short elementary proof of the tree-packing theorem of Tutte~\cite{tutte1961problem} and Nash-Williams~\cite{nash1961edge}. His proof directly translates into an efficient
algorithm, that, given a $k$-edge connected graph $G$, computes a collection of $\floor{k/2}$ edge-disjoint spanning trees of $G$.
In order to complete the proof of Theorem \ref{thm:small-depth-congestion-two}, we use the algorithm of Kaiser~\cite{kaiser2012short} to compute an arbitrary collection $\tset=\set{T_1,\ldots,T_{\floor{k/2}}}$ of edge-disjoint spanning trees of $G$, and compute another arbitrary BFS tree $T^*$ of $G$. Since the diameter of $G$ is at most $D$, the diameter of $T^*$ is at most $2D$. We then apply Theorem~\ref{thm:diameter_fixing} to the collection $\set{T_1,\ldots,T_{\floor{k/2}},T^*}$ of spanning trees, to obtain another collection $\tset'=\set{T'_1,\ldots,T'_{\floor{k/2}}}$ of spanning trees, such that each edge of $G$ belongs to at most $2$ trees of $\tset'$, and with high probability, the diameter of each tree in $\tset'$ is at most $O((101k\ln n)^{D})$.
We note that, since we allow parallel edges, the trees in the set $\set{T_1,\ldots,T_{\floor{k/2}},T^*}$ are edge-disjoint in graph $G\cup E(T^*)$.

The main technical tool that we use in order to prove of Theorem \ref{thm:diameter_fixing} is the following theorem, that allows one to ``fix'' a diameter of a connected graph using a low-diameter tree.

\begin{theorem}
	\label{thm:random_tree_planting}
	Let $H$ be a connected graph with $|V(H)|\leq n$, and let $T$ be a rooted tree of depth $D$, such that $V(T)=V(H)$. For a real number $0<p<1$, let $R$ be a random subset of the edges of $T$, where each edge $e\in E(T)$ is added to $R$ independently with probability $p$. Then with probability at least $1-\frac{D}{n^{48}}$, the diameter of the graph $H\cup R$ is at most $(\frac{101\ln n}{p})^{D}$.
\end{theorem}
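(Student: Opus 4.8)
The plan is to prove the statement by induction on $D$, using as the engine a simple \emph{deepest-level} sub-lemma proved by BFS-ball growing together with a Chernoff bound.

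\emph{Deepest-level sub-lemma.} If $H_0$ is a connected graph on at most $n$ vertices, $T_0$ a rooted tree of depth $d$ with $V(T_0)=V(H_0)$, and $R_0\subseteq E(T_0)$ is obtained by keeping each edge independently with probability $p$, then with probability at least $1-n^{-49}$ every vertex $v$ on the deepest level $V_d(T_0)$ has $\dist_{H_0\cup R_0}(v,V_{\le d-1}(T_0))\le \rho+1$ for $\rho=\Theta(\ln n/p)$. I would prove this by fixing $v$: if the edge from $v$ to its parent lies in $R_0$, we are done in one step; otherwise grow a BFS ball of radius $\rho$ around $v$ inside $H_0$. Since $H_0$ is connected, this ball either already contains all of $V(H_0)$ (so $v$ reaches the root, which is at level $\le d-1$, within $|V(H_0)|\le \rho$ steps), or it contains more than $\rho$ vertices; if any of them lies at level $\le d-1$ we reach it within $\rho$ steps, and otherwise \emph{all} of them lie at level $d$ — this is the only place where it matters that $d$ is the deepest level, since there are no deeper vertices cluttering the ball — whence their parent-edges are sampled independently with probability $p$ and with probability $1-(1-p)^{\rho}\ge 1-n^{-50}$ one is sampled, giving a path of length $\rho+1$ down to level $d-1$. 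A union bound over the at most $n$ vertices $v$ finishes the sub-lemma.

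For the inductive step ($D-1\to D$) I would apply the sub-lemma to $(H,T,R)$ with $d=D$ and $R_0=R_D$ the set of sampled parent-edges of level-$D$ vertices; call the good event $\mathcal E$, and note $\mathcal E$ and $R_D$ depend only on the coins of those edges. On $\mathcal E$ the set $V_{\le D-1}(T)$ is an $O(\ln n/p)$-net of the connected graph $H\cup R_D$, so I would form the auxiliary graph $H'$ on vertex set $V_{\le D-1}(T)$ by joining $u,u'$ iff $\dist_{H\cup R_D}(u,u')\le \tau$ for a threshold $\tau=\Theta(\ln n/p)$; a one-line net argument (walk along any path of $H\cup R_D$ and replace each vertex by a nearby net point) shows $H'$ is connected, it has at most $n$ vertices, and — crucially — it is a function of $R_D$ only. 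Let $T'=T[V_{\le D-1}(T)]$, a tree of depth $D-1$ spanning $V(H')$, and $R'=R\cap E(T')$, a $p$-sample of $E(T')$ using coins disjoint from those of $R_D$ and hence independent of $H'$. Conditioning on $R_D$ and applying the induction hypothesis to $(H',T',R')$ gives, with conditional probability $\ge 1-(D-1)/n^{48}$, that $\diam(H'\cup R')\le (101\ln n/p)^{D-1}$. Finally I would transfer this back: each edge of $H'$ unfolds to a walk of length $\le \tau$ in $H\cup R_D\subseteq H\cup R$ and each edge of $R'$ is a genuine edge of $R$, so $\dist_{H\cup R}(u,u')\le \tau\cdot (101\ln n/p)^{D-1}$ for all $u,u'\in V_{\le D-1}(T)$, while by $\mathcal E$ every level-$D$ vertex is within $O(\ln n/p)$ of $V_{\le D-1}(T)$ in $H\cup R$. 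Choosing the ball radius and $\tau$ appropriately, these combine to $\diam(H\cup R)\le (101\ln n/p)^{D}$; the failure probability is at most $n^{-49}+(D-1)/n^{48}\le D/n^{48}$, and the base case $D\le 1$ is the sub-lemma itself (for $D=1$, $T$ is a star).

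The step I expect to be the main obstacle is exactly this passage to the auxiliary graph $H'$. To invoke the induction hypothesis one needs a \emph{connected} graph on the vertices of the truncated tree $T'$, but the obvious candidate $(H\cup R)[V_{\le D-1}(T)]$ need not be connected, and replacing it by a contraction or a net graph is inherently lossy — a short path that dips into the discarded level-$D$ vertices does not automatically unfold into a short path. Bounding this loss, while simultaneously arranging that the randomness defining $H'$ (the deepest-level samples $R_D$) is independent of the randomness the induction hypothesis consumes on $T'$ (the samples $R'$), is the technical heart of the argument, and it is precisely what produces the $\Theta(\ln n/p)$ factor gained at each of the $D$ levels and hence the final bound of the form $(\ln n/p)^{D}$.
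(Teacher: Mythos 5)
Your proposal is correct, and it follows the same inductive scaffolding as the paper's proof: peel off the level-$D$ edges, exploit the independence between their samples and the samples on the truncated tree, build an auxiliary graph on $V_{\le D-1}$ whose edges encode short distances, prove it connected, apply the induction hypothesis, and unfold. Where you genuinely differ is in the key sub-lemma and in the proof of connectivity of the auxiliary graph. The paper works with a global bad event over all \emph{pairs} of vertices (a lexicographically-canonical shortest path of length more than $M=50\ln n/p$ with no ``good'' internal vertex), and then proves connectivity of $\hat H$ by an extremal argument: pick the pair of components with the shortest connecting path, locate good internal vertices on it, map them into $V_{D-1}$, and derive a contradiction from a consecutive pair landing in different components. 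Your per-vertex BFS-ball argument (if no vertex of level $\le D-1$ lies within radius $\rho$ of $v$ in $H$, the ball misses the root, hence has at least $\rho+1$ vertices, all at level $D$, whose distinct parent edges are independently sampled, so one survives except with probability $(1-p)^{\rho}$ --- a direct computation rather than a Chernoff bound) gives the net property directly in $H\cup R_D$, after which connectivity of $H'$ follows from the elementary path-rounding argument; this replaces the paper's most intricate step with something local and simpler. The price is the threshold: rounding a path costs $\tau\approx 2\rho+3\approx 100\ln n/p$ per auxiliary edge, versus the paper's $M+2\approx 50\ln n/p$, so your per-level multiplicative factor is roughly twice the paper's; the arithmetic still fits under $101\ln n/p$ (with $\rho=50\ln n/p$ one gets $2(\rho+1)+\tau\cdot(101\ln n/p)^{D-1}\le(101\ln n/p)^{D}$ once $\ln n/p\ge 5$, and smaller $n$ is trivial since then $101\ln n$ exceeds the graph's diameter), but you have far less slack than the paper, so the constants do need the verification you deferred. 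One cosmetic slip: in the first branch of your ball dichotomy the clause ``within $|V(H_0)|\le\rho$ steps'' is garbled --- the correct statement is simply that if the ball of radius $\rho$ contains a vertex of level $\le d-1$ (in particular the root) you are done; this does not affect the argument.
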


Theorem \ref{thm:diameter_fixing} easily follows from Theorem \ref{thm:random_tree_planting}: For each $1\leq i< k$, we construct a graph $G_i$ as follows. Start with $G_i=T_i$ for all $1\le i\le k$. Compute a random partition $E_1,\ldots,E_{k-1}$ of the edges of $E(T_k)$, by adding each edge $e\in E(T_k)$ to a set $E_i$ chosen uniformly at random from $\set{E_1,\ldots,E_{k-1}}$ independently from other edges. Using Theorem \ref{thm:random_tree_planting} with $p=1/(k-1)$, it is immediate to see that with high probability, the diameter of each resulting graph $G_i$ is bounded by $O((101k\ln n)^{D})$. We then let $T'_i$ be a BFS tree of graph $G_i$, rooted at an arbitrary vertex.
In order to complete the proof of Theorem \ref{thm:small-depth-congestion-two}, it is now enough to prove Theorem \ref{thm:random_tree_planting}. 

\begin{proofof}{Theorem \ref{thm:random_tree_planting}}
	Recall that we are given a connected graph $H$ with $|V(H)|\leq n$, and a rooted tree $T$ of depth $D$, such that $V(T)=V(H)$, together with a parameter $0<p<1$. We let $R$ be a random subset of $E(T)$, where each edge $e\in E(T)$ is added to $R$ independently with probability $p$. Our goal is to show that the diameter of the graph $H\cup R$ is at most $\left(\frac{101\ln n}{p}\right)^D$ with probability at least $1-\frac{D}{n^{48}}$.
	Denote $V=V(H)=V(T)$.
	For each $0\le i\le D$, let $V_i$ be the set of nodes lying at level $i$ of the tree $T$ (that is, at distance $i$ from the tree root), and denote $V_{\le i}=\bigcup_{t=0}^{i}V_t$. Let $H'=H\cup R$. 
	
	We say that a node $x\in V$ is \emph{good} if either (i) $x\in V_{\le D-1}$; or (ii) $x\in V_{D}$, and there is an edge in $R$ connecting $x$ to a node in $V_{D-1}$. 
	We assume that $V=\set{v_1,\ldots,v_{n'}}$, where the vertices are indexed in an arbitrary order. Given an ordered pair $(x,x')$ of 
	vertices in $H$, and a path $P$ connecting $x$ to $x'$, let $\sigma(P)$ be a sequence of vertices that lists all the vertices appearing on $P$ in their natural order, starting from vertex $x$ (so in a sense, we think of $P$ as a directed path). For an ordered pair $(x,x')\in V$ of vertices, let $P_{x,x'}$ be shortest path connecting $x$ to $x'$ in $H$, and among all such paths $P$, choose the one whose sequence $\sigma(P)$ is smallest lexicographically. 
	Observe that $P_{x,x'}$ is unique, and, moreover, if some pair $u,u'$ of vertices lie on $P_{x,x'}$, with $u$ lying closer to $x$ than $u'$ on $P_{x,x'}$, then the sub-path of $P_{x,x'}$ from $u$ to $u'$ is precisely $P_{u,u'}$. 
	
	Let $M=\frac{50\ln n}{p}$.
	For a pair $x,x'$ of vertices of $V$, we let $B(x,x')$ be the bad event that length of $P_{x,x'}$ is greater than $M$ and there is no good internal node on $P_{x,x'}$. Notice that event $B(x,x')$ may only happen if every inner vertex on $P_{x,x'}$ lies in $V_D$, and for each such vertex, the unique edge of $T$ that is incident to it was not added to $R$. Therefore, the probability that event $B(x,x')$ happens for a fixed pair $x,x'$ of vertices is at most $(1-p)^M=(1-p)^{(50\ln n)/p}\leq n^{-50}$.
	Let $B$ be the bad event that $B(x,x')$ happens for some pair $x,x'\in V$ of nodes. From the union bound over all pairs of nodes in $V$, the probability of $B$ is bounded by $n^{-48}$.
	
	Recall that $H$ is a subgraph of $H'$ and $\dist_H(\cdot,\cdot)$ is the shortest-path distance metric on $H$. We use the following immediate observation.
	
	\begin{observation}
		\label{obs:good_node_is_dense}
		If the event $B$ does not happen, then for every node $x\in V$, there is a good node $x'\in V$ such that $\dist_{H}(x,x')\le M$.
	\end{observation}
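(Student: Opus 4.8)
The plan is to fix an arbitrary vertex $x\in V$ and produce a good vertex within $H$-distance $M$ of it, under the assumption that the event $B$ does not occur. If $x$ is itself good, we are done immediately, taking $x'=x$ and noting $\dist_H(x,x)=0\le M$; so assume from now on that $x$ is not good. First I would observe that the set $\mathcal G$ of good vertices is nonempty whenever $D\ge 1$: the root of $T$ lies at level $0\le D-1$, hence lies in $V_{\le D-1}$ and is good by definition. (For $D=0$ the graph $H$ consists of a single vertex and the statement is vacuous.)

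Next, since $\mathcal G\neq\emptyset$ and $V$ is finite, I would let $x'$ be a good vertex minimizing $\dist_H(x,x')$, and claim that $\dist_H(x,x')\le M$, which is exactly what we need. Suppose for contradiction that $\dist_H(x,x')>M$. By the definition of the canonical path, $|P_{x,x'}|=\dist_H(x,x')>M$. Since $B$ does not happen, in particular the event $B(x,x')$ does not happen; and because the length of $P_{x,x'}$ exceeds $M$, the failure of $B(x,x')$ forces $P_{x,x'}$ to contain a good internal vertex $z$.

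Finally I would reach a contradiction with the minimality of $x'$. The vertex $z$ lies on the shortest path from $x$ to $x'$, strictly between its endpoints, so $\dist_H(x,z)+\dist_H(z,x')=\dist_H(x,x')$, and since $z\neq x'$ we have $\dist_H(z,x')\ge 1$, giving $\dist_H(x,z)<\dist_H(x,x')$. Thus $z$ is a good vertex strictly closer to $x$ than $x'$, contradicting the choice of $x'$. Hence $\dist_H(x,x')\le M$, and $x'$ is the desired good vertex. There is no serious obstacle here; the only points requiring care are checking that a good vertex exists at all, and the (easy) observation that any good internal vertex handed to us by the failure of $B(x,x')$, lying on a shortest path from $x$ to $x'$, is automatically closer to $x$ than $x'$ is — which is precisely what makes the extremality argument close.
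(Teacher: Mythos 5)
Your proof is correct and follows essentially the same route as the paper: pick the good vertex $x'$ closest to $x$ in $H$, note that minimality forces $P_{x,x'}$ to have no good internal vertex, and then the non-occurrence of $B(x,x')$ forces its length to be at most $M$ (you phrase this as a contradiction, the paper argues it directly, but the content is identical). The extra checks you add — existence of a good vertex via the root, and the trivial case where $x$ is itself good — are harmless refinements of the same argument.
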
 

	We prove Theorem~\ref{thm:random_tree_planting} by induction on $D$.
	The base of the induction is when $D=1$. In this case, $T$ is a star graph. Let $c$ denote the vertex that serves as the center of the star.
	For any pair $x_1,x_2\in V$ of vertices, we denote by $x'_1$ the good node that is closest to $x_1$  in $H$, and we define $x'_2$ similarly for $x_2$. 
	Notice that, from the definition of good vertices, either $x'_1=c$, or it is connected to $c$ by an edge of $R$, and the same holds for $x'_2$.
	Therefore, $\dist_{H'}(x'_1,x'_2)\le 2$ must hold.  If the event $B$ does not happen, then, since $H$ is a subgraph of $H'$, $\dist_{H'}(x_1,x_2)\le \dist_{H'}(x_1,x'_1)+\dist_{H'}(x'_1,x'_2)+\dist_{H'}(x_2,x'_2)\le \dist_{H}(x_1,x'_1)+\dist_{H'}(x'_1,x'_2)+\dist_{H}(x_2,x'_2)\le 2M+2\le \frac{101\ln n}{p}$. Therefore, with probability at least $1-n^{-48}$, $\dist_{H'}(x_1,x_2)\le \frac{101\ln n}{p}$. 
	
	Assume now that Theorem~\ref{thm:random_tree_planting} holds for every connected graph $H$ and every tree $T$ of depth at most $D-1$, with $V(T)=V(H)$. Consider now some connected graph $H$, and a rooted tree $T$ of depth $D$, with $V(T)=V(H)$. We partition the edges of $E(T)$ into two subsets: set $E_1$ contains all edges incident to the vertices of $V_D$, and set $E_2$ contains all remaining edges. Let $E'_1=E_1\cap R$, and let $E'_2=E_2\cap R$. Notice that the definition of good vertices only depends on the edges of $E'_1$, and so the event $B$ only depends on the random choices made in selecting the edges of $E'_1$, and is independent from the random choices made in selecting the edges of $E'_2$.

	Let $L$ be a subgraph of $H'$, obtained by starting with $L=H$, and then adding all edges of $E'_1$ to the graph. Finally, 
	we define a new graph $\hat{H}$, whose vertex set is $V_{\le D-1}$, and there is an edge between a pair of nodes $w,w'$ in $\hat{H}$ iff the distance between $w$ and $w'$ in $L$ is at most $M+2$.
	We also let $\hat T$ be the tree obtained from $T$, by discarding from it all vertices of $V_D$ and all edges incident to vertices of $V_D$. Observe that $V(\hat H)=V(\hat T)=V_{\leq D-1}$. The idea is to use the induction hypothesis on the graph $\hat H$, together with the tree $\hat T$. In order to do so, we need to prove that $\hat H$ is a connected graph, which we do next.

	\begin{observation}
		If the event $B$ does not happen, then graph $\hat{H}$ is connected.
	\end{observation}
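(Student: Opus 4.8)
The plan is to prove that any two vertices $w,w'\in V_{\le D-1}$ lie in the same connected component of $\hat H$; since $V(\hat H)=V_{\le D-1}$, this gives connectivity of $\hat H$. The idea is that $H$ (hence $L\supseteq H$) is connected, so there is a path between $w$ and $w'$ in $H$, and we will ``round'' it to a walk that uses only vertices of $V_{\le D-1}$ and only hops of $L$-length at most $M+2$ (so that each hop is an edge of $\hat H$). The non-occurrence of the event $B$, together with the notion of good vertices, is exactly what makes this rounding possible.

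First I would fix $w,w'\in V_{\le D-1}$ and work with the canonical shortest path $P_{w,w'}=(w=u_0,u_1,\ldots,u_\ell=w')$ in $H$. Using the fact (established earlier) that every subpath of $P_{w,w'}$ between two of its vertices is itself a canonical shortest path, I would show that good vertices are \emph{dense} along $P_{w,w'}$: for every index $i$ with $i+M+1\le \ell$, the subpath from $u_i$ to $u_{i+M+1}$ is $P_{u_i,u_{i+M+1}}$, which has length $M+1>M$, so the non-occurrence of $B(u_i,u_{i+M+1})$ forces a good vertex among $u_{i+1},\ldots,u_{i+M}$. Since $u_0=w$ and $u_\ell=w'$ are themselves good (being in $V_{\le D-1}$), walking from $u_0$ and repeatedly jumping to the next good vertex along $P_{w,w'}$ (and jumping directly to $u_\ell$ once within $M$ steps of it) yields a subsequence $w=u_{i_0},u_{i_1},\ldots,u_{i_s}=w'$ of good vertices with $i_{j+1}-i_j\le M$ for all $j$; consequently $\dist_L(u_{i_j},u_{i_{j+1}})\le \dist_H(u_{i_j},u_{i_{j+1}})\le M$ for each $j$.

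Next I would push each good vertex $u_{i_j}$ into $V_{\le D-1}$ at small $L$-cost: if $u_{i_j}\in V_{\le D-1}$ set $z_j=u_{i_j}$; if $u_{i_j}\in V_D$, then by goodness there is an edge of $R$ joining $u_{i_j}$ to a vertex of $V_{D-1}$, and since this edge lies in $E_1'\subseteq E(L)$, we may take $z_j\in V_{D-1}$ with $\dist_L(u_{i_j},z_j)\le 1$. Then by the triangle inequality $\dist_L(z_j,z_{j+1})\le 1+M+1=M+2$, so $(z_j,z_{j+1})$ is an edge of $\hat H$; as $z_0=w$, $z_s=w'$, and every $z_j\in V_{\le D-1}=V(\hat H)$, the vertices $z_0,z_1,\ldots,z_s$ form a walk in $\hat H$ from $w$ to $w'$, which is what we wanted.

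The main point requiring care is the distance bookkeeping: one must route along a \emph{shortest} path of $H$ (so that consecutive good vertices are at $L$-distance $\le M$, not $\le 2M$, which is why the threshold $M+2$ in the definition of $\hat H$ suffices and a naive ``nearest good vertex'' argument via Observation~\ref{obs:good_node_is_dense} would not), and one should check the degenerate cases $w=w'$ and $\ell\le M$ (in the latter $w$ and $w'$ are already $\hat H$-adjacent since $\dist_L(w,w')\le M<M+2$). It is also worth noting, as the surrounding proof needs, that the event $B$ and the notion of good vertex depend only on the edges $E_1'$, so this lemma may later be used after conditioning on any outcome of $E_2'$.
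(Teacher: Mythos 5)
Your proof is correct, but it follows a genuinely different route from the paper's. The paper argues by contradiction: assuming $\hat H$ is disconnected, it selects the pair of connected components whose canonical connecting path is shortest, uses this minimality to force every good internal vertex of that path into $V_D$, projects these vertices into $V_{D-1}$ along edges of $E'_1$, and then locates a consecutive pair of projections lying in different components whose connecting segment has length at most $M$ (no good internal vertices, plus the event $B$ not occurring), contradicting the definition of $\hat H$. You instead give a direct construction: for an arbitrary pair $w,w'\in V_{\le D-1}$ you route along the single canonical shortest path $P_{w,w'}$, combine the subpath-canonicity property with the non-occurrence of $B(u_i,u_{i+M+1})$ to show that consecutive good vertices along this path are at most $M$ apart, and then project each good vertex into $V_{\le D-1}$ at $L$-cost $1$, obtaining a walk in $\hat H$ all of whose hops have $L$-length at most $M+2$. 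Both arguments ultimately rest on the same two ingredients—segments between consecutive good vertices have length at most $M$, and a good vertex of $V_D$ is one $E'_1$-edge away from $V_{D-1}$—but your direct chaining dispenses with the minimal-component bookkeeping and in effect proves the slightly stronger statement that any two vertices of $\hat H$ are joined by a walk built from a single canonical path; your remark that a naive appeal to Observation~\ref{obs:good_node_is_dense} would only yield gaps of $2M$, exceeding the $M+2$ threshold, correctly identifies the one step where care is required. The only cosmetic point is that $M$ need not be an integer, so indices such as $i+M+1$ should be read as $i+\floor{M}+1$; this changes nothing in the argument.
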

	\begin{proof}
		Assume that the event $B$ does not happen, and assume for contradiction that graph $\hat H$ is not connected. Let $\cset=\set{C_1,\ldots,C_r}$ be the set of all connected components of graph $\hat H$. For every pair $C_i,C_j$ of distinct components of $\cset$, consider the set $\pset_{i,j}=\set{P_{x,x'}\mid x\in V(C_i),x'\in V(C_j)}$ of paths (recall that $P_{x,x'}$ is the shortest path connecting $x$ to $x'$ in $H$ with $\sigma(P_{x,x'})$ lexicographically smallest among all such paths). We let $P_{i,j}$ be a shortest path in $\pset_{i,j}$. Choose two distinct components $C_i,C_j\in \cset$, whose path $P_{i,j}$ has the shortest length, breaking ties arbitrarily. Assume that $P_{i,j}$ connects a vertex $v\in C_i$ to a vertex $u\in C_j$, so $P_{i,j}=P_{v,u}$. Recall that $H\subseteq L$, and so the path $P_{i,j}$ is contained in graph $L$. Since we did not add edge $(u,v)$ to $\hat H$, the length of $P_{i,j}$ is greater than $M+2$. Since we have assumed that event $B$ does not happen, there is at least one good inner vertex on path $P_{i,j}$. Let $X$ be the set of all good vertices that serve as inner vertices of $P_{i,j}$.
		
		We first show that for each $x\in X$, $x\not \in V(\hat H)$ must hold. Indeed, assume for contradiction that $x\in V(\hat H)$, so $x$ belongs to some connected component of $V(\hat H)$. Assume first that $x\in V(C_i)$. Recall that the sub-path of $P_{i,j}$ from $x$ to $u$ is precisely $P_{x,u}$, so this path lies in $\pset_{i,j}$. But its length is less than the length of $P_{i,j}$, contradicting the choice of $P_{i,j}$. Otherwise, $x$ belongs to some connected component $C_{\ell}$ of $\cset$ with $\ell\neq i$. The sub-path of $P_{i,j}$ from $v$ to $x$ is precisely $P_{v,x}$, so this path must lie in $\pset_{i,\ell}$. Since its length is less than the length of $P_{i,j}$, this contradicts the choice of the components $C_i,C_j$. We conclude that $x\not\in V(\hat H)$.
		
		
		Since $V(\hat H)$ contains all vertices of $V_{\leq D-1}$, and every vertex in $X$ is a good vertex, it must be the case that $X\subseteq V_D$. Consider again some vertex $x\in X$. Since $x$ is a good vertex and $x\in V_D$, there must be an edge $e_x=(x,x')\in E'_1$, connecting $x$ to some vertex $x'\in V_{\leq D-1}$. In particular, $x'$ must belong to some connected component of $\cset$, and the edge $e_x$ lies in graph $L$. Assume that $X=\set{x_1,x_2,\ldots,x_q}$, where the vertices are indexed in the order of their appearance on $P_{i,j}$, from $v$ to $u$. Consider the sequence $\tilde \sigma=(v,x'_1,x'_2,\ldots,x'_q,u)$ of vertices. All these vertices belong to $V(\hat H)$, and $v\in C_i$, while $u\in C_j$. For convenience, denote $v=x'_0=x_0$ and $u=x'_{q+1}=x_{q+1}$. Then there must be an index $1\leq a\leq q$, such that $x'_a$ and $x'_{a+1}$ belong to distinct connected components of $\cset$. Note that the sub-path of $P_{i,j}$ between $x_a$ and $x_{a+1}$ is precisely $P_{x_a,x_{a+1}}$ -- the shortest path connecting $x_a$ to $x_{a+1}$ in $H$. Since no good vertices lie between $x_a$ and $x_{a+1}$ on this path, and since we have assumed that event $B$ does not happen, the length of this path is at most $M$. Therefore, there is a path in graph $L$, connecting $x'_a$ to $x'_{a+1}$, whose length is at most $M+2$. This path connects a pair of vertices that belong to different connected components of $\hat H$, contradicting the construction of $\hat H$.
	\end{proof}

Consider now the tree $\hat T$ and the graph $\hat H$. Recall that $\hat T$ is a rooted tree of depth $D-1$, $V(\hat T)=V(\hat H)$, $|V(\hat{H})|\le |V(H)|\le n$, and, assuming the event $B$ did not happen, $\hat H$ is a connected graph. Moreover, set $E'_2$ of edges is a subset of $E(\hat T)=E_2$, obtained by adding every edge of $E(\hat T)$ to $E'_2$ with probability $p$, independently from other edges. Therefore, assuming that event $B$ did not happen, we can use the induction hypothesis on the graph $\hat H$, the tree $\hat T$, and the set $E'_2$ of edges as $R$. Let $B'$ be the bad event that the diameter of $\hat H\cup E'_2$ is greater than $(\frac{101\ln n}{p})^{D-1}$. Note that the event $B'$ only depends on the random choices made in selecting the edges of $E'_2$.
From the induction hypothesis, the probability that $B'$ happens is at most $\frac{D-1}{n^{48}}$.


Lastly, we show that, if neither of the events $B,B'$ happens, then  $\diam(H')\le (\frac{101\ln n}{p})^{D}$.
\begin{observation}
	If neither of the events $B,B'$ happens, then $\diam(H')\le (\frac{101\ln n}{p})^{D}$.	
\end{observation}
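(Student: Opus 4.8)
The plan is to lift a short path in the contracted graph $\hat H\cup E_2'$ up to a short path in $H'=H\cup R$, paying a multiplicative factor of at most $M+2$ for each hop. Fix an arbitrary ordered pair $x_1,x_2\in V$; it suffices to bound $\dist_{H'}(x_1,x_2)$.

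First, since the event $B$ does not happen, Observation~\ref{obs:good_node_is_dense} provides, for each $i\in\{1,2\}$, a good node $y_i$ with $\dist_H(x_i,y_i)\le M$. By the definition of a good node, either $y_i\in V_{\le D-1}$, in which case we set $z_i:=y_i$, or $y_i\in V_D$ and there is an edge of $R$ joining $y_i$ to some $z_i\in V_{D-1}$; such an edge is incident to $V_D$, hence it lies in $E_1'$ and thus in $L\subseteq H'$. In either case $z_i\in V(\hat H)=V_{\le D-1}$, and $\dist_{H'}(x_i,z_i)\le\dist_H(x_i,y_i)+\dist_{H'}(y_i,z_i)\le M+1$.

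Next, since $B$ does not happen, the graph $\hat H$ is connected (by the preceding observation), and since $B'$ does not happen, $\diam(\hat H\cup E_2')\le L'$, where $L':=\big(\tfrac{101\ln n}{p}\big)^{D-1}$. As $z_1,z_2\in V(\hat H)$, there is a path from $z_1$ to $z_2$ in $\hat H\cup E_2'$ with at most $L'$ edges. Replacing each of its edges by a path in $H'$ (every edge of $\hat H$ is realized, by construction, by a path of length at most $M+2$ in $L\subseteq H'$, and every edge of $E_2'\subseteq R\subseteq E(H')$ is itself an edge of $H'$) yields a walk in $H'$ from $z_1$ to $z_2$ of length at most $L'(M+2)$. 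Hence $\dist_{H'}(z_1,z_2)\le L'(M+2)$, and therefore
\[
\dist_{H'}(x_1,x_2)\ \le\ 2(M+1)+L'(M+2).
\]

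It remains to check that the constants telescope. In the inductive step $D\ge 2$, so $L'\ge \tfrac{101\ln n}{p}\ge 1$, whence $\tfrac{2M+2}{L'}\le\tfrac{2M+2}{101\ln n/p}\le 1$ using $2\le\tfrac{\ln n}{p}$ (the same mild assumption already used in the base case, which holds whenever $p<1$ and $\ln n\ge 2$). Consequently $\tfrac{2M+2}{L'}+(M+2)\le M+3\le\tfrac{101\ln n}{p}$, which rearranges to $2(M+1)+L'(M+2)=L'\big(\tfrac{2M+2}{L'}+M+2\big)\le L'\cdot\tfrac{101\ln n}{p}=\big(\tfrac{101\ln n}{p}\big)^{D}$. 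Since $x_1,x_2$ were arbitrary, $\diam(H')\le\big(\tfrac{101\ln n}{p}\big)^{D}$. The argument is mostly bookkeeping; the steps requiring care are verifying that the lifted walk truly lies in $H'$ (i.e.\ that $L\subseteq H'$ and $E_2'\subseteq E(H')$, and that a good vertex in $V_D$ can be pushed into $V_{\le D-1}$ along an edge already present in $H'$), and confirming that the constant $101$ was chosen so that one extra factor of $\tfrac{101\ln n}{p}$ absorbs both the per-hop blow-up $M+2$ and the additive $2(M+1)$ term.
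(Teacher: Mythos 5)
Your proof is correct and follows essentially the same route as the paper's: move each $x_i$ to a nearby good vertex (cost $M$), push it into $V_{\le D-1}=V(\hat H)$ along an $E_1'$ edge if needed (cost $1$), invoke the bound $(\tfrac{101\ln n}{p})^{D-1}$ on $\hat H\cup E_2'$ from the non-occurrence of $B'$, and expand each $\hat H$-edge into a path of length at most $M+2$ in $L\subseteq H'$. The only difference is that you verify the final constant absorption $2(M+1)+L'(M+2)\le(\tfrac{101\ln n}{p})^{D}$ in more explicit detail than the paper does.
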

\begin{proof}
	Consider any pair $x_1,x_2\in V$ of vertices. It is sufficient to show that, if events $B,B'$ do not happen, then $\dist_{H'}(x_1,x_2)\le (\frac{101\ln n}{p})^{D}$.
	
	Let $x'_1$ be a good node in $V(H)$ that is closest to $x_1$, and define $x'_2$ similarly for $x_2$. From Observation~\ref{obs:good_node_is_dense}, $\dist_{H}(x_1,x'_1)\le M$. If $x'_1\in V_{\le D-1}$, then we define $x''_1=x'_1$, otherwise we let $x''_1$ be the node of $ V_{D-1}$ that is connected to $x'_1$ by an edge of $E'_1$, and we define $x''_2$ similarly for $x_2$.
	Therefore, $x''_1,x''_2\in V_{\le D-1}=V(\hat{H})$, and, assuming event $B$ does not happen,
	$\dist_{H'}(x_1,x''_1)\le M+1$, and $\dist_{H'}(x_2,x''_2)\le M+1$.
	Since we have assumed that the bad event $B'$ does not happen, $\dist_{\hat H\cup E_2'}(x''_1,x''_2)\le (\frac{101\ln n}{p})^{D-1}$. 
	Recall that for every edge $e=(u,v)\in \hat H\cup E_2'$, if $e\in E_2'$ then $e\in E(H')$; otherwise, $e\in E(\hat H)$, and there is a path in graph $H\cup E_1'$ of length at most $M+2$ connecting $u$ to $v$ in $H$. Therefore, $\dist_{H'}(x_1'',x_2'')\leq (M+2)\cdot \dist_{\hat H}(x_1'',x_2'')\leq  (\frac{101\ln n}{p})^{D-1}\cdot (M+2)$. 
	
	Altogether, since $M=(50\ln n)/p$,
	\[\begin{split}
	\dist_{H'}(x_1,x_2)&\le  \dist_{H'}(x_1,x''_1)+\dist_{H'}(x''_1,x''_2)+\dist_{H'}(x_2,x''_2)\\
	&\le \left(\frac{101\ln n}{p}\right )^{D-1}\cdot (M+2)+(2M+2)\\
	&\le \left (\frac{101\ln n}{p}\right )^{D}.
	\end{split}\]
\end{proof}

The probability that either $B$ or $B'$ happen is bounded by $\frac{D}{n^{48}}$. Therefore, with probability at least $1-\frac{D}{n^{48}}$, neither of the events happens, and $\diam(H')\le (\frac{101\ln n}{p})^{D}$.
This concludes the proof of Theorem~\ref{thm:random_tree_planting}.
\end{proofof}

\section{Low-Diameter Packing of Edge-Disjoint Trees: Proof of Theorem \ref{thm:Karger_diameter-main}}
\label{sec: proof of second main thm}

In this section we provide the proof of Theorem \ref{thm:Karger_diameter-main}.
The main tool in the proof of Theorem \ref{thm:Karger_diameter-main} is the following theorem.

\begin{theorem}
\label{thm:Karger_diameter}
Let $k,D,n$ be any positive integers with $k>1000\ln n$, let $\frac{707\ln n}{k}\le p \le 1$ be a real number, and let $G$ be an $n$-vertex $k$-edge-connected graph of diameter $D$. Let $G'$ be a sub-graph of $G$ with $V(G')=V(G)$, where every edge $e\in E(G)$ is added to $G'$ with probability $p$ independently from other edges. Then, with probability at least $1-1/\poly(n)$, $G'$ is a connected graph, and its diameter is bounded by $k^{D(D+1)/2}$.
\end{theorem}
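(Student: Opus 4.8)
The plan is to prove Theorem~\ref{thm:Karger_diameter} by induction on the diameter $D$, in the spirit of the proof of Theorem~\ref{thm:random_tree_planting}: I will reduce the diameter-$D$ case to the diameter-$(D-1)$ case at the cost of multiplying the diameter bound by $k^{D}$, which is consistent with $k^{D(D+1)/2}=k^{D}\cdot k^{(D-1)D/2}$. The probabilistic engine is Karger's cut-counting bound: the number of cuts $(S,V\setminus S)$ of $G$ with $|\delta_G(S)|\le\alpha k$ is at most $n^{2\alpha}$. Combining this with a Chernoff bound and using $p\ge 707\ln n/k$, I first establish that with probability $1-1/\poly(n)$, every cut $S$ of $G$ with $|\delta_G(S)|\le\poly(n)$ satisfies $|\delta_{G'}(S)|\ge p\,|\delta_G(S)|/2\ge pk/2\ge 350\ln n$; the union bound over the $\le\sum_\alpha n^{2\alpha}$ relevant cuts converges because the per-cut failure probability is $n^{-\Omega(\alpha)}$. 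In particular every such cut is crossed by an edge of $G'$, and since a cut that is empty in $G'$ is a cut of $G$ of value $\ge k$, this already gives that $G'$ is connected; I condition on this event throughout. The base case $D=1$, where $G$ is a complete graph (possibly with parallel edges), is handled directly by a ball-growing argument: from any vertex $u$, as long as the current $G'$-ball $B_t$ has $|B_t|\le k/707$, every vertex outside $B_t$ has at least $|B_t|$ edges of $G$ into $B_t$, so a Chernoff bound shows $|B_{t+1}|\ge\Omega(np)\cdot|B_t|=\Omega(\ln n)\cdot|B_t|$; once $|B_t|\ge k/707$ we have $(1-p)^{|B_t|}\le 1/n$, so $B_{t+1}=V$. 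Hence $\diam(G')=O(\log n)\le k$ with high probability, using $k>1000\ln n$.

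For the inductive step, fix an arbitrary root $r\in V(G)$ and let $L_0,\dots,L_D$ be the BFS layers of $G$ from $r$, so every edge of $G$ lies within one layer or between two consecutive layers; this partitions $E(G)$ into the set $E_1$ of edges incident to $L_D$ and the rest $E_2$, which induces an independent split of $G'$ into $G'_1$ (from $E_1$) and $G'_2$ (from $E_2$). I would then ``fold'' the last layer $L_D$ into $L_{\le D-1}$: using $G'_1$ together with the deterministic structure of $G$, group $V(G)$ into connected clusters, each meeting $L_{\le D-1}$, so that contracting them yields a graph $\hat G$ on at most $n$ vertices that (i) is still $k$-edge-connected---automatic, since contracting connected vertex sets never decreases the value of any cut---and (ii) has diameter at most $D-1$. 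One arranges, moreover, that each cluster has small $G'_1$-diameter and that every edge of $\hat G$ can be realized by a path of length at most $k^{D}$ in $G'_1$; this is where $k$-edge-connectivity of $G$ is used, to guarantee via the conditioned event that the relevant bundles of parallel $G$-edges between clusters contain a surviving sampled edge. Since $G'_2$ is independent of $G'_1$ and, restricted to $L_{\le D-1}$ and passed through the contraction, plays the role of a $p$-sample of $\hat G$, the induction hypothesis applies to $\hat G$ and gives that its sampled version has diameter at most $k^{(D-1)D/2}$ with high probability. Lifting each $\hat G$-edge to a $G'$-path of length $\le k^{D}$, and adding $O(k^{D})$ to route an arbitrary vertex into $L_{\le D-1}$, yields $\diam(G')\le k^{D}\cdot k^{(D-1)D/2}+O(k^{D})\le k^{D(D+1)/2}$ (with suitable slack in the induction to absorb the lower-order term).

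The main obstacle is precisely the folding step: one must cluster $V(G)$ using only the sampled edges $G'_1$ so that, simultaneously, the contraction $\hat G$ stays $k$-edge-connected, its diameter drops to $D-1$, and every edge of $\hat G$ lifts to a $G'$-path of length bounded by the per-level factor $k^{D}$ (which is the source of the final $k^{D(D+1)/2}$). The last requirement couples the combinatorics (clusters must be $G'_1$-connected and of small $G'_1$-diameter, and every folded edge must survive sampling) to the hypotheses $p\ge 707\ln n/k$ and $k$-edge-connectivity, and it is what forces the clustering to be chosen carefully rather than arbitrarily. Keeping the randomness split between ``building/verifying $\hat G$'' ($G'_1$) and ``recursing on $\hat G$'' ($G'_2$) clean enough that the induction hypothesis applies verbatim is the other delicate point; beyond these, the argument is Chernoff-and-union-bound bookkeeping.
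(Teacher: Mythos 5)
Your plan leaves its central step unproved, and as specified that step cannot be carried out. You require of the ``folding'' simultaneously that (a) the clusters be connected \emph{and of small diameter} in $G'_1$ (the sample of edges incident to $L_D$), (b) the contracted graph $\hat G$ remain $k$-edge-connected, (c) its diameter drop to $D-1$, and (d) the recursion see a fresh $p$-sample of $\hat G$. For (a): a cluster built only from bottom-layer-incident sampled edges can be (essentially) a long sampled path through $L_D$ --- take $D=2$ and a bottom layer that is a path of $n/k$ cliques of size $k$; the sample restricted to $E_1$ is w.h.p.\ connected but has diameter $\Theta(n/k)$, which can vastly exceed $k^{D}$ and even $k^{D(D+1)/2}$. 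Nothing in $G'_1$ can shortcut such a cluster; the shortcuts must come from sampled edges at \emph{higher} BFS levels, which your requirement (a) forbids using. For (b) versus (d): contracting connected vertex sets preserves $k$-edge-connectivity only if $\hat G$ keeps \emph{all} edges of $G$, but then the $E_1$-edges of $\hat G$ carry already-revealed coin flips, so what you pass to the induction is not a fresh draw from $\mathcal{D}(\hat G,p)$; if instead $\hat G$ keeps only $E_2$-edges, a cut of $\hat G$ may lose all of its $E_1$-edges and fall below $k$, so ``automatic'' $k$-edge-connectivity fails. For (c): absorbing $L_D$ into clusters that each meet $L_{\le D-1}$ bounds the \emph{radius} of the contraction from the root cluster by $D-1$, not its diameter, so the induction hypothesis (stated for diameter-$D$, $k$-edge-connected graphs) does not apply verbatim. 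You flag the folding as ``the main obstacle,'' but no construction satisfying (a)--(d) is given, and (a) alone is false in general.

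The paper's proof is organized precisely so that none of these tensions arise. Karger's sampling bound (Theorem~\ref{thm:Karger_sampling}, via Claim~\ref{claim: connected whp}) is used once, only to get connectivity of $G'$. The diameter bound then comes from an equivalent phase-by-phase re-sampling of $G'$ (all non-tree edges first, then the BFS-tree levels from the bottom up) and Lemma~\ref{lem:diameter_recursion}, proved by induction on the phase with thresholds $M_i\approx 7N^{D-i}M_{i-1}$, matching your bookkeeping. The probabilistic engine in each phase is Theorem~\ref{thm:random_tree_planting_generalized}: the auxiliary graph there (the analogue of your $\hat G$, a path through the vertices of $V_{D-i}$ that stitch the previous-phase clusters together) is only required to be \emph{connected} --- not $k$-edge-connected and not of diameter $D-1$ --- and the induction inside that theorem is on the \emph{depth of the BFS tree}, which genuinely drops by one when the bottom layer is peeled off. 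Distances inside phase-$0$ clusters are then bounded by $N^D$ using the sampled tree edges of \emph{all} levels, which is exactly the resource your split denies itself. Without a working substitute for this tree-planting mechanism, your proposal is a plausible outline but not a proof.
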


Karger \cite{karger1999random} has shown that, if $G$ is a $k$-connected graph, and $G'$ is obtained by sub-sampling the edges of $G$ with probability $\Omega(\log n/k)$, then $G'$ is a connected graph with high probability. Theorem~\ref{thm:Karger_diameter} further shows that the diameter of $G'$ is with high probability bounded by $k^{D(D+1)/2}$, where $D$ is the diameter of $G$. 

Theorem \ref{thm:Karger_diameter-main} easily follows from Theorem~\ref{thm:Karger_diameter}: 
Let $r=\lfloor k/(707\ln n)\rfloor$. We partition $E(G)$ into subsets $E_1,\ldots,E_{r}$ by choosing, for each edge $e\in E(G)$, an index $i$ independently and uniformly at random from $\{1,2,\ldots,r\}$ and then adding $e$ to $E_i$.
For each $1\le i\le r$, we define a graph $G_i$ by setting $V(G_i)=V(G)$ and $E(G_i)=E_i$. 
Finally, for each graph $G_i$, we compute an arbitrary BFS tree $T_i$, and return the resulting collection $\tset=\set{T_1,\ldots,T_r}$ of trees. It is immediate to verify that the graphs $G_1,\ldots,G_r$ are edge-disjoint, and so are the trees of $\tset$. Moreover, applying Theorem \ref{thm:Karger_diameter} to each graph $G_i$ with $p=1/r$, we get that with probability $1-1/\poly(n)$, $\diam(T_i)\leq 2\diam(G_i)\leq O(k^{D(D+1)/2})$. Using the union bound over all $1\leq i\leq r$ completes the proof of Theorem  \ref{thm:Karger_diameter-main}. It now remains to prove Theorem~\ref{thm:Karger_diameter}. 

\subsection{Bounding the Diameter of a Random Subgraph: Proof of Theorem  \ref{thm:Karger_diameter}}\label{subsec: proof of diameter bound for kargers sampling}
This subsection is dedicated to proving Theorem \ref{thm:Karger_diameter}. 
We assume that we are given an $n$-vertex $k$-edge connected graph $G=(V,E)$, with $k>1000\ln n$, and a parameter $\frac{707\ln n}{k}\le p\le 1$. Our goal is to show that a random graph $G'$, obtained by independently sub-sampling every edge of $G$ with probability $p$, has diameter at most $k^{D(D+1)/2}$ with probability at least $1-1/\poly(n)$.

Let $B$ be the bad event that the graph $G'$ is not connected. We start by establishing that $B$ only happens with low probability, using a well known result of Karger~\cite{karger1999random}.
\begin{claim}\label{claim: connected whp}
	The probability that the event $B$ happens is at most $O(1/n^{10})$.
\end{claim}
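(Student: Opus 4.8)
The plan is to reproduce Karger's random-sampling argument for connectivity, being careful to track constants. I would use two ingredients: (i) for a fixed cut $(S,V\setminus S)$ of value $w=|\delta_G(S)|$, the probability that none of its edges is sampled into $G'$ is exactly $(1-p)^w\le e^{-pw}$; and (ii) Karger's cut-counting bound \cite{karger1999random}, stating that an $n$-vertex graph whose minimum cut has value $c$ has at most $n^{2\alpha}$ cuts of value at most $\alpha c$, for every real $\alpha\ge 1$. Since $G$ is $k$-edge-connected, its minimum cut value $c$ satisfies $c\ge k$, so $pc\ge pk\ge 707\ln n$.

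First I would note that $B$ holds exactly when some cut of $G$ loses all of its edges under the sampling, so by the union bound $\Pr[B]\le\sum_{(S,V\setminus S)}(1-p)^{|\delta_G(S)|}$. To evaluate this sum I would list the cuts of $G$ as $C_1,C_2,\ldots$ in non-decreasing order of their values $w_1\le w_2\le\cdots$. Since $w_j\ge c$ (as $c$ is the minimum cut value), we may apply ingredient (ii) with $\alpha=w_j/c\ge 1$: as cuts $C_1,\ldots,C_j$ all have value at most $w_j$, we get $j\le n^{2w_j/c}$, i.e. $w_j\ge\frac{c\ln j}{2\ln n}$; averaging this with $w_j\ge c$ gives $w_j\ge\frac{c}{2}+\frac{c\ln j}{4\ln n}$. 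Substituting,
\[
\Pr[B]\ \le\ \sum_{j\ge 1}(1-p)^{w_j}\ \le\ \sum_{j\ge 1}e^{-pw_j}\ \le\ e^{-pc/2}\sum_{j\ge 1}j^{-pc/(4\ln n)}.
\]
Using $pc\ge 707\ln n$, the prefactor is at most $e^{-353\ln n}=n^{-353}$, while the exponent in the series is at least $176$, so the series converges to a value less than $2$; hence $\Pr[B]\le 2n^{-353}=O(1/n^{10})$.

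I do not expect a real obstacle here: the one nontrivial input is Karger's cut-counting bound, which I would cite rather than reprove, and the generous constant $707$ in the hypothesis on $p$ leaves ample slack to absorb both the $n^{2\alpha}$ growth in the number of cuts and any constant factors in the precise form of that bound. (If $p=1$ the claim is vacuous, and for $p<1$ the estimate $(1-p)^w\le e^{-pw}$ used throughout avoids any need to reason about $\ln(1-p)$.)
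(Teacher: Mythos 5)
Your proof is correct, but it takes a different route from the paper. The paper treats Karger's sampling theorem as a black box: it invokes the cut-preservation statement (the adaptation of Theorem 2.1 of \cite{karger1999random}, stated in the paper as Theorem~\ref{thm:Karger_sampling}) with $d=10$, checks that $\epsilon=\sqrt{3(d+2)\ln n/(kp)}<0.3$ thanks to $p\ge 707\ln n/k$, and concludes that with probability $1-O(1/n^{10})$ every cut of $G'$ has value at least $(1-\epsilon)pk>0$, hence $G'$ is connected. You instead re-derive connectivity from first principles: a union bound over all cuts, combined with Karger's cut-counting bound ($\le n^{2\alpha}$ cuts of value at most $\alpha c$), with the rank-versus-value trick $w_j\ge \frac{c}{2}+\frac{c\ln j}{4\ln n}$ to make the sum converge. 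This is essentially the classical proof underlying Karger's sampling theorem itself, specialized to the weaker event ``some cut is empty.'' Your argument is sound (the cut-counting bound applies to multigraphs since cuts are vertex bipartitions, and the constant $707$ indeed leaves ample slack), and it yields a much stronger bound, $O(n^{-353})$ rather than $O(n^{-10})$; what it costs is reproducing an argument the paper gets for free by citation, while the paper's route also delivers the two-sided concentration of all cut values, of which connectivity is just the trivial consequence used here.
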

\begin{proof}
We use the following result of Karger~\cite{karger1999random}.
\begin{theorem}[Adaptation of Theorem 2.1 from~\cite{karger1999random}]
\label{thm:Karger_sampling}
Let $k,n$ be any positive integers, and let $d,p$ be any positive real numbers such that $0<p<1$. Let $G$ be an $n$-vertex $k$-edge connected graph. Let $G'\sim \mathcal{D}(G,p)$ be a random subgraph of $G$ and let $\epsilon=\sqrt{\frac{3(d+2)\ln n}{kp}}$.  
If $\epsilon<1$ then, with probability $1-O(1/n^d)$,
every cut in $G'$ has value between $(1+\epsilon)$ and $(1-\epsilon)$ times its expected value.
\end{theorem}

We apply Theorem~\ref{thm:Karger_sampling} to the graph $G$, with the parameter $p$ and $d=10$. Since $G$ is $k$-edge connected and $p\geq (707\ln n)/k$, we get that:

\[\epsilon=\sqrt{\frac{3(d+2)\ln n}{kp}}\leq \sqrt{\frac{36\ln n}{k\cdot (707
\ln n)/k}}\leq \sqrt{\frac{36}{707}} <0.3<1.\]

 Therefore, with probability $1-O(1/n^{10})$, for every cut $(S,V\setminus S)$ in $G'$, $|E_{G'}(S,V\setminus S)|\ge (1-\eps)\cdot p\cdot|E_{G}(S,V\setminus S)|\ge 0.7\cdot pk>0$. Therefore, with probability $1-O(1/n^{10})$, graph $G'$ is connected, and event $B$ happens with probability $O(1/n^{10})$.
\end{proof}

We now proceed to bound the diameter of $G'$. 
Denote $G=(V,E)$, and let $T$ be a BFS tree of $G$, rooted at an arbitrary node of $G$. Since $G$ has diameter at most $D$, the depth of $T$ is at most $D$. For each integer $0\le i\le D$, we denote by $V_i$ the set of nodes that lie at the $i$th level of $T$ (recall that the root lies at level $0$), and we denote $V_{\le i}=\bigcup_{j=0}^iV_j$. For each $0\le i\le D-1$, let $E_i$  be the set of edges of $T$ connecting vertices of $V_i$ to vertices of $V_{i+1}$.  We also let $E_{\out}=E\setminus E(T)$, so $E=E_{\out}\cup\left(\bigcup_{i=0}^{D-1}E_i\right)$.

Recall that $G'\sim \mathcal{D}(G,p)$. 
We first define a different (but equivalent) sampling algorithm for generating a random graph $G'$ from the distribution $\mathcal{D}(G,p)$. We will then use this algorithm to bound the diameter of $G'$. The algorithm consists of $D+1$ phases. For each $0\leq i\leq D$, we compute a random subgraph $G'_i$ of $G$,  with $V(G'_i)=V(G)$, such that $G'_0\subseteq G'_1\subseteq\cdots\subseteq G'_D$. The final graph $G'_D$ is denoted by $G'$. For all $0\leq i\leq D$, we denote by $\cset_i$ the set of all connected components of the graph $G'_i$. Throughout the algorithm, we maintain a set $\hat E$ of edges, that is initialized to $\emptyset$.

In order to execute the $0$th phase, we consider the edges of $E_{\out}$. Each such edge is added to the set $\hat{E}$ with probability $p$ independently from other edges. Let $E'_{\out}\subseteq E_{\out}$ be the set of edges that are added to $\hat{E}$ in this phase. We then set $G'_0=(V,E'_{\out})$. Observe that $G'_0$ may not be a connected graph. We denote by $\cset_0$ the set of all connected components of $G'_0$.
We refer to the connected components of $\cset_0$ as \emph{phase-$0$ clusters}. 

For each $1\le i\le D$, in order to execute the $i$th phase, we consider the set $E_{D-i}$ of edges. Each such edge is added to $\hat{E}$ with probability $p$ independently from other edges. We denote by $E'_{D-i}\subseteq E_{D-i}$ the set of edges that are added to $\hat{E}$ at phase $i$.
Graph $G'_i$ is obtained from the graph $G'_{i-1}$ by adding all edges of $E'_{D-i}$ to it.  As before, we denote by $\cset_i$ the set of all connected components of $G'_i$, and we call them \emph{phase-$i$ clusters}. 

Let $E'$ be the set $\hat{E}$ at the end of this algorithm. We denote by $G'=(V,E')$ the final graph that we obtain. Clearly, $G'=G'_D$, and it is generated from the distribution $\mathcal{D}(G,p)$, since $E=E_{\out}\cup\left(\bigcup_{i=0}^{D-1}E_i\right)$, and the edge sets $E_{\out},E_0,\ldots,E_{D-1}$ are mutually disjoint. We denote by $T'$ the subgraph of $T$ with $V(T')=V(T)$ and $E(T')=\bigcup_{i=0}^{D-1}E'_i$. Observe that $T'\sim\mathcal{D}(T,p)$.

Consider a pair $u,u'\in V$ of distinct vertices. We say that $u$ and $u'$ are \emph{joined at phase $0$}, if they belong to the same connected component of $G'_0$. We say that they are \emph{joined at phase $i$} for $1\leq i\leq D$, if $u$ and $u'$ belong to the same connected component of $G'_i$ but they lie in different connected components of $G'_{i-1}$. For all $0\leq i\leq D$, let $\Pi_i$ denote the set of all pairs of vertices that joined at phase $i$. Note that, if the event $B$ does not happen, then every pair $(u,u')$ of distinct vertices of $V$ lies in a unique set $\Pi_i$, for some $0\leq i\leq D$.

In order to bound the distances between pairs of nodes in $G'$, we need the following theorem, that slightly generalizes Theorem~\ref{thm:random_tree_planting}. The proof is similar to that of Theorem~\ref{thm:random_tree_planting} and is deferred to Section~\ref{sec:proof_of_random_tree_planting_generalized}.
\begin{theorem}
\label{thm:random_tree_planting_generalized}
Let $T$ be a rooted tree of depth $D$ with $|V(T)|\leq n$, and let $H$ be a connected graph with $V(H)\subseteq V(T)$. For a real number $0<p<1$, let $R\sim \mathcal{D}(T,p)$ be a random subgraph of $T$, so $V(R)=V(T)$, and every edge of $E(T)$ is added to $E(R)$ independently with probability $p$. Then with probability at least $1-\frac{D}{n^{48}}$, for every pair $u,v$ of vertices of $H$, $\dist_{R\cup H}(u,v)\le (\frac{101\ln n}{p})^{D}$.
\end{theorem}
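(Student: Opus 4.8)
The plan is to follow the proof of Theorem~\ref{thm:random_tree_planting} essentially verbatim, adapting it to the fact that $V(H)$ may now be a proper subset of $V(T)$. As there, I would set $M=\frac{50\ln n}{p}$, write $V_i=V_i(T)$ and $V_{\leq i}=\bigcup_{t\leq i}V_t$, and denote by $p_T(x)$ the parent of $x$ in $T$. Call a vertex $x\in V(H)$ \emph{good} if either $x\in V_{\leq D-1}$, or $x\in V_D$ and the tree edge $(x,p_T(x))$ lies in $R$. For an ordered pair $x,x'\in V(H)$, let $P_{x,x'}$ be the lexicographically-first shortest $x$-$x'$ path in $H$, and let $B(x,x')$ be the event that $|P_{x,x'}|>M$ and $P_{x,x'}$ has no good internal vertex. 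As in the original argument, $B(x,x')$ forces every internal vertex of $P_{x,x'}$ to lie in $V_D$ with its (distinct) parent edge absent from $R$, so $\Pr[B(x,x')]\leq(1-p)^M\leq n^{-50}$, and a union bound over the at most $\binom n2$ pairs gives $\Pr[B]\leq n^{-48}$ for $B=\bigcup_{x,x'}B(x,x')$. I would also record two elementary facts for later use: if $B$ does not occur and $V(H)$ contains a good vertex, then every $x\in V(H)$ lies within $H$-distance $M$ of a good vertex; and if $V(H)$ contains no good vertex at all, then $B$ not occurring already forces $\dist_H(u,v)\leq M$ for all $u,v\in V(H)$, which is more than enough.

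The proof would proceed by induction on $D$. The base case $D=1$ is as in Theorem~\ref{thm:random_tree_planting}: $T$ is a star with center $c\in V(T)$; assuming $B$ does not occur (and $V(H)$ has a good vertex, else we are done by the above), the nearest good vertices $u',v'$ of $u,v$ satisfy $\dist_H(u,u'),\dist_H(v,v')\leq M$, each of $u',v'$ equals $c$ or is joined to $c$ by an edge of $R$, and $c\in V(R\cup H)$, so $\dist_{R\cup H}(u,v)\leq 2M+2\leq\frac{101\ln n}{p}$. For the inductive step, I would partition $E(T)$ into the set $E_1$ of edges incident to $V_D$ and the remaining set $E_2$, write $E_1'=E_1\cap R$, $E_2'=E_2\cap R$ and $L=H\cup E_1'$ (so the events ``$x$ is good'', and hence $B$, depend only on $E_1'$), and let $\hat T=T-V_D$, a rooted tree of depth $D-1$ with $V(\hat T)=V_{\leq D-1}$.

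The one genuinely new ingredient is the choice of the auxiliary graph $\hat H$: I would take $V(\hat H)=\bigl(V(H)\cap V_{\leq D-1}\bigr)\cup\{p_T(x):x\in V(H)\cap V_D,\ (x,p_T(x))\in R\}$ — the low-level vertices of $H$ together with the tree-parents of the good level-$D$ vertices of $H$ — with an edge $(w,w')$ of $\hat H$ whenever $\dist_L(w,w')\leq M+2$. Then $V(\hat H)\subseteq V_{\leq D-1}=V(\hat T)$ and $|V(\hat H)|\leq n$. Granting that $\hat H$ is connected whenever $B$ does not occur, I would apply the induction hypothesis to $(\hat H,\hat T)$ with $E_2'\sim\mathcal D(\hat T,p)$ playing the role of $R$ — legitimate since $\hat H$ depends only on $E_1'$, independent of $E_2'$ — to get that, outside an event $B'$ of probability at most $(D-1)/n^{48}$ depending only on $E_2'$, $\dist_{\hat H\cup E_2'}(w,w')\leq(\tfrac{101\ln n}{p})^{D-1}$ for all $w,w'\in V(\hat H)$; since every edge of $\hat H$ is realized by a path of length $\leq M+2$ in $L\subseteq R\cup H$, this yields $\dist_{R\cup H}(w,w')\leq(M+2)(\tfrac{101\ln n}{p})^{D-1}$ for such $w,w'$. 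Finally, when $B$ does not occur, every $x\in V(H)$ reaches a vertex of $V(\hat H)$ within distance $M+1$ in $R\cup H$ (go to the nearest good vertex $y$ of $x$ in $H$, and if $y\in V_D$ continue along $(y,p_T(y))\in R$ into $V(\hat H)$; the case $V(\hat H)=\emptyset$ is the trivial sub-case above). Combining the three bounds exactly as in the original computation gives $\dist_{R\cup H}(u,v)\leq 2(M+1)+(M+2)(\tfrac{101\ln n}{p})^{D-1}\leq(\tfrac{101\ln n}{p})^D$, while $\Pr[B\cup B']\leq D/n^{48}$.

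The main obstacle — the only place the argument truly differs from Theorem~\ref{thm:random_tree_planting} — is showing that $\hat H$ is connected when $B$ does not occur. In the original proof $V(\hat H)$ is all of $V_{\leq D-1}$ and is contained in $V(H)$, which lets one run the ``closest pair of connected components'' argument using shortest paths of $H$ itself; here $V(\hat H)$ is smaller and may contain ``scaffolding'' vertices $p_T(x)\notin V(H)$. I would therefore re-run that argument using shortest paths in the surrogate graph $H^*$ obtained from $H$ by attaching, for each good $x\in V(H)\cap V_D$, a pendant edge from $x$ to $p_T(x)$; note $H^*\subseteq L$, so $\hat H$-adjacency is implied by $H^*$-distance $\leq M+2$, and $H^*$ is connected. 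One then checks, as before, that the good internal vertices of an $H^*$-shortest path between two components of $\hat H$ must all lie in $V_D$ (by minimality of the chosen pair of components, together with $V(\hat H)\supseteq\{p_T(x):x\text{ good},\ x\in V_D\}$), so that each such vertex is joined by an edge of $E_1'$ to a vertex of $V(\hat H)$; the chain-of-parents construction of the original proof then exhibits two vertices in distinct components of $\hat H$ at $L$-distance $\leq M+2$, a contradiction. A minor additional point is that the scaffolding vertices $p_T(x)\in V(\hat H)$ are not isolated in $\hat H$, which follows from the same observation ($p_T(x)$ is at $L$-distance $1$ from $x$, and $x$ lies within $H$-distance $M$ of a good vertex).
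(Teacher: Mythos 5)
Your overall architecture is the paper's: the same notion of good vertices, the same bad event $B$ defined via the lexicographically-first shortest paths $P_{x,x'}$ of $H$, the same split of $E(T)$ into edges incident to $V_D$ and the rest, the same graph $L=H\cup E_1'$, the same vertex set for $\hat H$ (your $(V(H)\cap V_{\le D-1})\cup\{p_T(x): x\in V(H)\cap V_D \mbox{ good}\}$ is exactly the paper's $U_1\cup U_2$), the same independence bookkeeping, the same inductive application to $(\hat H,\hat T)$ with $E_2'$, and the same final arithmetic (your $M+2$ threshold versus the paper's $M+4$ is immaterial). The one place you genuinely deviate is the connectivity proof for $\hat H$, where you replace the paper's device of augmenting canonical $H$-paths between the sets $\Gamma(u),\Gamma(v)$ by single scaffolding edges at their two ends, with shortest paths in the surrogate graph $H^*=H\cup\{(x,p_T(x)): x \mbox{ good}, x\in V_D\}$ -- and that step, as you state it, has a gap.

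The gap is in the chain-of-parents conclusion. To get $\dist_L(z'_a,z'_{a+1})\le M+2$ for consecutive good vertices $z_a,z_{a+1}$ on your chosen path, you need the segment between them to have length at most $M$, and the only available tool is the event $B(z_a,z_{a+1})$ -- which speaks exclusively about the canonical path $P_{z_a,z_{a+1}}$ of $H$. Your segment is a piece of an $H^*$-shortest path; it has no good internal vertices, but $B$ gives no control over non-canonical paths: it is perfectly consistent with $B$ not occurring that $\dist_H(z_a,z_{a+1})>M$ and that some shortest path between them avoids all good vertices, provided the canonical path happens to pass through one. In that case your segment can be long, the parents $z'_a,z'_{a+1}$ need not be close in $L$, and the contradiction never materializes. (Note also that the probability bound for $B$ fundamentally relies on the paths being fixed, canonical paths of $H$; you cannot simply redefine $B$ over $H^*$-paths, since $H^*$ depends on the sampled edges $E_1'$.) The paper's $\Gamma(\cdot)$ construction exists precisely to dodge this: every long path in its argument is a canonical $H$-path, and scaffolding edges appear only singly at the two ends. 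Your route can be repaired -- e.g., choose the $H^*$-shortest path with the same lexicographic tie-breaking and then argue that each segment, having no internal vertex in $V(\hat H)$ and hence no scaffolding vertex or edge, is an $H$-shortest path that is lexicographically minimal among $H^*$-shortest paths and therefore equals $P_{z_a,z_{a+1}}$, so that $B$ applies -- but this additional argument is exactly what is missing from the step you describe as running ``as before.''
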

We use a parameter $N=(101\ln n)/p$. Since $p\geq (707\ln n)/k$, we get that $7N \leq k$. 
For each $0\le i\le D$, we define a distance threshold $M_i$, as follows. We let $M_0=N^D$, and for all $1\leq i\leq D$, we let $M_i=7N^{D-i}\cdot M_{i-1}$. It is easy to verify that, for all $0\leq i\leq D$:
\[M_i\leq 7^iN^{D+(D-1)+\cdots+D-i}\leq (7N)^{D(D+1)/2}\leq k^{D(D+1)/2}.\]
For each $0\leq i\leq D$, we say that a bad event $B_i$ happens, if for some pair $(u,u')\in \Pi_0\cup\cdots\cup\Pi_i$ of distinct vertices, the distance between $u$ and $u'$ in $G'$ is greater than $M_i$. The following lemma is central to the proof of Theorem~\ref{thm:Karger_diameter}.

\begin{lemma}
\label{lem:diameter_recursion}
For each $0\le i\le D$, the probability of event $B_i$ is at most $i/n^{43}$.
\end{lemma}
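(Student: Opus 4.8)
I would prove the lemma by induction on $i$, arranging things so that in each step the only randomness-dependent failure comes from Theorem~\ref{thm:random_tree_planting_generalized}. For the base case $i=0$, note that every phase-$0$ cluster $C\in\cset_0$ is a connected graph with $V(C)\subseteq V=V(T)$ determined by the phase-$0$ choices, whereas the edges sampled in phases $1,\dots,D$ form a random subgraph $R\sim\mathcal{D}(T,p)$ of the BFS tree $T$ (of depth at most $D$), independent of those choices. Applying Theorem~\ref{thm:random_tree_planting_generalized} to each $C$ with the tree $T$ and the sample $R$ bounds $\dist_{C\cup R}(u,u')$, and hence $\dist_{G'}(u,u')$, by $N^{D}=M_0$ for every pair in $C$, except with probability at most $D/n^{48}$; a union bound over the at most $n$ clusters then shows that $B_0$ occurs with polynomially small probability, which is all the induction needs.

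For the inductive step, assume the bound for $i-1$. The heart of the argument is an auxiliary graph $\hat H_i$ whose vertex set is a subset of $V_{\le D-i}$, built to have three properties: (i) it is a deterministic function of the choices made in phases $0,\dots,i$, hence independent of the set $R=\bigcup_{\ell=0}^{D-i-1}E'_{\ell}\sim\mathcal{D}(\hat T,p)$ of tree edges sampled in the remaining phases, where $\hat T$ is obtained from $T$ by keeping only the levels $V_{\le D-i}$ (so $\hat T$ has depth at most $D-i$); (ii) restricted to the vertices it places inside any phase-$i$ cluster, $\hat H_i$ is connected; and (iii) whenever $B_{i-1}$ fails, every edge of $\hat H_i$ corresponds to a path of length $O(M_{i-1})$ in $G'$. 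To construct $\hat H_i$, call a phase-$(i-1)$ cluster \emph{high} if it meets $V_{\le D-i}$ and \emph{low} otherwise. Every phase-$i$ edge joins $V_{D-i}$ to $V_{D-i+1}$, so a phase-$i$ edge incident to a low cluster has its $V_{D-i}$-endpoint in a high cluster; thus low clusters form an independent set in the cluster-merging picture and each attaches to high clusters through level-$(D-i)$ vertices. For each high cluster fix an anchor in $V_{\le D-i}$, let $V(\hat H_i)$ be the set of anchors, and join two anchors iff their high clusters are linked either directly by a phase-$i$ edge or both by phase-$i$ edges to a common low cluster. Connectivity of each phase-$i$ cluster gives (ii), and (iii) follows by concatenating at most three shortest paths lying inside phase-$(i-1)$ clusters (each of length at most $M_{i-1}$ when $B_{i-1}$ fails) with one or two phase-$i$ edges.

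Let $F_i$ be the event that $\hat H_i\cup R$ has diameter greater than $N^{D-i}$ on some connected component. By (i) and Theorem~\ref{thm:random_tree_planting_generalized} applied to each component of $\hat H_i$, together with a union bound over the at most $n$ components, $\Pr[F_i]\le n\cdot(D-i)/n^{48}\le 1/n^{43}$. Assuming neither $B_{i-1}$ nor $F_i$, for any $(u,u')\in\Pi_i$ I would route $u$ to the anchor of its phase-$(i-1)$ cluster, or, if that cluster is low, to the anchor of a high cluster it attaches to, in at most $2M_{i-1}+1$ steps of $G'$, follow a path of at most $N^{D-i}$ edges of $\hat H_i\cup R$ between the two anchors (each $R$-edge costing $1$ in $G'$ and each $\hat H_i$-edge costing $O(M_{i-1})$ by (iii)), and finally descend to $u'$. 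Since $N=101\ln n/p\ge 101\ln n$, the total telescopes to at most $7N^{D-i}M_{i-1}=M_i$. Hence $B_i\subseteq B_{i-1}\cup F_i$, so $\Pr[B_i]\le\Pr[B_{i-1}]+\Pr[F_i]\le i/n^{43}$.

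The step I expect to be the main obstacle is reconciling the inductive hypothesis with the fresh randomness: $\neg B_{i-1}$ speaks about distances in the \emph{final} graph $G'$ and so depends on all of the sampling, while Theorem~\ref{thm:random_tree_planting_generalized} must be fed a graph independent of the tree sample it is applied to. The construction is designed precisely to sidestep this: $\hat H_i$ depends only on phases $0,\dots,i$, so its independence from $R$ is unconditional, and $\neg B_{i-1}$ enters only through the purely logical containment $B_i\subseteq B_{i-1}\cup F_i$, never jointly with a ``fresh-$R$'' event. The other delicate point, for which the low/high dichotomy is introduced, is that a phase-$(i-1)$ cluster may fail to reach level $D-i$ and so has no natural anchor; one must argue that such a cluster can attach to high clusters only through level-$(D-i)$ vertices, which is exactly what keeps the per-level blow-up at the constant factor $7$ recorded in the recursion $M_i=7N^{D-i}M_{i-1}$.
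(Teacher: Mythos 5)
Your argument is correct and follows the paper's inductive skeleton --- reduce phase $i$ to Theorem~\ref{thm:random_tree_planting_generalized} on the depth-$(D-i)$ truncated tree with the edges sampled in phases $i+1,\ldots,D$, expand each auxiliary edge back into a short path of $G'$ using $\neg B_{i-1}$, and use the purely logical containment $B_i\subseteq B_{i-1}\cup(\text{fresh event})$ to avoid conditioning on the past --- but your auxiliary object is genuinely different from the paper's. The paper works per pair $(u,u')\in \Pi_i$: it takes a $u$-$u'$ path $Q$ in $G'_i$, cuts it at the phase-$i$ edges, extracts the unique $V_{D-i}$-vertex of each connector, feeds the resulting path graph to the theorem, and pays a union bound over the $\le n^2$ pairs; each auxiliary edge then costs only $M_{i-1}+2$ in $G'$ (Observation~\ref{claim: short paths}). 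You instead build one global anchor graph per phase, with one anchor for each phase-$(i-1)$ cluster meeting $V_{\le D-i}$ and adjacency given by a direct phase-$i$ edge or a shared low cluster, apply the theorem once per component ($\le n$ of them), and pay a larger per-edge cost $3M_{i-1}+2$ plus $2M_{i-1}+1$ at each endpoint; this still fits under $M_i=7N^{D-i}M_{i-1}$ because $N^{D-i}\ge N\ge 101\ln n$ when $i<D$, while at $i=D$ the anchor graph degenerates to the single root anchor, so no anchor edges are traversed --- a boundary case worth stating explicitly, since the generic estimate $N^{D-i}(3M_{i-1}+2)+2(2M_{i-1}+1)\le 7N^{D-i}M_{i-1}$ fails when $N^{D-i}=1$. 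Your low/high dichotomy (low clusters attach to high ones only through $V_{D-i}$-endpoints, hence form an independent set in the merge picture) plays exactly the role of the paper's observation that each connector between consecutive segments contains a single vertex of $V_{D-i}$, and your alternating-path argument does give connectivity of the anchors inside each phase-$i$ cluster. Two small fixes: $F_i$ should be phrased as ``some pair of vertices lying in a common component $C'$ of $\hat H_i$ has $\dist_{C'\cup R}(\cdot,\cdot)>N^{D-i}$'' --- which is what the per-component applications of the theorem actually control and all your routing uses --- rather than as a bound on the diameter of components of $\hat H_i\cup R$, which may merge several $\hat H_i$-components; and in the base case the bound obtained is $O(D/n^{46})$ rather than the literal $0/n^{43}$ the statement demands at $i=0$, an off-by-one that is present in the paper's own proof as well and is harmless for the final union bound.
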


Observe that, if none of the events $B,B_0,\ldots,B_D$ happen, then $G'$ is a connected graph, and in particular, every pair $(u,u')$ of distinct vertices of $G$ belongs to some set $\Pi_i$, for some $0\leq i\leq D$, so $\dist_{G'}(u,u')\leq k^{D(D+1)/2}$. Using the union bound, the probability that at least one of the events $B,B_0,\ldots,B_D$ happens is bounded by $O(1/n^{10})$. Therefore, with probability at least $1-O(1/n^{10})$,  graph $G'$ is connected, and $\diam(G')\leq  k^{D(D+1)/2}$. 
In order to complete the proof of Theorem~\ref{thm:Karger_diameter}, it is now enough to prove Lemma \ref{lem:diameter_recursion}.

\emph{Proof of Lemma \ref{lem:diameter_recursion}:}
The proof is by induction on $i$. The base case is when $i=0$. Let $(u,u')\in \Pi_0$ be any pair of vertices of $G'$ that are joined at phase $0$. Let $B_0(u,u')$ be the bad event that the distance from $u$ to $u'$ in $G'$ is greater than $M_0=N^D$. Clearly, event $B_0$ may only happen if event $B_0(u,u')$ happens for some pair $(u,u')\in \Pi_0$ of vertices. We now bound the probability of each such event  separately. 
	
Let $(u,u')\in \Pi_0$ be any pair of vertices joined at phase $0$. Recall that $u,u'$ lie in the same connected component of $G'_0$, and so there is some path $Q$ connecting $u$ to $u'$ in $G'_0$. Consider now the graph $Q$, and the tree $T$ that we have defined before, whose depth is bounded by $D$. Recall that $T'\subseteq T$ is obtained from $T$ by sub-sampling each of its edges independently with probability $p$. Using Theorem~\ref{thm:random_tree_planting_generalized} with graph $H=Q$, the  tree $T$, and the sampling probability $p$, we conclude that the probability that the distance from $u$ to $u'$ in $Q\cup T'$ is greater than $\left(\frac{101\ln n}p\right )^D=N^D$ is bounded by $D/n^{46}$. Recall that $Q\subseteq G'_0$ and so $Q\cup T'\subseteq G'$. Therefore, $\dist_{G'}(u,u')\leq \dist_{Q\cup T'}(u,u')$, and so the probability that event $B_0(u,u')$ happens is bounded by $D/n^{46}$. Using the union bound over all pairs $(u,u')\in \Pi_0$ and the fact that $D\leq n$, we conclude that $\prob{B_0}\leq 1/n^{43}$.
	
We now assume that the claim is true for all indices $0,\ldots,(i-1)$, and prove it for index $i$. 
As before, let $(u,u')\in \Pi_i$ be any pair of vertices of $G'$ that are joined at phase $i$. Let $B_i(u,u')$ be the bad event that the distance from $u$ to $u'$ in $G'$ is greater than $M_i$. Clearly, event $B_i$ may only happen if event $B_i(u,u')$ happens for some pair $(u,u')\in \Pi_i$ of vertices, or one of the events $B_0,\ldots,B_{i-1}$ happens. We now bound the probability of each such event $B_i(u,u')$  separately.

Recall that $G'_i$ is the graph that we have obtained at the end of phase $i$ of the sampling algorithm. Note that $G'_i$ is determined completely by the random choices made in phases $0,1,\ldots,i$.
Let $(u,u')\in \Pi_i$ be a pair of vertices that are joined at phase $i$. 
By the definition, $u$ and $u'$ belong to different phase-$(i-1)$ clusters but the same phase-$i$ cluster. Therefore, there is some simple path $Q$ in graph $G'_i$ that connects $u$ to $u'$.
Recall that graph $G'_i$ is obtained from the graph $G'_{i-1}$ by adding the edges of $E'_{D-i}$ to it -- the edges that we have sampled in phase $i$. The edges of $E'_{D-i}$ are sampled from the set $E_{D-i}$ of edges, connecting vertices of $V_{D-i}$ to vertices of $V_{D-i+1}$. For convenience, we denote the edges of $E'_{D-i}$ by $\tilde E$. Let $Q_1,Q_2,\ldots,Q_t$ be the set of segments of $Q$, obtained by deleting all edges of $\tilde E$ from $Q$. Note that each such segment $Q_j$ is contained in some phase-$(i-1)$ cluster, and $t\geq 2$, since $u$ and $u'$ lie in different phase-$(i-1)$ clusters.
We assume that the segments are indexed by their natural order on path $Q$, and that $u\in Q_1$, while $u'\in Q_t$.
 For each $1\leq j<t$, we let $L_j$ be the sub-path of $Q$, connecting the last vertex of $Q_j$ to the first vertex of $Q_{j+1}$. Notice that all edges in $L_j$ belong to the set $\tilde E$, and so each such segment $L_j$ is either a single edge of $\tilde E$, or it consists of two such edges, that share a common vertex in $V_{D-i}$ (see Figure~\ref{fig:joined_at_level}). In either case, each such segment $L_j$ must contain a single vertex that belongs to $V_{D-i}$, which we denote by $w_j$.  


\begin{figure}[h]
\centering
\includegraphics[scale=0.5]{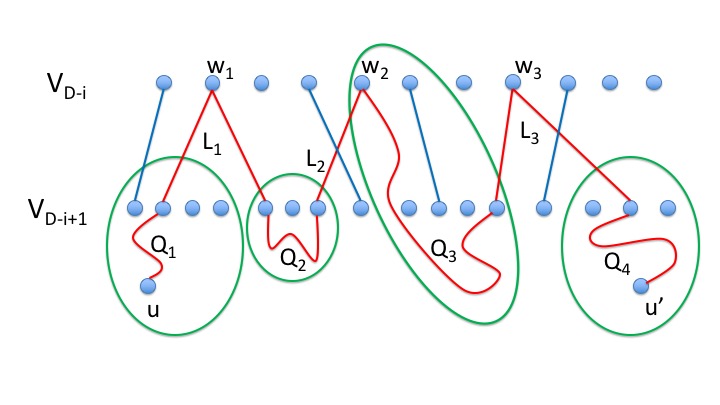}
\caption{Vertices $u$ and $u'$ are joined at level $i$; the path $Q$ is shown in red; the edges of $\tilde E\setminus E(Q)$ are shown in blue; the phase-$(i-1)$ clusters that share vertices with $Q$ are shown in green.}
\label{fig:joined_at_level}
\end{figure}


 We denote $W=\{w_1,\ldots,w_{t-1}\}$, so $W\subseteq V_{D-i}$, and we define a new graph $H$, whose vertex set is $W$, and, for each $1\le j\le t-2$, there is an edge between vertex $w_j$ and vertex $w_{j+1}$. Observe that $H$ is a path, connecting the vertices of $W$ in their natural order. Note that $H$ is guaranteed to be a connected graph, and that it only depends on the random choices made in phases $0,\ldots,i$.

 Let $\hat T$ be the sub-tree of $T$ that is induced by the vertices of $V_{\leq D-i}$, and let $\hat T'$ be the sub-tree of $\hat T$ with $V(\hat T')=V(\hat T)$, and $E(\hat T')$ containing all edges of $E'_{D-i-1}\cup\cdots\cup E'_0$. In other words, the edges of $\hat T'$ are all edges that were sampled in phases $(i+1),\ldots,D$ of the sampling algorithm.  Observe that $\hat T'\sim\mathcal{D}(\hat T,p)$.
 Finally, let $H'=H\cup \hat T'$. We let $B'_i(u,u')$ be the bad event that the distance from $w_1$ to $w_{t-1}$ in the graph $H'$ is greater than $N^{D-i}$. Observe that the event $B'_i(u,u')$ only depends on random choices made in phases $(i+1),\ldots,D$.
Using Theorem~\ref{thm:random_tree_planting_generalized} with the graph $H$, the tree $\hat T$, and the sampling probability $p$, together with the fact that $N=(101\ln n)/p$, we conclude that, the probability that the event $B'_i(u,u')$ happens is bounded by $D/n^{46}$. 
Lastly, we need the following claim.

\begin{claim}\label{claim: short distance for level i pair}
	If neither of the events $ B_{i-1},B_i'(u,u')$ happens, then neither does event $B_i(u,u')$.
\end{claim}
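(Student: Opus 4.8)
The plan is to show that, assuming neither $B_{i-1}$ nor $B_i'(u,u')$ occurs, we have $\dist_{G'}(u,u')\le M_i$; since $M_i=7N^{D-i}M_{i-1}$, this is exactly the assertion that $B_i(u,u')$ does not occur. I would use two facts. First, since $\hat T'$ consists of edges sampled in phases $i+1,\ldots,D$, we have $\hat T'\subseteq T'\subseteq G'$, so every edge of $\hat T'$ is an edge of $G'$. Second, if $B_{i-1}$ does not occur, then any two vertices lying in a common phase-$(i-1)$ cluster form a pair in $\Pi_0\cup\cdots\cup\Pi_{i-1}$ and hence are at distance at most $M_{i-1}$ in $G'$; in particular the two endpoints of each segment $Q_j$ of $Q$ are at $G'$-distance at most $M_{i-1}$.

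The next step is a collection of local distance bounds, all realized inside $G'$. A short case analysis on whether $L_j$ is a single edge or two edges sharing $w_j$ shows that $w_j$ is within $G'$-distance $1$ of the first vertex of $Q_{j+1}$ and within $G'$-distance $1$ of the last vertex of $Q_j$ (these distances are realized by edges of $L_j\subseteq Q\subseteq G'_i\subseteq G'$). Combining the first of these for $w_j$, the second (applied to $L_{j+1}$) for $w_{j+1}$, and the bound $\dist_{G'}(\text{first}(Q_{j+1}),\text{last}(Q_{j+1}))\le M_{i-1}$, gives $\dist_{G'}(w_j,w_{j+1})\le M_{i-1}+2$ for every $j$. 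The same reasoning at the two ends, using $u=\text{first}(Q_1)$ and $u'=\text{last}(Q_t)$, yields $\dist_{G'}(u,w_1)\le M_{i-1}+1$ and $\dist_{G'}(w_{t-1},u')\le M_{i-1}+1$.

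Then I would invoke $\overline{B_i'(u,u')}$, which provides a path $P^*$ from $w_1$ to $w_{t-1}$ in $H'=H\cup\hat T'$ of length at most $N^{D-i}$. Lift $P^*$ to a walk in $G'$: each edge of $P^*$ that lies in $\hat T'$ is kept (a $G'$-edge, contributing $1$), and each edge of $P^*$ of the form $(w_j,w_{j+1})\in E(H)$ is replaced by a $G'$-path of length at most $M_{i-1}+2$ from the previous paragraph; hence $\dist_{G'}(w_1,w_{t-1})\le N^{D-i}(M_{i-1}+2)$. Concatenating with the two endpoint bounds,
\[
\dist_{G'}(u,u')\le N^{D-i}(M_{i-1}+2)+2M_{i-1}+2.
\]
Since $M_{i-1}\ge 1$ we have $M_{i-1}+2\le 3M_{i-1}$, and since $i\le D$ and $N\ge 1$ we have $N^{D-i}\ge 1$, so the right-hand side is at most $3N^{D-i}M_{i-1}+4N^{D-i}M_{i-1}=7N^{D-i}M_{i-1}=M_i$. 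The degenerate case $t=2$ (where $H$ has no edges and $w_1=w_{t-1}$) is covered by the same computation, the first term simply vanishing.

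The only delicate point, and the one I would write out most carefully, is the bookkeeping around the connectors $L_j$: confirming that each $L_j$ contains exactly one vertex of $V_{D-i}$, and that in both the one-edge and two-edge cases this vertex is at distance $1$ (not $2$) from the appropriate endpoints of the neighbouring segments $Q_j$ and $Q_{j+1}$. Everything else is triangle-inequality arithmetic, and the constant $7$ in the definition of $M_i$ is tuned precisely to absorb the additive $+2$'s together with the two endpoint terms.
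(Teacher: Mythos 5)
Your proposal is correct and follows essentially the same route as the paper's proof: bound $\dist_{G'}(u,w_1)$ and $\dist_{G'}(w_{t-1},u')$ by $M_{i-1}+1$ and $\dist_{G'}(w_j,w_{j+1})$ by $M_{i-1}+2$ via the phase-$(i-1)$ clusters and the $\tilde E$-edges of the connectors $L_j$, then lift the short $w_1$--$w_{t-1}$ path in $H'=H\cup\hat T'$ (guaranteed by $\overline{B_i'(u,u')}$) into $G'$ and absorb the additive terms into the factor $7$ in $M_i=7N^{D-i}M_{i-1}$. The only cosmetic difference is that the paper states the local bounds via representative vertices $v,v'$ of a common cluster rather than via the first/last vertices of the segments $Q_j$, which is the same argument.
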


\begin{proof}
	Assume that neither of the events $B_{i-1},B_i'(u,u')$ happens. We show that the distance between $u$ and $u'$ in $G'$ is bounded by $M_i$, that is, event $B_i(u,u')$ does not happen.
	
	Let $P$ be the shortest path  connecting $w_1$ to $w_{t-1}$ in graph $H'$. Since we have assumed that event $B'_i(u,u')$ does not happen, $|P|\leq N^{D-i}$. We would like to turn the path $P$ into a path $P'$ connecting $u$ to $u'$ in graph $G'$, without increasing its length by too much. Observe first that an edge $e=(v,v')\in E(P)$ must be of one of two types: either it is an edge of $\hat T'$, and hence it is also an edge of $G'$; or it is an edge of the form $(w_j,w_{j+1})$, in which case it may not be an edge of $G'$. In order to complete the proof, we show that each such edge can be replaced by a short path in $G'$, and we show that $u$ and $u'$ can be connected by short paths to $w_1$ and $w_{t-1}$, respectively, in graph $G'$.

	\begin{observation}\label{claim: short paths}
		Assume that event $B_{i-1}$ does not happen. Then for each $1\leq j<t-1$, there is a path $P_j$ of length at most $2+M_{i-1}$ in graph $G'$, connecting vertex $w_j$ to vertex $w_{j+1}$.
		Moreover, there is a path $P_0$ of length at most $1+M_{i-1}$ in graph $G'$ connecting $u$ to $w_1$, and there is a path $P_{t-1}$ of length at most $1+M_{i-1}$ in graph $G'$ connecting $w_{t-1}$ to $u'$.
	\end{observation}
	
	\begin{proof}
		From the way we have partitioned the path $Q$ into segments, either $u$ and $w_1$ lie in the same phase-$(i-1)$ cluster, or there is an edge $(v,w_1)\in \tilde E$, such that $v$ lies in the same phase-$(i-1)$ cluster as $u$. In the former case, we also denote $w_1$ by $v$ for convenience. Therefore, $u$ and $v$ where joined before phase $i$, and so $\dist_{G'}(u,v)\leq M_{i-1}$, by our assumption that event $B_{i-1}$ does not happen. Therefore, there is a path in $G'$ of length at most $M_{i-1}+1$ that connects $u$ to $w_1$. Similarly, there is a path of length at most $M_{i-1}+1$ in graph $G'$ connecting $w_{t-1}$ to $u'$.
		
		Consider now some index $1\leq j<t-1$. From the definition of segments of $Q$, there is some phase-$(i-1)$ cluster $C$, and vertices $v,v'\in C$, such that: (i) either $w_j=v$, or edge $(w_j,v)\in \tilde E$; and (ii) either $w_{j+1}=v'$, or edge $(w_{j+1},v')\in \tilde E$. In either case, $v,v'\in \Pi_{i'}$ for some $i'<i$, and, since we have assumed that event $B_{i-1}$ does not happen, $\dist_{G'}(v,v')\leq M_{i-1}$. Since $\tilde E\subseteq E(G')$, $\dist_{G'}(w_j,w_{j+1})\leq 2+\dist_{G'}(v,v')\leq 2+M_{i-1}$.
	\end{proof}

In order to obtain the desired path $P'$, we replace each edge of the form $(w_j,w_{j+1})$ on path $P$ with the corresponding path $P_j$, and we append $P_1$ and $P_{t-1}$ to the beginning and to the end of the resulting path. It is easy to verify that $|P'|\leq |P|\cdot (M_{i-1}+2)+2M_{i-1}+2\leq |P|\cdot 7M_{i-1}\leq 7N^{D-i}M_{i-1}=M_i$.
\end{proof}

So far we have shown that, if the events $B_{i-1}$, $B'_i(u,u')$ do not happen, then neither does event $B_i(u,u')$. Recall that event $B_i$ may only happen if some event in $\set{B_{i-1}}\cup \set{B_i(u,u')\mid (u,u')\in \Pi_i}$ happens. 
Therefore, event $B_i$ may only happen if some event in $\set{B_{i-1}}\cup \set{B'_i(u,u')\mid (u,u')\in \Pi_i}$ happens. 

 From the induction hypothesis, the probability of 
event $B_{i-1}$ happening is bounded by $ (i-1)/n^{43}$, and, from the previous discussion, for each $(u,u')\in \Pi_i$, the probability of the event $B'_i(u,u')$ is bounded by $D/n^{46}$. Taking the union bound over all these events, and using the facts that $|\Pi_i|\leq n^2$ and $D\leq n$, we conclude that the probability that any event in $\set{B_{i-1}}\cup \set{B'_i(u,u')\mid (u,u')\in \Pi_i}$ happens is bounded by $i/n^{43}$, and this also bounds the probability of the event $B_i$.
\qed

\subsection{Proof of Theorem~\ref{thm:random_tree_planting_generalized}}
\label{sec:proof_of_random_tree_planting_generalized}


Recall that we are given a connected graph $H$ and a rooted  tree $T$ of depth $D$ with $|V(T)|\leq n$ and $V(H)\subseteq V(T)$, together with a parameter $p$. We let $R$ be a random subgraph of $T$ with $V(R)=V(T)$, where every edge of $E(T)$ is added to $E(R)$ with probability $p$ independently from other edges; in other words, $R\sim \mathcal{D}(T,p)$. Our goal is to show with  probability at least $1-\frac{D}{n^{48}}$, for every pair $u,v$ of vertices of $H$, $\dist_{R\cup H}(u,v)\le (\frac{101\ln n}{p})^{D}$.
The proof is a slight modification of the proof of Theorem~\ref{thm:random_tree_planting}. Note that the main difference between Theorem \ref{thm:random_tree_planting_generalized} and Theorem \ref{thm:random_tree_planting} is that now the tree $T$ may contain vertices in addition to $V(H)$.

We denote $V=V(T)$.
As before, for each $0\le i\le D$, we let $V_i$ be the set of nodes lying at level $i$ of the tree $T$, and denote $V_{\le i}=\bigcup_{t=0}^{i}V_t$. We also denote $H'=H\cup R$. 

We say that a node $x\in V(H)$ is \emph{good} if either (i) $x\in V_{\le D-1}\cap V(H)$; or (ii) $x\in V_{D}\cap V(H)$, and there is an edge in $R$ connecting $x$ to a node in $V_{D-1}$. Let $M=\frac{50\ln n}{p}$.
As before, we assume that $V(H)=\set{v_1,\ldots,v_{n'}}$, where the vertices are indexed in an arbitrary order. Given an ordered pair $(x,x')$ of vertices in $H$, and a path $P$ of $H$ connecting $x$ to $x'$, let $\sigma(P)$ be a sequence of vertices that lists all the vertices appearing on $P$ in their natural order, starting from vertex $x$. For an ordered pair $(x,x')\in V(H)$ of vertices, let $P_{x,x'}$ be shortest path connecting $x$ to $x'$ in $H$, and among all such paths $P$, choose the one whose sequence $\sigma(P)$ is smallest lexicographically. 
Observe that $P_{x,x'}$ is unique, and, moreover, if some pair $u,u'\in V(H)$ of vertices lie on $P_{x,x'}$, with $u$ lying closer to $x$ than $u'$ on $P_{x,x'}$, then the sub-path of $P_{x,x'}$ from $u$ to $u'$ is precisely $P_{u,u'}$. 

For a pair $x,x'\in V(H)$ of vertices of $H$, we let $B(x,x')$ be the bad event that length of $P_{x,x'}$ is greater than $M$ and there is no good internal node on $P_{x,x'}$.  Exactly as before, the probability that event $B(x,x')$ happens for a fixed pair $x,x'$ of vertices is at most $(1-p)^M=(1-p)^{(50\ln n)/p}<n^{-50}$.

Let $B$ be the bad event that $B(x,x')$ happens for some pair $x,x'\in V(H)$ of nodes. From the union bound over all pairs of distinct nodes in $V(H)$, the probability of $B$ is bounded by $n^{-48}$.
The following observation is an analogue of Observation \ref{obs:good_node_is_dense}, and its proof is identical.

\begin{observation}
	\label{obs:good_node_is_dense2}
	If the event $B$ does not happen, then for every node $x\in V(H)$, there is a good node $x'\in V$ such that $\dist_{H}(x,x')\le M$.
\end{observation}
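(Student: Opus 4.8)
The plan is to reproduce the argument behind Observation~\ref{obs:good_node_is_dense} almost verbatim; passing to the generalized setting (where $V(H)\subseteq V(T)$ rather than $V(H)=V(T)$) changes nothing essential, since both the notion of a good vertex and the bad event $B(x,x')$ are defined purely in terms of $H$ and of $T$-edges incident to $V(H)$.

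Fix an arbitrary $x\in V(H)$. If $x$ is itself good, take $x'=x$, so that $\dist_H(x,x')=0\le M$ and we are done. Otherwise, among all good vertices of $V(H)$ choose $x'$ minimizing $\dist_H(x,x')$. First I would show that the canonical path $P_{x,x'}$ has no internal good vertex: if some internal vertex $u$ of $P_{x,x'}$ were good, then $u\neq x'$, and the prefix of $P_{x,x'}$ from $x$ to $u$ is a shortest $x$–$u$ path in $H$, so $\dist_H(x,u)<|P_{x,x'}|=\dist_H(x,x')$, contradicting the minimality in the choice of $x'$. Next I would invoke the hypothesis that the event $B$ does not occur, hence neither does $B(x,x')$. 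By definition $B(x,x')$ is the event that ``$|P_{x,x'}|>M$ and $P_{x,x'}$ has no internal good vertex''; since we have just verified the second clause, the first clause must fail, i.e.\ $\dist_H(x,x')=|P_{x,x'}|\le M$, which is exactly the assertion.

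The only mildly delicate point — absent from Observation~\ref{obs:good_node_is_dense}, where $V(H)=V(T)$ guarantees that the root of $T$ is a good vertex — is that in the general setting one must know that $V(H)$ contains at least one good vertex to serve as $x'$. This is immediate whenever $V(H)\cap V_{\le D-1}\neq\emptyset$ (every such vertex is good), and in the degenerate situation $V(H)\subseteq V_D$ the non-occurrence of $B$ already forces $\diam(H)\le M$ directly (any two vertices of $H$ at $H$-distance more than $M$ would trigger $B$, as no internal vertex of a path in $V_D$ can be good in this case), so the downstream bound in Theorem~\ref{thm:random_tree_planting_generalized} holds regardless. I therefore expect no substantive obstacle: the proof is a direct transcription of the one for Observation~\ref{obs:good_node_is_dense}, with the bookkeeping about the existence of a suitable good vertex being the only item the weaker hypothesis forces one to record.
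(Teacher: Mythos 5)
Your proof is correct and is essentially the paper's own argument: take $x'$ to be a good vertex closest to $x$ in $H$, observe that $P_{x,x'}$ then has no good internal vertex, and conclude from the non-occurrence of $B(x,x')$ that $|P_{x,x'}|\le M$ (the paper simply declares this proof identical to that of Observation~\ref{obs:good_node_is_dense}). Your additional bookkeeping about the existence of at least one good vertex in $V(H)$ addresses a point the paper leaves implicit, and your resolution of the degenerate case is consistent with how the observation is used downstream.
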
 

As before, we prove Theorem~\ref{thm:random_tree_planting_generalized} by induction on $D$.
The base of the induction is when $D=1$. In this case, $T$ is a star graph. Let $c$ denote the vertex that serves as the center of the star.
For any pair $x_1,x_2\in V(H)$ of vertices, we denote by $x'_1$ the good node that is closest to $x_1$ in $H$, and we define $x'_2$ similarly for $x_2$. 
Notice that, from the definition of good vertices, either $x'_1=c$, or it is connected to $c$ by an edge of $R$, and the same holds for $x'_2$.
Therefore, $\dist_{H'}(x'_1,x'_2)\le 2$ must hold.  If the event $B$ does not happen, then, since $H$ is a subgraph of $H'$, $\dist_{H'}(x_1,x_2)\le \dist_{H'}(x_1,x'_1)+\dist_{H'}(x'_1,x'_2)+\dist_{H'}(x_2,x'_2)\le \dist_{H}(x_1,x'_1)+\dist_{H'}(x'_1,x'_2)+\dist_{H}(x_2,x'_2)\le 2M+2\le \frac{101\ln n}{p}$. Therefore, with probability at least $1-n^{-48}$, $\dist_{H'}(x_1,x_2)\le \frac{101\ln n}{p}$. 

Assume now that Theorem~\ref{thm:random_tree_planting_generalized} holds for every connected graph $H$ and every tree $T$ of depth at most $D-1$, with $V(H)\subseteq V(T)$. Consider now some connected graph $H$, and a rooted tree $T$ of depth $D$, with $V(H)\subseteq V(T)$ and $|V(T)|\leq n$. 
We can assume without loss of generality that every vertex of $V_D$ lies in $V(H)$, since all other vertices of $V_D$ can be discarded from $T$.
We partition the edges of $E(T)$ into two subsets: set $E_1$ contains all edges incident to the vertices of $V_D$, and set $E_2$ contains all remaining edges. Let $R_1\subseteq R$ be the subgraph of $R$ containing only the edges of $E_1\cap E(R)$ and their endpoints, and let $R_2\subseteq R$ be obtained from $R$ by discarding all vertices of $V_D$ and their incident edges. Notice that the definition of good vertices only depends on the edges of $R_1$, and so the event $B$ only depends on the random choices made in selecting the edges of $R_1$, and is independent of the random choices made in selecting the edges of $R_2$.

Let $L$ be a subgraph of $H'$, obtained by starting with $L=H$, and then adding every edge of $R_1$ together with their endpoints to the graph. Equivalently, $L=H\cup R_1$.
 
Finally, 
we define a new graph $\hat{H}$, whose vertex set consists of two subsets: set $U_1=V_{\le D-1}\cap V(H)$, and set $U_2$, containing all vertices $v\in V_{D-1}$, such that $v$ is connected with an edge of $R_1$ to some vertex of $V_D\cap V(H)=V_D$.
We set $V(\hat H)=U_1\cup U_2$. Observe that $V(\hat H)\subseteq V(L)$. In order to define the edge set $E(\hat H)$, 
 we add an edge between a pair of nodes $w,w'$ in $\hat{H}$ iff the distance between $w$ and $w'$ in $L$ is at most $M+4$.
We also let $\hat T$ be the tree obtained from $T$, by discarding all vertices of $V_D$ from it. Observe that $V(\hat H)\subseteq V(\hat T)=V_{\leq D-1}$. As before, the idea is to use the induction hypothesis on the graph $\hat H$, together with the tree $\hat T$. In order to do so, we need to prove that $\hat H$ is a connected graph, which we do next.

\begin{observation}
	\label{obs:hatH_connected}
	If the event $B$ does not happen, then graph $\hat{H}$ is connected.
\end{observation}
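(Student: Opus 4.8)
The plan is to follow the proof of the analogous observation inside the proof of Theorem~\ref{thm:random_tree_planting}, arguing by contradiction, while handling the two features that are new here: the graph $H$ may use only a subset of $V(T)$, so that the notion of \emph{good} vertex and the canonical shortest paths $P_{x,x'}$ are only defined for vertices of $H$; and the vertex set of $\hat H$ is $U_1\cup U_2$, where $U_1=V_{\le D-1}\cap V(H)$ but the ``lift set'' $U_2\subseteq V_{D-1}$ need not be contained in $V(H)$.

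First I would assume that the event $B$ does not happen and, towards a contradiction, that $\hat H$ is disconnected; let $\cset=\set{C_1,\dots,C_r}$ with $r\ge 2$ be its connected components. To each component $C$ I would attach a nonempty set $R(C)\subseteq V(H)$ of \emph{representatives}: a vertex of $V(C)\cap V(H)$ represents itself, and a vertex $w\in V(C)\cap U_2$ contributes a fixed vertex $x_w\in V_D$ joined to $w$ by an edge of $R_1$ (such a vertex exists by the definition of $U_2$, and $V_D\subseteq V(H)$ by the WLOG reduction already made in the proof of Theorem~\ref{thm:random_tree_planting_generalized}). Since $R_1\subseteq E(L)$, every representative lies at distance at most $1$ in $L$ from its own component. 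For two distinct components I would consider the family $\pset_{i,j}=\set{P_{x,x'}\mid x\in R(C_i),\,x'\in R(C_j)}$ of canonical $H$-paths, let $P_{i,j}$ be a shortest path in it, and choose the pair $C_i,C_j$ minimizing $|P_{i,j}|$ over all pairs of components, writing $P_{i,j}=P_{v,u}$ with $v\in R(C_i)$ and $u\in R(C_j)$.

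The core of the argument has three steps. (1) $|P_{v,u}|>M$: otherwise $\dist_L(v,u)\le|P_{v,u}|\le M$, and adding the two detours of length at most $1$ from $v$ and $u$ to their components yields a vertex of $C_i$ and a vertex of $C_j$ (both in $V(\hat H)$) at distance at most $M+4$ in $L$, hence joined by an edge of $\hat H$, contradicting $C_i\neq C_j$. (2) Since $B(v,u)$ does not happen and $|P_{v,u}|>M$, the path $P_{v,u}$ has an internal good vertex; arguing exactly as in Theorem~\ref{thm:random_tree_planting} (inspecting whether such a vertex lies in $C_i$, in $C_j$, or in another component, and invoking the minimality of $|P_{i,j}|$ together with the sub-path property of the paths $P_{\cdot,\cdot}$) I would conclude that no internal good vertex lies in $U_1$; since every good vertex lies in $V(H)$ while $U_2\cap V(H)\subseteq V_{D-1}\cap V(H)\subseteq U_1$, it follows that no internal good vertex lies in $V(\hat H)$ at all, so each such vertex $x$ is in $V_D$ and carries an edge $(x,x')\in R_1$ with $x'\in V_{D-1}$, and moreover $x'\in U_2\subseteq V(\hat H)$. (3) Listing the internal good vertices $x_1,\dots,x_q$ in their order along $P_{v,u}$, replacing each $x_a$ by its lift $x'_a\in V(\hat H)$, and prepending and appending the component-anchors of $v$ and $u$ (which lie in $C_i$ and $C_j$ respectively and at distance at most $1$ in $L$ from $v$ and $u$), I obtain a sequence of vertices of $\hat H$ running from $C_i$ to $C_j$, so some consecutive pair of this sequence lies in distinct components. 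But the sub-path of $P_{v,u}$ between two consecutive such original good vertices (or between an endpoint and the nearest good vertex) has no internal good vertex, so by the non-occurrence of $B$ its length is at most $M$; combined with the lifting edges of length at most $1$ at each end, this places the two consecutive vertices within distance $M+2\le M+4$ in $L$, hence adjacent in $\hat H$ --- the desired contradiction.

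The step I expect to demand the most care is (2): re-running the ``no internal good vertex belongs to $\hat H$'' argument now that $V(\hat H)=U_1\cup U_2$ rather than all of $V_{\le D-1}$, and reconciling the $V(H)$-restricted definition of good vertices with membership in $U_2$. The remaining bookkeeping --- that the lifting edges and all sub-paths of $P_{v,u}$ live inside $L=H\cup R_1$, and that every distance estimate stays at most the threshold $M+4$ used in defining the edges of $\hat H$ --- is routine but must be tracked carefully.
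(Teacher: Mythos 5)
Your proof is correct and follows essentially the same route as the paper's: pick the pair of components minimizing the length of a canonical $H$-path between representatives, use the non-occurrence of $B$ to extract good internal vertices on that path, rule out (via minimality and the sub-path property) that any of them lies in $V(\hat H)$ so that all of them lie in $V_D$, lift them through $R_1$-edges into $U_2$, and locate two consecutive lifted vertices in distinct components at distance at most $M+2\le M+4$ in $L$, contradicting the definition of $\hat H$. The only cosmetic difference is that you fix a single lift $x_w\in V_D$ per vertex of $U_2$ (the paper's sets $\Gamma(\cdot)$ keep all lifts), which changes nothing since the minimality argument only ever has to handle good vertices of $V(\hat H)$, and these lie in $U_1$ and represent themselves.
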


\begin{proof}
	Assume that the event $B$ does not happen, and assume for contradiction that graph $\hat H$ is not connected. Let $\cset=\set{C_1,\ldots,C_r}$ be the set of all connected components of graph $\hat H$. 
	
	For every vertex $v\in V(\hat H)$, we define a set $\Gamma(v)\subseteq V(H)$ of vertices, as follows. If $v\in V(H)$, then $\Gamma(v)$ contains a single vertex -- the vertex $v$. Otherwise, $v\in V_{D-1}\setminus V(H)$ must hold, and it must be connected by at least one edge of $R_1$ to some vertex in $V_D\cap V(H)=V_D$. We then let $\Gamma(v)$ contain every vertex of $V_D$ that is connected to $v$ by an edge of $R_1$.
	
	For an ordered pair $(u,v)$ of vertices of $V(\hat H)$, we define a set $\pset(u,v)$ of paths as follows: $\pset(u,v)=\set{P_{x,y}\mid x\in \Gamma(u),y\in \Gamma(v)}$ (recall that $P_{x,y}$ is the shortest path that starts at $x$ and ends at $y$ in $H$, with the lexicographically smallest sequence $\sigma(P_{x,y})$.) Observe that every path $P_{x,y}\in \pset(u,v)$ can be augmented to a path connecting $u$ to $v$ in graph $L$, by  appending the edge $(u,x)$ to the beginning of the path (if $u\neq x$), and appending the edge $(y,v)$ to the end of the path (if $y\neq v$).

	
	For every ordered pair $(C_i,C_j)$ of distinct components of $\cset$, consider the set $\pset_{i,j}=\bigcup_{u\in C_i,v\in C_j}\pset(u,v)$ of paths. We let $P_{i,j}$ be a shortest path in $\pset_{i,j}$. We choose two distinct components $C_i,C_j\in \cset$ with $P_{i,j}$ having the shortest length, breaking ties arbitrarily. Assume that $P_{i,j}\in \pset(u,v)$, for $u\in C_i$ and $v\in C_j$. Let $x\in \Gamma(u)$ and $y\in \Gamma(v)$ be the endpoints of $P_{i,j}$, so $P_{i,j}=P_{x,y}$.
	Let $P'$ be the augmented path obtained from $P_{x,y}$, by appending the edge  $(u,x)$ to the beginning of the path (if $u\neq x$), and appending the edge $(y,v)$ to the end of the path (if $y\neq v$), so $P'$ now connects $u$ to $v$.
	 Recall that $L=H\cup R_1$, and so the path $P'$ is contained in graph $L$. Since we did not add edge $(u,v)$ to $\hat H$, the length of $P'$ is greater than $M+4$. Therefore, the length of the path $P_{x,y}$ in graph $H$ is at least $M+2$. Since we have assumed that event $B$ does not happen, there is at least one good inner vertex on path $P_{x,y}$. Let $X$ be the set of all good vertices that serve as inner vertices of $P_{x,y}$.
	
	We first show that for each $z\in X$, $z\not \in V(\hat H)$ must hold. Indeed, assume otherwise, that is, $z\in V(\hat H)$ for some $z\in X$. Then $z$ must belong to some connected component $C_{\ell}\in \cset$. Since $z$ is a good vertex, $z\in V(H)$, and so $\Gamma(z)=\set{z}$. Therefore, the sub-path of $P_{x,y}$ from $x$ to $z$ lies in $\pset(u,z)$, and the sub-path of $P_{x,y}$ from $z$ to $y$ lies in $\pset(z,v)$. We denote the former path by $P_1$ and the latter path by $P_2$. The length of each of these paths is less than the length of $P_{x,y}$.
	
	Assume first that $\ell=i$, that is, $z\in V(C_i)$. Then $P_2\in \pset_{i,j}$, and its length is less than the length of $P_{x,y}$, a contradiction. Otherwise, $\ell\neq i$. But then $P_1\in \pset_{i,\ell}$, and its length is less than the length of $P_{i,j}$, a contradiction. We conclude that for each $z\in X$, $z\not\in V(\hat H)$.

	Since $V(\hat H)$ contains all vertices of $V_{\leq D-1}\cap V(H)$, and every vertex in $X$ is a good vertex, it must be the case that $X\subseteq V_D$. Consider again some vertex $z\in X$. Since $z$ is a good vertex and $z\in V_D$, there must be an edge $e_z=(z,z')\in E(R_1)$, connecting $z$ to some vertex $z'\in V_{\leq D-1}$. From the definition of graph $\hat H$, $z'\in V(\hat H)$, and in particular, $z'$ must belong to some connected component of $\cset$, while the edge $e_z$ lies in graph $L$. Assume that $X=\set{z_1,z_2,\ldots,z_q}$, where the vertices are indexed in the order of their appearance on $P_{i,j}$, from $x$ to $y$. Consider the sequence $\sigma'=(u,z'_1,z'_2,\ldots,z'_q,v)$ of vertices. All these vertices belong to $V(\hat H)$, and $u\in C_i$, while $v\in C_j$. For convenience, denote $x=z'_0$ and $y=z'_{q+1}$. Then there must be an index $1\leq a\leq q$, such that $z'_a$ and $z'_{a+1}$ belong to distinct connected components of $\cset$. Note that the sub-path of $P_{i,j}$ between $z_a$ and $z_{a+1}$ is precisely $P_{z_a,z_{a+1}}$ -- the shortest path connecting $z_a$ to $z_{a+1}$ in $H$. Since no good vertices lie between $z_a$ and $z_{a+1}$ on this path, and since we have assumed that event $B$ does not happen, the length of this path is at most $M$. Therefore, there is a path in graph $L$, connecting $z'_a$ to $z'_{a+1}$, whose length is at most $M+2$. This path connects a pair of vertices that belong to different connected components of $\hat H$, contradicting the definition of $\hat H$.
\end{proof}

Consider now the tree $\hat T$ and the graph $\hat H$. Recall that $\hat T$ is a rooted tree of depth $D-1$, $V(\hat T)=V(\hat H)$, and, assuming the event $B$ did not happen, $\hat H$ is a connected graph. Moreover, $R_2\sim \mathcal{D}(\hat T,p)$. Therefore, assuming that event $B$ did not happen, we can use the induction hypothesis on the graph $\hat H$, the tree $\hat T$, and the random sub-graph  $R_2$ of $\hat T$. Let $B'$ be the bad event that for some pair $x_1,x_2\in V(\hat H)$ of vertices, $\dist_{\hat H\cup  R_2}(x_1,x_2)>(\frac{101\ln n}{p})^{D-1}$. From the induction hypothesis, the probability that $B'$ happens is at most $\frac{D-1}{n^{48}}$.


Lastly, we show that, if neither of the events $B,B'$ happen, then for every pair $x_1,x_2\in V(H)$ of vertices, $\dist_{H'}(x_1,x_2)\le (\frac{101\ln n}{p})^{D}$.

\begin{observation}
	If neither of the events $B,B'$ happen, then for every pair $x_1,x_2\in V(H)$ of vertices of $H$, $\dist_{H'}(x_1,x_2)\le (\frac{101\ln n}{p})^{D}$.
\end{observation}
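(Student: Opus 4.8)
The plan is to reproduce, in this more general setting, the argument used for the analogous observation in the proof of Theorem~\ref{thm:random_tree_planting}; the only new wrinkle is that $T$ (and hence $\hat T$) may contain vertices outside $V(H)$, so the "lifting" step into $V(\hat H)$ must be done with a little more care. Recall that we are in the inductive step, so $D\ge 2$. Fix an arbitrary pair $x_1,x_2\in V(H)$. First I would invoke Observation~\ref{obs:good_node_is_dense2}: since $B$ does not happen, there is a good node $x'_1$ with $\dist_H(x_1,x'_1)\le M$, and likewise a good node $x'_2$ with $\dist_H(x_2,x'_2)\le M$; recall that good nodes lie in $V(H)$. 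I then lift $x'_1$ into $V(\hat H)$: if $x'_1\in V_{\le D-1}$, then $x'_1\in U_1\subseteq V(\hat H)$ and I set $x''_1=x'_1$; otherwise $x'_1\in V_D$, and by the definition of a good node there is an edge of $R$ joining $x'_1$ to some $x''_1\in V_{D-1}$. Such an edge is incident to $V_D$, hence it lies in $R_1$, and consequently $x''_1$ is joined via $R_1$ to a vertex of $V_D\cap V(H)=V_D$, so $x''_1\in U_2\subseteq V(\hat H)$ (here I use the standing assumption $V_D\subseteq V(H)$). Define $x''_2$ the same way. Since $H\subseteq H'$ and $R_1\subseteq H'$, this gives $\dist_{H'}(x_1,x''_1)\le M+1$ and $\dist_{H'}(x_2,x''_2)\le M+1$.

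Next, since $x''_1,x''_2\in V(\hat H)$ and the event $B'$ does not happen, we have $\dist_{\hat H\cup R_2}(x''_1,x''_2)\le \bigl(\tfrac{101\ln n}{p}\bigr)^{D-1}$. I would then translate a shortest $x''_1$--$x''_2$ path $P$ in $\hat H\cup R_2$ into a walk in $H'$: each edge of $P$ is either an edge of $R_2$, hence an edge of $R\subseteq H'$; or an edge of $\hat H$, in which case, by the definition of $\hat H$, its two endpoints are at distance at most $M+4$ in $L=H\cup R_1\subseteq H'$. Substituting each edge of $P$ by the corresponding path of $H'$ yields $\dist_{H'}(x''_1,x''_2)\le (M+4)\cdot\bigl(\tfrac{101\ln n}{p}\bigr)^{D-1}$.

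Finally I would combine the three estimates by the triangle inequality:
\[
\dist_{H'}(x_1,x_2)\;\le\;\dist_{H'}(x_1,x''_1)+\dist_{H'}(x''_1,x''_2)+\dist_{H'}(x''_2,x_2)\;\le\; 2(M+1)+(M+4)\Bigl(\tfrac{101\ln n}{p}\Bigr)^{D-1},
\]
and verify that this is at most $\bigl(\tfrac{101\ln n}{p}\bigr)^{D}$. With $N:=\tfrac{101\ln n}{p}$ and $M=\tfrac{50\ln n}{p}=\tfrac{50}{101}N$, this amounts to $2(M+1)+(M+4)N^{D-1}\le N^{D}$, which holds because $N^{D-1}\ge N$ (as $D\ge 2$) and $N$ is bounded below by an absolute constant larger than $1$ (e.g. $N\ge 101\ln 2$), so that $(M+4)N^{D-1}+2(M+1)$ is dominated by $N\cdot N^{D-1}=N^{D}$. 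This then completes the induction and the proof of Theorem~\ref{thm:random_tree_planting_generalized}. I do not expect any genuine obstacle in this step: it is pure bookkeeping, and the only points needing attention are (i) checking that the lifted vertices $x''_1,x''_2$ indeed belong to $V(\hat H)$ via the definition of $U_2$ and the assumption $V_D\subseteq V(H)$, and (ii) the closing numeric inequality.
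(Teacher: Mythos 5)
Your proposal is correct and follows essentially the same route as the paper's proof: locate nearby good nodes via Observation~\ref{obs:good_node_is_dense2}, lift them into $V(\hat H)$ (using the standing assumption $V_D\subseteq V(H)$, which you rightly make explicit where the paper is terse), apply the inductive bound from the complement of $B'$, expand each $\hat H$-edge into a path of length at most $M+4$ in $H\cup R_1$, and close with the triangle inequality and the same numeric check using $M=\frac{50\ln n}{p}$.
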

\begin{proof}
	Consider any pair $x_1,x_2\in V(H)$ of vertices. 
	Let $x'_1$ be a good node in $V(H)$ that is closest to $x_1$ in $H$, and define $x'_2$ similarly for $x_2$. From Observation~\ref{obs:good_node_is_dense2}, $\dist_{H}(x_1,x'_1)\le M$. If $x'_1\in V_{\le D-1}$, then we define $x''_1=x'_1$, otherwise we let $x''_1$ be the node of $ V_{D-1}$ that is connected to $x'_1$ by an edge of $E'_1$, and we define $x''_2$ similarly for $x_2$.
	Therefore, $x''_1,x''_2\in V(\hat{H})$, and, assuming event $B$ does not happen,
	$\dist_{H'}(x_1,x''_1)\le M+1$, and $\dist_{H'}(x_2,x''_2)\le M+1$.
	Since we have assumed that the bad event $B'$ does not happen, $\dist_{\hat H\cup R_2}(x''_1,x''_2)\le (\frac{101\ln n}{p})^{D-1}$. 
	Recall that for every edge $e=(u,v)\in \hat H\cup R_2$, if $e\in E(R_2)$, then $e\in E(H')$; otherwise, $e\in E(\hat H)$, and there is a path in graph $H\cup R_1$ of length at most $M+4$ connecting $u$ to $v$ in $H$. Therefore, $\dist_H(x_1'',x_2'')\leq (M+4)\cdot \dist_{\hat H}(x_1'',x_2'')\leq  (\frac{101\ln n}{p})^{D-1}\cdot (M+4)$. 
	
	Altogether, $\dist_{H'}(x_1,x_2)\le  \dist_{H'}(x_1,x''_1)+\dist_{H'}(x''_1,x''_2)+\dist_{H'}(x_2,x''_2)\le (\frac{101\ln n}{p})^{D-1}\cdot (M+4)+(2M+2)\le (\frac{101\ln n}{p})^{D}$, since $M=(50\ln n)/p$. 
\end{proof}

The probability that either $B$ or $B'$ happen is bounded by $\frac{D}{n^{48}}$. Therefore, with probability at least $1-\frac{D}{n^{48}}$, neither of the events happens,  for every pair $x_1,x_2\in V(H)$ of vertices of $H$, $\dist_{H'}(x_1,x_2)\le (\frac{101\ln n}{p})^{D}$.

\section{Lower Bound: Proof of  Theorem~\ref{thm:diameterlowerbound}}
\label{sec: lower bound proof sketch}

In this section we provide the proof of Theorem \ref{thm:diameterlowerbound}.
We start by proving the following slightly weaker theorem; we then extend it to obtain the proof of Theorem \ref{thm:diameterlowerbound}.
%

\begin{theorem}
\label{thm:diameterlowerbound-weaker}
For all positive integers $k,D,\eta,\alpha$ such that $k/(4D\alpha \eta)$ is an integer, there exists a $k$-edge connected graph $G$ with $|V(G)|=O\left(\left (\frac{k}{2D\alpha\eta}\right)^D\right)$ and diameter at most $2D$, such that, for any collection $\tset=\{T_1,\ldots,T_{k/\alpha}\}$ of $k/\alpha$ spanning trees of $G$ that causes edge-congestion at most $\eta$, some tree $T_i\in \tset$ has diameter at least $\frac 1 4 \cdot \left(\frac{k}{2D\alpha\eta} \right)^D$.
\end{theorem}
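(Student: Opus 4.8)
Set $q := k/(2D\alpha\eta)$; by the hypothesis that $k/(4D\alpha\eta)$ is an integer, $q$ is a positive even integer. The plan is to construct $G$ so that it behaves like a ``$q$-ary branching structure of depth $D$'' in which low-diameter behaviour is forced to be funnelled through a few low-multiplicity \emph{portal} edges, while high edge-connectivity and small diameter are supplied by high-multiplicity parallel bundles (parallel edges are allowed at this stage; the simple-graph statement of Theorem~\ref{thm:diameterlowerbound} is recovered afterwards by replacing each parallel bundle with a small simple gadget, at the cost of a factor $k$ in $|V(G)|$ and $+2$ in the diameter). Concretely I would define $G=G_D$ recursively: $G_0$ is a single vertex with a marked root; $G_i$ is assembled from $q$ vertex-disjoint copies of $G_{i-1}$ (the \emph{children}) together with a \emph{hub} (a fresh copy of $G_{i-1}$, or an explicit small gadget at the lowest levels where $|V(G_{i-1})|<q$), where each child's root is joined to a dedicated hub vertex by $k$ parallel edges, and in addition a \emph{sparse} set of ``shortcut'' edges is added between the hub and an $O(1)$-net inside each child, whose sole purpose is to keep the diameter small; the root of $G_i$ is the hub's root. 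A direct induction then yields the four required properties: $|V(G_i)|=O(q)\cdot|V(G_{i-1})|$, hence $|V(G)|=O(q^D)$; every cut of $G_i$ that is not internal to a single child crosses $\ge k$ edges, hence $G$ is $k$-edge connected; and the sparse shortcuts give $\diam(G_i)\le\diam(G_{i-1})+2$, hence $\diam(G)\le 2D$.

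For the lower bound, fix $\tset=\{T_1,\dots,T_{k/\alpha}\}$ as in the statement and suppose for contradiction that $\diam(T_\ell)<\tfrac14 q^{D}$ for every $\ell$. I would prove, by downward induction on $i=D,D-1,\dots,0$, the existence of a copy $C$ of $G_i$ embedded in $G$ and a sub-collection $\tset_i\subseteq\tset$ with $|\tset_i|\ge (k/\alpha)/c^{\,D-i}$ (for an absolute constant $c$) such that the restriction $T|_C$ of each $T\in\tset_i$ is a spanning tree of $C$, these restrictions have edge-congestion $\le\eta$ inside $C$, and each has diameter $<\tfrac14 q^{\,i}$ in $C$; since escaping a child costs at least $2$ per level in $G$'s metric, $\diam(T|_C)$ never exceeds $\diam(T)$, which is what keeps the hypothesis alive all the way down, and the choice $q=k/(2D\alpha\eta)$ is exactly calibrated so that at $i=0$ the packing $\tset_0$ is still nonempty on a single-vertex graph --- absurd. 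The inductive step, from $(C,\tset_i)$ with $C\cong G_i$, passes to one of the $q$ children $C^{(1)},\dots,C^{(q)}$ of $C$: a tree $T\in\tset_i$ restricts to a disconnected forest on $C^{(j)}$ only if $T$ uses $\ge 2$ of the portal edges in $\delta(C^{(j)})$, and because all children and the hub draw their portal edges from one shared pool, the total portal usage over $\tset_i$ is $O(q)\cdot|\tset_i|$; averaging over the $q$ children and over $\tset_i$ produces a child $C^{(j^\star)}$ on which a constant fraction of $\tset_i$ restricts to spanning trees, and these form $\tset_{i-1}$ with $|\tset_{i-1}|\ge|\tset_i|/c$.

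The main obstacle is precisely this portal-usage accounting: a naive bound --- each child boundary has $\Theta(k)$ edges, congestion $\eta$, hence $\Theta(\eta k)$ boundary-edge slots, hence on average $\Theta(\alpha\eta)$ boundary edges per tree --- would lose a factor $\Theta(\alpha\eta)$ per level and annihilate the whole packing long before reaching level $0$. The construction must therefore be arranged so that the children and the hub \emph{share} their portal pool, so that a tree lavishing portals on one child necessarily starves the others and the hub; converting the per-level loss into a single global budget of size $O(q)\cdot|\tset|$ is the heart of the argument. Making this budget bound coexist with both $k$-edge connectivity and $\diam(G)\le 2D$, and verifying the diameter bound in the presence of the sparse shortcut edges without inflating $|V(G_i)|$ beyond $O(q)\cdot|V(G_{i-1})|$, is the delicate part of the construction; a final, separate task is to pin down the base gadget used when $|V(G_{i-1})|<q$ and to check its connectivity and diameter directly.
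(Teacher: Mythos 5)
There is a genuine gap, and it is exactly the one you flag yourself: the portal accounting in the inductive step does not work, and the proposed remedy (a ``shared portal pool'' giving a global budget of $O(q)\cdot|\tset_i|$ boundary usages) is incompatible with the other requirements of the construction. Since the children of a level-$i$ block are vertex-disjoint, each child has its own boundary, and $k$-edge connectivity of $G$ forces that boundary to contain at least $k$ edges; with edge-congestion $\eta$ permitted, the packing therefore has a budget of $\eta k$ crossings \emph{per child}, i.e.\ about $\alpha\eta$ crossings per tree per child on average when $|\tset_i|\approx k/\alpha$. Nothing prevents the adversarial packing from spending exactly this: every tree may use $\alpha\eta\geq 2$ boundary edges at every child, in which case no tree restricts to a spanning tree of any child, the averaging step produces nothing, and the downward induction never gets off the ground. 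Two further points are also unjustified as written: (i) the invariant ``diameter $<\frac14 q^{i}$ at level $i$'' has no mechanism for improving by a factor $q$ per level --- restricting a tree to a connected induced subtree only preserves its diameter, it does not shrink the bound --- and (ii) the loss of a constant factor $c$ per level does not survive $D$ levels under the theorem's hypotheses, since $k/\alpha$ may be as small as $4D\eta$ (e.g.\ $\eta=1$, $k/\alpha=4D$, $q=2$), where $(k/\alpha)/c^{D}$ vanishes; and a nonempty packing of trees on a one-vertex graph at $i=0$ is not by itself absurd.

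For contrast, the paper's proof avoids any recursion or restriction argument. It takes the full $w$-ary tree $T_{w,D}$ of depth $D$ with $w=k/(2D\alpha\eta)$, orders its vertices $v_1,\ldots,v_N$ by post-order, and adds $k$ parallel ``red'' copies of each edge $(v_i,v_{i+1})$; the tree edges are the only low-multiplicity (``blue'') edges. The decisive structural fact is that for every leaf $v_i$, the prefix cut $(\{v_1,\ldots,v_i\},\{v_{i+1},\ldots,v_N\})$ is crossed by only the single red bundle $(v_i,v_{i+1})$ plus at most $Dw$ blue edges (all incident to ancestors of $v_i$). Hence, by congestion, at most $Dw\eta=k/(2\alpha)$ trees can cross that cut on blue edges, and each of the remaining $\geq k/(2\alpha)$ trees must route its path between the extreme leaves through a copy of that red bundle. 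Summing over the $\geq N/2$ leaves shows the total length of these $k/\alpha$ paths is $\geq Nk/(4\alpha)$, so some path has length $\geq N/4\geq\frac14 w^{D}$. If you want to salvage your recursive scheme, the missing ingredient is precisely this property: each cut you charge must be crossed by only a \emph{bounded number of distinct low-multiplicity edges} (here $Dw$, not $\Theta(k)$), so that congestion alone blocks most trees from shortcutting --- a property your hub-with-$k$-parallel-portals construction does not have.
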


Notice that the main difference from Theorem  \ref{thm:diameterlowerbound} is that the graph $G$ is no longer required to be simple; the number of vertices of $V(G)$ is no longer fixed to be a prescribed value; and the diameter of $G$ is $2D$ instead of $2D+2$. 

\begin{proof}
For a pair of integers $w>1,D\geq 1$, we let $T_{w,D}$ be a tree of depth $D$, such that every vertex lying at levels $0,\ldots,D-1$ of $T_{w,D}$ has exactly $w$ children. In other words, $T_{w,D}$ is the full   $w$-ary tree of depth $D$. 
We denote $N_{w,D}=|V(T_{w,D})|=1+w+w^2+\cdots+w^D\leq w^{D+1}/(w-1)$. We assume that for every inner vertex $v\in V(T_{w,D})$, we have  fixed an arbitrary ordering of the children of $v$, denoted by $a_1(v),\ldots,a_w(v)$.

A \emph{traversal} of a tree $T$ is an ordering of the vertices of $T$. 
%
A \emph{post-order traversal} on a tree $T$, $\pi(T)$, is defined as follows. If the tree consists of a single node $v$, then $\pi(T)=(v)$.
Otherwise, let $r$ be the root of the tree and consider the sequence $(a_1(r),\ldots,a_w(r))$ of its children.
For each $1\leq i\leq w$, let $T_i$ be the sub-tree of $T$ rooted at the vertex $a_i(r)$. We then let $\pi(T)$ be the concatenation of $\pi(T_1),\pi(T_2),\ldots,\pi(T_w)$, with the vertex $r$ appearing at the end of the sequence; see Figure~\ref{fig:inorder} for an illustration.
For simplicity, we assume that $V(T_{w,D})=\set{v_1,v_2,\ldots,v_{N_{w,D}}}$, where the vertices are indexed in the order of their appearance in $\pi(T_{w,D})$, so the traversal visits these vertices in this order.

Next, we define a graph $G_{w,D}$, as follows. The vertex set of $G_{w,D}$ is the same as the vertex set of $T_{w,D}$, namely $V(G_{w,D})=V(T_{w,D})$.  The edge set of $G_{w,D}$ consists of two subsets: $E_1=E(T_{w,D})$, and another set $E_2$ of edges that contains, for each $1\leq i<N_{w,D}$, $k$ parallel copies of the edge $(v_i,v_{i+1})$. We then set $E(G_{w,D})=E_1\cup E_2$. For convenience, we call the edges of $E_1$ \emph{blue edges}, and the edges of $E_2$ \emph{red edges}; see Figures~\ref{fig:inorder} and \ref{fig:additionaledge}.


\begin{figure}[!htb]
	\centering
	\begin{minipage}{0.46\textwidth}
		\centering
		\includegraphics[width=0.9\linewidth, height=0.17\textheight]{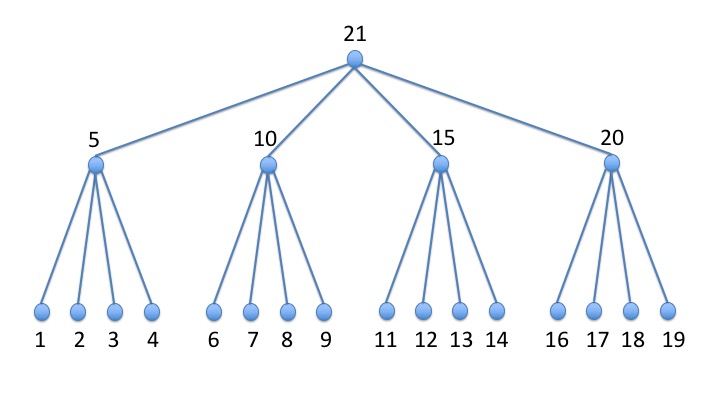}
		\caption{Tree $T_{4,2}$ with vertices indexed according to post-order traversal.}
		\label{fig:inorder}
	\end{minipage}%
	\hspace{0.2pt}
	\begin{minipage}{0.46\textwidth}
		\centering
		\includegraphics[width=0.9\linewidth, height=0.17\textheight]{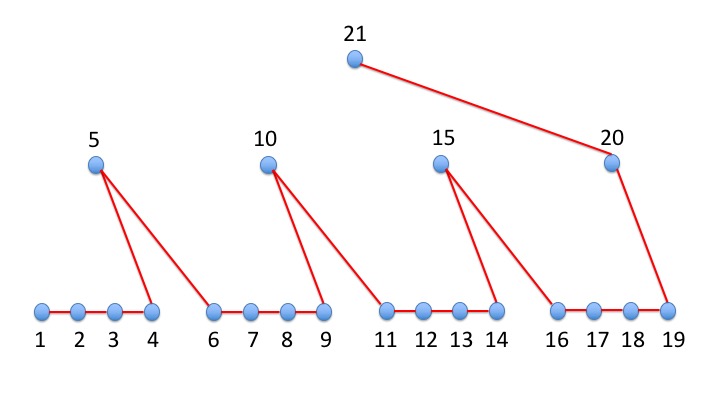}
		\caption{The edge set $E_2$ in $G_{4,2}$ (only a single copy of each edge is shown).}
		\label{fig:additionaledge}
	\end{minipage}
\end{figure}

It is easy to verify that graph $G_{w,D}$ must be $k$-edge connected, since for any partition of $V(G_{w,D})$, there is some index $1\leq i<N_{w,D}$ with $v_i,v_{i+1}$ separated by the partition, and so $k$ parallel edges connecting $v_i$ to $v_{i+1}$ must cross the partition.

We now fix an integer $w=k/(2D\alpha\eta)$ (note that $w\ge 2$), and we let $T=T_{w,D}$ be the corresponding tree and $G=G_{w,D}$ the corresponding graph. For convenience, we denote $N_{w,D}$ by $N$. Recall that $N\leq w^{D+1}/(w-1)=O\left(\left (\frac{k}{2D\alpha\eta}\right)^D\right)$.
As observed before, $G$ is $k$-edge connected. Since the depth of $T$ is $D$, and $T\subseteq G$, it is easy to see that the diameter of $G$ is at most $2D$. 

We now consider any collection $\tset=\{T_1,\ldots,T_{k/\alpha}\}$ of $k/\alpha$ spanning trees of $G$ that causes edge-congestion at most $\eta$. Our goal is to show that some tree $T_i\in \tset$ has diameter at least $\frac1 4 \cdot \left(\frac{k}{2D\alpha\eta} \right)^D$.



For convenience, we denote $V(G)=V(T)=V$. 
We say that a vertex $x\in V$ is an \emph{ancestor} of a vertex $y\in V$ if $x$ is an ancestor of $y$ in the tree $T$, that is, $x\neq y$, and $x$ lies on the unique path connecting $y$ to the root of $T$.

Let $L\subseteq V$ be the set of vertices that serve as leaves of the tree $T$. We denote by $u=v_1$ a vertex of $L$ that has the lowest index, and by $u'$ the vertex of $L$ with the largest index. It is easy to see that $u'=v_{N-D}$, as every vertex whose index is greater than that of $u'$ is an ancestor of $u'$. For each $1\le j\le k/\alpha$, we denote by $P_j$ the unique path that connects $u$ to $u'$  in tree $T_j$. Let $\pset=\set{P_j\mid 1\leq j\leq k/\alpha}$. It is enough to show that at least one of the paths $P_j$ has length at least $\frac 1 4 \cdot \left(\frac{k}{2D\alpha\eta} \right)^D$. In order to do so, we show that $\sum_{j=1}^{k/\alpha}|E(P_j)|$ is sufficiently large. At a high level, we consider the red edges $(v_i,v_{i+1})$ (the edges of $E_2$), and show that many of the paths in $\pset$ must contain copies of each such edge. This in turn will imply that $\sum_{P_j\in \pset}|E(P_j)|$ is large, and that some path in $\pset$ is long enough.

For each vertex $v_i\in L$ such that $v_i\ne u'$, we let $S_i=\set{v_1,\ldots,v_i}$, and we let $\notS_i=\set{v_{i+1},\ldots,v_N}$.
Notice that, since $u\in S_i$ and $u'\in\notS_i$,  every path in $\pset$ must contain an edge of $E_G(S_i,\notS_i)$. Note that the only red edges in $E_G(S_i,\notS_i)$ are the $k$ parallel copies of the edge $(v_i,v_{i+1})$. In the next observation, 
we  show that the number of blue edges in $E_G(S_i,\notS_i)$ is bounded by $Dw$. 



\begin{observation}
	\label{obs:cutedge}
	For each vertex $v_i\in L$ such that $v_i\ne u'$, for every blue edge $e\in E_G(S_i,\notS_i)$, at least one endpoint of $e$ must be an ancestor of $v_i$. 
\end{observation}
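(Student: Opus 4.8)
\textbf{Proof plan for Observation~\ref{obs:cutedge}.}
The plan is to reduce everything to one structural feature of the post-order traversal: for every vertex $z$ of $T$, the indices of the vertices in the subtree $T_z$ rooted at $z$ form a contiguous block of integers, and the largest index in that block is the index of $z$ itself. I would prove this by a short induction following the recursive definition of $\pi(T)$: if $r$ has children $a_1(r),\dots,a_w(r)$ with subtrees $T_1,\dots,T_w$, then $\pi(T)$ lists the block of $T_1$, then the block of $T_2$, \dots, then the block of $T_w$, and finally $r$; applying the inductive hypothesis to each $T_j$ gives that each block is contiguous with its own root last, and concatenating them with $r$ appended yields the claim for $T_r=T_z$. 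Write $\mathrm{first}(z)$ and $\mathrm{last}(z)$ for the smallest and largest indices of vertices in $T_z$, so $\{\,j:\ v_j\in V(T_z)\,\}=\{\mathrm{first}(z),\dots,\mathrm{last}(z)\}$ and $v_{\mathrm{last}(z)}=z$.

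Next I would fix a blue edge $e=(x,y)$, where $x$ is the parent of $y$ in $T$, and write $y=v_a$, $x=v_b$. Since $T_y\subsetneq T_x$ and each subtree's index set is an interval ending at the root's index, we get $\mathrm{first}(x)\le \mathrm{first}(y)\le a=\mathrm{last}(y)<b=\mathrm{last}(x)$; in particular $a<b$. Now suppose $e\in E_G(S_i,\notS_i)$, i.e.\ exactly one of $v_a,v_b$ lies in $S_i=\{v_1,\dots,v_i\}$. Since $a<b$, the only possibility consistent with this is $a\le i<b$ (the alternative, $v_b\in S_i$ and $v_a\notin S_i$, would force $a<b\le i$ and $a>i$ simultaneously, which is impossible). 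So from now on we may assume $a\le i<b$.

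It then remains to verify that under $a\le i<b$ the endpoint $x$ is an ancestor of $v_i$. If $i=a$, then $v_i=v_a=y$ and $x$ is the parent of $v_i$, hence an ancestor of $v_i$. If $i>a$, then $\mathrm{first}(x)\le a<i<b=\mathrm{last}(x)$, so $v_i\in V(T_x)$ and $v_i\ne x$, which means $x$ is a proper ancestor of $v_i$. In either case an endpoint of $e$ is an ancestor of $v_i$, as required. I do not expect a real obstacle here — the content is entirely in the contiguity property of post-order indices; the only thing that needs care is the index bookkeeping (strict versus non-strict inequalities) and the corner case $i=a$ where $v_i$ coincides with the lower endpoint of $e$, together with recalling that in this paper ``ancestor'' is taken to include the parent.
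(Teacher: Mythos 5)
Your proof is correct, and it takes a somewhat different route from the paper's. The paper argues via a layout of $T$: it considers the root-to-$v_i$ path $Q$, notes that everything drawn to the left of $Q$ is visited by $\pi(T)$ before $v_i$ and everything on or to the right of $Q$ (except $v_i$) after, and then asserts that $Q$ separates these two sides in the tree, so any blue edge crossing the cut $(S_i,\notS_i)$ must be incident to a vertex of $V(Q)\setminus\{v_i\}$, i.e.\ to an ancestor of $v_i$; the separation step is left at the ``easy to see'' level, with a figure. You instead isolate the interval property of post-order -- the indices of each subtree $T_z$ form a contiguous block whose maximum is the index of $z$ -- prove it by induction on the recursive definition of $\pi$, and then handle an arbitrary tree edge $(x,y)=(v_b,v_a)$ by pure index bookkeeping: the edge crosses the cut exactly when $a\le i<b$, and in that range either $v_i=y$ (so $x$ is its parent) or $v_i$ lies strictly inside $T_x$ (so $x$ is a proper ancestor). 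Your version is more self-contained and rigorous, replaces the geometric separation claim by an explicit lemma, and in fact proves the statement for every $v_i$ with $i<N$, not only leaves; the paper's version is shorter and more visual, and its picture of the path $Q$ also makes the follow-up bound of at most $Dw$ crossing blue edges (Corollary~\ref{cor: number of cut edges}) immediate, which your per-edge formulation yields just as well once you observe that the ancestor endpoint ranges over the at most $D$ ancestors of $v_i$, each incident to at most $w$ relevant tree edges.
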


\begin{proof}
	We consider a natural layout of the tree $T$, where for every inner vertex $x$ of the tree, its children $a_1(x),\ldots,a_w(x)$ are drawn in this left-to-right order (see Figure \ref{fig:cutting}). Consider the path $Q$ connecting the root of $T$ to $v_i$, so every vertex on $Q$ (except for $v_i$) is an ancestor of $v_i$. All vertices lying to the left of $Q$ in the layout are visited before $v_i$ by $\pi(T)$. All vertices lying to the right of $Q$, and on $Q$ itself (excluding $v_i$) are visited after $v_i$. It is easy to see that the vertices of $Q$ separate the two sets in $T$, and so the only blue edges connecting $S_i$ to $\notS_i$ are edges incident to the vertices of $V(Q)\setminus \set{v_i}$.
\end{proof}

\begin{figure}[h]
	\centering
	\scalebox{0.3}{\includegraphics{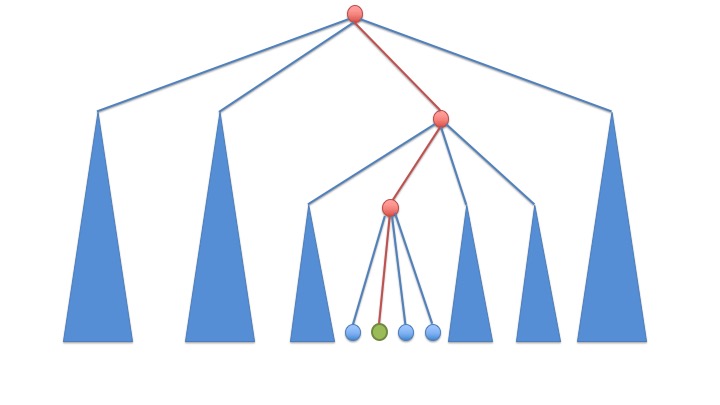}}
	\caption{A layout of the tree $T$. Vertex $v_i$ is shown in green and path $Q$ in red. All vertices lying to the left of $Q$ in this layout appear before $v_i$ in $\pi(T)$, and all vertices lying to the right of $Q$ or on $Q$ (except for $v_i$) appear after $v_i$ in $\pi(T)$.\label{fig:cutting}}
\end{figure}

Since every vertex of the tree $T$ has at most $w$ children, and since the depth of the tree is $D$, we obtain the following corollary of Observation \ref{obs:cutedge}.
\begin{corollary}\label{cor: number of cut edges}
	For each vertex $v_i\in L$ such that $v_i\ne u'$, at most $Dw$ blue edges lie in $E_G(S_i,\notS_i)$. 
\end{corollary}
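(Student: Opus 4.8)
The plan is to derive the corollary directly from Observation~\ref{obs:cutedge} by a short counting argument on the tree $T$. First I would recall that, since $v_i\in L$ is a leaf of $T=T_{w,D}$ and $T$ has depth $D$, the vertex $v_i$ lies at level $D$, so it has exactly $D$ ancestors in $T$: the vertices $x_1,x_2,\ldots,x_D$ lying on the unique path $Q$ from the root of $T$ to $v_i$, excluding $v_i$ itself, where $x_1$ is the root, $x_{j+1}$ is the child of $x_j$ on $Q$ for each $j$, and $v_i$ is a child of $x_D$. By Observation~\ref{obs:cutedge}, every blue edge $e\in E_G(S_i,\notS_i)$ has at least one endpoint in $\set{x_1,\ldots,x_D}$, so it suffices to bound the number of edges of $E_1=E(T)$ incident to this set of $D$ vertices.

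The key point for the count is that every edge of $T$ incident to an ancestor $x_j$ of $v_i$ is either an edge from $x_j$ to one of its (at most $w$) children, or the edge from $x_j$ to its parent; and in the latter case, if $j\geq 2$ the parent of $x_j$ is $x_{j-1}$, which is itself an ancestor of $v_i$, so that same edge is an edge from $x_{j-1}$ to one of its children, while if $j=1$ then $x_1$ is the root and has no parent edge at all. Hence every blue edge incident to $\set{x_1,\ldots,x_D}$ can be charged to a pair $(x_j,c)$, where $c$ is a child of $x_j$ in $T$. Since each of the $D$ ancestors $x_1,\ldots,x_D$ has at most $w$ children, there are at most $Dw$ such pairs, and therefore at most $Dw$ blue edges in $E_G(S_i,\notS_i)$, as claimed.

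I do not expect any real obstacle here; the argument is purely combinatorial and follows Observation~\ref{obs:cutedge} almost immediately. The only subtlety worth flagging is the need to avoid double-counting the $D-1$ edges that join consecutive ancestors along $Q$ (which would otherwise push a naive bound up to roughly $D(w+1)$), and this is precisely what the charging scheme above takes care of, giving the clean bound $Dw$.
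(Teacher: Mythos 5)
Your proof is correct and follows the same route as the paper: the paper derives the corollary from Observation~\ref{obs:cutedge} in one line, noting that $v_i$ has $D$ ancestors, each with at most $w$ children. Your explicit charging of each crossing blue edge to an (ancestor, child) pair is just a careful spelling-out of that same count, correctly handling the root and the edges between consecutive ancestors, so there is nothing to add.
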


Since the trees in $\tset$ cause edge-congestion $\eta$, at most $Dw\eta$ trees of $\tset$ may contain blue edges in $E_G(S_i,\notS_i)$. Each of the remaining $\frac{k}{\alpha}-Dw\eta\geq \frac{k}{2\alpha}$ trees contains a copy of the red edge $(e_i,e_{i+1})$ (recall that $w=k/(2D\alpha\eta)$.) Therefore, $\sum_{P_j\in \pset}|E(P_j)|\geq |L|\cdot \frac{k}{2\alpha}\geq \frac{Nk}{4\alpha}$,
since $|L|\geq |N|/2$.
We conclude that at least one path $P_j\in \pset$ must have length at least $\frac{Nk}{4\alpha}/\frac{k}{\alpha}\geq \frac{N}{4}$, and so the diameter of $T_j$ is at least $\frac{N}{4}$. Since $N\geq  w^D\geq\left(\frac{k}{2D\alpha\eta} \right)^D$,  the diameter of $T_j$ is at least $\frac 1 4 \cdot \left(\frac{k}{2D\alpha\eta} \right)^D$.
\end{proof}

We are now ready to complete the proof of Theorem \ref{thm:diameterlowerbound}. First, we show that we can turn the graph $G$ into a simple graph, and ensure that $|V(G)|=n$, if $n\ge 3k\cdot \left (\frac{k}{2D\alpha\eta}\right)^D$.
Let $G'_{w,D}$ be the graph obtained from $G_{w,D}$ as follows. For each $1\le i\le N$, we replace the vertex $v_i$ with a set $X_i=\{x_i^1,x_i^2\ldots,x_i^k\}$ of $k$ vertices that form a clique. For each $1\le i<N$, the $k$ red edges connecting $v_i$ to $v_{i+1}$ are replaced by the perfect matching $\{(x^t_i,x^t_{i+1})\}_{1\le t\le k}$ between vertices of $X_i$ and vertices of $X_{i+1}$. Each blue edge $(v_i,v_j)$ is replaced by a new edge $(x^1_i,x^1_j)$.
Since $n\ge 3k\cdot \left (\frac{k}{2D\alpha\eta}\right)^D>k|V(G)|+k$, we add $n-k|V(G)|>k$ new vertices that form a clique, and for each newly-added vertex, we add an edge connecting it to $x^1_{N}$ (recall that the vertex $v_{N}$ is the root of $T$). We denote $G'=G'_{w,D}$ for simplicity. It is not hard to see that $G'$ has $n$ vertices and it is $k$-edge connected. Moreover, $G'$ has diameter at most $2D+2$, since its subgraph induced by vertices of $\{x^1_i\}_{1\le i\le N}$ has diameter $2D$, and every other vertex of $G'$ is a neighbor of one of the vertices in $\{x^1_i\}_{1\le i\le N}$. The tree $T'$ is defined exactly as before, except that every original vertex $v_j$ is now replaced with its copy $x^1_j$. Let $L$ denote the set of all leaf vertices in $T'$.

Assume that we are given a collection $\tset=\{T_1,\ldots,T_{k/\alpha}\}$ of $k/\alpha$ spanning trees of $G'$ that causes edge-congestion at most $\eta$. For each $1\le i\le k/\alpha$, we denote by $Q_i$ the unique path that connects $x^1_1$ to $x^1_{N-D}$ in $T_i$ and denote $\qset=\{Q_i\mid 1\le i\le k/\alpha\}$. 
For each every leaf vertex $x_j^1\in L$, we define a cut $(W_j,\overline W_j)$ as follows: $W_j=\bigcup_{1\le s\le j}X_s$ and  $\overline{W_j}=V(G')\setminus W_j$.  Using reasoning similar to that in Corollary~\ref{cor: number of cut edges}, it is easy to see that for every leaf vertex $x_j^1\in L$,  the set $E_{G'}(W_j,\overline{W_j})$ of edges contains at most $Dw$ blue edges -- the edges of the tree $T'$. Since the trees in $\tset$ cause edge-congestion at most $\eta$, at most $Dw\eta$ trees of $\tset$ may contain blue edges in $E_{G'}(W_j,\overline{W_j})$. Therefore, for each of the remaining $\frac{k}{\alpha}-Dw\eta\ge\frac{k}{2\alpha}$ trees $T_i$, path $Q_i$ must  contain a red edge from $\{(x^t_j,x_{j+1}^{t})\}_{1\le t\le k}$.
Therefore, the sum of lengths of all paths of $\qset$ is at least $\frac{Nk}{4\alpha}$, and so at least one path $Q_i\in \qset$ must have length at least $\frac{N}{4}$. We conclude that some tree $T_i\in \tset$ has diameter at least $\frac 1 4 \cdot \left(\frac{k}{2D\alpha\eta} \right)^D$.

Lastly, we extend our results to edge-independent trees. We use the same simple graph $G'$ and the same tree $T'$ as before, setting the congestion parameter $\eta=2$.
Assume that we are given a collection $\tset'=\{T'_1,\ldots,T'_{k/\alpha}\}$ of $k/\alpha$ edge-independent spanning trees of $G'$ and let $x\in V(G')$ be their common root vertex. For each $1\le i\le k/\alpha$, we denote by $Q'_i$ the unique path that connects vertex $x^1_1$ to vertex $x^1_{N-D}$ in  tree $T'_i$, and we denote $\qset'=\{Q'_i\mid 1\le i\le k/\alpha\}$. Note that, for each $1\le i\le k/\alpha$, the path $Q'_i$ is a sub-path of the path obtained by concatenating the path $Q''_i$, connecting $x^1_1$ to $x$ in $T'_i$, with the path $Q'''_i$, connecting $x^1_{N-D}$ to $x$ in $T'_i$. Since the trees in $\tset'$ are edge-independent, the paths in $\{Q''_i\}_{1\le i\le k/\alpha}$ are edge-disjoint and so are the paths in $\{Q'''_i\}_{1\le i\le k/\alpha}$. Therefore, the paths of $\qset'$ cause edge-congestion at most $2$. 
The remainder of the proof is the same as before and is omitted here.

\section{Tree Packing for $(k,D)$-Connected Graphs: Proof of Theorem \ref{thm:packing_spanning_trees_in_(k,D)_with_congestion}}
\label{sec: kD connected packing sketch}

In this section we provide the proof of Theorem~\ref{thm:packing_spanning_trees_in_(k,D)_with_congestion}.
Recall that we are given a $(k,D)$-connected $n$-vertex graph $G$. Our goal is to design an efficient randomized algorithm that computes a collection $\tset=\{T_1,\ldots,T_{k}\}$ of $k$ spanning trees of $G$, such that, for each $1\le \ell\le k$, the tree $T_{\ell}$ has diameter at most $O(D\log n)$, and with high probability each edge of $G$ appears in $O(\log n)$ trees of $\tset$. Note that we allow the graph $G$ to have parallel edges. However, we can assume w.l.o.g. that for every pair $(u,v)$ of vertices of $G$, there are at most $k$ parallel edges $(u,v)$; all remaining edges can be deleted without violating the $(k,D)$-connectivity property of $G$. 

The main tool that we use in our proof is the following theorem and its corollary.
\begin{theorem}\label{thm: finding_flow_body}
	There is an efficient algorithm, that, given a $(k,D)$-connected graph $G$, a subset $U\subsetneq V(G)$ of its vertices, and an additional vertex $s\in V(G)\setminus U$, computes a flow $f$ in $G$ with the following properties:
	
	\begin{itemize}
		\item the endpoints of every flow-path lie in $U\cup\set{s}$;
		\item for each vertex $u\in U$, the total flow on all paths that originate or terminate at $u$ is at least $k$;
		\item the total amount of flow through any edge is at most $2$; and
		\item each flow-path has length at most $2D$.
	\end{itemize}
\end{theorem}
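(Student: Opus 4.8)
The plan is to phrase the problem as a single-commodity flow problem in an auxiliary network and invoke the max-flow/min-cut theorem, together with integral-capacity reasoning, to extract the desired flow with congestion~$2$. First I would build an auxiliary directed graph $\hat G$: split every vertex $v\in V(G)$ into $v^{\mathrm{in}},v^{\mathrm{out}}$ joined by an internal arc, orient each original edge $(u,v)$ into two arcs $u^{\mathrm{out}}\to v^{\mathrm{in}}$ and $v^{\mathrm{out}}\to u^{\mathrm{in}}$, add a super-source $\sigma$ with an arc of capacity $k$ to each $u^{\mathrm{in}}$ for $u\in U$ (and no arc to $s$), and add a super-sink $\tau$ fed by arcs of capacity $k$ from each $u^{\mathrm{out}}$, $u\in U$, and from $s^{\mathrm{out}}$ with capacity $k|U|$. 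Give every arc coming from an original edge of $G$ capacity~$1$; the internal vertex-splitting arcs we can leave at capacity $+\infty$ (or a large value), since the congestion bound we want is an \emph{edge} bound in $G$, not a vertex bound. The key claim is that this network admits a feasible flow saturating all the $\sigma$-arcs, i.e.\ of value $k|U|$.

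To prove feasibility, by max-flow/min-cut it suffices to show every $\sigma$--$\tau$ cut has capacity at least $k|U|$. Here is where the $(k,D)$-connectivity of $G$ enters: for each $u\in U$ there are $k$ edge-disjoint paths from $u$ to $s$ in $G$, each of length at most $D$. Routing $1$ unit along each such path for every $u\in U$ gives, a priori, a flow of value $k|U|$ into $s$, but with edge-congestion possibly as large as $|U|$ on some edges of $G$ and path lengths at most $D$. This already shows the cut value is $\ge k|U|$ if we temporarily relax the unit capacities — but we need capacities $1$. The fix is the standard one: a $\sigma$--$\tau$ cut in $\hat G$ corresponds to a choice, for each $u\in U$, of either ``cutting'' $u$ from $\sigma$ (cost $k$) or else routing must be blocked by cut original-edge-arcs; since the $k$ edge-disjoint $u$--$s$ paths are edge-disjoint \emph{for a fixed $u$}, blocking all of them costs at least $k$ distinct original-edge arcs. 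Summing the per-vertex lower bounds (being careful that the edge-disjointness is only within each $u$, so we count the $\ge k$ contribution for whichever $u$ is ``not cut at the source'') yields total cut capacity $\ge k|U|$. So a flow $\hat f$ of value $k|U|$ exists, and since all capacities are integral we may take $\hat f$ integral.

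Finally I would translate $\hat f$ back to $G$. Decompose $\hat f$ into $\sigma$--$\tau$ paths; each such path, after removing $\sigma,\tau$ and un-splitting vertices, is a path in $G$ from some $u\in U$ to either $s$ or to another $u'\in U$ (paths may pass through $\tau$ only at the end, but a single path goes $\sigma\to u^{\mathrm{in}}\to\cdots\to(\text{either }s^{\mathrm{out}}\text{ or }u'^{\mathrm{out}})\to\tau$). Each original edge of $G$ gave rise to two unit-capacity arcs, so the total $\hat f$-flow through an edge of $G$ is at most $2$, giving the congestion bound. The $\sigma$-arc into $u^{\mathrm{in}}$ carries $k$ units, so the total flow originating or terminating at $u$ is exactly $k$. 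For the length bound: the honest $(k,D)$-connectivity paths have length $\le D$, but an arbitrary flow decomposition of $\hat f$ need not respect this, so the last step — and I expect this to be the main technical point — is to argue we can choose the flow to use only short paths. The clean way is to not decompose an arbitrary max-flow but instead to \emph{construct} $\hat f$ directly: superimpose, for each $u\in U$, the $k$ edge-disjoint $u$--$s$ paths of length $\le D$ from the $(k,D)$-connectivity definition; this is a flow of value $k|U|$ with all flow-paths of length $\le D\le 2D$ and originating/terminating in $U\cup\{s\}$, but with congestion up to $|U|$. Then repeatedly cancel congestion: whenever some edge $e=(a,b)$ carries $\ge 3$ units, two of the paths through it, say $P$ (a $u$--$s$ path using $e$ in direction $a\to b$) and $P'$ (a $u'$--$s$ path also using $e$), can be spliced — reroute $P$ to follow $P'$ from $b$ onward and vice versa — which does not increase total length and strictly decreases the number of path-edge incidences on the heavily-loaded portion; iterating, or invoking a flow-rerouting lemma, one reaches congestion $2$ while keeping every flow-path of length at most the maximum of the original lengths, hence $\le D$. (The slack to $2D$ in the statement gives room if a slightly lossier splicing argument is used, e.g.\ splicing at an intermediate vertex rather than an edge.) This produces the required flow $f$ and completes the proof.
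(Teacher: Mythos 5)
There is a genuine gap, in fact two. First, the max-flow/min-cut step does not establish what you need. In the auxiliary network you describe, every $u^{\mathrm{in}}$ receives a capacity-$k$ arc from $\sigma$ \emph{and} every $u^{\mathrm{out}}$ feeds a capacity-$k$ arc into $\tau$, so the flow $\sigma\to u^{\mathrm{in}}\to u^{\mathrm{out}}\to\tau$ (or, under any re-orientation of the splitting, a closed walk returning to $u$) is feasible and uses no edge of $G$ at all; the min-cut is therefore $\geq k|U|$ for a trivial reason, and a maximum flow of value $k|U|$ need not correspond to flow-paths joining \emph{distinct} vertices of $U\cup\{s\}$. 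Your cut-counting argument is also not sound on its own terms: the $k$ edge-disjoint $u$--$s$ paths are edge-disjoint only for a fixed $u$, so when several terminals are ``not cut at the source'' the blocking edges can coincide (e.g.\ all terminals attached to a hub $x$ with $x$ joined to $s$ by $k$ edges), and summing $k$ per such terminal is not justified. Note also that your formulation demands that every $u$ \emph{originate} $k$ units, which is strictly stronger than the theorem, whose guarantee lets a $u$--$u'$ path count for both endpoints (this factor-of-two slack is exactly what the paper later spends in Corollary~\ref{cor: flow and bipartition}).

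Second, the length repair by splicing fails. Tail-splicing two $u$--$s$ and $u'$--$s$ paths that use an edge in the \emph{same} direction does not change the multiset of used edges, so it cannot reduce congestion; and when everything flows toward $s$ the congestion is unavoidable as long as all paths insist on ending at $s$: in the hub example with terminals $u_1,\ldots,u_m$, hub $x$, and $\deg(s)=k$, any flow in which each $u_i$ sends $k$ units to $s$ has congestion $\geq m$ on some edge at $s$, and splicing preserves the property that all paths end at $s$. Opposite-direction splicing does reduce congestion but can make an individual path as long as $|P|+|P'|$, so the $\leq 2D$ bound is lost. The missing idea is to let flow terminate early at \emph{other terminals}, i.e.\ to pair terminals up via short prefixes of their $u$--$s$ path systems; this is precisely what the paper does (Lemma~\ref{lem:spider_connection}): pass to the line graph with $k$ copies of each terminal so the path systems become internally vertex-disjoint, apply the Chuzhoy--Khanna prefix decomposition into canonical spiders and cycles, and pair up spider legs to obtain an integral path system of length $\leq 2D$ and edge-congestion $\leq 2$ with each terminal an endpoint of $\geq k$ paths; efficiency is then handled separately by solving the length-restricted LP (LP-1) via the ellipsoid method with a length-bounded shortest-path separation oracle, a point your plan (which relies on generic max-flow, which does not respect length bounds) does not address.
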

Notice that a flow-path is allowed to contain vertices of $U\cup \set{s}$ as inner vertices. 
We defer the proof of Theorem \ref{thm: finding_flow_body} to Section \ref{subsec: finding the flow}, after we complete the proof of Theorem \ref{thm:packing_spanning_trees_in_(k,D)_with_congestion} using it. We obtain the following useful corollary of the theorem.
\begin{corollary}\label{cor: flow and bipartition}
	There is an efficient algorithm, that, given a $(k,D)$-connected graph $G$ and a subset $S\subseteq V(G)$ of its vertices, computes a bi-partition $(S',S'')$ of $S$, and a flow $f$ from vertices of $S''$ to vertices of $S'$, such that the following hold:
	
	\begin{itemize}
		\item every vertex of $S''$ sends at least $k/2$ flow units;
		\item every flow-path has length at most $2D$;
		\item the total amount of flow through any edge is at most $3$; and
		\item $|S'|\leq \frac{|S|}{2}+1$.
	\end{itemize}
\end{corollary}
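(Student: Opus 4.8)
The plan is to apply Theorem~\ref{thm: finding_flow_body} with a carefully chosen terminal set, convert the resulting flow into a weighted graph on $S$, and then read off the bipartition $(S',S'')$ from a locally optimal maximum cut of this graph.

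First, if $|S|\le 2$, output $S'=S$, $S''=\emptyset$ and the empty flow; all four properties hold trivially, since $|S|\le |S|/2+1$ exactly when $|S|\le 2$. So assume $|S|\ge 3$. Pick an arbitrary $s\in S$ and set $U=S\setminus\{s\}$; note $U\subsetneq V(G)$ and $s\in V(G)\setminus U$, so Theorem~\ref{thm: finding_flow_body} applies and yields a flow $f_0$ whose flow-paths have both endpoints in $U\cup\{s\}=S$ and length at most $2D$, which causes edge-congestion at most $2$, and in which every $u\in U$ is an endpoint of flow-paths carrying total flow at least $k$. Fix a decomposition of $f_0$ into flow-paths, and for each unordered pair $\{a,b\}\subseteq S$ let $w(a,b)$ denote the total $f_0$-flow on flow-paths with endpoints $a$ and $b$. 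This defines a weighted graph $H$ on vertex set $S$, whose weighted degree $\deg_H(u)=\sum_{a} w(u,a)$ satisfies $\deg_H(u)\ge k$ for every $u\in U$.

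Next I would compute a locally optimal cut $(A,B)$ of $H$: a bipartition of $S$ such that moving any single vertex across does not increase the weight crossing the cut. Such a cut is found by standard local search — start from any bipartition and repeatedly move an improving vertex; the crossing weight strictly increases (by at least an amount controlled by the bit-complexity of $f_0$) and is bounded above, so the process terminates in polynomial time. Since some vertex of $U$ has positive weighted degree, neither side of a locally optimal cut is empty. Assume w.l.o.g. $|A|\le|B|$, so $|A|\le |S|/2$. I then set $(S',S'')=(A,B)$ if $s\in A$, and $(S',S'')=(A\cup\{s\},\,B\setminus\{s\})$ if $s\in B$. In either case $s\in S'$, $S''\subseteq U$, and $|S'|\le |S|/2+1$.

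Finally I extract the flow. Local optimality of $(A,B)$ means that for every $v\in B$, moving $v$ to $A$ does not help, i.e. $w(v,B\setminus\{v\})\le w(v,A)$, hence $w(v,A)\ge \deg_H(v)/2$; since $S'\supseteq A$, for $v\in S''\subseteq B\cap U$ we get $\sum_{a\in S'} w(v,a)\ge w(v,A)\ge \deg_H(v)/2\ge k/2$. Let $f$ be the sub-flow of $f_0$ consisting of all flow-paths with one endpoint in $S'$ and the other in $S''$, each oriented from its $S''$-endpoint to its $S'$-endpoint. Then $f$ is a flow from $S''$ to $S'$; the amount it routes out of $v\in S''$ equals $\sum_{a\in S'} w(v,a)\ge k/2$; every flow-path of $f$ has length at most $2D$; and since $f$ is a sub-flow of $f_0$, its edge-congestion is at most $2\le 3$. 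I expect the only mildly delicate step is the combination of two ideas — putting $s$ on the smaller side, and observing that a locally optimal max-cut forces every vertex to send at least half of its $f_0$-flow across the partition — which together deliver both the size bound $|S'|\le|S|/2+1$ and the demand bound $k/2$; the rest is bookkeeping.
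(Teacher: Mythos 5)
Your overall strategy is the same as the paper's: fix $s\in S$, set $U=S\setminus\{s\}$, invoke Theorem~\ref{thm: finding_flow_body}, run a local-search (max-cut style) improvement on the induced weighted graph over $S$, place $s$ on the smaller side, and keep only the crossing flow-paths directed toward $S'$. The local-optimality argument (each $v\in S''$ has at least half of its $\ge k$ units of flow going to $S'$), the size bound $|S'|\le |S|/2+1$, and the extraction of the sub-flow are all handled exactly as in the paper.

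The genuine gap is your claim that the local search \emph{terminates in polynomial time} because each improving move increases the crossing weight ``by at least an amount controlled by the bit-complexity of $f_0$.'' The flow values are rationals of polynomially bounded bit-length, so the smallest possible improvement is only bounded below by roughly $2^{-\poly(n)}$, while the crossing weight is bounded above by the total flow, which is $O(|E(G)|)$; the resulting bound on the number of iterations is the ratio of these two quantities and can be exponential. (Local search for weighted max cut with arbitrary weights is not known to be polynomial; it is PLS-complete, so bit-complexity alone does not rescue the argument.) The paper sidesteps this by first rounding every flow-path value of $f_0$ up to an integral multiple of $1/n^{c}$, where $n^{c}$ bounds the number of flow-paths produced by the efficient algorithm of Theorem~\ref{thm: finding_flow_body}; this increases the total congestion from $2$ to at most $3$ (which is precisely why the statement allows $3$ rather than $2$), and it guarantees that each improving move raises the crossing weight by at least $1/n^{c}$, so the number of iterations is $O(|E(G)|\cdot n^{c})$. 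Inserting this rounding step (and accordingly replacing your congestion bound ``$2\le 3$'' by $3$) repairs your proof; everything else you wrote is sound.
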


\begin{proof}
	Let $s\in S$ be an arbitrary vertex, and set $U=S\setminus \set{s}$. We apply Theorem \ref{thm: finding_flow_body} to graph $G$, vertex set $U$ and the vertex $s$, to obtain a flow $f$. Recall that every vertex of $U$ sends or receives at least $k$ flow units, and all flow-paths have length at most $2D$. Let $\pset'$ be the set of all paths in $G$ on which a non-zero amount of flow is sent. Since the algorithm in Theorem \ref{thm: finding_flow_body} is efficient, we are guaranteed that $|\pset'|\leq n^c$ for some constant $c$, where $n=|V(G)|$. It will be convenient for us to ensure that for every path $P\in \pset'$, $f(P)$ is an integral multiple of $1/n^c$. In order to achieve this, for every flow-path $P\in \pset'$, we round $f(P)$ up to the next integral multiple of $1/n^c$. Note that this increases the total amount of flow by at most $1$, so the total amount of flow through any edge is at most $3$.
	
	
	We now compute a bi-partition $(S',S'')$ of $S$, as follows. We start from an arbitrary partition $(S',S'')$. Consider any vertex $v\in S$. For convenience, we direct all flow-paths of $\pset'$ for which $v$ serves as an endpoint away from $v$. Let $q'(v)$ be the total amount of flow that originates at $v$ and terminates at vertices of $S'$, and define $q''(v)$ similarly for the total amount of flow between $v$ and $S''$.
	
	If $v\in S'$, but $q'(v)>q''(v)$, then we move $v$ from $S'$ to $S''$.  Similarly, if $v\in S''$, but $q''(v)>q'(v)$, then we move $v$ from $S''$ to $S'$. Notice that in either case, the total amount of flow between vertices of $S'$ and vertices of $S''$ increases by at least $1/n^c$. We continue performing these modifications, until for every vertex $v\in S'$, $q'(v)\leq q''(v)$, and for every vertex $v\in S''$, $q''(v)\leq q'(v)$. Since the total amount of flow between $S'$ and $S''$ grows by at least $1/n^c$ in every iteration, the number of such iterations is bounded by $O(|E(G)|\cdot n^c)=O(\poly(n))$.
	
	Consider the partition $(S',S'')$ of $S$ obtained at the end of this algorithm. Assume w.l.o.g. that $|S'|\leq |S''|$; otherwise we switch $S'$ and $S''$. If the vertex $s$ lies in $S''$, then we move it to $S'$. Notice that we are now guaranteed that for every vertex $u\in S''$, $q'(u)\geq q''(u)$, and so at least $k/2$ flow units are sent between $u$ and the vertices of $S'$. In order to obtain the final flow $f'$, we discard from $f$ all flow-paths except those connecting the vertices of $S''$ to the vertices of $S'$, and we direct these flow paths towards the vertices of $S'$. It is easy to verify that $|S'|\leq |S|/2+1$. 
\end{proof}

Our algorithm consists of two phases. In the first phase, we define a partition of the vertices of $G$ into layers $L_1,\ldots,L_h$, where $h=O(\log n)$. Additionally, for each $1\leq i\leq h$, we define a flow $f_i$ in graph $G$ between vertices of $L_i$ and vertices of $L_1\cup\cdots\cup L_{i-1}$. In the second phase, we use the layers and the flows in order to construct the desired set of spanning trees.

\paragraph{Phase 1: partitioning into layers.}
We use a parameter $h=\Theta(\log n)$, whose exact value will be set later. We now define the layers $L_h,\ldots,L_1$ in this order, and the corresponding flows $f_h,\ldots,f_1$. In order to define the layer $L_h$, we let $S=V(G)$, and we apply Corollary \ref{cor: flow and bipartition} to the graph $G$ and the set $S$ of its vertices, to obtain a partition $(S',S'')$ of $S$, with $|S'|\leq |S|/2+1$, and the flow $f$ between the vertices of $S''$ and the vertices of $S'$, where every vertex of $S''$ sends at least $k/2$ units of flow, each flow-path has length at most $2D$, and the edge-congestion caused by $f$ is at most $3$. We then set $L_h=S''$ and $f_h=f$, and continue to the next iteration.

Assume now that we have constructed $L_h,\ldots,L_i$, we now show how to construct $L_{i-1}$. Let $S=V(G)\setminus (L_h\cup\cdots\cup L_i)$. We apply 
Corollary~\ref{cor: flow and bipartition} to the graph $G$ and the set $S$ of its vertices, to obtain a partition $(S',S'')$ of $S$, with $|S'|\leq |S''|/2+1$, and the corresponding flow $f$. We then set $L_{i-1}=S''$, $f_{i-1}=f$, and continue to the next iteration. If we reach an iteration where $|S|\leq 2$, we arbitrarily designate one of the two vertices as $s$, and we let $U$ be a set of vertices containing the other vertex. We then use Theorem \ref{thm: finding_flow_body} in order to find a flow of value at least $k$ between the two vertices, such that the edge-congestion of the flow is at most $2$, and every flow-path has length at most $2D$. We then add the vertex that lies in $U$ to the current layer, and the vertex $s$ to the final layer $L_1$. If we reach an iteration where $|S|=1$, then we add the vertex of $S$ to the final layer $L_1$ and terminate the algorithm. The number $h$ of layers is chosen to be exactly the number of iterations in this algorithm. Notice that $h\leq 2\log n$ must hold. Observe also that, for all $1<i\leq h$, flow $f_i$ originates at vertices of $L_i$, terminates at vertices of $L_1\cup\cdots\cup L_{i-1}$, uses flow-paths of length at most $2D$, and causes edge-congestion at most $3$.

\paragraph{Phase 2: constructing the trees.}
In order to construct the spanning trees $T_1,\ldots,T_k$, we start with letting each tree contain all vertices of $G$ and no edges. We then process every vertex $v\in V(G)$ one-by-one. Assume that $v\in L_i$, for some $1\leq i\leq h$. Consider the following experiment. Let $\qset(v)$ be the set of all flow-paths that carry non-zero flow in $f_i$, and connect $v$ to vertices of $L_1\cup\cdots\cup L_{i-1}$. Let $F(v)$ be the total amount of flow $f_i$ on all paths $P\in \qset(v)$; recall that $F(v)\geq k/2$ must hold. We choose a path $P\in \qset(v)$ at random, where the probability to choose a path $P$ is precisely $f_i(P)/F(v)$. We repeat this experiment $k$ times, obtaining paths $P_1(v),\ldots,P_k(v)$. For each $1\leq j\leq k$, we add all edges of $P_j(v)$ to $T_j$. 
Consider the graphs $T_1,\ldots,T_k$ at the end of this process. Notice that each such graph $T_j$ may not be a tree. We show first that the diameter of each such graph is bounded by $O(D\log n)$.

\begin{claim}\label{claim: low diam}
	For all $1\leq j\leq k$, $\diam(T_j)\leq O(D\log n)$.
\end{claim}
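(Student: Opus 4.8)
The plan is to exhibit, inside each graph $T_j$, a short path from every vertex to one fixed \emph{root} vertex $r$; a bound of $O(D\log n)$ on $\diam(T_j)$ then follows from the triangle inequality. First I would record the relevant properties of the layer decomposition $(L_1,\ldots,L_h)$ from Phase~1: these sets partition $V(G)$, we have $h\le 2\log n$, and for every $i\ge 2$ the flow $f_i$ originates at the vertices of $L_i$, terminates at vertices of $L_1\cup\cdots\cup L_{i-1}$, uses only flow-paths of length at most $2D$, and sends at least $k/2$ units out of every vertex of $L_i$. I would fix $r\in L_1$ to be the vertex that plays the role of ``$s$'' in the last iteration of Phase~1; the remaining vertex of $L_1$ (if one exists) is joined to $r$ by the flow $f_1$ along flow-paths of length at most $2D$, each carrying positive flow. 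It is then convenient to define a level function $\lambda\colon V(G)\to\set{0,1,\ldots,h}$ with $\lambda(r)=0$ and $\lambda(v)=i$ whenever $v\in L_i$ and $v\ne r$.

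Next I would analyze Phase~2. When a vertex $v\ne r$ is processed, say with $v\in L_i$, the set $\qset(v)$ used by the algorithm consists of flow-paths of $f_i$ (or of $f_1$ when $v\in L_1\setminus\set{r}$) that emanate from $v$ and terminate in a layer of strictly smaller index (or at $r$); this set is non-empty since $v$ sends at least $k/2$ (respectively $k$) units of flow. Hence each of the $k$ sampled paths $P_1(v),\ldots,P_k(v)$ has length at most $2D$ and connects $v$ to a vertex $w_j(v)$ with $\lambda(w_j(v))<\lambda(v)$, and all edges of $P_j(v)$ belong to $T_j$. Thus $T_j$ contains, for every vertex $v\ne r$, a path of length at most $2D$ from $v$ to some strictly lower-level vertex.

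To finish, fix $j$ and an arbitrary vertex $v$, and iterate the previous observation: set $v^{(0)}=v$ and $v^{(t+1)}=w_j(v^{(t)})$ as long as $v^{(t)}\ne r$. The values $\lambda(v^{(0)})>\lambda(v^{(1)})>\cdots$ are strictly decreasing non-negative integers, so the sequence reaches $r$ within $\lambda(v)\le h$ steps; concatenating the corresponding sub-paths of $T_j$ shows $\dist_{T_j}(v,r)\le 2Dh\le 4D\log n$. In particular each $T_j$ is connected, and for any $u,v\in V(G)$,
\[
\dist_{T_j}(u,v)\le\dist_{T_j}(u,r)+\dist_{T_j}(r,v)\le 8D\log n=O(D\log n),
\]
as required.

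The only delicate point is the first step: one must be sure that the bottom of the Phase~1 recursion --- in particular the vertices handled in the $|S|\le 2$ and $|S|=1$ cases --- is organized so that the level function $\lambda$ is \emph{strictly} decreased along every path added in Phase~2 and attains its minimum only at $r$. Once $\lambda$ is defined so that this holds, the remainder is a routine induction on the level together with a telescoping bound on path lengths.
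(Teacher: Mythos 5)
Your proposal is correct and follows essentially the same route as the paper: it establishes that every vertex reaches the unique root $r\in L_1$ in $T_j$ within at most $2D$ edges per layer (the paper phrases this as an induction on the layer index, giving the bound $2D(i-1)$, while you phrase it as iterating the strictly level-decreasing ``parent'' map), and then concludes via the triangle inequality. The ``delicate point'' you flag is already handled by the construction, since $L_1$ consists of the single vertex $s=r$ and every other vertex, including the one handled in the $|S|\le 2$ case, sends its flow to strictly earlier layers.
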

\begin{proof}
	Fix an index $1\leq j\leq k$.
	Let $r$ be the unique vertex lying in $L_1$. We prove that for all $1\leq i\leq h$, for every vertex $v\in L_i$, there is a path connecting $v$ to $r$ in $T_j$, of length at most $2D(i-1)$ by induction on $i$.
	
	The base of the induction is when $i=1$ and the claim is trivially true. Assume now that the claim holds for layers $L_1,\ldots,L_{i-1}$. Let $v$ be any vertex in layer $L_i$. Consider the path $P_j(v)$ that we have selected. Recall that this path has length at most $2D$, and it connect $v$ to some vertex $u\in L_1\cup\cdots\cup L_{i-1}$. By the induction hypothesis, there is a path $P$ in $T_j$ of length at most $2D(i-2)$, that connects $u$ to $r$. Since all edges of $P_j(v)$ are added to $T_j$, the path $P_j(v)$ is contained in $T_j$. By concatenating path $P_j(v)$ with path $P$, we obtain a path connecting $v$ to $r$, of length at most $2D(i-1)$.
\end{proof}

Lastly, we prove that with high probability, every edge of $G$ belongs to $O(\log n)$ graphs $T_1,\ldots,T_k$.
\begin{claim}\label{claim: low congestion}
	With probability at least $(1-1/\poly(n))$, every edge of $G$ lies in at most $O(\log n)$ graphs $T_1,\ldots,T_k$.
\end{claim}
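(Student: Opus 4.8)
The plan is to fix a single edge $e\in E(G)$, bound the probability that $e$ lies in many of the graphs $T_j$ via a Chernoff-type argument, and then take a union bound over all edges of $G$.

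First I would record the key per-sample estimate. Fix an index $1\le j\le k$ and a layer $L_i$. Since every flow-path of $f_i$ originates at a unique vertex of $L_i$, the sets $\{\qset(v)\}_{v\in L_i}$ form a partition of the flow-paths of $f_i$. For $v\in L_i$, let $g_i(v,e)=\sum_{P\in\qset(v):\,e\in E(P)}f_i(P)$ be the amount of flow of $f_i$ that $v$ sends through $e$. By the definition of the sampling experiment, $\Pr[e\in E(P_j(v))]=g_i(v,e)/F(v)\le 2g_i(v,e)/k$, using $F(v)\ge k/2$. Summing over $v\in L_i$ and using that $f_i$ causes edge-congestion at most $3$, we get $\sum_{v\in L_i}g_i(v,e)\le 3$ (this sum is exactly the $f_i$-congestion on $e$). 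Since there are at most $h\le 2\log n$ layers, and layer $L_1$ contributes no flow, a union bound over vertices gives $\Pr[e\in E(T_j)]\le\sum_{i}\sum_{v\in L_i}\Pr[e\in E(P_j(v))]\le \tfrac{2}{k}\sum_{i}\sum_{v\in L_i}g_i(v,e)\le \tfrac{6h}{k}$.

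Next I would invoke independence across the tree index. Let $Y_j$ be the indicator of the event $e\in E(T_j)$. The edge set of $T_j$ equals $\bigcup_{v\in V(G)}E(P_j(v))$, so $Y_j$ is a function only of the paths $\{P_j(v)\}_{v\in V(G)}$; since the $k$ experiments performed at each vertex are mutually independent and distinct vertices are processed independently, the collections $\{P_j(v)\}_{v}$ for distinct values of $j$ are independent, and hence $Y_1,\dots,Y_k$ are mutually independent. Thus $X_e:=\sum_{j=1}^k Y_j$ is a sum of $k$ independent Bernoulli variables with $\expect{X_e}\le 6h\le 12\log n$, and a Chernoff bound yields $\Pr[X_e\ge C\log n]\le n^{-10}$ for a suitable absolute constant $C$.

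Finally I would take a union bound over all edges of $G$. Since we may assume that $G$ has at most $k$ parallel copies of every edge, $|E(G)|\le \poly(n)$, so with probability at least $1-1/\poly(n)$ we have $X_e\le C\log n=O(\log n)$ for every edge $e$, which is exactly the claim. The argument is essentially bookkeeping; the only points requiring care are verifying that $\sum_{v\in L_i}g_i(v,e)$ is precisely the congestion of $f_i$ on $e$ (hence at most $3$), and that the indicators $Y_j$ are genuinely independent across $j$, which is what licenses the Chernoff bound.
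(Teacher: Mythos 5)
Your proposal is correct and follows essentially the same route as the paper: fix an edge, use $F(v)\ge k/2$ together with the per-layer congestion bound on the flows $f_i$ to bound the expected number of trees containing the edge by $O(\log n)$, apply a Chernoff bound to independent indicators, and finish with a union bound over the (polynomially many) edges. The only cosmetic difference is that you aggregate into per-tree indicators $Y_j$ and invoke independence across $j$, whereas the paper applies Chernoff directly to the per-$(v,j)$ indicators $X(v,j)$ whose sum upper-bounds the congestion; both steps are valid and give the same bound.
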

The proof follows the standard analysis of the Randomized Rounding technique and is delayed to Section \ref{subsec: proof of randomized rounding}.

For each $1\leq j\leq k$, we can now let $T'_j$ be a BFS tree of the graph $T_j$, rooted at the vertex $r$. From Claim \ref{claim: low diam}, each tree $T'_j$ has diameter at most $O(D\log n)$, and from Claim \ref{claim: low congestion}, the resulting set of trees cause edge-congestion $O(\log n)$. 

\subsection{Proof of Theorem \ref{thm: finding_flow_body}} \label{subsec: finding the flow}


For every vertex $u\in U$, let $\pset(u)$ be the set of all paths in graph $G$ of length at most $2D$, that connect $u$ to vertices of $(U\cup\set{s})\setminus\set{u}$. Notice that for a pair $u,u'\in U$ of distinct vertices, each path connecting $u$ to $u'$ belongs to both $\pset(u)$ and $\pset(u')$. Let $\pset^*=\bigcup_{u\in U}\pset(u)$. We use the following linear program, that has no objective function; our goal will be to find a feasible solution satisfying all constraints.
\begin{eqnarray*}
\mbox{(LP-1)}	&\sum_{P\in \pset(u)}f(P)\geq k&\forall u\in U\\
	&\sum_{\stackrel{P\in \pset^*:}{e\in P}}f(P)\leq 2&\forall e\in E(G)\\
	&f(P)\geq 0&\forall P\in \pset^*
	\end{eqnarray*}

Note that, if $f$ is a feasible solution to (LP-1), then it satisfies all requirements of Theorem \ref{thm: finding_flow_body}.
The following claim provides an efficient algorithm for solving (LP-1); its proof  uses standard techniques and is deferred to Section~\ref{sec:solve_LP}.

\begin{claim}\label{claim: solve LP}
	There is an efficient algorithm that computes a feasible solution to (LP-1), if such a solution exists.
\end{claim}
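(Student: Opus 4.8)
The plan is to exhibit a feasible solution to (LP-1), when one exists, by appealing to LP duality together with a minimum-cost / maximum-flow style argument on an auxiliary graph, since (LP-1) is essentially a flow feasibility LP with length-restricted paths. The key issue is that the variables of (LP-1) range over the exponentially large set $\pset^*$ of paths of length at most $2D$, so we cannot solve it directly in the path formulation. First I would observe that a feasible fractional solution to (LP-1) exists if and only if a certain capacitated flow problem on a layered ``path-unrolling'' of $G$ is feasible, and this unrolled graph has polynomial size (roughly $2D$ copies of $V(G)$, with $2D = O(n)$ layers): a path of length at most $2D$ in $G$ between two vertices of $U\cup\{s\}$ corresponds to a directed path in the layered graph, edge-capacities of $2$ in $G$ become capacity constraints that must be aggregated across layers, and the demand ``$\sum_{P\in\pset(u)} f(P)\ge k$ for every $u\in U$'' becomes a lower-bound demand at each terminal copy of $u$. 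Concretely, I would build a flow network with a super-source and super-sink attached to the copies of the terminals, route $k$ units of flow out of (or into) each terminal, enforce the capacity-$2$ constraint on each original edge of $G$ by an extra gadget edge that all layered copies of that edge must pass through, and then test feasibility of this polynomial-size flow LP with standard max-flow / LP techniques.

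Here is the sequence of steps I would carry out. (i) Fix the auxiliary layered digraph $\hat G$: vertex set $V(G)\times\{0,1,\dots,2D\}$, with an arc from $(v,i)$ to $(w,i+1)$ for each edge $(v,w)\in E(G)$ and each $i$; add a ``metering'' node $m_e$ for each edge $e=(v,w)\in E(G)$, reroute every layered arc corresponding to $e$ through $m_e$ (so the split arc $(v,i)\to m_e\to (w,i+1)$ carries the flow on that copy of $e$), and cap the total flow through $m_e$ at $2$. (ii) Attach a super-source $\hat s$ with an arc of lower bound $k$ and upper bound $k$ into $(u,0)$ for every $u\in U\cup\{s\}$ (we can afford to route $k$ into $s$ as well, or leave $s$'s demand free), and a super-sink $\hat t$ with an arc of lower bound $0$, upper bound $k$ (or $\infty$) out of $(u,2D)$ for every $u\in U\cup\{s\}$; allow each terminal copy to ``absorb'' flow so that a length-$\ell<2D$ path need not be extended. (iii) Observe that a flow in $\hat G$ respecting these bounds, decomposed into source-to-sink paths, yields exactly a collection of $G$-paths of length at most $2D$ between terminals with the right per-vertex total and per-edge congestion, and conversely a feasible solution of (LP-1) yields such a flow; thus (LP-1) is feasible iff this polynomial-size flow problem is feasible, which can be decided efficiently (e.g.\ by a single max-flow computation after the standard reduction of lower bounds to a circulation). (iv) Given a feasible flow, run flow-path decomposition to recover an explicit feasible $f$ supported on polynomially many paths of $\pset^*$.

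The main obstacle, and the place where care is needed, is step (ii)–(iii): making sure the reduction is both sound and complete despite the two subtleties of the layered construction. The first subtlety is that the same original edge $e$ may be used by several flow-paths, possibly traversed in the forward direction at different layers $i$; the metering node $m_e$ must aggregate the flow across all these layers so that the capacity $2$ in (LP-1) is respected, and I must check that a flow which uses $m_e$ to total at most $2$ can be decomposed into paths each using at most the available capacity — this is fine because capacities are on edges, not on individual path copies. The second subtlety is undirectedness: an edge $e=(v,w)$ of $G$ can be traversed from $v$ to $w$ or from $w$ to $v$ along a path, so $m_e$ must receive both split arcs $(v,i)\to m_e\to(w,i+1)$ and $(w,i)\to m_e\to(v,i+1)$ for all $i$, and its single capacity-$2$ bound then correctly caps the total (orientation-oblivious) usage of $e$; I would double-check that no path is forced to reuse $e$ in both directions, or argue that if it does, the flow can be shortcut to remove such reuse without increasing congestion or length. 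Once these reductions are verified, the remaining claims are standard: integral capacities are irrelevant here (we allow fractional $f$), feasibility of a polynomial-size LP/flow problem is decidable in polynomial time, and flow-path decomposition of a polynomial-size flow produces polynomially many paths; hence Claim~\ref{claim: solve LP} follows. I would also remark that the number of layers $2D$ may exceed what is needed if $D$ is large relative to $n$, but since any simple path has length at most $n-1$, one may cap the number of layers at $\min\{2D,n-1\}$, keeping the construction polynomial in all cases.
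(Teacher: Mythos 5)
You have a genuine gap, and it sits exactly where you flagged "care is needed": your layered-flow encoding imposes the wrong demand constraints, so the equivalence claimed in step (iii) fails in the direction you actually need. In (LP-1) the constraint for $u\in U$ is $\sum_{P\in \pset(u)}f(P)\geq k$, where $\pset(u)$ contains every short path having $u$ as \emph{either} endpoint; one unit of flow on a $u$--$u'$ path is credited to both $u$ and $u'$. Your network instead attaches an arc of lower (and upper) bound $k$ from the super-source into $(u,0)$ for every terminal, i.e., it demands that $k$ units \emph{originate} at each terminal. After orienting paths, a unit on a $u$--$u'$ path counts as out-flow for only one endpoint, so your requirement is strictly stronger. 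Concretely, take $U=\{a,b\}$ where the only length-$\leq 2D$ connections incident to $a$ and $b$ are $k/2$ parallel $a$--$b$ edges: flow $2$ on each edge is feasible for (LP-1) (both constraints receive $k$, congestion $2$), but your circulation needs $k$ units out of $a$ plus $k$ units out of $b$, i.e., $2k$ units through edges of total capacity $k$ --- infeasible. Hence "LP-1 feasible $\Rightarrow$ your flow feasible" is false, and your algorithm can fail to produce a solution that exists; note that even in the paper's setting Lemma~\ref{lem:spider_connection} only guarantees the bidirectional endpoint count (each $u$ an endpoint of $k$ paths), not $k$ units emitted per terminal, so you cannot argue the stronger flow problem is feasible there either.

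The repair is to place the demand for $u$ on the \emph{total} flow that either departs from $(u,0)$ or is absorbed at some copy $(u,\ell)$; that is a joint linear constraint over several arcs, which is fine inside a polynomial-size LP but is not the single-arc lower bound you wrote (and certainly not with the matching upper bound $k$). You would also need to forbid, or cancel, flow that leaves $(u,0)$ and is absorbed back at a copy of $u$ --- it projects to a closed walk, yields no path of $\pset(u)$, and would corrupt even the easy extraction direction --- and then carry out the walk-to-simple-path shortcutting you mention so that the decomposition lands in $\pset^*$ without increasing length or per-edge congestion. With those fixes, your layered-graph construction would be a legitimately different and more elementary route than the paper's, which instead takes the dual (polynomially many variables, exponentially many constraints), builds a separation oracle from a length-bounded shortest-path dynamic program, runs the Ellipsoid algorithm, and solves the primal restricted to the variables corresponding to the violated constraints returned by the oracle. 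As written, however, the reduction is not sound and the claim does not follow.
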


It now remains to prove that there is a feasible solution to (LP-1). We do so using the following lemma, that proves a stronger claim, namely that there is an integral solution to (LP-1).

 \begin{lemma}
 	\label{lem:spider_connection}
 	Let $G$ be a $(k,D)$-connected graph, let $U\subsetneq V(G)$ be any subset of its vertices, and let $s\not\in U$ be any additional vertex. Then there exists a set $\cal{P}$ of paths in $G$, such that:
 	\begin{itemize}
 		\item each path $P\in \cal{P}$ connects a pair of distinct vertices in $U\cup \set{s}$;
 		\item each node in $U$ is the endpoint of at least $k$ paths in $\cal{P}$ (but $s$ may serve as an endpoint on fewer paths); 
 		\item each path $P\in \cal{P}$ has length at most $2D$; and
 		\item each edge of $G$ appears on at most two paths in $\cal{P}$.
 	\end{itemize}
 \end{lemma}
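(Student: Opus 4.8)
The plan is to prove Lemma~\ref{lem:spider_connection} by producing, for each vertex $u\in U$, a family of $\deg_G(u)\ge k$ paths having $u$ as an endpoint, obtained by \emph{extending each edge incident to $u$} into a short path that terminates at a vertex of $W:=U\cup\set{s}$. First I would dispose of the trivial edges: any edge $e=\set{u,w}$ with $u\in U$ and $w\in W$ is itself a length-$1$ path with both endpoints in $W$, so we place $e$ into $\cal P$ and it contributes $1$ to the count of $u$ (and of $w$, if $w\in U$). It then remains to handle edges $e=\set{u,v}$ with $u\in U$ but $v\notin W$: for each such $e$ we need to append to it a path from $v$ to $W$ (of length at most $2D-1$, not terminating at $u$) so that, together with the edges $e$ themselves, every edge of $G$ is used at most twice. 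Since each such $e$ already contributes $1$ to the congestion of $e$ itself, and the terminal--terminal edges above contribute at most $1$ each and are edge-disjoint from everything that follows (extension paths can always be truncated at the first terminal they meet, so they never traverse a terminal--terminal edge), it suffices to route the \emph{extension} paths with total edge-congestion at most $1$. Thus the core task is: each vertex $v\notin W$ carries a supply $b(v):=|E_G(v,U)|$ of units; route all these units to the sink set $W$ along paths of bounded length, with unit edge-capacities.

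The observation that makes the congestion part work out cleanly is the cut condition: for \emph{any} set $S\subseteq V(G)\setminus W$ we have $U\cap S=\emptyset$, hence $\sum_{v\in S}b(v)=|E_G(S,U)|\le |E_G(S,V(G)\setminus S)|$ because all of $U$ lies outside $S$. By the Gale--Hoffman feasibility criterion this immediately guarantees a (length-unconstrained) integral flow from the supplies $\set{b(v)}_{v\notin W}$ to $W$ with all edge-capacities $1$; decomposing it into paths, truncating each at its first terminal, appending it to the edge $e=\set{u,v}$ that generated its unit, and shortcutting any resulting non-simple walk, yields a family $\cal P$ with the correct endpoint set, with every edge of $G$ used at most twice, and with $u$ an endpoint of at least $\deg_G(u)\ge k$ of the paths (counting multiplicities, which is all that is needed for the feasibility of (LP-1) exploited in Claim~\ref{claim: solve LP}). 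A minor technical point to address here is that a unit generated by $\set{u,v}$ must not be routed \emph{back to} $u$; this is handled by a Hall-type matching between the units leaving $v$ and the $b(v)$ edges from $v$ to $U$, together with a local rerouting of the flow in the (rare) degenerate case where $v$'s only $U$-neighbor is a single vertex.

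The main obstacle is the length bound: the argument above controls congestion but produces extension paths of unbounded length, whereas the lemma demands total length at most $2D$. This is precisely where the full strength of \emph{$(k,D)$-connectivity} is needed, rather than merely ``$k$-edge-connected and diameter at most $D$''. The plan is to route the extension flow \emph{locally}, keeping each unit within distance $D$ of the vertex $u\in U$ that generated it: fixing a BFS layering $L_0,L_1,\dots,L_D$ of $G$ and using that $G$ is $(k,D)$-connected — so every vertex is joined to every other by $k$ edge-disjoint paths of length at most $D$, which in particular forces every inter-layer cut to have at least $k$ edges — one can set up the extension routing as a flow in an auxiliary \emph{layered} graph whose minimum cut is controlled by the $(k,D)$-connectivity of $G$, and only then decompose it into paths of length at most $D$. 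Combined with the single generating edge this gives paths of length at most $D+1\le 2D$. Making this last step fully rigorous, i.e.\ simultaneously honoring the unit edge-congestion and the hop bound while routing all $\sum_{v}b(v)$ units, is the delicate part of the proof; the remaining ingredients are the routine cut computation and flow-decomposition arguments sketched above.
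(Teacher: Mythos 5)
Your plan has a genuine gap, and in fact two. First, the constraint you dismiss as a ``minor technical point'' -- that the unit generated by an edge $(u,v)$ must not terminate back at $u$ -- is not minor, and your Gale--Hoffman cut condition does not account for it. Consider the graph with vertex set $\set{u,s,v_1,\ldots,v_m}$, where $u$ is adjacent to all of $v_1,\ldots,v_m$, the vertices $v_1,\ldots,v_m$ form a clique, and $s$ is adjacent only to $v_1,\ldots,v_k$ (with $m\gg k$). This graph is $(k,2)$-connected, and with $U=\set{u}$ your scheme creates $m=\deg_G(u)$ units, one per edge incident to $u$, none of which is allowed to terminate at $u$; they must all reach $s$, but only $k$ edges are incident to $s$, so no routing of all these units with unit (or any constant) edge-capacity exists, regardless of length. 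The lemma itself only requires $k$ paths per terminal, so one could try extending only $k$ of the incident edges -- but then which edges to extend becomes a global selection problem coupled across all terminals, and the clean cut condition you computed no longer certifies anything. Second, you explicitly concede the step that is the actual crux of the lemma: producing the extension routing so that it simultaneously respects the hop bound $2D$ and the congestion bound. The sketch via a BFS layering and an ``auxiliary layered graph whose minimum cut is controlled by the $(k,D)$-connectivity'' is not an argument; length-bounded flows do not satisfy a max-flow/min-cut theorem of this kind, and it is exactly the interaction between short paths and low congestion that makes the statement nontrivial.

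For comparison, the paper's proof starts not from the edges incident to each $u\in U$ but from the object that the hypothesis actually provides: for every $u\in U$, a set $\rset(u)$ of $k$ edge-disjoint paths of length at most $D$ from $u$ to $s$. To make these systems combinable it passes to the line graph of $G$ (so edge-disjointness becomes internal vertex-disjointness), adds $k$ copies of each terminal so that distinguished endpoints are distinct, and then applies the prefix-decomposition theorem of Chuzhoy and Khanna, which truncates the paths so that the prefixes in each connected component form a canonical spider or a canonical cycle. Pairing up spider legs (and consecutive cycle segments) yields terminal-to-terminal paths of length at most $2D$ with edge-congestion at most $2$, while every terminal retains $k$ path-endpoints. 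Some tool of this strength -- rerouting or truncating the given short path systems rather than routing a fresh flow -- appears to be needed; if you want to salvage your approach, you would have to replace the layered-flow step by an argument of comparable power, and also repair the selection issue above.
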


Notice that the lemma immediately implies that there is a feasible solution to (LP-1), as we can simply send one unit of flow on each path of $\pset$. We now turn to prove Lemma \ref{lem:spider_connection}.

\emph{Proof of Lemma~\ref{lem:spider_connection}.}
The proof relies on a theorem from~\cite{chuzhoy2008algorithms}, that needs the following definitions.

\begin{definition}\emph{\textbf{(Canonical Spider)}} 
	Let $\cal{M}$ be any collection of simple paths, such that each path $P\in \cal{M}$ has
	a distinguished endpoint $t(P)$, and the other endpoint is denoted by $v(P)$. We say that the paths in $\cal{M}$ form a
	\emph{canonical spider} iff $|\mathcal{M}|>1$ and there is a vertex $v$, such that for every path $P\in \cal{M}$, $v(P) = v$. Moreover, the only vertex that appears on more than one path of $\cal{M}$ is $v$ (see Figure \ref{fig:spidercycle}). We refer to v as the \emph{head} of the spider, and the paths of $\cal{M}$ are called the \emph{legs} of the spider. 
\end{definition}

\begin{definition}\emph{\textbf{(Canonical Cycle)}}
	\label{def:cycle}
	Let $\mathcal{M}= \{Q_1,\ldots,Q_h\}$ be any collection of simple paths, where each path
	$Q_i$ has a distinguished endpoint $t(Q_i)$ that does not appear on any other path of $\cal{M}$, and the other endpoint
	is denoted by $v(Q_i)$. We say that paths of $\cal{M}$ form a \emph{canonical cycle}, iff:
	
	\begin{itemize}
		\item  $h$ is an odd integer;
		\item for each $1\leq i\leq h$, there is a vertex $v'(Q_i)\ne v(Q_i)$ on path $Q_i$, such that $v'(Q_i)=v(Q_{i-1})$ (here we use the convention that $Q_0= Q_h$); and
		\item for each $1\leq i\leq h$, no vertex of $Q_i$ appears on any other path of $\cal{M}$, except for $v'(Q_i)$ that belongs to $Q_{i-1}$
		only and $v(Q_i)$ that belongs to $Q_{i+1}$ only (see Figure~\ref{fig:spidercycle}).
	\end{itemize}  
\end{definition}

Note that the definition of a canonical cycle here is slightly stronger than definition of a canonical cycle in~\cite{chuzhoy2008algorithms}, since we additionally require that, for each $1\leq i\leq h$, the vertex $v'(Q_i)\ne v(Q_i)$.

\begin{figure}[h]
	\centering
	\scalebox{1.0}{\includegraphics[scale=0.52]{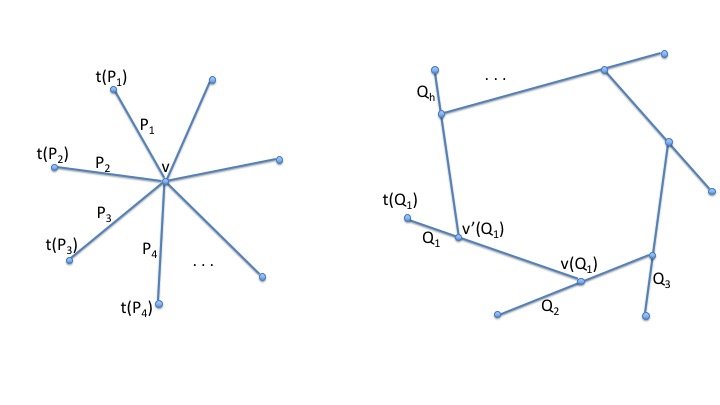}}
	\caption{A canonical spider (left) and a canonical cycle (right).\label{fig:spidercycle}}
\end{figure}


We use the following result of Chuzhoy and Khanna (Theorem $4$ in~\cite{chuzhoy2008algorithms}). We note that the theorem appearing in~\cite{chuzhoy2008algorithms} is slightly weaker since they used a weaker definition of a canonical cycle, but their proof immediately implies the stronger result that we state below.

\begin{theorem}
	\label{thm:prefix_decomposition}
	There is an efficient algorithm, that, 
	given any collection $\cal{Q}$ of paths, where every path $P \in \cal{Q}$ has a distinguished
	endpoint $t(P)$ that does not appear on any other path of $\cal{Q}$, computes, for
	each path $P \in \cal{Q}$, a prefix (i.e. a sub-path of $P$ that contains $t(P)$) $q(P)$, such that, in the graph induced by $\{q(P)\mid P \in \cal{Q}\}$, the prefixes appearing in each connected component either form a canonical spider, a canonical cycle, or the connected component contains exactly one prefix $q(P)$, where $q(P) = P$ for some $P \in \cal{Q}$.
\end{theorem}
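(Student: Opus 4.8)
The plan is to adapt the iterative prefix‑shortening argument of Chuzhoy and Khanna that underlies Theorem~4 of~\cite{chuzhoy2008algorithms}, and to observe that, once the shortening rule is phrased so that one always stops \emph{strictly before} an already‑shared vertex, the canonical cycles it produces automatically satisfy the strengthened requirement $v'(Q_i)\neq v(Q_i)$. Start with $q(P):=P$ for every $P\in\mathcal{Q}$. We will repeatedly replace some prefix by a strictly shorter prefix of the same path that still contains $t(P)$; since $\Phi:=\sum_{P\in\mathcal{Q}}|q(P)|$ is a non‑negative integer that strictly decreases at each step, the procedure terminates. Because $t(P)\in q(P)$ throughout and $t(P)$ lies on no other path, $t(P)$ is always a private degree‑$1$ vertex of the current union graph $H:=\bigcup_P q(P)$, so the leaves of any connected component $C$ of $H$ lie among the distinguished endpoints of the prefixes that form $C$. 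Besides ``$t(P)\in q(P)$'', the only invariant we maintain is (I): no connected component of $H$ is a single \emph{proper} prefix $q(P)\subsetneq P$. Invariant (I) is exactly what guarantees that the components that end up containing a single prefix are full paths, as the statement requires.

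For the structural bookkeeping it is convenient to pass to the \emph{reduced graph} $\widehat C$ of a component $C$, obtained by suppressing every degree‑$2$ vertex of $C$ that is not some $t(P)$. Then $C$ is a canonical spider, a canonical cycle, or a single full path precisely when $\widehat C$ is, respectively, a star with all prefixes of $C$ ending at its center, a single cycle carrying a nonempty pendant path at each of an odd number of its vertices with the prefixes arranged as in Definition~\ref{def:cycle}, or a single edge whose prefix equals its path. The shortening moves come in two flavours. First, if some $q(P)$ in a bad component $C$ ends at a vertex $v(P)$ lying on no other prefix of $C$, then — since $C$ is connected and contains at least two prefixes — let $z$ be the \emph{last} vertex of $q(P)$, walking from $t(P)$, that lies on another prefix of $C$, and re‑set $q(P)$ to end at $z$; this strictly shortens $q(P)$, and by the choice of $z$ the deleted open segment $(z,v(P)]$ meets no other prefix, so nothing is disconnected and (I) is preserved. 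Second, when every prefix of $C$ ends at a shared vertex but $C$ is still neither a canonical spider nor a canonical cycle (e.g.\ $\widehat C$ has two branch vertices, or $C$ is topologically a spider but not every prefix ends at the head), one shortens a carefully chosen prefix back to a carefully chosen earlier shared vertex, the choice being made so that any prefix whose attachment vertex is deleted is itself shortened first or retains a second attachment, so that (I) is never violated; as before this decreases $\Phi$.

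The part I expect to be the real obstacle is exactly this second flavour: a \emph{naive} greedy (``perform any move that decreases $\Phi$ and keeps (I) at the current instant'') can paint itself into a corner, producing a component that is a topological spider or near‑cycle but does not meet the rigid pattern of Definition~\ref{def:cycle} and in which no further legal move exists. Avoiding this requires the careful move‑selection/ordering discipline of~\cite{chuzhoy2008algorithms} — in effect processing the prefixes so that the ``first shared vertex'' relation is respected, and recognizing the cyclic dependencies in that relation, which are precisely what crystallize into canonical cycles (and whose length is odd in their construction). Once no move of either flavour applies, a short inspection of $\widehat C$ shows that every component of $H$ is a canonical spider, a canonical cycle, or a single full path; carrying out that case analysis together with the matching argument that the procedure never deadlocks is the content of the proof of Theorem~4 of~\cite{chuzhoy2008algorithms}, and the only new point needed here is that always halting strictly before a shared vertex forces $v'(Q_i)\neq v(Q_i)$ in every output cycle.
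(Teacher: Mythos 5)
Your proposal is correct and takes essentially the same route as the paper: both defer the heart of the argument to Chuzhoy--Khanna's Theorem~4 \cite{chuzhoy2008algorithms} and observe that, because their shortening procedure always truncates a prefix at a vertex shared with another prefix, the resulting canonical cycles automatically satisfy the strengthened condition $v'(Q_i)\neq v(Q_i)$. The extra potential-function commentary you supply is a reasonable outline of the CK proof strategy, but it is not a different proof, and like the paper you ultimately rely on \cite{chuzhoy2008algorithms} for the deadlock-avoidance and case analysis.
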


Recall that we are given a $(k,D)$-connected graph $G$, together with a subset $U\subsetneq V(G)$ of its vertices, that we call \emph{terminals}, and an additional vertex $s\not \in U$. From the definition of $(k,D)$-connectivity, we are guaranteed that every vertex $u\in U$, there is a set $\rset(u)$ of $k$ edge-disjoint simple paths in $G$, of length at most $D$ each, connecting $u$ to $s$. Let $\rset=\bigcup_{u\in U}\rset(u)$. Intuitively, we would like to apply Theorem \ref{thm:prefix_decomposition} to the set $\rset$ of paths, where for each vertex $u\in U$, and for each path $R\in \rset(u)$, the distinguished endpoint $t(R)$ is $u$. The idea is then to use the resulting canonical cycle and canonical spider structures in order to connect the vertices of $U$ to each other and to $s$ via short paths that are disjoint in their edges, thus constructing the collection $\pset$ of paths. For example, if a set $\mset$ of prefixes of the paths in $\rset$ form a canonical spider, we can partition the legs of the spider into pairs, and each pair then defines a path connecting two vertices of $U$ to each other, which is then added to $\pset$. There are two problems with this approach. The first problem is that Theorem \ref{thm:prefix_decomposition} requires that the distinguished endpoints $t(P)$ of the paths $P\in \rset$ are distinct from each other, and moreover that $t(P)$ does not lie on any other path of $\rset$. This difficulty is easy to overcome by making $k$ copies of every terminal $u\in U$ and then modifying the paths in $\rset(u)$ so that each of them starts from a different copy. The second difficulty is that it is possible that some resulting set $\mset$ of prefixes that forms a canonical spider consists entirely of paths that belong to a single set $\rset(u)$, and so the spider cannot be used to connect distinct vertices of $U$ to each other. The reason that this may happen is that the paths in $\rset(u)$ are only guaranteed to be edge-disjoint, and so they may share vertices. If, in contrast, they were internally vertex-disjoint, then such a problem would not arise. In order to overcome these difficulties, we slightly modify the graph $G$, first by replacing it with its line graph, so that any set of edge-disjoint paths in $G$ corresponds to a set of internally node-disjoint paths in the line graph, and then creating $k$ copies of each terminal $u\in U$. We now describe the construction of the modified graph $H$, in two steps.


In the first step, we construct the line graph $L$ of $G$, as follows: the vertex set $V(L)$ contains a vertex $v_e$ for each edge $e\in E(G)$. Given a pair $v_e,v_{e'}$ of vertices of $L$, we connect them with an edge iff $e$ and $e'$ share an endpoint in $G$. 
%
%

Let $H$ be the graph obtained from graph $L$ by adding, for each terminal $u\in U$, a collection $\set{u_1,\ldots,u_k}$ of $k$   vertices, that we call the \emph{copies of $u$}. For each such new vertex $u_i$, and for every edge $e$ that is incident to $u$ in $G$, we add the edge $(u_i,v_e)$ to the graph. Additionally, we add the vertex $s$ to the graph, and connect it to every vertex $v_e$ where $e$ is an edge incident to $s$ in $G$. 


Recall that we have defined, for every vertex $u\in U$, a collection $\rset(u)$ of $k$ edge-disjoint simple paths in $G$ of length at most $D$ each, connecting $u$ to $s$.
Denote $\rset(u)=\set{R_1(u),\ldots,R_k(u)}$. We transform the set $\rset(u)$ of paths into a set $\rset'(u)$ of $k$ paths in graph $H$, that are internally vertex-disjoint, and each path connects a distinct copy of $u$ to $s$. In order to do so, fix some $1\leq i\leq k$, and consider the path $R_i(u)$. Let $e_1^i,e_2^i,\ldots,e_r^i$ be the sequence of edges on the path $R_i(u)$, with $e_1^i$ incident to $u$ and $e_r^i$ incident to $s$. Consider the following sequence of vertices in graph $H$: $(u_i,v_{e_1^i},v_{e_2^i},\ldots,v_{e_r^i},s)$. It is easy to verify that this vertex sequence defines a path in graph $H$, that we denote by $R'_i(u)$. Let $\rset'(u)=\set{R'_i(u)\mid 1\leq i\leq k}$ be the resulting set of paths. Since the paths in $\rset(u)$ are edge-disjoint, it is immediate to verify that the paths in $\rset'(u)$ are internally node-disjoint; in fact the only vertex that these paths share is the vertex $s$. The number of inner vertices on each such path is at most $D$. For each path $R'_i(u)$, we let its distinguished endpoint $t(R'_i(u))$ be the vertex $u_i$. 
%
Lastly, we let $\qset=\bigcup_{u\in U}\rset'(u)$. Observe that for every path $R\in \qset$, the distinguished endpoint $t(R)$
does not lie on any other paths of $\qset$.

We apply Theorem~\ref{thm:prefix_decomposition} to the resulting set $\qset$ of paths and obtain, for each path $P\in \qset$, a prefix $q(P)$. Let $\hat{H}$ be the subgraph of $H$ that is induced by all edges and vertices that appear on the paths in $\{q(P)\mid P \in \qset\}$. 
Let $\cset$ be the set of all connected components of $\hat H$. 
For every component $C\in \cset$, we denote by $\qset(C)\subseteq \qset$ the set of paths whose prefixes are contained in $C$, and we denote by $\tilde{\qset}(C)=\set{q(P)\mid P\in \qset(C)}$ the corresponding set of prefixes, so $C=\bigcup_{P'\in \tilde \qset(C)}P'$.

Next, for every component $C\in \cset$, we define a collection $\pset(C)$ of paths in the original graph $G$, with the following properties:
\begin{properties}{P}
	\item an edge of $G$ may lie on at most two paths in $\pset(C)$; \label{prop: paths in one set edge-disjoint}
	\item the paths in $\pset(C)$ only contain edges $e\in E(G)$ with $v_e\in V(C)$; \label{prop: paths only use allowed edges}
	\item for every terminal $u\in U$, the number of paths of $\pset(C)$ for which $u$ serves as an endpoint is at least as large as the number of  paths of $\rset'(u)$ that lie in $\qset(C)$; and \label{prop: few paths per terminal} 
	\item every path in $\pset(C)$ has length at most $2D$; \label{prop: short paths}
\end{properties}

Assume first that we have computed, for every component $C\in \cset$, a set $\pset(C)$ of paths in graph $G$ with the above properties. We then set $\pset=\bigcup_{C\in \cset}\pset(C)$. It is easy to verify that set $\pset$ has all required properties. Indeed, since the components of $\cset$ are disjoint in their vertices, Properties \ref{prop: paths in one set edge-disjoint} and \ref{prop: paths only use allowed edges} ensure that every edge of $G$ belongs to at most two paths of $\pset$. Since, for every terminal $u\in U$, $|\rset'(u)|=k$, Property \ref{prop: few paths per terminal} ensures that $u$ serves as an endpoint of at least $k$ paths in $\pset$. Lastly, Property \ref{prop: short paths} ensures that the length of every path in $\pset$ is at most $2D$.

From now on we fix a component $C\in \cset$. It is now sufficient to show an efficient algorithm for constructing the set $\pset(C)$ of paths in graph $G$ with Properties \ref{prop: paths in one set edge-disjoint}---\ref{prop: short paths}. 
Recall that Theorem~\ref{thm:prefix_decomposition} guarantees that the prefixes in $\tqset(C)$ either form a canonical spider, or they form a canonical cycle, or $\tqset(C)$ consists of a single path $q(P)=P$ for some path $P\in \qset$. We consider each of these different cases in turn; for the case of canonical spider we need to consider two sub-cases, depending on whether the head of the spider is $s$ or not.

\paragraph{Case 1:} This case happens if $\tqset(C)$ contains a single path, or if the paths of $\tqset(C)$ form a canonical spider, whose head is $s$. In either case, from the construction of the paths in $\qset$, it is easy to verify that for every path $P\in \qset(C)$, the prefix $q(P)$ is the path $P$ itself. For each path $P\in \tqset(C)$, we define a path $P'$ in graph $G$, as follows. Assume that $P=(u_i,v_{e_1},v_{e_2},\ldots,v_{e_r},s)$. We then let $P'$ be a path in graph $G$, that starts at the terminal $u$, traverses the edges $e_1,\ldots,e_r$ in this order, and terminates at $s$. Let $\pset(C)=\set{P'\mid P\in \tqset(C)}$. Since the paths in $\tqset(C)$ are vertex-disjoint except for sharing the vertex $s$, the paths in $\pset(C)$ are all edge-disjoint. It is easy to verify that Properties \ref{prop: paths in one set edge-disjoint}---\ref{prop: short paths} hold for $\pset(C)$.

\paragraph{Case 2:} This case happens if the paths in $\tqset(C)$ form a canonical spider, whose head is not $s$. Note that, from the definition of the paths in $\qset$, the head of the spider must be some vertex $v_{e^*}$ with $e^*\in E(G)$. We denote $e^*=(x,y)$. 
Note that every path $P\in \qset(C)$ contains the vertex $v_{e^*}$. Therefore, each such path must belong to a different set $\rset'(u)$, and no two paths in $\qset(C)$ may originate from two copies of the same terminal.
For every path $P\in \tqset(C)$, we define a new path $P'$ in graph $G$, as follows. Assume that the sequence of vertices on $P$ is $(u_i,v_{e_1},v_{e_2},\ldots,v_{e_r},v_{e^*})$, then we let path $P'$ start at the terminal $u$, and then traverse the edges $e_1,e_2,\ldots,e_r$ in this order. Note that path $P'$ has to terminate at a vertex that serves as an endpoint of $e^*$. We define two sets of paths: set $S_x$ contains all paths $P'$ for $P\in \tqset(C)$ that terminate at $x$, and set $S_y$ is defined similarly for $y$. Therefore, $|S_x|+|S_y|=|\tqset(C)|$. From the above discussion, every path in $S_x\cup S_y$ originates at a distinct terminal.


Assume first that $|S_x|>1$ and $|S_y|>1$.
Consider the set $S_x$ of paths. We construct a set $\Pi_x$ of pairs of paths from $S_x$ as follows. If $|S_x|$ is even, then we simply partition all paths in $S_x$ into $|S_x|/2$ disjoint pairs. Otherwise, if $|S_x|$ is odd, then we construct $(|S_x|+1)/2$ pairs, such that every path of $S_x$ belongs to exactly one pair in $\Pi_x$, except for one arbitrary path that belongs to two pairs. 
Consider now any pair $(P_1',P_2')$ of paths in $\Pi_x$. As observed before, the two paths must originate at distinct terminals. We construct a new path by concatenating $P_1'$ with $P_2'$, and add this path to $\pset(C)$. We process the paths of $S_y$ similarly. 
Notice that every prefix in $\tqset(C)$ is now a sub-path of either one or two paths in $\pset(C)$. Since the paths in $\tqset(C)$ are internally vertex disjoint, and since the edge $e^*$ is not included in any of the paths in $S_x\cup S_y$, every edge of $G$ may belong to at most two paths of $\pset(C)$. It is immediate to verify that Properties \ref{prop: paths in one set edge-disjoint}---\ref{prop: short paths} hold in $\pset(C)$.

Assume now that $|S_x|=1$ or $|S_y|=1$ (or both). We assume w.l.o.g. that $|S_y|=1$. We construct the set $\Pi_x$ of pairs of paths in $S_x$ exactly as before (if $|S_x|=1$ then $\Pi_x=\emptyset$). For every pair $(P_1',P_2')$ of paths in $\Pi_x$, we construct a new path that is added to $\pset(C)$ exactly as before. Additionally, we choose an arbitrary path $P'_i\in S_x$ that participates in at most one pair in $\Pi_x$ (notice that such a path has to exist). Let $P'$ be the unique path in $S_y$. As observed before, the two paths must originate from distinct terminals. We construct a new path in graph $G$, by concatenating the path $P'_i$, the edge $e^*$, and the path $P'$. We add the resulting path to $\pset(C)$. It is easy to verify that the resulting set $\pset(C)$ of paths satisfy Properties \ref{prop: paths in one set edge-disjoint}---\ref{prop: short paths}.

\paragraph{Case 3:} This case happens if the paths in $\tqset(C)$ form a canonical cycle. We denote the paths of $\tqset(C)$ by $Q_1,\ldots,Q_h$ in the order of their appearance on the cycle. 
We define the following set of pairs of these paths: $\Pi=\set{(Q_1,Q_2),(Q_3,Q_4),\ldots,(Q_{h-2},Q_{h-1}), (Q_{h-1},Q_h)}$ (recall that $h$ is an odd integer). Notice that every path appears in exactly one pair of $\Pi$, except for the path $Q_{h-1}$, that appears in two pairs.

Consider now some pair $(Q_i,Q_{i+1})\in \Pi$. 
We construct a two-legged spider $S_i$, that consists of the path $Q_i$, and the sub-path of $Q_{i+1}$, from $t(Q_{i+1})$ to $v'(Q_{i+1})=v(Q_i)$.
 In the resulting collection $S_1,S_3,\ldots,S_{h-2},S_{h-1}$ of spiders, every pair of spiders are mutually vertex-disjoint, except for the vertices of $Q_{h-1}$ that may appear in two spiders. We process each one of these spiders as in Case 2, to obtain a collection $\pset(C)$ of $(h+1)/2$ paths in graph $G$ that cause edge-congestion at most $2$, and that satisfy Properties \ref{prop: paths in one set edge-disjoint}---\ref{prop: short paths}.
\qed

\subsection{Proof of Claim \ref{claim: low congestion}} \label{subsec: proof of randomized rounding}
Let $f$ be the flow obtained by taking the union of the flows $f_1,\ldots,f_h$. It is easy to verify that flow $f$ causes edge-congestion at most $4h\leq 8\log n$. For every edge $e\in E(G)$, we say that a bad event $B(e)$ happens if $e$ lies in more than $120\log n$ graphs $T_1,\ldots,T_k$. It is enough to show that for each edge $e\in E(G)$, the probability of the event $B(e)$ is bounded by $1/n^6$; from the union bound over all edges $e$, it then follows that with probability at least $(1-1/n^3)$, the graphs in $\set{T_1,\ldots,T_k}$ cause edge-congestion at most $120\log n$ (we have used the fact that for every pair $(u,v)$ of vertices of $G$, there are at most $k$ parallel edges $(u,v)$ in $G$, and that $k\leq n$).

For the remainder of the proof, we fix an edge $e\in E(G)$, and we prove that the probability of event $B(e)$ is at most $1/n^6$.
	
For every vertex $v\in V(G)$, and index $1\leq j\leq k$, we let $X(v,j)$ be a random variable whose value is $1$ if the path $P_j(v)$ contains the edge $e$, and it is $0$ otherwise. Notice that, if we denote $S=\sum_{v\in V(G)}\sum_{j=1}^kX(v,j)$, then  the number of graphs $T_1,\ldots,T_k$ to which edge $e$ belongs is exactly $S$. Moreover, the random variables in $\set{X(v,j)\mid v\in V(G),1\leq j\leq k}$ are independent from each other. Consider some vertex $v\in V(G)$, and let $F(v,e)$ be the total amount of flow that $f$ sends on all flow-paths that originate from $v$ and contain the edge $e$. Notice that for each $1\leq j\leq k$, the probability that $X(v,j)=1$ is $F(v,e)/F(v)$. Therefore, the expectation of $\sum_{1\leq j\leq k}X(v,j)=k\cdot F(v,e)/F(v)\leq 2F(v,e)$, since $F(v)\geq k/2$. Altogether, the expectation of $S=\sum_{v\in V(G)}\sum_{1\leq j\leq k}X(v,j)$ is at most $2\sum_{v\in V(G)}F(v,e)$, which is precisely the total amount of flow traversing $e$ in $f$ times $2$, and is bounded by $8h\leq 16\log n$. To summarize, we are given a collection $\set{X(v,j)\mid v\in V(G),1\leq j\leq k}$ of independent $0/1$ random variables. The expectation of their sum is at most $16\log n$. We need to bound the probability that $S>120\log n$.
	
We use the following standard Chernoff bound (see e.g. \cite{measure-concentration}).
\begin{theorem}\label{thm: Chernoff}
Let $\set{Y_1,\ldots,Y_r}$ be a collection of independent random variables taking values in $[0,1]$, and let $Y=\sum_iY_i$. Assume that $\expect{Y}\leq \mu$ for some value $\mu$. Then for all  $0<\eps<1$:
\[\prob{Y>(1+\eps)\mu }\leq e^{-\eps^2\mu /3}.\]
\end{theorem}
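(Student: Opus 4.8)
The plan is to prove this by the standard exponential-moment (Chernoff) method. First I would fix a parameter $t>0$, to be optimized later, and apply Markov's inequality to the nonnegative random variable $e^{tY}$: since $\set{Y>(1+\eps)\mu}\subseteq\set{e^{tY}\geq e^{t(1+\eps)\mu}}$, we get $\prob{Y>(1+\eps)\mu}\leq e^{-t(1+\eps)\mu}\cdot\expect{e^{tY}}$. By independence of $Y_1,\ldots,Y_r$ we have $\expect{e^{tY}}=\prod_i\expect{e^{tY_i}}$, so it suffices to control each factor.

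For the per-variable bound I would exploit convexity of $x\mapsto e^{tx}$ on $[0,1]$: for $Y_i\in[0,1]$ this gives $e^{tY_i}\leq (1-Y_i)+Y_ie^{t}=1+Y_i(e^{t}-1)$, hence $\expect{e^{tY_i}}\leq 1+\expect{Y_i}(e^{t}-1)\leq e^{\expect{Y_i}(e^{t}-1)}$, using $1+z\leq e^{z}$. Multiplying over $i$ and using $\sum_i\expect{Y_i}=\expect{Y}\leq\mu$ together with $e^{t}-1>0$ yields $\expect{e^{tY}}\leq e^{\mu(e^{t}-1)}$. Combining with the Markov step, $\prob{Y>(1+\eps)\mu}\leq \exp\bigl(\mu[(e^{t}-1)-t(1+\eps)]\bigr)$ for every $t>0$. (Note the hypothesis $\expect{Y}\leq\mu$, rather than equality, is harmless since it is used only through this monotone step.)

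Next I would optimize by choosing $t=\ln(1+\eps)$, which is positive since $\eps>0$; this gives $\prob{Y>(1+\eps)\mu}\leq\exp\bigl(\mu[\eps-(1+\eps)\ln(1+\eps)]\bigr)=\bigl(e^{\eps}/(1+\eps)^{1+\eps}\bigr)^{\mu}$. It then remains to verify the elementary real inequality $\eps-(1+\eps)\ln(1+\eps)\leq -\eps^{2}/3$ for all $0<\eps<1$. I would prove this by setting $g(\eps)=\eps-(1+\eps)\ln(1+\eps)+\tfrac{\eps^{2}}{3}$ and checking $g(0)=0$, $g'(\eps)=\tfrac{2\eps}{3}-\ln(1+\eps)$ with $g'(0)=0$, and $g''(\eps)=\tfrac23-\tfrac{1}{1+\eps}$, which is negative on $(0,\tfrac12)$ and positive on $(\tfrac12,1)$; since $g'(0)=0$ and $g'(1)=\tfrac23-\ln2<0$, single-crossing of $g''$ forces $g'\leq 0$ on all of $(0,1]$, so $g$ is nonincreasing there and $g(\eps)\leq g(0)=0$.

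The main obstacle is really just this last elementary estimate: the exponential-moment machinery is completely routine, but one must be careful with the sign/convexity conditions ($t>0$, $e^{t}-1>0$, $0<\eps<1$) and, in particular, with the fact that $g''$ changes sign inside $(0,1)$, so one has to argue the single-crossing structure to conclude $g'\leq 0$ on the whole interval rather than only on a sub-interval.
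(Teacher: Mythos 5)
Your argument is correct and complete. Note that the paper does not prove this theorem at all — it cites it as a standard bound (``see e.g.~\cite{measure-concentration}'') — so there is no in-paper proof to compare against; what you give is exactly the standard exponential-moment derivation that such references contain. All the steps check out: the Markov/independence/convexity chain correctly uses $\expect{Y}\le\mu$ only through the monotone factor $e^{t}-1>0$, the choice $t=\ln(1+\eps)$ yields the exponent $\mu[\eps-(1+\eps)\ln(1+\eps)]$, and your handling of the elementary inequality $\eps-(1+\eps)\ln(1+\eps)\le-\eps^{2}/3$ on $(0,1)$ is sound: with $g'(0)=0$, $g'$ decreasing on $(0,\tfrac12)$ and increasing on $(\tfrac12,1)$ with $g'(1)=\tfrac23-\ln 2<0$, one indeed gets $g'\le 0$ throughout, hence $g\le 0$. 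This is the one genuinely delicate point and you identified and resolved it correctly.
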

	
Using the above bound with $\eps=1/2$ and $\mu=80\log n$, we get that the probability that $S>120\log n$ is bounded by $e^{-80\log n/12}<1/n^6$.

\subsection{Proof of Claim~\ref{claim: solve LP}}
\label{sec:solve_LP}


We rename (LP-1) by (LP-Primal-1).
Consider the following LP.
\begin{eqnarray*}
	\mbox{(LP-Primal-2)}	&\text{maximize } 0\\
	\mbox{s.t.}&&\\
	&\sum_{P\in \pset(u)}f(P)\geq k&\forall u\in U\\
	&\sum_{\stackrel{P\in \pset^*:}{e\in P}}f(P)\leq 2&\forall e\in E(G)\\
	&f(P)\geq 0&\forall P\in \pset^*
\end{eqnarray*}
It is clear that any feasible solution to (LP-Primal-1) is also a feasible solution to (LP-Primal-2), and vice versa. It is therefore sufficient to show that (LP-Primal-2) can be solved efficiently, if it has a feasible solution.  
Below is the Dual LP for (LP-Primal-2).
\begin{eqnarray*}
	\mbox{(LP-Dual-1)}	&\text{minimize } 2\cdot\sum_{e\in E(G)}\ell_e-k\cdot\sum_{u\in U}z_u\\
		\mbox{s.t.}&&\\
	&\sum_{e\in P}\ell_e\geq z_u+z_{u'}&\forall u,u'\in U: u\neq u', \forall P\in \pset(u)\cap\pset(u')\\
	&\sum_{e\in P}\ell_e\geq z_u&\forall u\in U, \forall P\in \pset(u)\cap\pset(s)\\
	&z_u\geq 0&\forall u\in U\\
	&\ell_e\geq 0&\forall e\in E(G)
\end{eqnarray*}

Recall that the number of vertices in $G$ is $n$.
Note that for (LP-Primal-2), the number of variables is exponential in $n$ and the number of constraints is polynomial in $n$, while for (LP-Dual-1), the number of variables is polynomial in $n$ and the number of constraints can be exponential in $n$.
From the strong duality, the optimal objective value of (LP-Dual-1) is $0$ if (LP-Primal-2) is feasible. We make a change to (LP-Dual-1) by replacing the objective function with a constraint that $2\cdot\sum_{e\in E(G)}\ell_e-k\cdot\sum_{u\in U}z_u=0$ to get the following LP.
\begin{eqnarray*}
	\mbox{(LP-Dual-2)}
	&&\\
	&2\cdot\sum_{e\in E(G)}\ell_e-k\cdot\sum_{u\in U}z_u=0\\
	&\sum_{e\in P}\ell_e\geq z_u+z_{u'}&\forall u,u'\in U: u\neq u', \forall P\in \pset(u)\cap\pset(u')\\
&\sum_{e\in P}\ell_e\geq z_u&\forall u\in U, \forall P\in \pset(u)\cap\pset(s)\\
	&z_u\geq 0&\forall u\in U\\
	&\ell_e\geq 0&\forall e\in E(G)
\end{eqnarray*}

\begin{claim}
\label{claim:sep_oracle}
There exists an efficient separation oracle to (LP-Dual-2).
\end{claim}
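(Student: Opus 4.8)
There exists an efficient separation oracle to (LP-Dual-2).

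The plan is to design a separation oracle that, given a candidate assignment $(\{\ell_e\}_{e\in E(G)},\{z_u\}_{u\in U})$ of non-negative values, either certifies that all constraints of (LP-Dual-2) are satisfied, or outputs a violated constraint. The equality constraint $2\sum_e\ell_e - k\sum_u z_u = 0$ and the sign constraints $z_u\ge 0$, $\ell_e\ge 0$ are trivial to check directly in polynomial time, so the only real work is handling the two families of path constraints, which are exponential in number. For the first family, treat each $\ell_e$ as the length of edge $e$, and for every ordered (or unordered) pair $u,u'\in U$ of distinct terminals, compute a shortest path between $u$ and $u'$ in $G$ under lengths $\ell$, but restricted to paths of length at most $2D$ in the hop-count (unit-edge) metric — this restriction is needed because $\pset(u)\cap\pset(u')$ consists precisely of the paths of hop-length at most $2D$ connecting $u$ to $u'$. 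If the minimum $\ell$-weight over such paths is less than $z_u+z_{u'}$, we have found a violated constraint; otherwise all constraints for this pair are satisfied. The second family is handled identically, replacing $u'$ by $s$ and the threshold $z_u+z_{u'}$ by $z_u$.

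The one subtlety — and the step I expect to be the main obstacle — is computing, for a fixed pair of vertices, the minimum $\ell$-weight path among all paths with \emph{at most $2D$ edges}. This is the bounded-hop shortest path problem, which is solvable by a straightforward dynamic program: for each vertex $v$ and each integer $0\le t\le 2D$, let $d_t(v)$ be the minimum $\ell$-weight of a walk of at most $t$ edges from $u$ to $v$; then $d_0(u)=0$, $d_0(v)=\infty$ for $v\ne u$, and $d_{t}(v)=\min\bigl(d_{t-1}(v),\ \min_{(v',v)\in E(G)}(d_{t-1}(v')+\ell_{(v',v)})\bigr)$. This runs in time $O(D\cdot|E(G)|)$ per source, hence $O(D\cdot|E(G)|\cdot|U|)$ overall for all pairs, which is polynomial. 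Since the $\ell_e$ are non-negative, a minimum-weight \emph{walk} of at most $2D$ edges can be shortened to a minimum-weight \emph{path} of at most $2D$ edges without increasing weight (removing a cycle only helps), so the value $d_{2D}(u')$ indeed equals $\min_{P\in\pset(u)\cap\pset(u')}\sum_{e\in P}\ell_e$, and when this falls below the threshold the DP's predecessor pointers recover an explicit violated path to return as the separating hyperplane.

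Having an efficient separation oracle, one invokes the ellipsoid method (or any polynomial-time LP algorithm that works with a separation oracle) to determine feasibility of (LP-Dual-2) and hence, via strong duality as set up above, to solve (LP-Primal-2)=(LP-Primal-1); the feasible primal solution itself is recovered from the polynomially many constraints of (LP-Primal-2) that were generated as violated dual constraints during the ellipsoid run, restricting (LP-Primal-2) to the corresponding polynomially many path variables and solving that polynomial-size LP directly. Thus the claim follows, completing the proof of Theorem~\ref{thm: finding_flow_body} modulo the remaining (elementary) verifications.
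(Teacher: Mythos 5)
Your proposal is correct and follows essentially the same route as the paper: check the sign and equality constraints directly, and reduce the exponentially many path constraints to computing, for each pair of terminals (and each terminal with $s$), a minimum-$\ell$-weight path with at most $2D$ edges via the same hop-bounded dynamic program, returning the corresponding constraint if its weight falls below $z_u+z_{u'}$ (resp.\ $z_u$). Your explicit remark that a minimum-weight walk of at most $2D$ edges can be shortened to a path of at most $2D$ edges without increasing weight is a small point the paper glosses over, but otherwise the arguments coincide.
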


We provide the proof of Claim~\ref{claim:sep_oracle} below, after we show that there is an efficient algorithm that solves (LP-Primal-2) using it. We run the Ellipsoid Algorithm on (LP-Dual-2) using the separation oracle, and let $\cset$ be the set of all violated constraints that the oracle returns. Note that, since the running time of the Ellipsoid Algorithm is polynomial in the number of variables, when we run the Ellipsoid Algorithm on (LP-Dual-2), the size of $\cset$, which is the number of violated constraints returned by the separation oracle, is at most polynomial in $n$. Let (LP-Dual-3) be a linear program 
whose set of constraints is precisely $\cset$.  Note that the linear program (LP-Dual-3) is feasible iff the linear program (LP-Dual-2) is feasible. 
This is because, if we run the Ellipsoid Algorithm on (LP-Dual-3), then the separation oracle will return the same set of constraints and the algorithm will return the same solution or report infeasible (if it reports infeasible on (LP-Dual-2)). 
We now compute the dual of (LP-Dual-3) and obtain a linear program that we denote by (LP-Primal-3). It is not hard to see that
(LP-Primal-3) contains a subset (whose size is polynomial in $n$) of variables of (LP-Primal-2), and that for every constraint of (LP-Primal-2), there is a constraint in (LP-Primal-3), with the variables which are not in that subset omitted.
 From the strong duality, (LP-Primal-3) is feasible if (LP-Primal-2) is feasible. We can now solve (LP-Primal-3) efficiently, and the resulting solution is a feasible solution to (LP-Primal-2), as this is the same as setting all variables that do not correspond to the constraints in $\cset$ to $0$.
This finishes the proof of Claim~\ref{claim: solve LP}.

\emph{Proof of Claim~\ref{claim:sep_oracle}:}
We now show that there exists a separation oracle to (LP-Dual-2). Given a suggested solution to (LP-Dual-2), the separation oracle needs to check if it satisfies all the constraints of (LP-Dual-2), and if not, return a violated constraint.

Let $\{z_{u}\}_{u\in U}, \{\ell_e\}_{e\in E(G)}$ be the suggested solution in an iteration.
It is immediate to check whether the constraints $z_u\geq 0\text{ }\forall u\in U$, the constraints $\ell_e\geq 0\text{ }\forall e\in E(G)$ and the constraint $2\cdot\sum_{e\in E(G)}\ell_e-k\cdot\sum_{u\in U}z_u=0$
are satisfied. We will now show an efficient algorithm that checks whether the suggested solution satisfies the constraints $\sum_{e\in P}\ell_e\geq z_u+z_{u'}\text{ }\forall u,u'\in U: u\neq u', \forall P\in \pset(u)\cap\pset(u')$ and the constraints $\sum_{e\in P}\ell_e\geq z_u \text{ }\forall u\in U, \forall P\in \pset(u)\cap\pset(s)$ efficiently.

We assign each edge $e\in E(G)$ length $\ell_{e}$. For any path $P$ of $G$, we denote $\ell(P)=\sum_{e\in P}\ell_e$. Note that $U\subseteq V(G)$. We show an algorithm, that, given the suggested solution $\{z_{u}\}_{u\in U}, \{\ell_e\}_{e\in E(G)}$, either claims (correctly) that all constraints 
$\sum_{e\in P}\ell_e\geq z_u+z_{u'}\text{ }\forall u,u'\in U: u\neq u', \forall P\in \pset(u)\cap\pset(u')$
and all constraints $\sum_{e\in P}\ell_e\geq z_u\text{ }\forall u\in U, \forall P\in \pset(u)\cap\pset(s)$
are satisfied, or returns a pair $u,u'$ of distinct vertices of $U$ and a path $\hat{P}_{u,u'}\in \pset(u)\cap\pset(u')$, such that $\ell(\hat{P}_{u,u'})<z_u+z_{u'}$ (which means that the constraint 
$\sum_{e\in \hat{P}_{u,u'}}\ell_e\geq z_u+z_{u'}$ 
is not satisfied by the suggested solution),
or returns a vertex $u\in U$ and a path $\hat{P}_{u,s}\in \pset(u)\cap\pset(s)$, such that $\ell(\hat{P}_{u,s})<z_u$ (which means that the constraint 
$\sum_{e\in \hat{P}_{u,s}}\ell_e\geq z_u$ 
is not satisfied by the suggested solution).

\begin{claim}
\label{claim:dp}
There is an efficient algorithm, that, given any pair $v,v'$ of vertices of $G$, computes the shortest path (with respect to edge lengths $\{\ell_e\}_{e\in E(G)}$) connecting $v$ to $v'$ that contains at most $2D$ edges.
\end{claim}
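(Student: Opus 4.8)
The plan is to solve this as a bounded-hop shortest path problem by dynamic programming, in the style of the Bellman--Ford algorithm. Fix the source vertex $v$. For each integer $0\le i\le 2D$ and each vertex $w\in V(G)$, let $d_i(w)$ denote the minimum, over all walks from $v$ to $w$ in $G$ that use at most $i$ edges, of the total $\ell$-length of the walk (with $d_i(w)=\infty$ if no such walk exists). We initialize $d_0(v)=0$ and $d_0(w)=\infty$ for $w\ne v$, and use the recurrence
\[
d_i(w)=\min\Bigl\{\, d_{i-1}(w),\ \min_{(x,w)\in E(G)}\bigl(d_{i-1}(x)+\ell_{(x,w)}\bigr)\,\Bigr\}.
\]
Along with each entry $d_i(w)$ we store a pointer to the vertex and edge attaining the minimum, so that a walk realizing $d_i(w)$ with at most $i$ edges can be traced back. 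The algorithm computes the tables $d_0,d_1,\ldots,d_{2D}$, recovers the walk realizing $d_{2D}(v')$, and then repeatedly deletes cycles from this walk to obtain a simple path $P$ from $v$ to $v'$; since deleting a cycle decreases neither the number of edges nor (because the lengths $\{\ell_e\}$ are nonnegative) the total $\ell$-length, $P$ is a simple path from $v$ to $v'$ with at most $2D$ edges.

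For correctness I would first show, by a straightforward induction on $i$, that $d_i(w)$ is exactly the minimum $\ell$-length of a walk from $v$ to $w$ using at most $i$ edges; hence $P$ has $\ell$-length at most that of any such walk, in particular at most that of the shortest simple path with at most $2D$ edges (which is such a walk). Conversely, any walk of at most $2D$ edges shortcuts, by deleting cycles, to a simple path of at most $2D$ edges of no larger $\ell$-length, again using nonnegativity; so the minimum over walks equals the minimum over simple paths, and $P$ attains it.

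For the running time, each of the $2D+1$ layers of the table is computed in time $O(|E(G)|)$ by iterating over all edges, so the whole computation takes time $O(D\cdot|E(G)|)$, which is polynomial in the size of $G$, and the cycle-removal postprocessing is also polynomial. There is essentially no obstacle here; the only points requiring a word of care are that the recurrence captures ``at most $2D$ edges'' rather than ``exactly $2D$ edges'' (handled by keeping the $d_{i-1}(w)$ term in the minimum), and that outputting a genuine simple path — as the claim demands — rather than a walk requires the trivial cycle-removal step, whose validity rests on the nonnegativity of $\{\ell_e\}$; this nonnegativity is precisely one of the constraints the separation oracle of Claim~\ref{claim:sep_oracle} verifies before invoking this subroutine.
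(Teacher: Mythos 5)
Your proposal is correct and takes essentially the same route as the paper: a hop-bounded Bellman--Ford dynamic program with the recurrence $d_i(w)=\min\{d_{i-1}(w),\min_{(x,w)\in E(G)}(d_{i-1}(x)+\ell_{(x,w)})\}$ over $i=0,\ldots,2D$ (the paper tabulates all pairs while you fix the source, which suffices for the claim). Your formulation via walks followed by cycle removal, justified by the nonnegativity of the $\ell_e$, is only a minor (and slightly more careful) variant of the paper's direct path-based tables, not a genuinely different argument.
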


We will prove Claim~\ref{claim:dp} below, after we complete the proof of Claim~\ref{claim:sep_oracle} using it.
For every pair $u,u'\in U$ of distinct vertices of $U$, let $\hat{P}_{u,u'}$ be the path returned by the algorithm in Claim~\ref{claim:dp}, we check if $\ell(\hat{P}_{u,u'})<z_u+z_{u'}$.
For every vertex $u\in U$, let $\hat{P}_{u,s}$ be the path returned by the algorithm in Claim~\ref{claim:dp}, we check if $\ell(\hat{P}_{u,s})<z_u$.
If there exists a pair $u,u'\in U$ of distinct vertices of $U$ such that $\ell(\hat{P}_{u,u'})<z_u+z_{u'}$, by definition, $\sum_{e\in \hat{P}_{u,u'}}\ell_e=\ell(\hat{P}_{u,u'})<z_u+z_{u'}$. In this case, we claim that the constraint $\sum_{e\in\hat{P}_{u,u'}}\ell_e\ge z_u+z_{u'}$ is violated, and return this constraint as a violated constraint.
If there does not exist a pair $u,u'\in U$ of distinct vertices of $U$ such that $\ell(\hat{P}_{u,u'})<z_u+z_{u'}$, then from Claim~\ref{claim:dp}, for any pair $u,u'$ of distinct vertices of $U$, for any path $P\in \pset(u)\cap\pset(u')$, we have $\sum_{e\in P}\ell_{e}\ge z_u+z_{u'}$. 
In this case, we know that all constraints $\sum_{e\in P}\ell_e\geq z_u+z_{u'}\text{ }\forall u,u'\in U: u\neq u', \forall P\in \pset(u)\cap\pset(u')$ are satisfied, so we then proceed to check if there exists a vertex $u\in U$ such that $\ell(\hat{P}_{u,s})<z_u$. If there does exists such a vertex $u$, by definition, $\sum_{e\in \hat{P}_{u,s}}\ell_e=\ell(\hat{P}_{u,s})<z_u$. In this case, we claim that the constraint $\sum_{e\in\hat{P}_{u,s}}\ell_e\ge z_u$ is violated, and return this constraint as a violated constraint. If there does not exist a vertex $u\in U$ such that $\ell(\hat{P}_{u,s})<z_u$, then from Claim~\ref{claim:dp}, for vertex $u\in U$, for any path $P\in \pset(u)\cap\pset(s)$, we have $\sum_{e\in P}\ell_{e}\ge z_u$. We then claim that all constraints $\sum_{e\in P}\ell_e\geq z_u+z_{u'}\text{ }\forall u,u'\in U: u\neq u', \forall P\in \pset(u)\cap\pset(u')$ and all constraints $\sum_{e\in P}\ell_e\geq z_u\text{ }\forall u\in U, \forall P\in \pset(u)\cap\pset(s)$ are satisfied.

This finishes the description of the separation oracle to (LP-Dual-2). Since it is clear that the running time of the separation oracle is polynomial in $n$, this finishes the proof of Claim~\ref{claim:sep_oracle}.

\emph{Proof of Claim~\ref{claim:dp}:}
The algorithm employs dynamic programming. It is convenient to view the algorithm as constructing $2D+1$ dynamic programming tables $\{\Pi_{i}\}_{0\le i\le 2D}$. For each $0\le i\le 2D$ and each pair $v,v'$ of vertices of $G$, the table $\Pi_i$ contains an entry $\Pi_i(v,v')$, that stores the shortest path $P^i_{v,v'}$ (with respect to edge lengths $\{\ell_e\}_{e\in E(G)}$) among all paths in $G$ that connects $v$ to $v'$ and contains at most $i$ edges, together with its length $\ell(P^i_{v,v'})$. So each entry $\Pi_i(v,v')$ has the form $\Pi_i(v,v')=(P^i_{v,v'},L^i_{v,v'})$ where $L^i_{v,v'}=\ell(P^i_{v,v'})$. When such a path does not exist, we set $P^i_{v,v'}$ to be a default value $\perp$ and set $L^i_{v,v'}=+\infty$.
 
We now describe how to compute the entries of dynamic programming tables. First we initialize the entries in $\Pi_0$.
For each vertex $v$, we set $P^0_{v,v}$ to be the path that contains a single node $v$, and we set $L^0_{v,v}=0$.
For each pair $v,v'$ of distinct vertices of $G$, we set $P^0_{v,v}=\perp$ and $L^0_{v,v'}=+\infty$. 
For each $1\le i\le 2D$, the table $\Pi_i$ is computed based on $G$ and the table $\Pi_{i-1}$ as follows. For each vertex $v\in V(G)$, we denote $N(v)\subseteq V(G)$ to be the set of neighbors of $v$ in $G$. For each pair $v,v'\in V(G)$, we set
\[L^i_{v,v'}=\min\{L^{i-1}_{v,v'}, \min_{w\in N(v)}\{\ell_{(v,w)}+L^{i-1}_{w,v'}\}\}.\]
For $P^i_{v,v'}$, we set it to be $\perp$ if $L^i_{v,v'}=+\infty$;
we set it to be the same path as $P^{i-1}_{v,v'}$ if 
$L^{i-1}_{v,v'}\le \min_{w\in N(v)}\{\ell_{(v,w)}+L^{i-1}_{w,v'}\}$; and  if $w'=\arg\min\{L^{i-1}_{v,v'}, \min_{w\in N(v)}\{\ell_{(v,w)}+L^{i-1}_{w,v'}\}\}$ and $\ell_{(v,w')}+L^{i-1}_{w',v'}<L^{i-1}_{v,v'}$, we set it to be the concatenation of the edge $(v,w')$ and the path $P^{i-1}(w',v')$.

Finally, given a pair $v,v'$ of vertices of $G$, we return the path $P^{2D}_{v,v'}$ if $P^{2D}_{v,v'}\ne \perp$, and we claim that such a path does not exist if $P^{2D}_{v,v'}=\perp$. \qed
\qed

\section{Applications to Distributed Computation}
\label{sec:dist}
In this section, we provide applications of our graph theoretic results to distributed and secure computation, proving
Theorems \ref{lem:lambda}, \ref{thm:mstdist} and \ref{lem:dist-inf-diss}.
Throughout, we use the standard \congest\ model \cite{Peleg:2000}, where the algorithm's execution proceeds in synchronous 
rounds, and in every round, each node can send a message of size $O(\log n)$ to 
each of its neighbors. Each node holds a processor with a unique and arbitrary ID of $O(\log n)$ bits.
As common in this model, we restrict attention to simple graphs with no parallel edges.
Our algorithms make extensive use of the random delay approach of \cite{leighton1994packet,ghaffari2015near}.
\begin{theorem}[{\cite[Theorem 1.3]{ghaffari2015near}}]\label{thm:delay}
Let $G$ be a graph and let $A_1,\ldots,A_m$ be $m$ distributed algorithms in 
the \congest model, 
where each algorithm takes at most $\dilation$ rounds, and where for each 
edge $e\in E(G)$, the total number of messages sent over $e$ by all these algorithms is at most $\congestion$. Then, there is a randomized distributed
algorithm (that uses private randomness), that, with high probability, 
produces 
a schedule that runs all the algorithms in $O(\congestion +\dilation \cdot 
\log 
n)$ rounds, after $O(\dilation \log^2 n)$ rounds of pre-computation.
\end{theorem}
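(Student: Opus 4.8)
The plan is to use the random-delay technique of Leighton--Maggs--Rao, in the refined distributed form of \cite{ghaffari2015near}. First I would assign to each algorithm $A_j$ an independent, uniformly random integer delay $\delta_j\in\{0,1,\ldots,r-1\}$, where $r=\Theta(\congestion/\log n)$ (and $r=1$ if $\congestion<\log n$), and declare that $A_j$ is to be simulated during the global rounds $\delta_j+1,\ldots,\delta_j+\dilation$. Since $\delta_j+\dilation\le r+\dilation=O(\congestion/\log n+\dilation)$, every algorithm terminates within $O(\congestion/\log n+\dilation)$ steps of this idealized schedule.

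Next I would bound, for each edge $e\in E(G)$ and each global round $t$, the number of messages crossing $e$ at time $t$ under the idealized schedule. Fix $e$ and an algorithm $A_j$; since $A_j$ sends at most one message across $e$ per local round, at most one local round $\tau$ can satisfy $\delta_j+\tau=t$, so $A_j$ contributes at most one message to the pair $(e,t)$, and it does so with probability at most $1/r$. Summing over all algorithms and using that the total number of $(A_j,\text{local round})$ pairs that send a message over $e$ is at most $\congestion$, the expected number of messages crossing $e$ at round $t$ is at most $\congestion/r=O(\log n)$. Crucially, the per-algorithm contributions are independent indicator random variables, so a Chernoff bound shows this count exceeds $O(\log n)$ with probability $1/\poly(n)$; a union bound over the (polynomially many, assuming $\congestion,\dilation=\poly(n)$ as holds in all our applications) pairs $(e,t)$ then gives that, with high probability, every edge carries $O(\log n)$ messages in every round of the idealized schedule.

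Given such a good set of delays, I would obtain the actual \congest\ schedule by time-expansion: replace each idealized round by a block of $\Theta(\log n)$ real rounds, and within a block serialize over each edge the $O(\log n)$ messages that the idealized schedule routes across it. This respects the $O(\log n)$-bit bandwidth, preserves for each algorithm the order of its messages, and inflates the running time to $O\big((\congestion/\log n+\dilation)\cdot\log n\big)=O(\congestion+\dilation\log n)$, as claimed.

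The main obstacle is the distributed precomputation: the delays must actually be sampled, every node must learn the delays (hence the block-schedule) of all algorithms whose messages traverse its incident edges, and we must certify the good event or re-sample. Here I would invoke the machinery of \cite{ghaffari2015near}: sample the delays, perform a \emph{dry run} that propagates each $\delta_j$ along the edges actually used by $A_j$ during a simulated execution of length $\dilation$, and locally check the $(e,t)$-congestion; since the bad events are local and each re-sampling costs $O(\dilation\log n)$ rounds while $O(\log n)$ re-samplings suffice with high probability, the total precomputation is $O(\dilation\log^2 n)$ rounds. I expect this sampling--propagation--verification bookkeeping, rather than the probabilistic core, to be the delicate part.
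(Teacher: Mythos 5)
The paper does not actually prove Theorem~\ref{thm:delay}: it is imported verbatim from \cite{ghaffari2015near}, so there is no internal proof to compare against. Your argument reproduces the standard random-delay scheme that underlies the cited result -- independent delays uniform in $\{0,\ldots,r-1\}$ with $r=\Theta(\congestion/\log n)$, a Chernoff bound plus a union bound over edge--round pairs (valid under the usual $\congestion,\dilation=\poly(n)$ assumption, which you state) to get $O(\log n)$ messages per edge per idealized round, and then time-expansion of each idealized round into a $\Theta(\log n)$-round block, giving $O(\congestion+\dilation\log n)$ total -- and this probabilistic core is sound. The only part that is not self-contained is the one you flag yourself: sampling the delays with private randomness, letting every node learn the delays of the algorithms using its incident edges, and setting up the block schedule is exactly what the $O(\dilation\log^2 n)$-round pre-computation of \cite{ghaffari2015near} accomplishes, so invoking that machinery is circular as a standalone proof of this very theorem, though it is acceptable here since the paper itself uses the statement as a black box. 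Two minor points: the verification/re-sampling step you add is unnecessary, because the theorem only claims a Monte Carlo (with high probability) guarantee, so one sample of the delays suffices; and within each block the serialized messages must carry an $O(\log n)$-bit algorithm identifier, which still fits the \congest\ bandwidth and only affects constants.
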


Throughout, we assume that $k=\Omega(\log n)$ and consider a $k$-edge connected $n$-vertex graph $G=(V,E)$. All presented algorithms are randomized, and their correctness hold with probability at least $1-1/n^c$ for some constant $c$ (that we refer to as \emph{high probability}). 
The starting point for all the applications considered in this section is the computation of $\Omega(k)$ subgraphs $G_1,\ldots, G_k$ of $G$ with bounded congestion, such that each subgraph has a small diameter. The subgraphs $G_i$ are given in a distributed manner where each edge $(u,v)$ knows the indices of the subgraphs $G_i$ to which it belongs.

\begin{claim}[Basic Distributed Tool]\label{cl:basic-dist-tool}
There is a randomized algorithm that, given a $k$-edge connected $n$-vertex graph $G$ and a congestion bound $\eta \in [1,k]$, computes,  in $\widetilde{O}((101k\ln n/\eta)^{D})$ rounds, a collection of $k$ spanning trees that cause total edge-congestion at most $O(\eta\cdot\log n)$, and have diameter at most $O((101k\ln n/\eta)^{D})$ each. 
Moreover, the algorithm can compute $k$ spanning subgraphs with similar congestion and diameter bounds in $O(D+\eta\log n)$ rounds. The round complexity, the diameter, and the congestion bounds hold with high probability.
\end{claim}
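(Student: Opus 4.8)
The plan is to run the random-sampling construction behind Theorem~\ref{thm:Karger_diameter-main} in parallel $k$ times, each time with a sampling probability tuned to the congestion budget $\eta$, and to use Theorem~\ref{thm:random_tree_planting} to control the diameter of each sampled subgraph. First, compute a BFS tree $T^*$ of $G$ rooted at an arbitrary vertex; since $\diam(G)\le D$ the depth of $T^*$ is at most $D$, and this costs $O(D)$ rounds. Set $p=\min\{1,\,c\eta\ln n/k\}$ for a sufficiently large constant $c$ and $p'=\Theta(\eta/k)$. For each $1\le i\le k$ independently, let $E_i$ be obtained by including each edge of $G$ with probability $p$, and $R_i$ by including each edge of $T^*$ with probability $p'$, using fresh private coins throughout; define $G_i=G[E_i]\cup R_i$. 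To make the membership of an edge $(u,v)$ agree at both endpoints, the lower-ID endpoint flips all the relevant coins and sends the other endpoint the list of indices $i$ with $(u,v)\in E_i\cup R_i$; since (as shown below) each edge lies in only $O(\eta\log n)$ of these sets with high probability, this messaging takes $O(\eta\log n)$ rounds, so all $k$ spanning subgraphs $G_i$ are computed in $O(D+\eta\log n)$ rounds.

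For the analysis, note that $\eta\ge 1$ gives $kp\ge c\ln n$, so by Theorem~\ref{thm:Karger_sampling} (with $d=O(1)$) each $G[E_i]$ is connected with probability $1-O(1/n^{10})$; since $R_i$ is sampled with coins independent of $E_i$ and $R_i\sim\mathcal D(T^*,p')$ with $T^*$ a depth-$D$ tree on $V(G)$, Theorem~\ref{thm:random_tree_planting} then gives $\diam(G_i)\le(101\ln n/p')^{D}=O((101k\ln n/\eta)^{D})$ with probability $1-D/n^{48}$. Taking a union bound over the $k\le n$ indices, all $G_i$ are connected and of diameter $O((101k\ln n/\eta)^{D})$ with high probability. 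For the congestion, a fixed edge lies in $G_i$ with probability at most $p+p'=O(\eta\ln n/k)$, so its expected number of occurrences over all $i$ is $O(\eta\log n)$; a Chernoff bound and a union bound over the edges show every edge lies in $O(\eta\log n)$ of the $G_i$ with high probability. This establishes the ``spanning subgraphs'' assertion.

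To obtain spanning trees, run a BFS in each $G_i$ from an arbitrary root, producing $T'_i\subseteq G_i$; then $\diam(T'_i)\le 2\diam(G_i)=O((101k\ln n/\eta)^{D})$ and the $T'_i$ inherit the $O(\eta\log n)$ edge-congestion. The $k$ BFS computations are executed together using the random-delay scheduling of Theorem~\ref{thm:delay}: a single-source BFS sends $O(1)$ messages across each edge of $G_i$, so across any edge the total message count is $\congestion=O(\eta\log n)$ while $\dilation=O((101k\ln n/\eta)^{D})$, and the schedule runs in $O(\congestion+\dilation\log n)=\widetilde O((101k\ln n/\eta)^{D})$ rounds after $O(\dilation\log^2 n)$ rounds of precomputation. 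The graph-theoretic content here is entirely black-boxed in Theorems~\ref{thm:Karger_sampling} and~\ref{thm:random_tree_planting}; the only points requiring care are the distributed bookkeeping — bounding the per-edge index lists (the Chernoff bound above) and checking that each single-source BFS uses each edge only $O(1)$ times, so that Theorem~\ref{thm:delay} applies with $\congestion=O(\eta\log n)$ — which I expect to be the main, though essentially routine, obstacle.
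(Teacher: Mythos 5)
Your proposal is correct and follows essentially the same route as the paper's own proof: sample each $G_i$ as a Karger-style random subgraph of $G$ (connected w.h.p.\ by Theorem~\ref{thm:Karger_sampling}) augmented by a $\Theta(\eta/k)$-rate sample of a depth-$D$ BFS tree to control the diameter via Theorem~\ref{thm:random_tree_planting}, handle per-edge membership by one designated endpoint, and schedule the $k$ parallel BFS computations with Theorem~\ref{thm:delay}. The only (harmless) deviation is that you sample the non-tree edges at rate $\Theta(\eta\ln n/k)$ rather than the paper's fixed $707\ln n/k$, which still meets the stated $O(\eta\log n)$ congestion bound.
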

\begin{proof}
Let $T$ be a BFS tree of the graph $G$ of depth at most $D$, computed from an arbitrary source vertex $s \in G$.
The algorithm computes a collection of $k$ subgraphs $G_1,\ldots, G_k$ which will be shown to cause bounded congestion and have bounded diameter. For every $i \in \{1,\ldots,k\}$ in parallel, each subgraph $G_i$ is computed by sampling each edge $e\in G$ into $G_i$ with probability $p=707\log n/k$, and additionally sampling each edge $e'\in T$ into $G_i$ with probability $\eta/k$, independently from all other edges. In other words, $G_i=G[p]\cup T[\eta/k]$. 
In the distributed setting, the edge sampling is made by the edge endpoint of larger ID. Each node $u$ sends its lower-ID neighbor $v$ the indices $i$ such that edge $e=(u,v)$ is in $G_i$. 
Next, the algorithm computes a truncated BFS tree $T_i$, up to depth $d=O((101 k\ln n/\eta)^D)$, in every sampled subgraph $G_i$ in parallel, using the random delay approach from Theorem \ref{thm:delay}. 

In order to analyze this algorithm, we start by showing that w.h.p., the diameter of each subgraph $G_i$ is bounded by $d$. Indeed, by Theorem \ref{thm:Karger_diameter}, each subgraph $G[p]$ for $p=707\log n/k$ is connected with high probability. By Theorem \ref{thm:random_tree_planting}, the diameter of each subgraph $G_i=G[p] \cup T[\eta/k]$ is at most $O((101 k\ln n/\eta)^D)$ with high probability. Next, we bound the congestion. A simple application of the Chernoff bound shows that, with high probability, each edge of $G$ appears in at most $O(\eta\log n)$ subgraphs. 
A single BFS computation up to depth $d$ takes $O(d)$ rounds, while sending $O(1)$ messages on each of the graph edges.
Thus, by applying the random delay approach, one can compute all $k$ BFS trees in $G_1,\ldots, G_k$ in $\widetilde{O}(d+\eta)$ rounds. Since the diameter of each graph $G_i$ is at most $d$, all resulting trees are indeed spanning with high probability. Note that, if we only need to compute spanning subgraphs of $G$, then the collection $\set{G_1,\ldots, G_k}$  of such subgraphs can indeed be computed in $O(D+\eta\cdot \log n)$ rounds. To see this observe that the BFS computation can be done in $O(D)$ rounds. The larger-ID endpoint $u$ of each edge $(u,v)$ has $\eta\log n$ messages to send to its endpoint $v$ containing the indices of the subgraphs to which $(u,v)$ belongs.
\end{proof}

\subsection{Distributed Approximate Verification of the Edge Connectivity}
We show the following immediate application of Theorem \ref{thm:Karger_diameter} to verify if the graph is $\lambda$-edge connected,  up to approximation factor $O(\log n)$. Given a graph $G=(V,E)$ and integer $\lambda$, if the graph is $\lambda$-edge connected then all nodes must YES, and if the graph is at most $\lambda/\log n$ connected, all nodes must output NO. The algorithm succeeds with high probability in $\widetilde{O}((\lambda\log^2 n)^{D(D+1)/2})$ rounds.

\begin{theorem}[$O(\log n)$-Approximate Verification of $\lambda$-Edge Connectivity ]\label{lem:approx-cut-upper-bound}
There is a randomized distributed algorithm, that, given an unweighted $n$-vertex graph $G=(V,E)$ of diameter $D$, and an integer $\lambda$,
ensures that with high probability, after $\widetilde{O}((\lambda\log^2 n)^{D(D+1)/2})$ rounds, if $G$  is $\lambda$-edge connected, then all nodes output YES, and if  it is at most $\lambda/\log n$-edge connected, then all nodes output NO. 
\end{theorem}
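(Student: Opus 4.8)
The plan is to reduce approximate connectivity verification to testing whether the random edge‑partitioning construction behind Theorem~\ref{thm:Karger_diameter-main} ``succeeds''. Recall from the proof of Theorem~\ref{thm:Karger_diameter-main} that, if $G$ is $\lambda$‑edge connected, then partitioning $E(G)$ uniformly at random into $r=\lfloor\lambda/(707\ln n)\rfloor$ groups $G_1,\dots,G_r$ (each $G_i$ spanning and distributed as $\mathcal D(G,1/r)$) yields, with high probability, $r$ \emph{connected} subgraphs whose diameters are all small; conversely, if $G$ has a cut with fewer than $r$ edges, then by the pigeonhole principle at least one $G_i$ receives no edge of that cut and is therefore \emph{disconnected}. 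Since the $G_i$'s are edge‑disjoint subgraphs of $G$, both the partition and a bounded‑depth search inside each $G_i$ are cheap in \congest. As in Claim~\ref{cl:basic-dist-tool} we assume $\lambda=\Omega(\log n)$ (for $\lambda=O(\log n)$ the diameter bound below is already $\poly\log n$ and verification can be done by a direct sparse‑certificate computation).

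Set $d=(\lambda\log^2 n)^{D(D+1)/2}$. The algorithm first spends $O(D)$ rounds computing a BFS tree $\hat T$ of $G$ of depth $D$ (so $D$ is known and $\hat T$ is available for later aggregation), then has the higher‑ID endpoint of each edge pick a uniformly random index in $\{1,\dots,r\}$ and report it to the other endpoint. Fixing the root $s$ of $\hat T$ as a global source, we run, for every $i$ in parallel, a truncated BFS of depth $d$ inside $G_i$ from $s$; because the $G_i$'s partition $E(G)$ these searches never contend for an edge, so they finish in $O(d)$ rounds (alternatively one invokes Theorem~\ref{thm:delay}: the dilation is $d$ and the congestion is $O(1)$, giving $\widetilde O(d)$ rounds). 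Each vertex $v$ now knows, for every $i$, whether it was reached by the depth‑$d$ BFS in $G_i$. Finally, using $\hat T$ we compute in $O(D)$ rounds the OR over all $v$ and all $i$ of the indicator ``$v$ was \emph{not} reached in $G_i$'', broadcast it, and have every node output NO if this bit is $1$ and YES otherwise. The total round complexity is $\widetilde O(d+D)=\widetilde O((\lambda\log^2 n)^{D(D+1)/2})$.

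For correctness, first suppose $G$ is $\lambda$‑edge connected, with true edge connectivity $k\ge\lambda$. Then $p:=1/r\ge 707\ln n/\lambda\ge 707\ln n/k$ meets the hypothesis of Theorem~\ref{thm:Karger_diameter}, and $N:=101\ln n/p=101r\ln n$ satisfies $7N\le\lambda$. Tracing through the proof of Theorem~\ref{thm:Karger_diameter} — whose distance bound is really $(7N)^{D(D+1)/2}$, here at most $\lambda^{D(D+1)/2}\le d$ \emph{irrespective of how much larger than $\lambda$ the actual connectivity $k$ is} — each $G_i$ is connected with diameter at most $d$ with probability $1-1/\poly(n)$. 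A union bound over the $r\le n$ subgraphs shows that with high probability the depth‑$d$ BFS spans every $G_i$, the aggregated OR is $0$, and all nodes output YES. Conversely, if $G$ is at most $\lambda/\log n$‑edge connected — i.e.\ has a cut $(A,V\setminus A)$ with $|E_G(A,V\setminus A)|<r$, which holds with a suitable constant in the $O(\log n)$ approximation factor — then some $G_i$ gets no edge across $(A,V\setminus A)$, so in $G_i$ no vertex of the smaller side is reachable from $s$ at any depth, the aggregated OR is $1$, and all nodes output NO.

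The main obstacle is the YES case: the diameter cutoff $d$ must be expressed in terms of the \emph{target} value $\lambda$ rather than the possibly much larger actual connectivity of $G$, since otherwise truncating at depth $d$ could produce a spurious NO on a highly connected graph; this is exactly why we must use the sharper $(7N)^{D(D+1)/2}$ bound implicit in the proof of Theorem~\ref{thm:Karger_diameter} with $p=1/r$, and not merely its stated form $k^{D(D+1)/2}$. The remaining points — that edge‑disjointness of the $G_i$'s keeps the congestion $O(1)$, that the pigeonhole step yields soundness in the claimed $O(\log n)$‑approximate regime, and the bookkeeping for the global OR — are routine.
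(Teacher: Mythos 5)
Your proof is correct in substance and rests on the same engine as the paper's --- Karger-type subsampling at rate $\Theta(\log n/\lambda)$, the diameter bound of Theorem~\ref{thm:Karger_diameter}, a truncated BFS of depth roughly $\lambda^{D(D+1)/2}$, and a tree-based aggregation of a spanning/non-spanning bit --- but the amplification mechanism differs. The paper runs $\Theta(\log n)$ \emph{independent} samples $G[p]$ with $p=707\log n/\lambda$, calls an experiment good if its truncated BFS spans all of $V$, and outputs YES iff at least a $0.9$ fraction of experiments are good; soundness comes from a single sample being disconnected with constant probability plus a Chernoff bound. You instead use one random \emph{partition} of $E(G)$ into $r=\lfloor\lambda/(707\ln n)\rfloor$ edge-disjoint parts (exactly the construction behind Theorem~\ref{thm:Karger_diameter-main}) and obtain soundness deterministically by pigeonhole: a cut with fewer than $r$ edges is entirely missed by some part. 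This buys you congestion-$1$ parallel BFS (no scheduling overhead) and a cleaner NO-case argument, at the price of only guaranteeing NO when the connectivity is below $\lambda/(707\ln n)$ rather than the literal $\lambda/\log n$ of the statement; note, however, that the paper's own NO-case step carries the same constant slack (a cut of size $\lambda/\log n$ survives sampling at rate $707\log n/\lambda$ except with probability about $e^{-707}$, so the claim ``connected with probability at most $3/4$'' likewise really needs connectivity at most $\lambda/(c\log n)$ for a suitable constant $c$), so this does not put you behind the paper. Two smaller remarks: the ``main obstacle'' you raise in the YES case dissolves without opening the proof of Theorem~\ref{thm:Karger_diameter} --- a graph that is $k'$-edge connected for $k'\ge\lambda$ is in particular $\lambda$-edge connected, so you may apply the theorem with parameter $k=\lambda$ and $p\ge 707\ln n/\lambda$ and read off the bound $\lambda^{D(D+1)/2}$ directly (your internal $(7N)^{D(D+1)/2}$ argument is also valid, just unnecessary); and both your proof and the paper's implicitly need $\lambda$ to exceed $1000\ln n$ for Theorem~\ref{thm:Karger_diameter} to apply, which you at least flag explicitly while the paper leaves it implicit.
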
\vspace{-5pt}
\begin{proof}
Let $p=707\log n/\lambda$. 
By Theorem~\ref{thm:Karger_diameter}, if the graph $G$ is $\lambda$-edge connected, then the sampled graph $G[p]$ is connected with high probability. Moreover, the diameter of $G[p]$ is bounded by $O(\lambda^{D(D+1)/2})$ with high probability.  
On the other hand, if the graph connectivity is $\lambda' \leq \lambda/\log n$, then $G[p]$ is connected with probability at most $3/4$. We will then make $O(\log n)$ edge-sampling experiments to distinguish between these two scenarios.  

For every $j \in [1,\Theta(\log n)]$, let $G_j=G[p]$, i.e., sample each edge in $G_j$ independently with probability $p$. Compute a truncated BFS tree up to depth $\lambda^{D(D+1)/2}$. If this tree spans all vertices of $V(G)$, then we say that graph $G_j$ is \emph{good}. The algorithm returns YES if at least 0.9 of the experiments are good. 

The round complexity is simply $O(\lambda^{D(D+1)/2}\cdot \log n)$. We now consider correctness. If the graph is $\lambda$-edge connected, w.h.p. all experiments are good and therefore all nodes say YES.
If the graph is $\lambda'$-connected for $\lambda' \leq\lambda/\log n$, then when sampling the edges with probability at most $1/\lambda'$ the graph is connected with probability at most $3/4$. The theorem follows by a simple application of Chernoff bound.
\end{proof}

With a slight modification, the algorithm from the above lemma can also be used to obtain an $O(\log n)$-approximation on the size of the minimum cut $\lambda$ in $G$, in $\widetilde{O}((\lambda\log^2 n)^{D(D+1)/2})$ rounds w.h.p.
\begin{corollary}\label{cor:min-cut-value-approx}
There is a randomized distributed algorithm, that, given an unweighted $n$-vertex graph $G=(V,E)$ of diameter $D$, computes an estimate $\widetilde{\lambda}$ on the value $\lambda$ of the global minimum cut in $G$, such that $\widetilde{\lambda}\in [\lambda, \lambda \cdot O(\log n)]$, in $\widetilde{O}((\lambda\log^2 n)^{D(D+1)/2})$ rounds. Both the correctness and the round complexity of the algorithm hold with high probability.
\end{corollary}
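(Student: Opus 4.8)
The plan is to perform a geometric (doubling) search over candidate connectivity values, using the $O(\log n)$-approximate verification procedure of Theorem~\ref{lem:approx-cut-upper-bound} as a black box. Fix a sufficiently large constant $c$ so that, for $\lambda_0=\lceil c\log n\rceil$ and for all $\lambda_i=2^i\lambda_0$ with $i\ge 0$, the preconditions of Theorem~\ref{thm:Karger_diameter} (and hence of Theorem~\ref{lem:approx-cut-upper-bound}) are met. For $i=0,1,2,\dots$ in this order, all nodes run the verification algorithm of Theorem~\ref{lem:approx-cut-upper-bound} with parameter $\lambda_i$; since the number of rounds consumed by each run is a fixed function of $\lambda_i$, $n$ and $D$, all of which are known to the nodes, the runs are executed in lockstep and need no coordination. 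Let $\lambda_j$ (with $j\ge 0$) be the first candidate for which the verification returns NO; every node then outputs $\widetilde\lambda=\lambda_j$.

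For correctness, assume $G$ is connected (so $\lambda\ge 1$) and condition on the high-probability event that all the verification runs succeed; by a union bound over the $O(\log n)$ runs this holds with probability $1-1/\poly(n)$, and on it all nodes observe the same sequence of YES/NO answers and thus agree on $j$ and on $\widetilde\lambda$. The search terminates with $j=O(\log\lambda)=O(\log n)$: once $\lambda_i\ge\lambda\log n$, the graph is at most $\lambda_i/\log n$-edge connected, so the verification must return NO. If the first run ($i=0$) returns NO, then $G$ is not $\lambda_0$-edge connected, so $\lambda<\lambda_0=\widetilde\lambda=O(\log n)\le\lambda\cdot O(\log n)$, and also $\widetilde\lambda>\lambda$, as required. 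Otherwise $j\ge 1$: the YES answer at $\lambda_{j-1}$ implies (contrapositive of the NO-guarantee of Theorem~\ref{lem:approx-cut-upper-bound}) that $G$ is not at most $\lambda_{j-1}/\log n$-edge connected, i.e.\ $\lambda>\lambda_{j-1}/\log n$, whence $\widetilde\lambda=\lambda_j=2\lambda_{j-1}<2\lambda\log n=O(\lambda\log n)$; the NO answer at $\lambda_j$ implies (contrapositive of the YES-guarantee) that $G$ is not $\lambda_j$-edge connected, i.e.\ $\lambda<\lambda_j=\widetilde\lambda$. In either case $\widetilde\lambda\in[\lambda,\lambda\cdot O(\log n)]$.

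For the round complexity, the $i$-th verification run costs $\widetilde O((\lambda_i\log^2 n)^{D(D+1)/2})$ rounds by Theorem~\ref{lem:approx-cut-upper-bound}, and since these costs grow geometrically in $i$ (consecutive costs differ by a factor $2^{D(D+1)/2}\ge 2$), their total is within a constant factor of the last term, $\widetilde O((\lambda_j\log^2 n)^{D(D+1)/2})=\widetilde O((\lambda\log^3 n)^{D(D+1)/2})$ since $\lambda_j=O(\lambda\log n)$; the extra polylogarithmic overhead from the doubling is absorbed into the $\widetilde O(\cdot)$ notation, giving the claimed bound $\widetilde O((\lambda\log^2 n)^{D(D+1)/2})$. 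There is no genuine obstacle here; the only point that needs care is a bookkeeping matter: because the verification has a multiplicative $O(\log n)$ gap between its YES- and NO-guarantees, the search granularity must be coarse enough (powers of $2$, after rescaling by $\Theta(\log n)$) that every candidate is resolved, and the two-sided estimate must be extracted from the \emph{last} YES together with the \emph{first} NO, as above. The only model assumption beyond those already needed by Theorem~\ref{lem:approx-cut-upper-bound} is that the nodes know (a polynomial bound on) $n$ and know $D$, which is what permits running the search synchronously and outputting a common value.
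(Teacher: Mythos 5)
Your proposal is correct and follows essentially the same route as the paper: a doubling search over candidate connectivity values, invoking the $O(\log n)$-approximate verification algorithm of Theorem~\ref{lem:approx-cut-upper-bound} for each candidate and outputting the first value on which it answers NO. Your write-up is in fact more careful than the paper's one-line argument (e.g., starting the search at $\Theta(\log n)$ to respect the precondition of Theorem~\ref{thm:Karger_diameter}, and extracting the two-sided bound from the last YES and first NO), but these are refinements of the same idea rather than a different proof.
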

\begin{proof}
The algorithm considers the values $1,2,4,\ldots$ of $\lambda'$ one by one. For each such value, it applies the algorithm for $\lambda'$-edge connectivity verification, until the first value $\lambda'$ is encountered on which the verification algorithm returns ``No''. The algorithm then terminates and returns this value of $\lambda'$.
\end{proof}

\paragraph{Separation between MST and Approximate Minimum Cut.}
In the distributed graph theory literature, the problems of approximating the global minimum cut of a graph, and of computing an MST are considered to be more or less ``equivalent" in terms of their round complexities in general graphs. In fact, the classical algorithms for minimum cut are based on repeated application of MST computation.
As we will show, this is no longer the case when we consider moderately highly connected low-diameter graphs. 
In order to illustrate this gap, we compare the round complexities of computing an MST, and of approximating the value of the global minimum cut in $k$ edge-connected graphs of diameter $4$. Lotker, Pat-Shamir and Peleg \cite{LotkerPP06} showed that computing an MST in an $n$-vertex graphs of diameter $4$ may require $\Omega((n/\log n)^{1/3})$ rounds. Although their construction is a $2$-edge connected graph, we show that a slight modification of their construction gives a $k$-edge connected for any $k=O(n^{1/4})$. We prove the following theorem.
\begin{theorem}[MST Lower Bound in $k$-Connected Graphs]
For every large enough integer $n$ and an integer $k=O(n^{1/4})$, there exists an $n$-vertex $k$-edge connected graph of diameter $4$, for which computing an MST requires $\widetilde{\Omega}((n/k)^{1/3})$ rounds.
\end{theorem}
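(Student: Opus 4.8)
\emph{Proof plan.} The plan is to start from the diameter-$4$ MST lower-bound graph of Lotker, Patt-Shamir and Peleg~\cite{LotkerPP06}, shrink it to $\Theta(n/k)$ vertices, and then reinforce it with ``heavy'' (hence MST-irrelevant) edges so that it becomes $k$-edge connected and has exactly $n$ vertices, all while keeping its diameter equal to $4$ and without widening the communication bottleneck that drives the lower bound. I would first record the information-bottleneck form of the argument of~\cite{LotkerPP06}: for every integer $N$ there is an $N$-vertex graph $H$ of diameter at most $4$, a bipartition $V(H)=V_A\cupdot V_B$ with $|E_H(V_A,V_B)|=\widetilde O(N^{1/3})$, and a family of edge-weight assignments, parameterized by a pair $(x,y)$ of binary inputs with the weights incident to $V_A$ controlled by $x$ and those incident to $V_B$ controlled by $y$, such that whether a fixed edge $e^\star$ of $H$ lies in the (unique) minimum spanning tree of $H$ equals the value of a two-party function $\varphi(x,y)$ of communication complexity $\widetilde\Omega(N^{2/3})$. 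Since in any $T$-round \congest{} algorithm at most $\widetilde O(N^{1/3})$ bits cross the cut $(V_A,V_B)$ per round, computing an MST of $H$ requires $T=\widetilde\Omega(N^{2/3}/N^{1/3})=\widetilde\Omega(N^{1/3})$ rounds. I would instantiate this with $N:=\Theta(n/k)$; this is legitimate because $k=O(n^{1/4})$ makes $N$ a growing integer and, crucially, forces $k\le\widetilde O(N^{1/3})$ (equivalently $k^{4}\le\widetilde O(n)$), which is exactly the slack the reinforcement step needs.

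Next I would reinforce $H$ to $k$-edge connectivity at negligible cost to the bottleneck. Let $R$ be the graph on vertex set $V(H)$ obtained by placing a $k$-edge-connected graph $R_A$ with $O(k|V_A|)$ edges on $V_A$, an analogous graph $R_B$ on $V_B$, and an arbitrary set of exactly $k$ ``bridge'' edges between $V_A$ and $V_B$. Then $R$ is $k$-edge connected (any bipartition of $V(H)$ that splits $V_A$ is crossed by $\ge k$ edges of $R_A$; one that splits $V_B$ by $\ge k$ edges of $R_B$; and the bipartition $(V_A,V_B)$ by the $k$ bridge edges), hence so is $H\cup R$; adding edges does not increase the diameter, so $\diam(H\cup R)\le 4$; and $R$ adds only $k\le\widetilde O(N^{1/3})$ new edges across $(V_A,V_B)$, so $|E_{H\cup R}(V_A,V_B)|=\widetilde O(N^{1/3})$ still. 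Finally I would pad to $n$ vertices: add a set $Z$ of $n-N=\Theta(n)$ fresh vertices forming a graph of minimum degree larger than $|Z|/2$ (hence $k$-edge connected, since $k=O(n^{1/4})<|Z|/2$, and of diameter at most $2$), and join $Z$ by at least $k$ edges to the high-degree ``hub'' vertices of $H$ (which may be assumed to lie in $V_A$). A routine check---using that any cut of $G$ either splits $Z$, or splits $V(H)$, or separates $Z$ from $V(H)$---shows that the resulting graph $G$ has exactly $n$ vertices, is $k$-edge connected, and has diameter at most $4$. Every edge of $R$, every edge inside $Z$, and every $Z$--$H$ edge receives a weight larger than all edge weights of $H$.

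It then remains to transfer the lower bound to $G$. Because $H$ is connected and every newly added edge is strictly heavier than every edge of $H$, Kruskal's algorithm first exhausts $E(H)$, so $\mathrm{MST}(G)\cap E(H)=\mathrm{MST}(H)$; in particular $e^\star\in\mathrm{MST}(G)$ iff $e^\star\in\mathrm{MST}(H)$, and hence any $T$-round \congest{} MST algorithm for $G$ determines $\varphi(x,y)$. Simulating such an algorithm with Alice controlling the vertices of $V_A\cup Z$ and Bob controlling those of $V_B$ (the $Z$--$H$ edges land on the $V_A$ side and so do not cross the players' cut), the only information the players exchange in one round travels along the $\widetilde O(N^{1/3})$ edges of $G$ crossing the cut between $V_A\cup Z$ and $V_B$; after $T$ rounds at most $\widetilde O(T\cdot N^{1/3})$ bits have been exchanged, which must be $\widetilde\Omega(N^{2/3})$. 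Therefore $T=\widetilde\Omega(N^{1/3})=\widetilde\Omega\big((n/k)^{1/3}\big)$, as claimed.

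The main obstacle is conceptual rather than computational. Being $k$-edge connected forces \emph{every} cut of $G$---including the information bottleneck---to carry at least $k$ edges, which a priori might let a fast algorithm push all $\widetilde\Theta(N^{2/3})$ bits across it in few rounds. The resolution, and the only place the hypothesis $k=O(n^{1/4})$ is used, is that after the scaling $N=\Theta(n/k)$ the LPP bottleneck already has width $\widetilde\Theta(N^{1/3})\ge k$, so widening it to have $\ge k$ edges costs only a polylogarithmic factor, and bandwidth $\widetilde\Theta(N^{1/3})$ is still far too small to transfer $\widetilde\Theta(N^{2/3})$ bits in fewer than $\widetilde\Omega(N^{1/3})$ rounds. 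A secondary, purely technical point is to position the reinforcement graph $R$ and the padding set $Z$ so that the diameter remains $4$; since adding edges only helps the diameter and $Z$ hangs off the hub vertices of $H$, this is a routine verification that I would relegate to an appendix.
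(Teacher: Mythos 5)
There is a genuine gap at the very first step: the ``information-bottleneck form'' you attribute to Lotker--Patt-Shamir--Peleg does not exist in their diameter-$4$ lower bound, and you neither prove it nor exhibit a graph $H$ having it. In the LPP construction (a hub $c$ joined to $m$ star centers $u_1,\ldots,u_m$, plus $m^2$ disjoint paths of length $m$ attached column-wise to the stars, with $N\approx m^3$), \emph{every} cut separating the $s$-side from the $r$-side contains the $\Theta(m^2)=\Theta(N^{2/3})$ path edges of some column -- there is no bipartition with only $\widetilde O(N^{1/3})$ crossing edges. That is precisely why the known proof for constant diameter is \emph{not} a two-party sparse-cut simulation: it is a configuration-counting argument for the mailing problem, in which one tracks the set of reachable states of the ``tail'' subgraphs $Z_t$ and observes that the many column-crossing (path/matching) edges carry messages whose contents are already determined by the previous tail configuration, so only the $O(t)$ hub edges $(u_\ell,c)$ inject new information per round; this gives $\rho_t\le 2^{O(Bt^2)}$ reachable configurations and hence $\Omega(m/\sqrt{B})$ rounds. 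If you instead run your two-party simulation across the actual cut of the LPP graph (width $\Theta(N^{2/3})$, or $\Theta(kN^{2/3})$ after your reinforcement), the bound degenerates to $\widetilde\Omega(1)$. So the lemma your whole plan rests on -- ``cut of width $\widetilde O(N^{1/3})$, two-party function of complexity $\widetilde\Omega(N^{2/3})$ deciding membership of $e^\star$ in the MST, in a diameter-$4$ graph'' -- is unsupported, and the subsequent reinforcement, padding, and weight-transfer steps (which are individually reasonable) have nothing to transfer.

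For comparison, the paper's proof keeps the LPP topology and machinery intact and makes connectivity \emph{internal} to the hard instance rather than bolted on: each path vertex is replaced by a $k$-clique and consecutive cliques are joined by perfect matchings, with every star center adjacent to all clique vertices of its column; taking $m=\lceil (n/k)^{1/3}\rceil$ yields an $n$-vertex, $k$-edge-connected, diameter-$4$ graph. The lower bound is then re-proved from scratch in this graph: first $\Omega(m/\sqrt{\log n})$ rounds for the mailing problem via the tail-configuration count (the matching edges are ``type-1'' and determined by the configuration; only the $t$ edges $(u_\ell,c)$ are ``type-2''), then the extension to randomized algorithms by Yao's principle, and finally the standard LPP reduction from mailing to MST by encoding the $m^2$ input bits in the weights of the star $S_1$ (weights $0$ or $2$) against the weight-$1$ star $S_m$ and weight-$10$ intermediate stars. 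If you want to salvage your modular ``reinforce and pad'' strategy, you would still have to redo the configuration-counting argument in the reinforced graph (checking that the added edges are either determined by the tail configuration or few in number), which is essentially what the clique/matching construction in the paper is engineered to guarantee.
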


\begin{proof}
Our starting point is a graph $F_m$, that was used in the lower bound proof of \cite{LotkerPP06}. 
The graph consists of a root vertex $c$, and a collection $U=\set{u_1,\ldots,u_m}$ of additional vertices; we denote $s=u_1$ and $r=u_m$. Every vertex $u_i$ is connected to the root vertex $c$ with an edge. Additionally, the graph contains a collection $\pset=\set{P_1,\ldots, P_{m^2}}$ of $m^2$ disjoint paths of length $m$ each. For all $1\leq j\leq m^2$, we denote by $v^j_i$ the $i$th vertex on path $P_j$. For all $1\leq i\leq m$, we connect the vertex $u_i$ to all vertices $v^1_i,\ldots,v^{m^2}_i$.

Next, we modify the graph $F_m$ to make it $k$-edge connected; the resulting graph is denoted by $F_{m,k}$. 
In order to obtain the graph $F_{m,k}$, we start from the graph $F_m$. For all $1\leq j\leq m^2$ and $1\leq i\leq m$, we replace the vertex $v^j_i$ with a $k$-clique $V_{j,i}$. We connect the vertex $u_i$ to every vertex of $V_{j,i}$.
For all $1\leq j\leq m^2$ and $1\leq i < m$, we add an arbitrary perfect matching between the vertices of $V_{j,i}$ and the vertices of $V_{j,i+1}$; formally, if we denote the vertices of $V_{j,i}$ by $v_{j,i,1},\ldots, v_{j,i,k}$, then we add the edges between
$(v_{j,i,\ell})$ and $(v_{j,i+1,\ell})$ for every $1\leq j\leq m^2$, $1\leq i\leq m-1$, and $\ell \in \{1,\ldots,k\}$ (see Figure \ref{fig:MST-LB} for an illustration).

For all $1\leq i\leq m$, we let $S_i$ be the star graph, that is a sub-graph of the resulting graph $F_{m,k}$, induced by the vertex $u_i$, and all vertices in sets $V_{j,i}$, for $1\leq j\leq m^2$.

 We set $m=\lceil (n/k)^{1/3} \rceil$, so graph $F_{m,k}$ has $\Theta(n)$ vertices and diameter $4$. 
 We claim that graph $F_{m,k}$ is $k$-edge connected. Indeed, let $(X,Y)$ be any partition of the vertices of $F_{m,k}$ into two subsets. Assume for contradiction that $|E(X,Y)|<k$. Notice first that for all $1\leq i<i'\leq m$, there are at least $k$ edge-disjoint paths in graph $F_{m,k}$ connecting $u_i$ to $u_{i'}$. Therefore, all vertices $u_1,\ldots,u_m$ must lie on the same side of the cut. Assume w.l.o.g. that it is $X$. It is then easy to verify that, since $m\geq k$ (as $k=O(n^{1/4})$), vertex $c$ must also lie in $X$. Consider now some set $V_{j,i}$ of vertices, for some $1\leq j\leq m^2$ and $1\leq i\leq m$. Every vertex of $V_{j,i}$ is connected by an edge to the vertex $u_i$. Therefore, not all vertices of $V_{j,i}$ lie in $Y$. Moreover, for any partition of the vertices of $V_{j,i}$ into two subsets, at least $k-1$ edges must connect the two subsets (as every pair of vertices in a $k$-clique has $(k-1)$ edge-disjoint paths connecting them). Therefore, if $Y$ contains any vertex of $V_{j,i}$, then $|E(X,Y)|\geq k$ must hold (at least $k-1$ edges must connect vertices of $V_{j,i}$ lying on different sides of the cut, and additionally every vertex of $V_{j,i}\cap Y$ is connected to $u_i$.) Since $Y\neq \emptyset$, it must contain a vertex from some set $V_{i,j}$, and so $|E(X,Y)|\geq k$ must hold, a contradiction. We conclude that $F_{m,k}$ is $k$-edge connected.


We next turn to prove a lower bound on the number of rounds for computing an MST in the graph $F_{m,k}$. The proof follows that of \cite{LotkerPP06} almost exactly; we provide it here  for completeness.
The key idea in the proof of \cite{LotkerPP06} is to consider the \emph{mailing problem}: given the graph $F_{m,k}$, the source vertex $s=u_1$ is required to send an input set $X$ of $m^2$ bits to the destination vertex $r=u_m$. We adapt the argument of \cite{LotkerPP06} to show that this requires $\widetilde{\Omega}((n/k)^{1/3})$ rounds in $F_{m,k}$. Then, we deduce a lower bound on computing an MST in $F_{m,k}$, using the reduction provided in \cite{LotkerPP06}.

Given a graph $G$, a sender vertex $s$ and a receiver vertex $r$, together with an input $b$-bit string $X=x_1,\ldots, x_b$, we denote by  $\mail(G,s,r,X)$ the mailing problem of sending the string $X$ from $s$ to $r$. We denote by $\Mail(G,s,r,b)$ the collection of all problems $\mail(G,s,r,X)$, where $X$ is a string of length $b$. Given an algorithm $A$, we denote by $T_A(G,s,r,X)$ the number of rounds the algorithm takes to solve problem $\mail(G,s,r,X)$, and we denote by $T^b_A(G,s,r)$ the maximum, over all $b$-bit strings $X$, of $T_A(G,s,r,X)$.

\paragraph{Lower Bound for the Mailing Problem.} 

\begin{claim}\label{cl:det}
For any deterministic algorithm $A$ for the mailing problem in the \congest\ model, and any $m\geq 2$, $T^{m^2}_A(F_{m,k},s,r) = \Omega(m/\sqrt{\log n})$.
\end{claim}

\begin{proof}
For every $1\leq i\leq m$, we define a graph $Z_i$, called \emph{the $i$th tail of the graph $F_{m,k}$}. Graph $Z_i$ is the subgraph of $F_{m,k}$ induced by the nodes of $\{V_{j,\ell} ~\mid~ 1 \leq j\leq m^2,  i+1 \leq \ell\leq m\} \cup \{u_\ell ~\mid~ i+1 \leq \ell \leq m\} \cup \{c\}$. We let the $0^{th}$ tail $Z_0$ be the subgraph of $F_{m,k}$ induced by $V(F_{m,k})\setminus \{s\}$.

We now fix a deterministic algorithm $A$. Let $\phi_X$ denote the execution of $A$ on an $m^2$-bit input $X$ in the graph $F_{m,k}$ with sender $s$ and receiver $r$. Let $C_t(X)$ denote the vector of states of the nodes in the tail graph $Z_t$ at the end of round $t$ in the execution $\phi_X$; we refer to $C_t(X)$ as \emph{the configuration of $Z_t$ on $X$ in round $t$}. 

Define $\mathcal{C}_t=\{C_t(X) ~\mid~ X \mbox{~is an $m^2$-bit string}\}$ and let $\rho_t=|\mathcal{C}_t|$ be the number of distinct reachable configurations of $Z_t$ in round $t$.

\begin{claim}
	For all $0\leq t<m$,  $\rho_t \leq 2^{t(t+1)B/2}$, where $B=O(\log n)$ is the bandwidth of each edge in the \congest\ model. 
\end{claim}
\begin{proof}
	Observe first that for $t=0$, since the input string is known only to the sender $s$, all other nodes are in their initial states and thus $\rho_0=1$. 
	
	Next, we show that for all $t>0$, $\rho_{t+1}\leq \rho_t \cdot 2^{B\cdot t}$.
Indeed, consider some $t>0$, and a configuration $\hat{C} \in \mathcal{C}_t$. Recall that the tail set $V(Z_{t+1})$ is connected to the remainder of the graph by two sets of edges: (i) edges connecting cliques $V_{t,j}$ and $V_{t+1,j}$ for every $1\leq j \leq m^2$ -- we call them type-1 edges; 
and (ii) edges $(u_\ell,c)$ for $1\leq \ell \leq t$ -- we call them type-2 edges. 
Clearly, the number of type-$2$ edges is $t$.

 We now count the number of different configurations in $\mathcal{C}_{t+1}$ that may arise from the single configuration $\hat{C}\in \mathcal{C}_t$.
 
  The key observation is that, since the state of each node in $Z_t$ is determined by $\hat{C}$, the messages sent from the nodes of $Z_t$ to the nodes of $Z_{t+1}$ are fully determined by $\hat{C}$. In particular, all messages sent via type-1 edges, and via edges internal to $Z_{t+1}$ are completely determined by $\hat C$. The only additional messages are those sent along the $t$ type-2 edges. Since each such edge may carry at most $B$ bits, the total number of distinct messages sent along such edges is bounded by $2^{Bt}$. Therefore, at most $2^{Bt}$ different configuration in $\cset_{t+1}$ may arise from a single configuration in $\cset_t$, and so $\rho_{t+1}\leq 2^{Bt}\cdot \rho_t$.
  
  We conclude that for all $t\geq 0$, $\rho_t\leq 2^B\cdot 2^{2B}\cdots2^{(t-1)B}\leq 2^{t(t+1)B/2}$.
\end{proof}
	
Let $R=	T^{m^2}_A(F_{m,k},s,r) $; our goal is to show that $R= \Omega(m/\sqrt{B})$. 
If $R\geq m$, then we are done, so assume that $R<m$. Then there must be at least $2^{m^2}$ possible different states for the receiver $r$ at round $R$, so $\rho_R\geq 2^{m^2}$ must hold. Since we have assumed that $R< m$, we get that $2^{BR(R+1)/2}\geq 2^{m^2}$ must hold, that is $R=\Omega(m/\sqrt{B})$ as required.
\end{proof}

\paragraph{Extension to randomized algorithms.} Using standard
techniques, one can show that all Las-Vegas algorithms for the mailing problem admit the
same asymptotic lower bounds as deterministic algorithms. The proof is based on fixing a deterministic distributed mailing algorithm $A$ and establishing a slightly stronger claim.
\begin{claim}
For every $m\geq 2$ and for at least half of the possible $m^2$-bit input string $X$  of the mailing problem $T_A(F_{m,k},s,r, X) \geq \alpha m/\sqrt{B}$ for some constant $\alpha>0$ and $B=O(\log n)$.
\end{claim}
To see this simply define $\tau_{mid}$ to be the minimum integer such that $T_A(F_{m,k},s,r,X)\leq \tau_{mid}$ for at least half of the possible input string. We then have that $2^{B \cdot \tau_{mid}\cdot (\tau_{mid}+1)/2}\geq \rho_{\tau_{mid}}\geq 2^{m^2-1}$ and so $\tau_{mid}=\alpha m/\sqrt{B}$. Since the algorithm takes at least $\tau_{mid}$ rounds on half of the inputs, its expected running time over the uniform distribution of all instances in $\mail(F_{m,k},s,r,X)$ is $\Omega(\tau_{mid})$. The randomized round complexity follows by applying the Yao principle.

\paragraph{Reduction to MST.} We follow the exact same scheme of \cite{LotkerPP06}. 
We start with the unweighted base graph $F_{m,k}$, and define a family of weighted graphs $\mathcal{F}_{m,k}$ such that, if the mailing problem requires $\Omega(t)$ rounds on $F_{m,k}$, then algorithm for computing MST requires at least $\Omega(t)$ rounds on some graph in $\mathcal{F}_{m,k}$.

The edge weights of the graphs in $\mathcal{F}_{m,k}$ are set as follows. 
All edge weights, except for the edges that belong to the stars $S_1,\ldots,S_m$, are set to $0$. The edges of $S_m$ have weight $1$, and the edges of the stars $S_2,\ldots,S_{m-1}$ have weight $10$. Consider now the star $S_1$. For all $1\leq j\leq m^2$, let $v^j$ be an arbitrary distinguished vertex from the clique $V_{j,1}$. Let $E_1$ be the set of all edges of $S_1$ that connect $u_1$ to the distinguished vertices $v^1,\ldots,v^{m^2}$, and let $E_2$ contain the remaining edges of $S_1$. We set the weight of every edge in $E_2$ to $10$. Lastly, every edge in $E^1$ is given a weight of either $0$ or $2$. Specifically, for every binary string $X$ of length $m^2$, we define a graph $F(X)\in \fset_{m,k}$, by appropriately setting the weights of the edges in $E_1$: for all $1\leq j\leq m^2$, if the $j$th bit of $X$ is $0$, then we set the weight of the edge $(u_1,v^j)$ to be $0$, and otherwise we set it to $2$.

Consider now a minimum spanning tree in a graph $F(X)$. For each $1\leq j\leq m^2$, we can use the $0$-weight edges to connect the set $\bigcup_{1\leq i\leq m}V_{j,i}$ of vertices to each other, obtaining a connected component $C_j$. Similarly, we can use the $0$-weight edges to connect the vertices $u_1,\ldots,u_m$ to the vertex $c$, obtaining a connected component $C_0$. For each $1\leq j\leq m^2$, we now need to connect $C_j$ to $C_0$. If the $j$th bit of $X$ is $0$, then the cheapest way to do so is to employ the edge $(u_1,v^j)$. Otherwise, the cheapest way to connect $C_j$ to $C_0$ is to use one of the edges of $S_m$ that is incident to a vertex of $C_j$, whose weight is $1$.

Therefore, by solving the MST problem, the endpoint $r$ learns the $m^2$-bit input of $s$.  
\end{proof}

By combining this lower bound with the upper bound of Theorem \ref{lem:approx-cut-upper-bound}, we have:
\begin{corollary}[Separation between Distributed Min-Cut and MST]
In every $n$-vertex graph of diameter $4$, the approximate verification of the $k$-edge connectivity of the graph can be done in $\widetilde{O}(k^{10})$ rounds. In contrast the computation of MST on $4$-diameter graphs requires $\widetilde{\Omega}((n/k)^{1/3})$ on a $k$-edge connected subgraph. Thus for $k=o(n^{1/32})$, approximate verification of the $k$-edge connectivity is strictly faster than computing an MST.  
%
\end{corollary}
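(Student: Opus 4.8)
The plan is to simply combine the upper bound of Theorem~\ref{lem:approx-cut-upper-bound} with the MST lower bound established above, and then verify that the resulting bounds indeed separate when $k=o(n^{1/32})$. Since this is a direct corollary of two already-proved statements, there is no substantial new technical content; the one point requiring care is confirming that the polynomial gap between the two round complexities dominates the polylogarithmic factors hidden in the $\widetilde O$ and $\widetilde\Omega$ notation.

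First I would instantiate Theorem~\ref{lem:approx-cut-upper-bound} with diameter $D=4$. Since $D(D+1)/2=10$, that theorem yields a randomized algorithm for $O(\log n)$-approximate verification of $k$-edge connectivity running in $\widetilde O\big((k\log^2 n)^{10}\big)=\widetilde O(k^{10})$ rounds, where the $\log^{20} n$ factor is absorbed into the $\widetilde O$ notation; this establishes the first claim of the corollary. For the second claim I would invoke the MST lower bound theorem: for every $k=O(n^{1/4})$ there is an $n$-vertex $k$-edge connected graph of diameter $4$ on which any (Las Vegas randomized) distributed algorithm for MST requires $\widetilde\Omega\big((n/k)^{1/3}\big)$ rounds. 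Note that the hypothesis $k=o(n^{1/32})$ in particular implies $k=O(n^{1/4})$, so such a lower-bound graph exists.

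It remains to check the separation. Writing the upper bound as $O\big(k^{10}\cdot\mathrm{polylog}(n)\big)$ and the lower bound as $\Omega\big((n/k)^{1/3}/\mathrm{polylog}(n)\big)$, I would observe that, when $k=o(n^{1/32})$, we have $k^{10}=o(n^{10/32})=o(n^{30/96})$, while $(n/k)^{1/3}\ge \big(n^{1-1/32}\big)^{1/3}=n^{31/96}$. Hence the ratio of the lower bound to the upper bound is at least $n^{31/96}/\big(n^{30/96}\cdot\mathrm{polylog}(n)\big)=n^{1/96}/\mathrm{polylog}(n)$, which tends to infinity; in particular, for all sufficiently large $n$, the approximate verification round complexity is strictly (indeed polynomially) smaller than the MST round complexity on the $k$-edge connected diameter-$4$ graph provided by the MST lower bound theorem. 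This proves the third claim and completes the proof of the corollary.

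The only ``obstacle'', such as it is, is bookkeeping: one must make the polylogarithmic slack explicit, which is precisely why the statement uses the slightly loose exponent $1/32$ rather than the tight $1/31$ one obtains from equating exponents $10$ and $1/3\cdot(1-\text{(exponent of }k))$. Everything else is an immediate consequence of the two theorems.
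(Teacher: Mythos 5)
Your proposal is correct and is exactly the argument the paper intends: the corollary is stated as an immediate combination of Theorem~\ref{lem:approx-cut-upper-bound} with $D=4$ (so $D(D+1)/2=10$) and the MST lower bound of $\widetilde{\Omega}((n/k)^{1/3})$, with the exponent $1/32$ chosen loosely to absorb the polylogarithmic factors, just as you verify. Your arithmetic check ($k^{10}=o(n^{30/96})$ versus $(n/k)^{1/3}\ge n^{31/96}$) and the observation that $k=o(n^{1/32})$ implies $k=O(n^{1/4})$, so the lower-bound graph exists, cover the only points requiring care.
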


\begin{figure}[h!]
\begin{center}
	\includegraphics[scale=0.5]{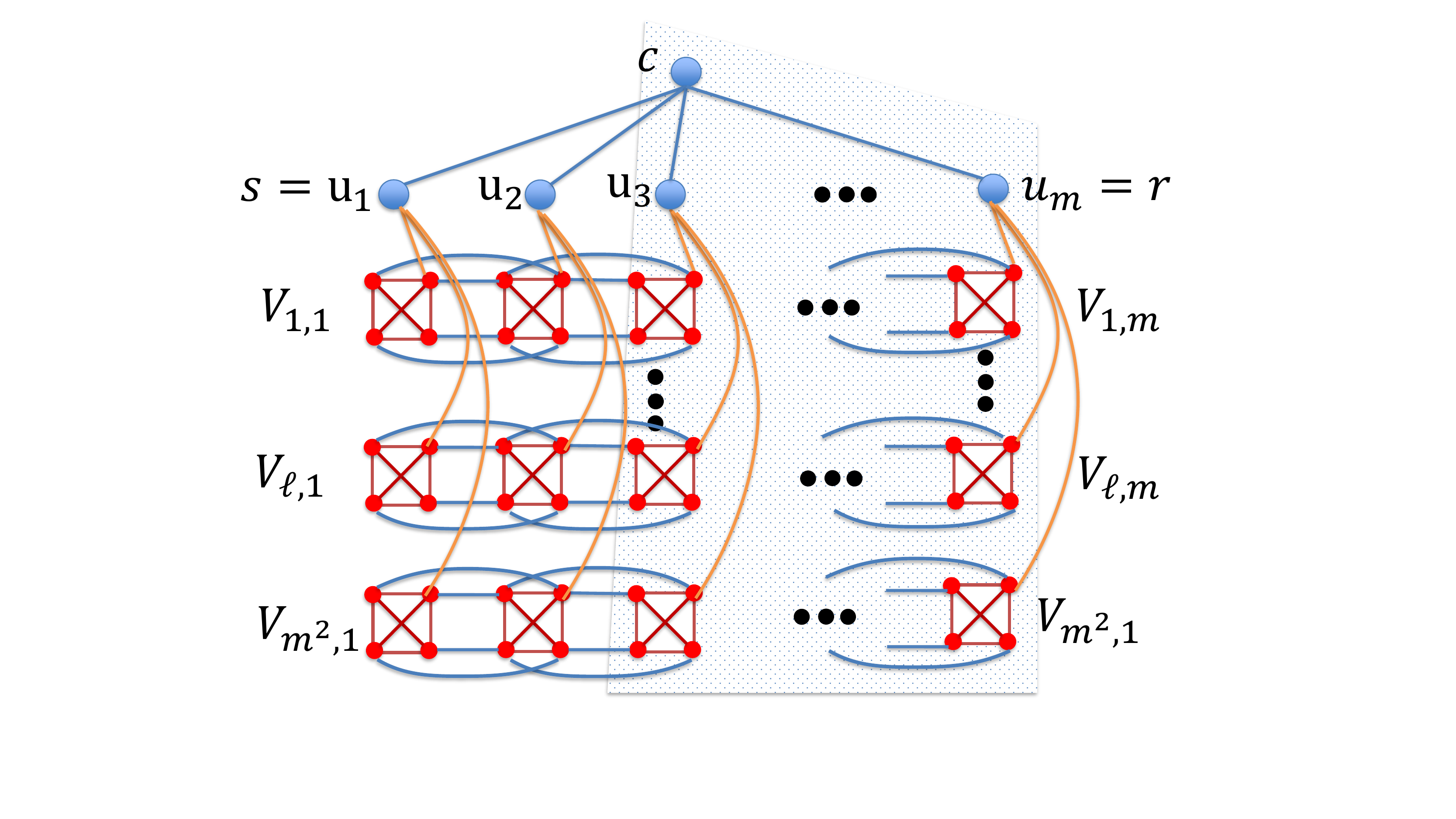}
	\caption{Illustration of the graph $F_{m,k}$, obtained by adapting of lower-bound graph $F_m$ from \cite{LotkerPP06}. Every vertex $u_i$ is connected to all clique vertices in $V_{i,j}$ for every $j \in \{1,\ldots, m^2\}$. For clarity of presentation, the figure shows only one edge to a particular clique member. The nodes in the shaded area are the tail sets $Z_2$.}
	\label{fig:MST-LB}
	\end{center}
\end{figure}

\subsection{Improved Low-Congestion Shortcuts with Applications}
Low-congestion shortcuts, introduced by Ghaffari and Haeupler \cite{GhaffariH16}, is a basic communication backbone that is used in algorithms for several optimization problems in the \congest\ model.
We start by providing a formal definition, and then show a distributed computation of shortcuts of improved quality, that leverages the graph connectivity.
 
\begin{definition}[Low-Congestion Shortcuts, \cite{GhaffariH16}]
Given a graph $G=(V,E)$, and a partition  $S_1,\ldots, S_N$ of $V$ into disjoint subsets, such that for all $1\leq i\leq N$, graph $G[S_i]$ is connected, an $(\alpha,\beta)$-shortcut is a collection $\{H_1,\ldots, H_N\}$ of subgraphs of $G$,  that  satisfy the following: 
\begin{itemize}
\item{(1)} for each edge $e \in E$, there are at most $\alpha$ subgraphs $G[S_i]\cup H_i$ containing $e$; and
\item{(2)} the diameter of each subgraph $G[S_i]\cup H_i$ is at most $\beta$.
\end{itemize}
\end{definition}
Ghaffari and Haeupler \cite{GhaffariH16} showed that the quality of several algorithms depends on the sum of $\alpha$ (i.e., congestion) and $\beta$ (i.e., the dilation). The quantity of $\alpha+\beta$ is usually referred to as the \emph{quality} of the shortcuts.
As observed by \cite{GhaffariH16} for every $n$-vertex graph $G$ and any collection of vertex-disjoint subsets $S_1,\ldots, S_N$, there exist $(\alpha,\beta)$ shortcuts for with $\alpha+\beta=O(D+\sqrt{n})$. This is also tight due to Das-Sarma et al. \cite{sarma2012distributed}. Shortcuts with improved quality 
 are known to exist for planar graphs \cite{GhaffariH16}, graphs with bounded pathwidth or treewidth \cite{haeupler2016near}, graphs with excluded minor \cite{haeupler2018minor} and graphs with small mixing time \cite{ghaffari2017distributed,GhaffariL18}.

Our key result is in providing a nearly optimal construction for low-congestion shortcuts in highly connected graphs of constant diameter. This immediately leads to improvements in a number of network optimization tasks. The input to the shortcut algorithm is a partition $\{S_1,\ldots, S_N\}$ of $V(G)$,  given in a distributed manner, that is, each vertex $S_i$ knows the ID of the set $S_i$ to which it belongs, where the ID is the largest vertex ID in $S_i$. At the end of the algorithm, each vertex $v \in S_i$ knows  all its neighbors in the augmented subgraph $G[S_i] \cup H_i$. 

Throughout we assume that we are given a $k$-edge connected graph and that all nodes know $k$ (or even a logarithmic approximation $k' \in [k/\log n,k]$). Alternatively, the nodes can first compute a $O(\log n)$-approximation of the size of the minimum cut within $\widetilde{O}((k\log^2 n)^{D(D+1)/2})$ rounds w.h.p. using a Cor. \ref{cor:min-cut-value-approx}. 
\begin{theorem}\label{lem:dist-shortcuts-construction}[Improved Shortcuts in Highly Connected Graphs]
There is a randomized algorithm that, 
for a sufficiently large $n$, given any $k$-connected $n$-vertex graph $G$ of diameter $D=O(\log n/\log\log n)$, together with a partition $\{S_1,\ldots, S_N\}$ of $V(G)$, such that for all $1\leq i\leq N$, $G[V_i]$ is a connected graph, computes $(\alpha,\beta)$ shortcuts, with $$\alpha+\beta=\widetilde{O}(\min\{\sqrt{n/k}+n^{D/(2D+1)}\},n/k),$$
in $\widetilde{O}(\alpha+\beta)$ rounds. Both the round complexity and the correctness hold with high probability.
\end{theorem}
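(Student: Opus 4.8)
The plan is to run the standard threshold-based shortcut construction, but to replace the single BFS tree that yields the classical $O(D+\sqrt n)$ bound by the family of $k$ low-diameter spanning subgraphs produced by Claim~\ref{cl:basic-dist-tool}, so that the many subgraphs can absorb the congestion. Fix a size threshold $\tau$ and a congestion parameter $\eta$, to be optimized at the end, and call a part $S_i$ \emph{small} if $|S_i|\le\tau$ and \emph{large} otherwise, so there are at most $n/\tau$ large parts. For every small part set $H_i=\emptyset$: since $G[S_i]$ is a connected graph on at most $\tau$ vertices it has diameter at most $\tau$, and since the parts are vertex-disjoint the small parts contribute congestion at most $1$ to each edge. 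For the large parts, invoke Claim~\ref{cl:basic-dist-tool} with congestion bound $\eta$ to obtain $k$ spanning subgraphs $G_1,\dots,G_k$, each of diameter $O\big((101k\ln n/\eta)^D\big)$ and total edge-congestion $O(\eta\log n)$; by absorbing the factor $101\ln n$ in the base into a redefinition of $\eta$ we may treat the diameter bound as a clean $(k/\eta)^D$ at the cost of only a $\log^{O(1)}n$ factor in the congestion, so that no $\log^{\Theta(D)}n$ factor ever appears (this is where the hypothesis $D=O(\log n/\log\log n)$ keeps everything polynomial). Distribute the large parts over $G_1,\dots,G_k$ in a balanced way (round-robin on part IDs, or a random hash), so each $G_j$ is assigned at most $\lceil (n/\tau)/k\rceil$ of them, and for a large part $S_i$ assigned to $G_j$ let $H_i$ be the edge set of a BFS tree of $G_j$.

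With this construction $G[S_i]\cup H_i$ has diameter at most $\beta:=\max\{\tau,\ O((k/\eta)^D)\}$, and the congestion on an edge $e$ is at most $1$ plus (number of $G_j$ containing $e$) times (number of large parts assigned to $G_j$), which is $\alpha:=\widetilde O\big(\eta\cdot\max\{1,\ n/(\tau k)\}\big)$. It then remains to minimize $\alpha+\beta=\widetilde O\big(\eta\max\{1,n/(\tau k)\}+\tau+(k/\eta)^D\big)$ over $1\le\eta\le k$ and $\tau\ge1$. I would first set $\tau\approx(k/\eta)^D$ to balance the two diameter terms and then, when $k\ge n^{1/(2D+1)}$, set $\eta\approx k/n^{1/(2D+1)}$ to balance congestion against diameter; the elementary computation gives $\alpha+\beta=\widetilde O(n^{D/(2D+1)})=\widetilde O(\sqrt{n/k}+n^{D/(2D+1)})$ in this range. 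When $k<n^{1/(2D+1)}$ one is forced to take $\eta=1$, in which case $\tau\approx\sqrt{n/k}$ makes $\tau$ and $n/(\tau k)$ equal and the residual diameter term $(k\log n)^D$ is dominated by $\sqrt{n/k}$ precisely in this regime, again giving $\widetilde O(\sqrt{n/k})$.

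To obtain the other half of the minimum I would run, in parallel, a variant in which the large parts are routed not over the parametrized subgraphs but over the $\lfloor k/2\rfloor$ spanning trees of Theorem~\ref{thm:small-depth-congestion-two}, which are edge-disjoint up to edge-congestion $2$ and have diameter $O((101k\ln n)^D)$; with the split threshold set to $\tau\approx n/k$ there are at most $k/2$ large parts, each can be given its own tree, and since each edge lies in at most two of these trees the large parts contribute congestion $O(1)$, so this variant achieves $\alpha=O(1)$ and $\beta=\widetilde O(\max\{n/k,(k\log n)^D\})=\widetilde O(n/k)$ whenever the $n/k$ term is the one that actually dominates in the target. The final output is whichever of the two shortcuts has the smaller value of $\alpha+\beta$; taking the better of the two gives the claimed bound $\widetilde O(\min\{\sqrt{n/k}+n^{D/(2D+1)},\ n/k\})$. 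I would still need to check carefully that the two variants jointly cover the entire range of $k$, and, if they do not, to supply a third construction (for instance a recursive application of the shortcut to a contracted graph) for the remaining intermediate connectivities; this bookkeeping, together with the two-variable optimization, is the bulk of the argument.

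For the round complexity: Claim~\ref{cl:basic-dist-tool} already delivers the spanning subgraphs (and, in the congestion-$2$ construction, the trees of Theorem~\ref{thm:small-depth-congestion-two}) in $\widetilde O(\alpha+\beta)$ rounds, with each edge knowing the indices of the $G_j$ containing it. Each part learns its ID (the largest vertex ID in it) by one aggregation inside $G[S_i]$, after which every vertex locally determines which subgraph or tree its part uses. Growing the BFS trees of all the $G_j$ at once, and letting every vertex collect its incident edges inside $G[S_i]\cup H_i$, are a collection of BFS-type tasks, each of dilation $\widetilde O(\beta)$ and sending $O(1)$ messages per edge; scheduling them with the random-delay technique of Theorem~\ref{thm:delay} runs them all in $O(\congestion+\dilation\cdot\log n)=\widetilde O(\alpha+\beta)$ rounds after $\widetilde O(\beta)$ rounds of pre-computation. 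If $k$ is not known, the nodes first compute an $O(\log n)$-approximation of it via Corollary~\ref{cor:min-cut-value-approx}, which costs $\widetilde O((k\log^2 n)^{D(D+1)/2})$ rounds and only affects the result by constant factors. The main obstacle is matching the $n/k$ bound in the high-connectivity regime: by the diameter guarantee of Claim~\ref{cl:basic-dist-tool}, any single spanning subgraph forced to have diameter at most $n/k$ already incurs congestion $\widetilde\Omega(n^{D/(2D+1)})$, so that bound cannot come from routing large parts through one low-diameter subgraph and genuinely requires the many-cheap-trees variant — and verifying that the parameter ranges of the two variants overlap is the delicate point.
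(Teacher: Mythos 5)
Your construction for the $\sqrt{n/k}+n^{D/(2D+1)}$ half of the minimum is essentially the paper's: a size threshold, empty shortcuts for small parts, $k$ sampled low-diameter subgraphs from Claim~\ref{cl:basic-dist-tool} with a congestion parameter $\eta$, a (hashed) balanced assignment of large parts to subgraphs, and a Chernoff bound; the paper's two cases ($\eta=1$ with threshold $\sqrt{n/k}$ when $k\leq n^{1/(2D+1)}/(101\ln n)$, and $\eta=101\ln n\cdot k/n^{1/(2D+1)}$ with threshold $n^{D/(2D+1)}$ otherwise) are exactly the parameter balancing you describe, and the hypothesis $D=O(\log n/\log\log n)$ is used there just as you anticipate.

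The genuine gap is the $n/k$ half of the minimum. You propose to obtain $\beta=\widetilde O(n/k)$ by giving each large part (threshold $\tau\approx n/k$) its own tree from the packing of Theorem~\ref{thm:small-depth-congestion-two}, and you hedge that this works ``whenever the $n/k$ term is the one that actually dominates in the target.'' But that is precisely the regime in which it fails: $n/k$ is the smaller term of the minimum only when $k\gtrsim n^{(D+1)/(2D+1)}$, and there the trees of Theorem~\ref{thm:small-depth-congestion-two} have diameter $\Theta\bigl((101k\ln n)^{D}\bigr)\geq n^{D(D+1)/(2D+1)}\gg n^{D/(2D+1)}\geq n/k$ (already for $D\geq 1$), so $G[S_i]\cup H_i$ does not have diameter $\widetilde O(n/k)$. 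No choice of parameters in either of your two variants repairs this — you correctly observe that a single spanning subgraph of diameter $n/k$ forces congestion $\widetilde\Omega(n^{D/(2D+1)})$, but the edge-disjoint trees you fall back on are simply too deep when $k$ is large. The missing idea (the paper's ``warmup'') does not use tree packing at all: set $H_i=\delta_G(S_i)$, i.e., all edges with exactly one endpoint in $S_i$. Each edge then lies in at most two augmented subgraphs, so $\alpha\leq 2$, and the diameter of each $G[S_i]\cup H_i$ is $O(n/k)$ by a counting argument: along a shortest path in $G[S_i]\cup H_i$, one can extract $\Omega(|P|)$ vertices of $S_i$ at pairwise distance at least $3$ in that subgraph; their $G$-neighborhoods are disjoint and, by $k$-edge-connectivity, each has size at least $k$, so $|P|=O(n/k)$. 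This elementary construction covers all $k$ simultaneously, which is why the paper never needs the delicate ``overlap of parameter ranges'' or a third recursive construction that you flag as unresolved.
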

In the remainder of this subsection, we prove Theorem \ref{lem:dist-shortcuts-construction}.

\paragraph{Warmup: Shortcuts with $\alpha=2$ and $\beta=O(n/k)$.} We first consider the simpler case of obtaining nearly-edge disjoint shortcuts of diameter $O(n/k)$. 
Given the input collection $S_1,\ldots, S_N$ of subsets of $V(G)$, for each $1\leq i\leq N$, we define the graph $H_i$ as follows: graph $H_i$ contains all vertices of $S_i$, and all neighbors of vertices of $S_i$ in $G$. The set of edges of $H_i$ consists of all edges that have one endpoint in $S_i$ and another endpoint outside of $S_i$. Clearly, we can compute all such graphs $H_i$ in a single communication round. We now turn to analyze the quality of the resulting shortcuts. For each $1\leq i\leq N$, we denote by $G_i=G[S_i]\cup H_i$.

Since the endpoints of each edge $(u,v)$ appear in the most two subsets $S_i$, it is easy to see that every edge of $G$ may belong to at most two graphs $G_i$. Next, we show that for each $1\leq i\leq N$, the diameter of $G_i$ is bounded by $12n/k$. Indeed, let $u,v$ be any pair of vertices in $G_i$, and let $P$ be the shortest path connecting $u$ to $v$ in $G_i$. Assume for contradiction, that $P$ contains at least $12n/k$ vertices. Since every vertex of $G_i$ either belongs to $S_i$, or has a neighbor in $S_i$, there is a subset $S'\subseteq V(P)$ of at least $6n/k$ vertices that belong to $S_i$. Moreover, there is a subset $S''\subseteq S'$ of at least $2n/k$ vertices, where for all $x,y\in S''$, the distance from $x$ to $y$ in $G_i$ is at least $3$. Since the path $P$ is a shortest $u$-$v$ path in $G_i$, and since the neighbors of all vertices in $S_i$ lie in $G_i$, for every pair $x,y\in S''$ of vertices, the set $\Gamma_G(x)$ of neighbors of $x$ in $G$, and the set $\Gamma_G(y)$ of neighbors of $y$ in $G$ must be disjoint. Since $G$ is $k$-edge connected, for each $x\in S''$, $|\Gamma_G(x)|\geq k$. Therefore:
$$|\bigcup_{x \in S''} \Gamma_{G}(x)|=\sum_{x \in S''} |\Gamma_{G}(x)|\geq |S''|\cdot k > n~,$$
a contradiction. We conclude that the diameter of each graph $G_i$ is at most $O(n/k)$. 

\paragraph{Improved Shortcuts for Smaller Connectivity.}
We now complete the proof of Theorem \ref{lem:dist-shortcuts-construction}.
Assume first that $k\geq n^{(D+1)/(2D+1)}$. In this case, $n/k\leq n^{D/(2D+1)}$, and from the above discussion, the claimed bounds on $\alpha+\beta$, and on the number of rounds hold. Therefore, we assume from now on that $k \leq n^{(D+1)/(2D+1)}$.
The algorithm distinguishes between two cases, depending on the value of the edge connectivity $k$. 
Define 
$$
T_{small}=
\begin{cases}
\sqrt{n/k}, \mbox{~~~if~} k\leq n^{1/(2D+1)}/(101\ln n)\\
n^{D/(2D+1)}, \mbox{~~~if~}  k\in (n^{1/(2D+1)}/(101 \ln n),n^{(D+1)/(2D+1)}].
\end{cases}
$$

At a high level, for every set $S_i$ of cardinality at most $T_{small}$, the algorithm defines $H_i=\emptyset$. 
For the remaining large sets $S_i$, of cardinalities at least $T_{small}$, the algorithm uses Claim \ref{cl:basic-dist-tool} with congestion bound $\eta=\max\{1,101 \ln n\cdot k/n^{1/(2D+1})\}$ in order to construct $k$ subgraphs $G_1,\ldots, G_k$ of $G$, of  diameter at most $(101k\ln n/\eta)^D$ each, that cause total congestion at most $O(\eta\log n)$, in $O(D+\eta\log n)$ rounds. 

The remaining $O(n/T_{small})$ subsets  $S_i$ of cardinality at least $T_{small}$ are handled as follows: each set $S_i$ chooses an index $j\in \set{1,\ldots,k}$ uniformly at random, and then sets $H_i=G_j$. We next describe the implementation details and then analyze the bounds obtained.

At the beginning of the algorithm, every vertex $v$ learns the identities of the sets $S_j$ of each of its neighbors. It then checks whether it is a vertex with largest ID in its subset $S_i$. If so, then $v$ initiates a construction of a BFS tree in $G[S_i]$, that continues up to depth $T_{small}$. Once the BFS tree reaches depth $T_{small}$, every vertex $x$ of $S_i$ that was reached in this last step checks  whether each of its neighbors that lies in $S_i$ has been explored by the BFS. If so, then $S_i$ is a small set; otherwise it is a large set. This information can be propagated back to all vertices that were explored by the BFS\footnote{If a vertex did not receive a message that it belongs to a small set within $O(T_{small})$ rounds, it knows that it is in a large set.}. If $S_i$ is a small set, then, since we set $H_i=\emptyset$, nothing else needs to be done.

We also run the algorithm from  Claim \ref{cl:basic-dist-tool} to compute a collection of spanning subgraphs $G_1,\ldots,G_k$,  of  diameter at most $(101k\ln n/\eta)^D$ each, that cause total congestion at most $O(\eta\log n)$, in $O(D+\eta\log n)$ rounds. 
Next, each large subset $S_i$ needs to select a subgraph $H_i \in \{G_1,\ldots, G_k\}$. 
We need to ensure that all vertices in $S_i$ make the same random decision for the selection of a graph $G_j$, and we will pick the graph $G_j$ in an almost uniform manner. In order to do so, we use  bounded-independence hash functions, see Definition \ref{def: d-wise independent}.
All nodes in the graph will share a short random seed of $\widetilde{O}(1)$ bits, that encodes a $\log n$-wise independent hash function $h:\{0,1\}^{c\log n} \to \{0,1\}^{\log k}$. 
Specifically, by Lemma \ref{lem: d-wise independent}, there is a family $\mathcal{H}$ of $O(\log n)$-wise independent hash functions $h:\{0,1\}^{c\log n} \to \{0,1\}^{\log k}\}$ such that choosing a random function from  $\mathcal{H}$ can be done with a seed length of size $O(\log^2 n)$. This $O(\log^2 n)$-length random seed is shared by all nodes within $O(D+\log^2 n)$ rounds. Let $h \in \mathcal{H}$ be the random function chosen by the shared random seed. 

Then, for each $1\leq i\leq N$, we let $H_i=G_j$, where $j=h(ID(S_i))$. By exchanging messages with its neighbors, each node in $S_i$ can learn all its neighbors in $G[S_i]\cup H_i$. 
This completes the description of the algorithm. We next analyze the correctness and the round complexity.

We first consider the case where $k \leq n^{1/(2D+1)}/(101\ln n)$, and thus $T_{small}=\sqrt{n/k}$ and the congestion bound is $\eta=1$. For every set $S_i$ of size at most $T_{small}$, the algorithm sets $H_i=G[S_i]$.
The remaining sets $S_i$, whose number is bounded by $n/T_{small}=\sqrt{k n}$, are randomly split among the $k$ subgraphs $G_1,\ldots, G_k$ of diameter $(101k\ln n)^D$ each, that cause edge-congestion $O(\log n)$. Using the Chernoff bound for bounded independence (see Theorem \ref{thm:d-wise chernoff}), we get that w.h.p. the edge congestion of the shortcut is bounded by $\alpha\leq O(\log^2 n\cdot \sqrt{n/k})$. In addition, the diameter of the shortcuts is at most $\beta\leq \sqrt{n/k}+(101k\ln n)^D=O(\sqrt{n/k}+n^{D/(2D+1)})$.

Next consider the case where $k \in [n^{1/(2D+1)}/(101\ln n), n^{(D+1)/(2D+1)}]$, and thus $T_{small}=n^{D/(2D+1)}$ and the congestion bound $\eta=101 \ln n \cdot k/n^{1/(2D+1)}$. Note for $D\leq c\cdot \ln n/\log\log n$ for some large constant $c\geq 1$, it holds that $n^{1/(2D+1)}\geq 101\ln^2 n$ and therefore $\eta \leq k/\ln n$.  

Finally, we describe a construction of BFS trees $T'_i$ in each $G[S_i]\cup H_i$ in parallel. This will make sure that the low-depth trees $T'_i$ that span $S_i$ are marked in the sense that each vertex knows its incident edges in each $T'_i$. It is sufficient to consider the case of large sets as for the small sets these trees were already computed.  Fix one such set $S_i$ and let $v$ be the vertex of large ID. We will build a BFS rooted at $v$ in $G[S_i]\cup H_i$ layer by layer, where in every step $j \geq 1$, we assume that we already have computed the first $j$ layers of the tree and that all vertices in layer $j$ know their neighbors in $G[S_i]\cup H_i$. For the base case of $j=1$, the root $v$ knows its edges in all the $G_\ell$ subgraphs and using the seed it can compute the index $\ell=h(ID(S_i))$. In the $j^{th}\geq 1$ step, the nodes of layer $j-1$ send a BFS message that contains the ID of $S_i$ (i.e., the ID of the root $v$) to all their neighbors in $G[S_i]\cup H_i$.
Each vertex $u$ that receives such BFS messages from the tree of $S_i$ can compute which of its incident edges are in $G[S_i]\cup H_i$. This is because in the output format of Claim \ref{cl:basic-dist-tool}  each vertex knows its edges in the subgraphs $G_1,\ldots, G_k$ and using the shared seed and the ID of $S_i$ it can compute the subgraph $G_\ell$ such that $\ell=h(ID(S_i))$. 

We now analyze the algorithm and first consider the quality of the shortcuts.
The algorithm computes $k$ subgraphs $G_1,\ldots, G_k$ of diameter $(101k\ln n/\eta)^D$ and congestion $O(\eta\log n)$. Using the Chernoff bound from Theorem \ref{thm:d-wise chernoff}, the edge-congestion of the shortcuts is bounded by 
$\alpha=O(\eta \cdot \log^2 n \cdot n/(T_{small}\cdot k))=\widetilde{O}(n^{D/(2D+1)})$. In addition, the diameter is bounded by $\beta\leq T_{small}+(101k\ln n/\eta)^D=\widetilde{O}(n^{D/(2D+1)})$, as desired. Finally we bound the number of rounds. Handling the small sets take $O(\sqrt{n/k})$ rounds as we build vertex-disjoint BFS trees up to depth $\sqrt{n/k}$. To handle the large sets, the algorithm applies Claim \ref{cl:basic-dist-tool} with congestion bound $\eta$, since it does not require the claim to output trees but rather subgraphs, this takes $O(D+\eta\log n)$ rounds w.h.p., where the high probability is on the quality of the output subgraphs and not on the running time. Sharing the random seed of length $O(\log^2 n)$ takes $O(D+\log^2 n)$. Finally, computing the BFS trees in each $G[S_i]\cup H_i$ takes $\widetilde{O}(\alpha+\beta)$ rounds w.h.p. This completes the proof of Theorem \ref{lem:dist-shortcuts-construction}.

\subsubsection{Applications of the Improved Shortcuts}
Using the improved shortcuts, we obtain a number of immediate improvements for various network optimization tasks. 
\begin{fact}[\cite{GhaffariThesis17}]
Let $\mathcal{G}$ be a graph family such that for each graph $G \in \mathcal{G}$ and
any partition of $G$ into vertex-disjoint connected subsets $S_1,\ldots,S_N$, one can find an 
$\alpha$ congestion $\beta$-dilation shortcuts such that $\max\{\alpha,\beta\}\leq K$ and this shortcuts can be computed in $\widetilde{O}(K)$ rounds.
Then: 
\begin{itemize}
\item{[Theorem 6.1.2]:} there is a randomized distributed MST algorithm that computes an MST in $\widetilde{O}(K)$ rounds, with high probability, in any graph from the family $G$. 
\item{[Theorem 7.6.1]:}  there is a randomized distributed algorithm that computes a $(1+\epsilon)$ approximation of the minimum cut in $\widetilde{O}(K)$ rounds, with high probability, in any graph from the family $G$. 
\end{itemize}
\end{fact}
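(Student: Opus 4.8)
The statement is (a formalization of) the Ghaffari--Haeupler shortcut framework, so my plan is not to reprove it from scratch but to recall how both algorithms reduce to a bounded number of invocations of a single \emph{part-wise aggregation} primitive, each of which runs in $\widetilde{O}(K)$ rounds once quality-$K$ shortcuts are available. Call a task a part-wise aggregation if we are given a partition of $V(G)$ into vertex-disjoint connected parts $S_1,\dots,S_N$, each vertex of $S_i$ holds an input value, and every part must learn a commutative-associative aggregate (min, sum, OR, \dots) of the values inside it. The first step of the plan is to show this costs $\widetilde{O}(K)$ rounds: compute $(\alpha,\beta)$-shortcuts with $\max\{\alpha,\beta\}\le K$ in $\widetilde{O}(K)$ rounds (this is exactly the hypothesis on the family $\mathcal{G}$, and in our setting is provided by Theorem~\ref{lem:dist-shortcuts-construction}), build a BFS tree inside each augmented part $G[S_i]\cup H_i$ (depth at most $\beta$), and pipeline the aggregation up and down these trees; since every edge of $G$ lies in at most $\alpha$ augmented parts, Theorem~\ref{thm:delay} schedules all $N$ aggregations simultaneously in $O(\alpha+\beta\log n)=\widetilde{O}(K)$ rounds.

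For the MST item, the plan is to run $O(\log n)$ Boruvka phases. In each phase the current fragments form a partition of $V(G)$ into connected subgraphs; one part-wise aggregation (minimum over the weights of outgoing edges, ties broken by edge identifiers) lets every fragment learn and announce its minimum-weight outgoing edge, and these edges are added to the MST-in-progress. Merging fragments along the chosen edges is a pointer-jumping / connected-components computation on the auxiliary ``fragment graph'', which is itself carried out by $O(\log n)$ further part-wise aggregations (relabelling each fragment by the identifier of the component it lands in). After $O(\log n)$ phases all fragments coalesce into one and the selected edges form a spanning tree of minimum weight; the total cost is $O(\log^2 n)\cdot\widetilde{O}(K)=\widetilde{O}(K)$.

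For the $(1+\epsilon)$-approximate minimum cut item, the plan follows the known reduction (\cite{ghaffari2013distributed,GhaffariThesis17}) to a small number of MST-type computations: (i) using Karger's sampling/tree-packing idea, compute $\widetilde{O}(1)$ random (maximum/maximal) spanning trees, each via the MST subroutine above in $\widetilde{O}(K)$ rounds, so that with high probability the global minimum cut $1$- or $2$-respects at least one of them; (ii) for each such tree $T$, find the cheapest cut that $1$- or $2$-respects $T$, which is implemented by part-wise aggregations whose parts are the pieces of a balanced hierarchical decomposition of $T$ into low-diameter sub-trees, so that the ``ancestor-sum'' and ``pairwise'' quantities needed to evaluate all $1$- and $2$-respecting cuts are gathered with $\widetilde{O}(1)$ aggregations per level; (iii) return the best cut found over all trees. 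Each stage costs $\widetilde{O}(K)$ and there are $\widetilde{O}(1)$ of them, so the total is $\widetilde{O}(K)$.

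I expect step (ii) of the min-cut reduction to be the main obstacle: evaluating all $2$-respecting cuts of a tree $T$ that may itself have large diameter in $G$ requires decomposing $T$ into low-diameter parts and computing, via the aggregation primitive, the $O(\log n)$ auxiliary quantities per part that suffice to reconstruct the value of every $2$-respecting cut, and one must argue that this decomposition-based scheme keeps the total edge-congestion within the $\alpha$ budget of the shortcuts rather than multiplying it by the number of decomposition levels. Everything else --- Boruvka's algorithm, the aggregation primitive, and the scheduling --- is routine given Theorem~\ref{thm:delay} and the assumed existence of quality-$K$ shortcuts.
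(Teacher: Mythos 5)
This statement is not proved in the paper at all: it is imported verbatim as a black-box Fact from Ghaffari's thesis (Theorems 6.1.2 and 7.6.1), so the only ``paper proof'' to compare against is the citation itself, i.e.\ the arguments of Ghaffari and Ghaffari--Haeupler. Measured against those, the first half of your proposal is on target: reducing everything to a part-wise aggregation primitive, implementing one aggregation in $\widetilde{O}(\alpha+\beta)=\widetilde{O}(K)$ rounds by building BFS trees in the augmented parts $G[S_i]\cup H_i$ and scheduling them with the random-delay theorem, and then running $O(\log n)$ Bor\r{u}vka phases (with the usual star-merging trick so that fragment relabelling is itself a few aggregations) is exactly how the cited MST result is obtained, and your accounting of $\widetilde{O}(K)$ total rounds is correct.

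The genuine gap is in your min-cut item, and you have in fact pointed at it yourself. Your step (ii) asks to evaluate, for each packed tree $T$, \emph{all} cuts that $2$-respect $T$ via a hierarchical decomposition of $T$, and you concede you do not know how to keep this within the congestion and round budget $\widetilde{O}(K)$. As written this step is unsupported: computing minimum $2$-respecting cuts distributively is a substantially harder problem that is not known to reduce to a polylogarithmic number of part-wise aggregations, and the techniques that eventually achieved it (much later work on exact distributed min cut) are quite different from the shortcut framework. Crucially, the cited proof of Theorem~7.6.1 is structured precisely to \emph{avoid} $2$-respecting cuts: following Nanongkai--Su as adapted to shortcuts by Ghaffari--Haeupler, one first uses Karger-style edge sampling so that the (rescaled) minimum cut of the skeleton is $O(\poly(\log n,1/\epsilon))$, then builds a greedy packing of $\widetilde{O}(\poly(1/\epsilon))$ spanning trees, where each packing step is an MST computation and hence costs $\widetilde{O}(K)$ rounds by the first item, and finally computes for each tree only the minimum cut that shares \emph{exactly one} edge with it; this $1$-respecting computation reduces to subtree-sum aggregations and is therefore again a few part-wise aggregations, and the packing guarantees that some tree in the collection certifies a $(1+\epsilon)$-approximate minimum cut with high probability. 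So your plan would need either a new argument for the $2$-respecting evaluation (which you do not provide) or to be rerouted through the $1$-respecting reduction; without one of these, the second bullet of the Fact is not established by your sketch.
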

Combining with our improved shortcuts for $k$-edge connected graphs, we get:
\begin{corollary}[Improved Distributed MST and $(1+\epsilon)$ Approx. Minimum Cut]
There is a randomized distributed algorithm, that, given a $k$-edge connected $n$-vertex graph of diameter $D$, computes an MST and $(1+\epsilon)$ approximation of the minimum cut
in $\widetilde{O}(\min\{\sqrt{n/k}+n^{D/(2D+1)},n/k\})$ rounds.
\end{corollary}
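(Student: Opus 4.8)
The plan is to derive the corollary as a direct composition of two ingredients already available in the excerpt: the improved low-congestion shortcut construction of Theorem~\ref{lem:dist-shortcuts-construction}, and the black-box reductions from shortcuts to MST and to $(1+\epsilon)$-approximate minimum cut recorded in the Fact from~\cite{GhaffariThesis17}. Concretely, I would set $K=\widetilde{O}(\min\{\sqrt{n/k}+n^{D/(2D+1)},n/k\})$ and observe that Theorem~\ref{lem:dist-shortcuts-construction} guarantees, for \emph{any} partition of $V(G)$ into vertex-disjoint connected subsets $S_1,\ldots,S_N$, an $(\alpha,\beta)$-shortcut with $\alpha+\beta=\widetilde{O}(K)$ computed in $\widetilde{O}(K)$ rounds, and hence with $\max\{\alpha,\beta\}\le K$ after absorbing polylogarithmic factors. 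Thus the family of $k$-edge connected $n$-vertex graphs of diameter $D$ meets the hypothesis of the Fact with parameter $K$; plugging this into Theorem~6.1.2 of~\cite{GhaffariThesis17} gives a randomized MST algorithm in $\widetilde{O}(K)$ rounds, and into Theorem~7.6.1 of~\cite{GhaffariThesis17} gives a randomized $(1+\epsilon)$-approximate minimum cut algorithm in $\widetilde{O}(K)$ rounds, with both correctness and round complexity holding with high probability, inherited from Theorem~\ref{lem:dist-shortcuts-construction}.

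Next I would dispatch the technical preconditions. Theorem~\ref{lem:dist-shortcuts-construction} assumes the nodes know $k$ (or a value $k'\in[k/\log n,k]$); if this is not given, the algorithm first runs the $O(\log n)$-approximation of the minimum-cut value from Corollary~\ref{cor:min-cut-value-approx}, costing $\widetilde{O}((k\log^2 n)^{D(D+1)/2})$ rounds, which for constant $D$ (and more generally in the regime $D=O(\log n/\log\log n)$ with $k=n^{\Omega(1)}$, where the MST-type bound $n^{D/(2D+1)}$ is the meaningful one) is dominated by or comparable to the main running time; alternatively one states the corollary under the standard assumption that a logarithmic estimate of $k$ is known, exactly as Theorem~\ref{lem:dist-shortcuts-construction} and Theorem~\ref{thm:mstdist} do. Using $k'$ in place of $k$ perturbs the shortcut parameters only by polylogarithmic factors, which are swallowed by $\widetilde{O}(\cdot)$. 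I would also note the hypothesis $D=O(\log n/\log\log n)$ of Theorem~\ref{lem:dist-shortcuts-construction} is implicitly in force (and for the $n/k$ branch, which uses the warm-up $\alpha=2,\beta=O(n/k)$ construction, no diameter restriction is needed at all).

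Finally I would emphasize that essentially all the content sits upstream: the new shortcuts of Theorem~\ref{lem:dist-shortcuts-construction} leverage the low-diameter tree packing of Claim~\ref{cl:basic-dist-tool}, itself built on Theorems~\ref{thm:Karger_diameter} and~\ref{thm:random_tree_planting}, and the shortcut-to-optimization reductions are pre-existing. The only points that genuinely need to be checked are interface points rather than proof obstacles: that Theorem~\ref{lem:dist-shortcuts-construction} produces valid shortcuts for an arbitrary input partition (it does — the partitions arising inside the MST and min-cut routines of~\cite{GhaffariThesis17} are unrestricted vertex-disjoint connected subsets), and that its output format — each vertex $v\in S_i$ learning its incident edges in $G[S_i]\cup H_i$, via the BFS-in-$G[S_i]\cup H_i$ step described in its proof — is exactly what the reductions consume. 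Hence the ``hard part'' is bookkeeping of the polylogarithmic overhead and of the $k$-knowledge assumption, not any new combinatorial or distributed argument.
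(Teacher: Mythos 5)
Your proposal matches the paper's own derivation: the corollary is obtained exactly by plugging the shortcut quality bound $K=\widetilde{O}(\min\{\sqrt{n/k}+n^{D/(2D+1)},n/k\})$ from Theorem~\ref{lem:dist-shortcuts-construction} into the black-box reductions of the Fact from~\cite{GhaffariThesis17} (Theorems~6.1.2 and~7.6.1), which is precisely what the paper does. Your added bookkeeping on the knowledge of $k$ (via Corollary~\ref{cor:min-cut-value-approx}) and the implicit diameter restriction only makes explicit assumptions the paper carries silently, so the argument is correct and essentially identical.
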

For $k$-connected graphs with constant diameter $D\geq 5$, this improves upon the state of the art of $O(\sqrt{n})$-rounds for the MST problem. For the $(1+\epsilon)$ approximate minimum cut problem, independently to our work, it is been briefly mentioned in \cite{ghaffari2019faster} (see footnote 4), that it is plausible to get an $\tilde O(\sqrt{n/k}+D)$-round solution. The formal proof of this fact is not yet provided in \cite{ghaffari2019faster}.
%
%
Furthermore, we stress that highly-connected graphs of small diameter might still have very poor expansion, and therefore they are not captured by the improved algorithms for fast mixing graphs\cite{GhaffariKS17,GhaffariL18}.

An additional immediate corollary of improved shortcuts is for computing an approximate SSSP. 
Haeupler and Li \cite{HaeuplerL18} provided improved algorithms for several shortest-path problems whose bounds depend on the quality of shortcuts. By plugging the bounds of Theorem \ref{lem:dist-shortcuts-construction} into Corollaries 2,3 in \cite{HaeuplerL18} we get:

\begin{corollary}[Improved Distributed SSSP Tree Algorithms]\label{cor:imp-sssp}
There are randomized algorithms, that, given a $k$-edge connected $n$-vertex weighted graph with polynomial edge weights of diameter $D$ perform the following tasks:
(1) compute a spanning tree that approximates distances to a given source vertex to within factor $(\log n)^{O(1/\epsilon)})$, in $\widetilde{O}(\min\{\sqrt{n/k}+n^{D/(2D+1)},n/k\})\cdot n^{\epsilon})$ rounds for any constant $\epsilon$; and  (2) compute  a spanning tree that approximates distances to a given source vertex within factor $2^{O(\sqrt{\log n})}$,
in $\widetilde{O}(\min\{\sqrt{n/k}+n^{D/(2D+1)},n/k\})\cdot 2^{O(\sqrt{\log n})})$ rounds.
\end{corollary}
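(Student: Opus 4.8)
The plan is to derive this corollary by invoking the reduction of Haeupler and Li~\cite{HaeuplerL18}, who show that several approximate shortest-path problems reduce, essentially in a black-box fashion, to the availability of good low-congestion shortcuts. Their framework is parameterized by a single quantity $Q$, the \emph{shortcut quality} of the underlying graph family: $Q$ is a bound such that, for every graph in the family and every partition of its vertex set into vertex-disjoint connected subsets, an $(\alpha,\beta)$-shortcut with $\max\{\alpha,\beta\}\le Q$ exists and, moreover, can be constructed distributively in $\widetilde{O}(Q)$ rounds. Their Corollary~2 then produces a spanning tree approximating source distances to within $(\log n)^{O(1/\epsilon)}$ in $\widetilde{O}(Q\cdot n^{\epsilon})$ rounds, and their Corollary~3 produces a $2^{O(\sqrt{\log n})}$-approximate distance tree in $\widetilde{O}(Q\cdot 2^{O(\sqrt{\log n})})$ rounds.

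The first step is therefore to certify that the class of $k$-edge connected $n$-vertex graphs (with polynomial edge weights) of diameter $D$ admits shortcut quality $Q=\widetilde{O}(\min\{\sqrt{n/k}+n^{D/(2D+1)},\,n/k\})$, together with an $\widetilde{O}(Q)$-round construction. For the range of $D$ covered by Theorem~\ref{lem:dist-shortcuts-construction}, i.e. $D=O(\log n/\log\log n)$, this is exactly the content of that theorem; and the $n/k$ branch of the minimum holds unconditionally, since the "warmup" construction in the proof of Theorem~\ref{lem:dist-shortcuts-construction} already yields $\alpha\le 2$ and $\beta=O(n/k)$ in $O(1)$ rounds for arbitrary $D$. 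Plugging this $Q$ into the two corollaries of~\cite{HaeuplerL18} then yields parts~(1) and~(2) with the claimed approximation factors and round complexities by direct substitution.

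The only point requiring genuine care — and the main obstacle — is to verify that the Haeupler--Li reduction is indeed \emph{constructive} and treats the shortcut routine as a black box: their algorithms adaptively partition the vertex set (via a hierarchy of clusterings / low-diameter decompositions), repeatedly call a shortcut oracle on the current partition, and otherwise only perform broadcast/aggregation along the augmented subgraphs $G[S_i]\cup H_i$. One must check that Theorem~\ref{lem:dist-shortcuts-construction} supplies precisely what each such call needs: it guarantees, for \emph{any} partition into connected parts, both the existence of shortcuts of quality $Q$ and their $\widetilde{O}(Q)$-round computation, and it returns each $G[S_i]\cup H_i$ in the "marked" form where every vertex knows its incident augmented edges — which is the interface the aggregation steps assume. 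Given this, every oracle call fits within the round budget, the total round count becomes the $\widetilde{O}(Q)$-times-overhead bound of~\cite{HaeuplerL18}, and the corollary follows; beyond this verification no further computation is needed, other than confirming that the polynomial-weight assumption matches the one under which~\cite{HaeuplerL18} states its results.
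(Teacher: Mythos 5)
Your proposal matches the paper's argument: the corollary is obtained exactly by plugging the shortcut quality bound of Theorem~\ref{lem:dist-shortcuts-construction} (with the $n/k$ branch available from the warmup construction) into Corollaries 2 and 3 of \cite{HaeuplerL18}, which is all the paper does. Your additional verification that the Haeupler--Li reduction uses the shortcut construction as a black box on arbitrary connected partitions is a reasonable elaboration of the same route, not a different approach.
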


Finally, Dory and Ghaffari \cite{approxMinWeight} recently studied the distributed approximation of minimum weight two-edge connected subgraphs ($2$-EECS). By plugging our shortcut bounds into Theorem 1.2 of \cite{approxMinWeight}, we get:
\begin{corollary}[Improved Approximation of $2$-EECS]\label{cor:twoeecs}
There is an algorithm, that, given a $k$-edge connected $n$-vertex weighted graph of (unweighted) diameter $D$, computes an $O(\log n)$-approximation of the weighted 2-ECSS in $\widetilde{O}(\min\{\sqrt{n/k}+n^{D/(2D+1)},n/k\})$ rounds, with high probability.
\end{corollary}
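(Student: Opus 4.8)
The plan is to derive the corollary directly from the shortcut-based framework of Dory and Ghaffari~\cite{approxMinWeight} combined with the improved shortcut construction of Theorem~\ref{lem:dist-shortcuts-construction}. Recall that Theorem~1.2 of~\cite{approxMinWeight} provides a randomized distributed algorithm that, given a weighted graph $G$ together with a routine that, for any partition of $V(G)$ into vertex-disjoint connected clusters, computes $(\alpha,\beta)$-shortcuts of quality $\max\{\alpha,\beta\}\le K$ in $\widetilde{O}(K)$ rounds, outputs an $O(\log n)$-approximation of the minimum-weight $2$-edge connected spanning subgraph in $\widetilde{O}(K)$ rounds with high probability. Thus it suffices to instantiate $K$ with the shortcut quality available in a $k$-edge connected $n$-vertex graph of unweighted diameter $D$.

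First I would invoke Theorem~\ref{lem:dist-shortcuts-construction}: when $D=O(\log n/\log\log n)$, it yields, for every partition $\{S_1,\dots,S_N\}$ of $V(G)$ into connected clusters, $(\alpha,\beta)$-shortcuts with
\[
\alpha+\beta=\widetilde{O}\!\left(\min\{\sqrt{n/k}+n^{D/(2D+1)},\,n/k\}\right),
\]
computable in $\widetilde{O}(\alpha+\beta)$ rounds, where the round complexity and correctness hold with high probability. Setting $K=\widetilde{O}(\min\{\sqrt{n/k}+n^{D/(2D+1)},n/k\})$ and feeding this shortcut routine into the framework of~\cite{approxMinWeight} gives exactly the claimed bound. (The framework is phrased with a single parameter $K\ge\max\{\alpha,\beta\}$; since $\max\{\alpha,\beta\}\le\alpha+\beta\le 2\max\{\alpha,\beta\}$ the two formulations agree up to constants, so no adjustment is needed.) For the remaining case $D=\omega(\log n/\log\log n)$, I would fall back on the ``warmup'' shortcut construction in the proof of Theorem~\ref{lem:dist-shortcuts-construction}, where each $H_i$ consists of all edges leaving $S_i$: this gives $\alpha=2$ and $\beta=O(n/k)$ in a single communication round, with no assumption on $D$, so $K=\widetilde{O}(n/k)$, which matches the $\min$ expression.

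The only point requiring care — and the closest thing to an obstacle here — is verifying that the $2$-ECSS algorithm of~\cite{approxMinWeight} is genuinely of the black-box ``shortcut-quality'' type, and in particular that the diameter parameter enters only through the shortcut construction, which operates on the \emph{unweighted} graph (this is precisely why polynomial edge weights do not degrade the bound). I expect this to be a routine check rather than a real difficulty: the other applications in this section — MST (Theorem~\ref{thm:mstdist}), $(1+\epsilon)$-approximate minimum cut, and approximate SSSP (Corollary~\ref{cor:imp-sssp}) — are obtained in exactly the same plug-in manner, and~\cite{approxMinWeight} explicitly states its round complexity as a function of shortcut quality. Consequently the corollary carries no substantial new technical content and follows immediately by composing Theorem~\ref{lem:dist-shortcuts-construction} with the result of~\cite{approxMinWeight}.
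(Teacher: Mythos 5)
Your proposal matches the paper's own derivation: the paper obtains this corollary exactly by plugging the shortcut bounds of Theorem~\ref{lem:dist-shortcuts-construction} into Theorem~1.2 of~\cite{approxMinWeight}, just as you describe. Your additional remark that the large-$D$ regime can be covered by the warmup $(2,O(n/k))$-shortcut construction is a sensible (and correct) elaboration of a detail the paper leaves implicit, but it does not change the argument.
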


\subsubsection{Lower Bound of Shortcuts}

\begin{theorem}[Lower Bounds for Shortcuts in Highly Connected Graphs]\label{lem:lb-shortcuts}
For every integer $k$ and sufficiently large integer $n$, there exists a $k$-edge connected $n$-vertex graph $G$ with diameter $2D=O(\log_k n)$ and a partition of its vertices into subsets $S_1,\ldots, S_N$, each
inducing a connected subgraph of $G$, such that regardless
of how the shortcut subgraphs $H_i$ are chosen, if each
$G[S_i] \cup H_i$ has diameter at most $D'<\min\{\sqrt{n/k}+(n/D)^{D/(2D+1)}, n/k\}$, then there is
at least one edge that suffers a congestion of at least $D'$.
\end{theorem}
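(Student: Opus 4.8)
The plan is to derive this lower bound from the tree‑packing lower bound of Theorem~\ref{thm:diameterlowerbound} by a direct reduction, supplemented by a classical gadget for the $\sqrt{n/k}$ regime. Recall the graph $G'_{w,D}$ built in the proof of Theorem~\ref{thm:diameterlowerbound}: a simple $k$-edge connected graph of diameter at most $2D+2$, padded to exactly $n$ vertices, whose vertices are partitioned into $N\approx w^{D}$ super‑nodes $X_1,\dots,X_N$, each a $k$-clique, with $X_i$ and $X_{i+1}$ joined by a perfect matching (the \emph{red} edges), plus a sparse tree‑like family of \emph{blue} edges $T_{w,D}$ living on the first copies $\{x^1_i\}$. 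I would take $D$ to be the parameter of the theorem (absorbing the $+2$ in the diameter into the $O(\cdot)$), and use as the partition $\{S_1,\dots,S_k\}$ the \emph{horizontal layers}: $S_t=\{x^t_i : 1\le i\le N\}$ for $t\ge 2$, which induces a single thick path through $G$, and $S_1=\{x^1_i : 1\le i\le N\}\cup(\text{padding vertices})$, which induces a connected subgraph containing all of $T_{w,D}$.

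The reduction proceeds as follows. Suppose we are handed shortcut subgraphs $H_1,\dots,H_k$ with $\diam(G[S_t]\cup H_t)\le D'$ for all $t$, and assume for contradiction that every edge has congestion at most $c:=D'-1$. For each $t$ fix a shortest path $Q_t$ from $x^t_1$ to $x^t_N$ in $G[S_t]\cup H_t$, so $|E(Q_t)|\le D'$; crucially, since both $G[S_t]$ and $H_t$ are subgraphs of $G$, each $Q_t$ is an honest path of $G$ (it may of course use vertices outside $S_t$). Now I would apply the cut‑counting argument from the proof of Theorem~\ref{thm:diameterlowerbound} essentially verbatim: for each leaf super‑node index $i$, the cut $(W_i,\overline{W_i})$ with $W_i=\bigcup_{s\le i}X_s$ contains at most $Dw$ blue edges; a blue edge lies in $G[S_1]$ once and in at most $c-1$ of the $H_t$, so at most $Dwc$ of the paths $Q_t$ cross this cut through a blue edge, and the remaining $\ge k-Dwc$ paths cross it through a red edge. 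Summing over the $\ge N/2$ leaf cuts, $\sum_t|E(Q_t)|\ge (N/2)(k-Dwc)$, hence some $Q_t$ has length at least $(N/2)(k-Dwc)/k$. Choosing $w$ small enough that $Dwc\le k/2$ this is $\ge N/4\ge w^{D}/4$, contradicting $|E(Q_t)|\le D'$ as soon as $D'<w^{D}/4$. Unwinding: if $D'<\min\{w^{D}/4,\ k/(2Dw)\}$, then the assumed bound $c=D'-1$ is impossible, i.e.\ some edge has congestion $\ge D'$. Since $w$ is ours to pick when constructing $G$ (subject to $kw^{D}\le n$ and $w\ge 2$), optimizing — balancing the two terms at $w\approx(k/D)^{1/(D+1)}$, or maxing $w\approx(n/k)^{1/D}$ when $k$ is large — makes $\min\{w^{D}/4,\ k/(2Dw)\}$ equal to $\Theta\bigl(\min\{(n/D)^{D/(2D+1)},\ n/k\}\bigr)$, which is the claimed threshold up to the $\sqrt{n/k}$ summand.

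To account for the $\sqrt{n/k}$ term — the dominant one only when $k$ is small — I would glue onto the above graph a $k$-thickened copy of the classical diameter-$O(\log n)$, $\Omega(\sqrt n)$ shortcut lower‑bound construction of Das‑Sarma et al.~\cite{sarma2012distributed}: replace each vertex by a $k$-clique and each edge by a $k$-matching (keeping it $k$-edge connected with unchanged, poly‑logarithmic diameter), use the images of the original parts as the partition, and observe that each such part induces a thick path of length $\Theta(\sqrt{n/k})$ whose diameter can only be brought down to $D'$ by routing $\Omega(\sqrt{n/k}/D')$ chords through the gadget's low‑degree hub, so the hub edges collectively incur congestion $\Omega\bigl((\sqrt{n/k})^2/(D'\cdot\#\text{hub edges})\bigr)$, forcing $D'=\Omega(\sqrt{n/k})$ unless some edge has congestion $\ge D'$. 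Joining this gadget to the tree‑backbone graph by a perfect matching between one super‑node of each keeps the union $k$-edge connected and of diameter $O(\log_k n)$, yielding a single graph that certifies all three terms of the $\min$.

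The main obstacle I anticipate is precisely this last step: the $\sqrt{n/k}$ term together with the diameter bookkeeping. One must check that in the regime where $\sqrt{n/k}$ dominates (i.e.\ $k$ small enough), the thickened Das‑Sarma gadget still fits within $n$ vertices and keeps diameter $O(\log_k n)$, and that gluing it to the tree‑backbone graph preserves $k$-edge connectivity and the diameter bound; it also requires making the chord‑counting argument for that gadget precise (in particular, that \emph{every} useful shortcut chord must pass through the hub). Everything else is a re‑packaging of the already‑proved Theorem~\ref{thm:diameterlowerbound}: the only genuinely new observation is that a shortcut family of congestion $c$ and diameter $D'$ yields, through shortest paths between layer endpoints, exactly the low‑congestion low‑diameter collection of paths that Theorem~\ref{thm:diameterlowerbound}'s cut‑counting argument forbids.
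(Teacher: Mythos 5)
Your core observation --- that a shortcut family of diameter $D'$ and congestion $c$ yields a low-congestion collection of short paths which the cut-counting argument of Theorem~\ref{thm:diameterlowerbound} forbids --- is exactly the engine of the paper's proof, but your choice of partition (the $k$ horizontal layers of $G'_{w,D}$) makes the reduction quantitatively too weak, and the final optimization claim is false. With only $k$ parts you get only $k$ paths $Q_t$, one per layer, so the leaf-cut argument tolerates congestion only up to $c\le k/(2Dw)$, and the best threshold you can certify is $D'=\Omega\bigl(\min\{w^D,\,k/(Dw)\}\bigr)$, which after optimizing over $w$ is $\Theta\bigl(\min\{(k/D)^{D/(D+1)},\,n/k\}\bigr)$ --- a bound depending on $k$, not on $n$. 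This matches the claimed $(n/D)^{D/(2D+1)}$ only at the very top of the intermediate range of $k$; for example with $D=1$ and $k=\Theta(n^{1/3})$ the theorem promises a threshold of $\Theta(n^{1/3})$ while your construction certifies only $\Theta(\sqrt{k})=\Theta(n^{1/6})$, and indeed at the bottom of the intermediate regime your tolerated congestion $k/(2Dw)$ is $O(1)$ even though the theorem must tolerate congestion up to $D'\approx (n/D)^{D/(2D+1)}$. The missing idea is to spend the vertex budget on \emph{many more parts than $k$}: the paper attaches $q=\Theta(n/w^D)$ vertex-disjoint paths of length $|L^*|=w^D$ to the leaf cliques, one edge from each path position into the corresponding clique, and takes these $q$ paths as the parts $S_i$. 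A shortcut family of diameter $D'$ and congestion at most $D'$ then yields $q$ short $s$--$t$ paths of congestion at most about $D'$, and sampling each path independently with probability $k/q$ leaves about $k$ paths whose expected congestion has been \emph{diluted} by the factor $k/q$ down to the $\eta$ that the cut-counting statement (Claim~\ref{cl:hlp}) can absorb; only this sampling step lets the tolerated congestion be as large as $D'$ itself, which is what the theorem requires.

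The second gap is the $\sqrt{n/k}$ term, which you defer to an unconstructed $k$-thickened Das-Sarma gadget glued onto the backbone, with an acknowledged unproven step ("every useful shortcut chord must pass through the hub"); the classical construction has no single low-degree hub that all chords must traverse, so this step is not a routine verification. In the paper no second gadget is needed: the same attached-paths graph, instantiated with $w^D=\sqrt{n/k}$ so that there are $q=\Theta(\sqrt{kn})$ path parts of length $\sqrt{n/k}$, together with the same sampling argument and Claim~\ref{cl:hlp}, already yields the $\sqrt{n/k}$ bound (and the attached paths keep the diameter $O(D)$ since every path vertex is adjacent to a leaf clique of the backbone). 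So the obstacle you flag is real under your plan, but it disappears once the partition is taken to be the attached paths rather than the horizontal layers.
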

To show the lower bound argument, we first describe a modification of the lower bound graph construction 
from Sec. \ref{sec: lower bound proof sketch}. Roughly speaking, the resulting modified graph can be viewed as a combination of the lower bound graph from Das Sarma et al. \cite{sarma2012distributed} with the construction of Section \ref{sec: lower bound proof sketch}. 

\paragraph{A Useful Modification of the Lower Bound Graph from Sec. \ref{sec: lower bound proof sketch}.} We first describe the construction of a lower bound graph $G^*_{k,\alpha,\eta,D}$. For simplicity we denote $G^*=G^*_{k,\alpha,\eta,D}$. The graph $G^*$ is obtained by first taking the graph $G'_{w,D}$ (see the proof of Theorem~\ref{thm:diameterlowerbound} in Section~\ref{sec: lower bound proof sketch}) for $w=k/(2D\alpha\eta)$ and adding to it a collection of $q=\lfloor n/(2w^D)\rfloor$ paths $\mathcal{P}=\{P_1,\ldots, P_q\}$. 
 
Recall that $G'_{w,D}$ is the graph obtained from $G_{w,D}$
by first replacing each vertex $v_i$ with a set $X_i=\{x_i^1,x_i^2\ldots,x_i^k\}$ of $k$ vertices that form a clique, and then replacing, for each $1\le i<N$, the $k$ red edges connecting $v_i$ to $v_{i+1}$ by the perfect matching $\{(x^t_i,x^t_{i+1})\}_{1\le t\le k}$ between vertices of $X_i$ and vertices of $X_{i+1}$, and finally, replacing each blue edge $(v_i,v_j)$ by a edge $(x^1_i,x^1_j)$. We denote by $E'$ the set of edges that replace the blue edges of $G_{w,D}$.

Recall that $L$ is the set of leaf nodes of $G_{w,D}$. 
We define $L^*\subseteq V(G^*)=\{x_i^1\mid v_i\in L\}$.
So $|L^*|=w^D$.
Let $s=x^1_1$, and $t=x^1_{N-D}$ be vertices of $L^*$ that have lowest and largest index, respectively. Then, the vertices of paths of $\mathcal{P}$ are connected to the nodes of $L^*$ as follows: for each $1\le i\le |L^*|$ and for each $j \in \{1,\ldots, q\}$, there is an edge connecting each node of $X_{r_i}$ (where $r_i$ is the $i$th smallest index of vertices of $L$) to the $i^{th}$ node of each path $P_j$. 
See Figure~\ref{fig:shorcut-lb} for an illustration.
\begin{claim}\label{cl:hlp}
Any set of $k/\alpha$ paths connecting $s$ to $t$ in $G^*$ that causes edge-congestion at most $\eta$ must contain at least one path of length at least $|L^*|/2$.
\end{claim}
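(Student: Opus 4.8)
The plan is to adapt the double-counting argument behind Theorem~\ref{thm:diameterlowerbound}, enlarging the family of cuts so that it also controls the extra routing provided by the attached paths $\mathcal{P}$. Recall that $G^*$ is $G'_{w,D}$ (with $w=k/(2D\alpha\eta)$, an even integer with $w\ge 2$ since $k/(4D\alpha\eta)\in\mathbb{Z}$) together with $q$ paths $P_1,\dots,P_q$, each on $|L^*|=w^D$ vertices. Write $\ell_1<\dots<\ell_{w^D}$ for the post-order indices of the leaves of $T_{w,D}$, so $\ell_1=1$, $\ell_{w^D}=N-D$, $s\in X_{\ell_1}$, $t\in X_{\ell_{w^D}}$, and the $i$th vertex $p^m_i$ of $P_m$ is joined to all of $X_{\ell_i}$. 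The idea is to exhibit $|L^*|-1$ cuts that every $s$--$t$ path must cross, and then charge, for each path in the given collection, a distinct edge to each cut it crosses ``non-trivially''.

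First I would define, for $1\le i\le|L^*|-1$, the cut $(W_i,\overline{W_i})$ with $W_i=\bigl(\bigcup_{1\le s\le\ell_i}X_s\bigr)\cup\{p^m_1,\dots,p^m_i: 1\le m\le q\}$ --- the first $\ell_i$ cliques plus the first $i$ vertices of every path. Since $\ell_i<\ell_{w^D}$ we have $s\in W_i$ and $t\notin W_i$, so every $s$--$t$ path contains an edge of $E_{G^*}(W_i,\overline{W_i})$. The key step (and, I expect, the main obstacle) is to classify these crossing edges: clique edges never cross; a red matching edge $(x^t_a,x^t_{a+1})$ crosses $W_i$ iff $a=\ell_i$, giving the $k$ edges $(x^t_{\ell_i},x^t_{\ell_i+1})$; a path edge $(p^m_\ell,p^m_{\ell+1})$ crosses iff $\ell=i$, giving the $q$ edges $(p^m_i,p^m_{i+1})$; the number of crossing ``blue'' edges of $E'$ is at most $Dw$, by the argument of Observation~\ref{obs:cutedge}/Corollary~\ref{cor: number of cut edges} applied to the leaf $v_{\ell_i}$ (which is a leaf distinct from the last one, as $\ell_i<N-D$); and crucially \emph{no path-to-clique edge crosses}, since $p^m_\ell$ is joined to $X_{\ell_\ell}$ and $p^m_\ell\in W_i\iff\ell\le i\iff\ell_\ell\le\ell_i\iff X_{\ell_\ell}\subseteq W_i$, so both endpoints of such an edge always lie on the same side. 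This ``in-phase'' cancellation of path-to-clique edges is exactly why the first $i$ path vertices are put into $W_i$; verifying it (and re-checking the $Dw$ bound) is the delicate part, and everything after is bookkeeping.

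Given the classification, I would finish as follows. Let $\mathcal{Q}=\{Q_1,\dots,Q_{k/\alpha}\}$ be $s$--$t$ paths of congestion at most $\eta$. For a fixed $i$, at most $Dw\eta$ paths of $\mathcal{Q}$ can cross $W_i$ using only blue edges (at most $Dw$ such edges, each on at most $\eta$ paths), so at least $k/\alpha-Dw\eta$ of them cross $W_i$ via a red or a path edge; charge one such edge of each of these paths to the index $i$. Because a red edge crossing $W_i$ is one of $(x^t_{\ell_i},x^t_{\ell_i+1})$ and a path edge crossing $W_i$ is one of $(p^m_i,p^m_{i+1})$, and each such edge crosses only $W_i$, the charged edges are distinct across different $i$; hence $|E(Q_j)|\ge\#\{i:\text{ }Q_j\text{ crosses }W_i\text{ via a red/path edge}\}$. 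Summing over $j$ and interchanging the order of summation, $\sum_j|E(Q_j)|\ge(|L^*|-1)\bigl(k/\alpha-Dw\eta\bigr)$. Finally $Dw\eta=D\cdot\frac{k}{2D\alpha\eta}\cdot\eta=\frac{k}{2\alpha}$, so $k/\alpha-Dw\eta=\frac{k}{2\alpha}$ and $\sum_j|E(Q_j)|\ge(|L^*|-1)\cdot\frac{k}{2\alpha}$; averaging over the $k/\alpha$ paths, some $Q_j$ has length at least $(|L^*|-1)/2$, and since $|L^*|=w^D$ is even while path lengths are integers, this is at least $|L^*|/2$, as claimed.
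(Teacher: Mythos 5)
Your proof is correct and follows essentially the same route as the paper's: the same cuts (a prefix of cliques together with the matching prefix of every attached path, which is exactly why path-to-clique edges never cross), the same $Dw$ bound on crossing blue edges via Observation~\ref{obs:cutedge}, and the same congestion/summation count giving total length at least about $|L^*|\cdot k/(2\alpha)$. Your treatment is in fact slightly more careful than the paper's (explicit edge classification, the $|L^*|-1$ versus $|L^*|$ cut count, and the parity step recovering $|L^*|/2$), but it is the same argument.
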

\begin{proof}
We use similar arguments in the proof of Theorem~\ref{thm:diameterlowerbound}.
By the definition of $G^*$, $V(G^*)=\left(\bigcup_{j=1}^N V(P_j)\right)\cup V(G'_{w,D})$. 
For each $i$ such that the node $v_i\in L$, we define $Q_i$ to be the set of vertices of $\bigcup_{1\le j\le q}V(P_j)$ that is connected to $x^1_i$, and
we let $W_i=\bigcup_{1\le t\le i} (X_i\cup Q_i)$ and $\overline{W_i}=V(G^*)\setminus W_i$.

It can be shown (similar to Corollary~\ref{cor: number of cut edges}) that, for each $i$ such that $v_i\in L$, $|E_{G^*}(W_i,\overline{W_i})\cap E'|\le Dw$. Since the $k/\alpha$ paths between $s$ and $t$ in $G^*$ cause edge-congestion $\eta$, at most $Dw\eta$ of them may contain edges in $E_{G^*}(W_i,\overline{W_i})\cap E'$. So each of the remaining $\frac{k}{\alpha}-Dw\eta\ge\frac{k}{2\alpha}$ paths either contains an edge from $\left(\bigcup_{1\le j\le q}E(P_j)\right)$ that crosses the cut $(W_i,\overline{W_i})$, or contains an edge of the perfect matching $\{(x^t_i,x^t_{i+1})\}_{1\le t\le k}$ between vertices of $X_i$ and vertices of $X_{i+1}$.
Therefore, the sum of lengths of the $k/\alpha$ paths of is at least $|L^*|\cdot \frac{k}{2\alpha}$, and therefore at least one path must have length at least $|L^*|/2$.
\end{proof}
\subsubsection*{Proof of Theorem \ref{lem:lb-shortcuts}}
The proof is divided into three parts.  First, we show lower bound of $\sqrt{n/k}$. Then, we show a lower bound of $(n/D)^{D/(2D+1)}$ for intermediate values of the edge connectivity $k$. Finally, we show a lower bound of $n/k$ provided that $k$ is sufficiently large.

\paragraph{Lower Bound of $\sqrt{n/k}$.}
We first show that there exists a $k$-edge connected $n$-vertex graph with diameter $D=O(\log_k n)$ as well as vertex-disjoint subsets $S_1,\ldots, S_q$, such that in every shortcut for these subsets, either the congestion is $\Omega(\sqrt{n/k})$, or that the diameter is at least $\Omega(\sqrt{n/k})$. 
The lower bound graph is given by $G^*=G^*_{k,\alpha,\eta,D}$ where $w$ and $D$ are chosen such that $w=k/(2D\alpha\cdot \eta)$ for $\alpha=2$ and $\eta=\log n$ and $w^D=\sqrt{n/k}$, thus $D=O(\log_k n)$. Recall that $L$ is the set of leaf nodes in $G_{w,D}$, and $|L|=w^D$.
Let $S_i=V(P_i)$ for every $i \in \{1,\ldots, q\}$ (recall that $q=\lfloor n/(2w^D)\rfloor=\Theta(\sqrt{k n})$). Assume that there are shortcuts for the sets $\{S_i\}_{1\le i\le q}$ with diameter $D'$ such that the total congestion is at most $K<\sqrt{n/k}$. We will show that these shortcuts give a collection of $q$ paths $\mathcal{P}=\{P'_1,\ldots, P'_q\}$ from $s$ to $t$ of length $D'+2$ each and causes edge-congestion at most $K$. To see this, for each $1\le i\le q$, let $u_i,v_i$ be the endpoints of $P_i$, we define $P'_i$ to be the concatenation of the $u_i-v_i$ shortest path in $G[S_i]\cup H_i$ and the edges $(s,u_i)$ and $(t,v_i)$. That is, $P'_i=(s, u_i) \circ \pi(u_i,v_i,G[S_i]\cup H_i) \circ (v_i,t)$ for every $i \in \{1,\ldots, q\}$. 

We next claim that there exists a subset $\mathcal{P}' \subseteq \mathcal{P}$ of at least $k/2$ paths of total congestion at most $\eta$. From Claim~\ref{cl:hlp}, this implies that one of these paths have length $\Omega(\sqrt{n/k})$ and thus the diameter of at least one of the subgraphs $G[S_i]\cup H_i$ is at least $\Omega(\sqrt{n/k})$ as well.
To see this, we sample each $P'_i$ with probability $k/q$. Thus, in expectation, $k$ paths are sampled with congestion $K \cdot k/2q\leq \sqrt{n/k}\cdot \Theta(k/(\sqrt{k n}))=O(1)\le \eta$. Therefore such a subset exists and the claim follows. 

\begin{figure}[h!]
\begin{center}
	\includegraphics[scale=0.5]{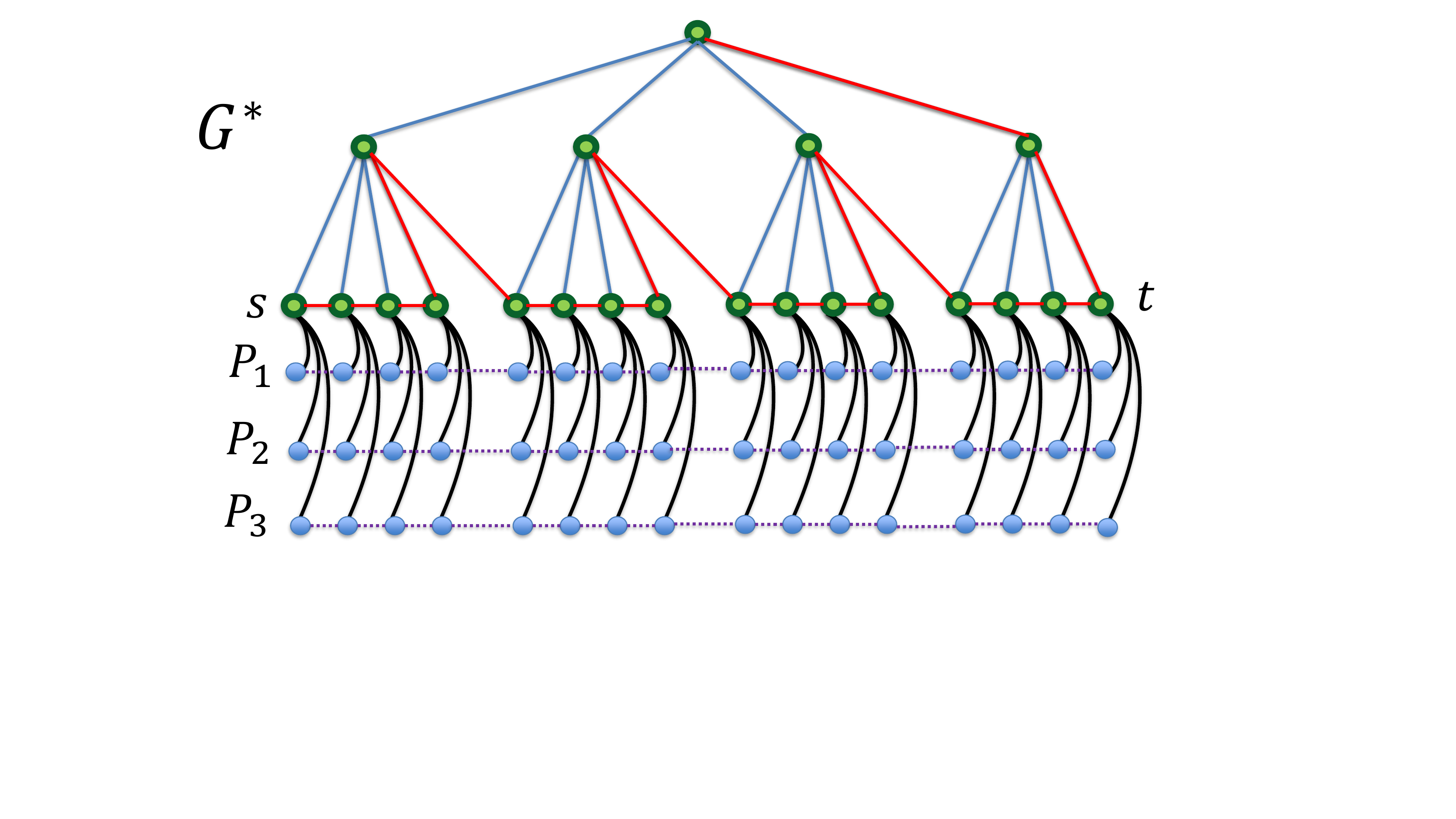}
	\caption{A lower bound graph for low-congestion shortcuts in highly connected graphs.}
	\label{fig:shorcut-lb}
	\end{center}
\end{figure}

\paragraph{Lower Bound of $\Omega{(n/D)^{D/(2D+1)}}$ for $k \in [4D \cdot n^{1/(2D+1)}, (n/D)^{(D+1)/(2D+1)}/10]$.}
Given an edge connectivity value $k$, a number of nodes $n$ and a diameter value $D$, we build the lower bound graph $G^*=G_{k,\alpha,\eta,D}$ where $\alpha=2$ and $\eta=k/(4D \cdot (n/D)^{1/(2D+1)})$. The set of marked leaf nodes $L^*$ contains $w^D$ vertices where $w=k/(2D\alpha\cdot \eta)$, thus $|L^*|=(n/D)^{D/(2D+1)}$. Note that $|L^*|\cdot 2k < n/2$ and therefore by setting the constants carefully, the graph $G^*$ contains $n$ vertices in total. Recall that $G^*$ contains a collection of $q=n/(2L^*)$ paths $P_1,\ldots, P_q$ of length $|L^*|$. 
Let $S_i=V(P_i)$ for every $i \in \{1,\ldots, q\}$. Assume that there are shortcuts for the $S_i$ sets such that the total congestion is at most $K=(n/D)^{D/(2D+1)}$ and with diameter $D'$. As in the previous paragraph, these shortcuts implies a collection of $q$ paths $\mathcal{P}=\{P'_1,\ldots, P'_q\}$ from $s$ to $t$ of total congestion $K$ and length at most $D'+2$. 

We next claim that there exists a subset $\mathcal{P}' \subseteq \mathcal{P}$ of at least $k/2$ paths of total congestion $\eta$. From Claim~\ref{cl:hlp}, this implies that one of these paths have length $\Omega(|L^*|)$ and thus the diameter of at least one of the subgraphs $G[S_i]\cup H_i$ is at least $\Omega(|L^*|)=\Omega(n^{D/(2D+1)})$.
To see this, we sample each $P'_i$ with probability $k/q$. Thus, in expectation, there are $k$ paths and the congestion is $K \cdot k/q \leq (n/D)^{D/(2D+1)} \cdot k/q=\eta/\log n$. Therefore such a subset exists and the claim follows. 

\paragraph{Lower Bound of $n/k$ for $k=\Omega((n/D)^{(D+1)/(2D+1)}\cdot \log n)$.} Set $\eta=k^{1+1/D}/(12D\cdot n^{1/D})$ and $\alpha=2$ and let $G^*=G_{k,\alpha,\eta,D}$. Recall that for $w=k/(2D\alpha\eta)$, the number of marked leaf nodes is $|L^*|=w^D=n/(10k)$. Therefore there are at least $q \in [k,7k]$ paths $P_1,\ldots, P_q$ of length $|L^*|$ in $G^*$. 

Assume towards contradiction that there are shortcuts of congestion at most $K\leq n/(10k)$ of length at most $D'\leq (n/10k)$. Hence there are $q$ paths $\mathcal{P}=\{P'_1,\ldots, P'_q\}$ from $s$ to $t$ of length at most $D'+2$ and congestion at most $K$. We next show that this implies that there is a subset of $k/4$ paths $\mathcal{P}' \subseteq \mathcal{P}$ of with congestion at most $\eta$. From Claim~\ref{cl:hlp}, at least one of these paths have length at least $|L^*|/2=\Omega(n/k)$.

Sample each path $P_i \in \mathcal{P}$ with probability $k/q$ into $\mathcal{P}'$. In expectation the number of sampled paths is $k$ and the congestion is $k/q \cdot K\leq n/(2k)\leq \eta/\log n$ by plugging the bound on the value of $k$. We therefore get that w.h.p. there exists a collection of $k/2$ paths in $\mathcal{P}$ with congestion at most $\eta$. From Claim~\ref{cl:hlp}, at least one of these paths have length at least $|L^*|/2=\Omega(n/k)$.
This completes the proof for Lemma \ref{lem:lb-shortcuts}.

\subsection{Improved Bounds for the Information Dissemination Task}
In the information dissemination task, we are given a $k$-edge connected $n$-vertex graph $G$ of diameter $D$, with two special nodes $s$ and $t$. The source $s$ receives as input a sequence of $N$ bits, which it needs to send to $t$ as fast as possible. 

Ghaffari and Kuhn \cite{ghaffari2013distributed} showed a lower bound of $\Omega(\min\{N/\log^2 n,n/k\})$ rounds for the special case where the diameter $D$ of the graph is logarithmic in $n$, that is, $D=\Theta(\log n)$.

We consider the setting where the diameter $D$ is sub-logarithmic in $n$, and provide the first upper and lower bounds for this setting. We start with the following upper bound.

\begin{lemma}[Upper Bound for Information Dissemination]
There is a randomized algorithm, that, given any $k$-edge connected $n$-vertex graph $G$ of diameter $D$ with a source vertex $s$ and a destination vertex $t$, sends an input sequence of $N$ bits from $s$ to $t$. The number of rounds is bounded by $\widetilde{O}(N^{1-1/(D+1)}+N/k)$ with high probability.
\end{lemma}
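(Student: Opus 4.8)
The plan is to combine the tunable tree packing of Claim~\ref{cl:basic-dist-tool} with pipelined routing along tree paths and the random-delay scheduling of Theorem~\ref{thm:delay}. Fix a congestion parameter $\eta\in[1,k]$, to be optimized at the end. Invoking Claim~\ref{cl:basic-dist-tool} with this $\eta$, we compute, in $\widetilde{O}\big((101k\ln n/\eta)^{D}\big)$ rounds and with high probability, a collection $\set{T_1,\ldots,T_k}$ of $k$ spanning trees of $G$, each rooted at $t$, so that every edge of $G$ lies in $O(\eta\log n)$ of the trees and each $T_i$ has depth at most $d:=O\big((101k\ln n/\eta)^{D}\big)$. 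For each $i$, let $P_i$ be the path from $s$ to the root $t$ obtained by following parent pointers in $T_i$; then $|P_i|\le d$.

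Next I would set up the routing. The source $s$ partitions its $N$-bit input into $k$ consecutive blocks $B_1,\ldots,B_k$ of at most $\ceil{N/k}$ bits each, and assigns $B_i$ to $P_i$. Let $A_i$ be the algorithm that pipelines $B_i$ along $P_i$ from $s$ to $t$; since $|P_i|\le d$ and $|B_i|\le\ceil{N/k}$, algorithm $A_i$ finishes in $O(N/k+d)$ rounds and sends at most $\ceil{N/k}$ messages over any single edge. For a fixed edge $e$, the number of algorithms $A_i$ whose path uses $e$ is at most the number of trees containing $e$, i.e.\ $O(\eta\log n)$, so $A_1,\ldots,A_k$ together send $O(\eta\log n\cdot N/k)$ messages over $e$. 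Applying Theorem~\ref{thm:delay} with $\dilation=O(N/k+d)$ and $\congestion=O(\eta\log n\cdot N/k)$, we can run all the $A_i$ in $\widetilde{O}\big(\eta N/k+d\big)$ rounds after $\widetilde{O}(d)$ rounds of precomputation; at the end $t$ holds all $k$ blocks and reconstructs the input. Including the tree construction, the total round complexity is $\widetilde{O}\big(\eta N/k+(101k\ln n/\eta)^{D}\big)$.

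It remains to choose $\eta$. Up to polylogarithmic factors, the two summands are balanced at $\eta\approx k/N^{1/(D+1)}$; accordingly I would take $\eta=\max\big\{1,\ \min\{k,\ \ceil{k/N^{1/(D+1)}}\}\big\}$. When $N^{1/(D+1)}$ lies between a polylogarithmic factor and $k$, both $\eta N/k$ and $(101k\ln n/\eta)^{D}$ are $\widetilde{O}(N^{1-1/(D+1)})$, and also $N/k\le N^{1-1/(D+1)}$ there, so the bound is $\widetilde{O}(N^{1-1/(D+1)})$. When $N$ is large enough that the optimum would force $\eta<1$, we instead set $\eta=1$: the trees then cause total edge-congestion $O(\log n)$ and have depth $\widetilde{O}(N/k)$ in this range, so the bound becomes $\widetilde{O}(N/k)$. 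The degenerate case in which the optimum would force $\eta>k$ has $N$ only $\poly\log n$, and is handled trivially by shipping the whole input along a single BFS tree of $G$ in $O(D+N)$ rounds. Combining the three cases yields $\widetilde{O}\big(N^{1-1/(D+1)}+N/k\big)$.

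The conceptual work --- packing $k$ spanning trees with a controllable diameter/congestion trade-off --- is already done in Claim~\ref{cl:basic-dist-tool} (which itself rests on Theorems~\ref{thm:Karger_diameter} and~\ref{thm:random_tree_planting}). Thus the real content here is just (i) phrasing the $k$ pipelined transmissions as bounded-dilation, bounded-congestion subroutines so that Theorem~\ref{thm:delay} applies directly, and (ii) the elementary optimization over $\eta$ with clamping at $\eta=1$ and $\eta=k$. I expect the main (if routine) obstacle to be the bookkeeping in (ii): verifying in each of the three parameter regimes that the polylogarithmic overheads --- and, for non-constant $D$, the $(101\ln n)^{D}$ factor that enters through the tree depth --- are absorbed into the $\widetilde{O}$. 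This is a calculation and involves no new idea.
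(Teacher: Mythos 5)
Your proposal is correct and follows essentially the same route as the paper: apply Claim~\ref{cl:basic-dist-tool} with a tunable congestion parameter $\eta$, split the $N$ bits into $k$ blocks routed along the $k$ trees, schedule all transmissions via Theorem~\ref{thm:delay}, and balance $\eta N/k$ against $(101k\ln n/\eta)^{D}$ by choosing $\eta\approx k/N^{1/(D+1)}$ with clamping. The only (cosmetic) difference is that the paper sets $\chi=N^{1/(D+1)}/(101\ln n)$ and $\eta=\max\{1,k/\chi\}$, folding the $101\ln n$ factor into the case threshold so that in the $\eta=1$ regime the tree depth is bounded by $N^{D/(D+1)}\le N/k$ directly, which sidesteps the $(101\ln n)^{D}$ absorption issue you flag for non-constant $D$.
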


\begin{proof}
	We use a parameter $\chi=N^{1/(D+1)}/(101\ln n)$, and we set $\eta=\max\{1, k/\chi\}$. By applying the algorithm from Claim \ref{cl:basic-dist-tool} to $G$, with the congestion bound $\eta$, we obtain a collection $\tset=\set{T_1, \ldots, T_k}$ of $k$ spanning trees, such that, with high probability, all trees have diameter at most $(101\ln n \cdot k/\eta)^D$, and the total edge congestion due to $\tset$ is at most $O(\eta \cdot\log n)$.  We partition the input sequence $X$ of bits into $k$ consecutive sub-sequences $X_1,\ldots,X_k$, each of which contains at most $\ceil{N/k}$ bits.
	For each $1\leq i\leq k$, consider the algorithm $A_i$, that sends the bits of $X_i$ from $s$ to $t$ along the tree $T_i$. Notice that, assuming that the algorithm for constructing the trees was successful, algorithm $A_i$ can be implemented in $O(N/k+ (101k\ln n/\eta)^D)$ rounds, and every edge is used to send $O(N/k)$ messages. Since the trees in $\tset$ cause edge-congestion $O(\eta\log n)$, the total number of messages that are sent via a single edge by all the algorithms $A_1,\ldots,A_k$ altogether is at most $O(\eta N\log n/k)$.
	
	We can now use the the random delay approach, to send all bits of $X$ from $s$ to $t$, by combining the algorithms $A_1,\ldots,A_k$. The number of rounds is bounded by   $r=\widetilde{O}((101k\ln n /\eta)^D+\eta\cdot N/k)$ with high probability.
	
	We now consider two cases. First, if $k\leq \chi$, then $\eta=1$, and $r=\widetilde{O}((101k\ln n)^D+ N/k)\leq \widetilde{O}(N^{D/(D+1)}+N/k)$. On the other hand, $N/k\geq N/\chi\geq N\cdot(101\ln n)/N^{1/(D+1)}\geq N^{D/(D+1)}\cdot (101\ln n)$. Therefore, $r\leq \tilde O(N/k)$ holds.
	
	Consider now the second case, where $k>\chi$. In this case, $\eta=k/\chi=101k\ln n/N^{1/(D+1)}$. We then get that:
	\[r=\widetilde{O}((101k\ln n /\eta)^D+\eta\cdot N/k)=\widetilde{O}(N^{D/(D+1)}+ N/\chi)=\widetilde{O}(N^{1-1/(D+1)}),
	\]
	as required.
\end{proof}


We now provide the first lower bound on the round complexity of information dissemination for graphs with sublogarithmic diameter. Unlike \cite{ghaffari2013distributed} our lower bound only holds for the weaker setting of store and forward algorithms (in which modification of messages by e.g., network coding is not allowed).

\begin{theorem}[Lower Bound for Information Dissemination]
For all integers $n,N,D$ and $k\leq n$, there 
 exists a $k$-edge connected $n$-vertex graph $G=(V,E)$ of diameter $2D$, and a pair $s,t$ of its vertices, such that sending $N$ bits from $s$ to $t$ in a store-and-forward manner requires at least
$$\Omega(\min\{(N/(D \log n))^{1-1/(D+1)},n/k\}+N/k+D)~ \mbox{~rounds.}$$
\end{theorem}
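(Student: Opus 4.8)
\medskip\noindent\textbf{Proof plan.} Since $\Omega(a+b+c)=\Omega(\max\{a,b,c\})$, it suffices to build a single $k$-edge connected $n$-vertex graph $G$ of diameter $2D$ (up to an additive constant, which only changes $D$ by a constant) with designated vertices $s,t$, on which each of the three quantities $\min\{(N/(D\log n))^{1-1/(D+1)},n/k\}$, $N/k$, and $D$ is a valid round lower bound. The term $\Omega(D)$ is immediate: in the graph we construct $\dist_G(s,t)=2D$, so no store-and-forward algorithm delivers even one bit in fewer than $2D$ rounds. The term $\Omega(N/k)$ will follow from a standard cut argument: our graph has an $s$--$t$ cut of size $O(k)$, and in the \congest\ model such a cut carries $O(k\log n)$ bits per round while all $N$ bits must cross it; this gives $\Omega(N/(k\log n))=\Omega(N/k)$ rounds (we absorb the logarithmic factor into this term exactly as the companion results of this section do). The bulk of the work is the main term.

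For the main term the plan is to reuse the construction from the proof of Theorem~\ref{thm:diameterlowerbound}. For an integer parameter $w\ge 2$ to be fixed below, take $G=G'_{w,D}$: the full $w$-ary depth-$D$ tree $T_{w,D}$ with post-ordered vertices $v_1,\dots,v_N$ ($N=N_{w,D}=\Theta(w^D)$), with each $v_i$ replaced by a $k$-clique $X_i$, each block of $k$ parallel ``red'' edges $(v_i,v_{i+1})$ replaced by a perfect matching between $X_i$ and $X_{i+1}$, and each ``blue'' tree edge $(v_i,v_j)$ replaced by a single edge $(x^1_i,x^1_j)$; then pad with a $k$-clique attached to the root, exactly as in the proof of Theorem~\ref{thm:diameterlowerbound}, so that $|V(G)|=n$. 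Set $s=x^1_1$ and $t=x^1_{N-D}$. This $G$ is simple, $k$-edge connected, has diameter $2D+O(1)$ (via the tree), and for every leaf index $j<N-D$ the cut $(W_j,V\setminus W_j)$ with $W_j=\bigcup_{t\le j}X_t$ consists of at most $Dw$ blue edges (by Observation~\ref{obs:cutedge} / Corollary~\ref{cor: number of cut edges}) together with the $k$ matching edges between $X_j$ and $X_{j+1}$; in particular its $s$--$t$ min cut is $O(Dw+k)$. I would choose $w=\Theta\!\bigl(\min\{(N/(D\log n))^{1/(D+1)},\,k/D,\,(n/(2k))^{1/D}\}\bigr)$, rounded to an integer $\ge2$ (if this forces $w<2$ then the claimed bound is $O(1)$ and the theorem is vacuous). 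The cap by $k/D$ keeps the $s$--$t$ min cut $O(k)$ (so the $\Omega(N/k)$ term survives), and the cap by $(n/(2k))^{1/D}$ keeps $k\cdot N_{w,D}\le n$ so the construction fits.

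Now fix a store-and-forward algorithm delivering all $N$ bits to $t$ in $T$ rounds. For each bit $b$, the transmissions carrying a copy of $b$ to $t$ form a tree rooted at $s$; the $s$--$t$ walk inside it uses edges in strictly increasing rounds, hence has length $\le T$, so I extract from it a simple $s$--$t$ path $\pi_b$, still of length $\le T$. Since each edge of $G$ carries $O(\log n)$ bits per round, every edge lies on at most $O(T\log n)$ of the $\pi_b$'s. Consider the $\approx w^D$ ``leaf cuts'' $C_j=(W_j,V\setminus W_j)$ over leaf indices $j<N-D$; every $\pi_b$ crosses every $C_j$. Each of the $\le Dw$ blue edges of $C_j$ lies on $O(T\log n)$ paths, so at most $O(DwT\log n)$ of the $N$ paths cross $C_j$ through a blue edge; hence, provided $T\le cN/(Dw\log n)$ for a suitable constant $c$, at least $N/2$ of the paths cross $C_j$ through a matching edge. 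The key point is that each matching edge of $G$ crosses \emph{at most one} leaf cut (the matchings are laid out along the post-order sequence), so for every $b$ the number of matching edges on $\pi_b$ is at least the number of leaf cuts $\pi_b$ crosses via a matching edge; summing, $\sum_b|\pi_b|\ge\sum_j \#\{b:\pi_b\text{ crosses }C_j\text{ via a matching edge}\}\ge (w^D-1)\cdot N/2$, while $\sum_b|\pi_b|\le NT$, forcing $T\ge (w^D-1)/2$. Thus $T\ge\min\{cN/(Dw\log n),(w^D-1)/2\}$; plugging in the chosen $w$ and balancing the two terms gives $T\ge\Omega\bigl((N/(D\log n))^{1-1/(D+1)}\bigr)$ when $w$ is not truncated, and $T\ge\Omega(n/k)$ (resp.\ $T\ge\Omega(N/k)$, via the cut bound) when $w$ is truncated by $(n/(2k))^{1/D}$ (resp.\ by $k/D$). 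In every regime $T=\Omega(\min\{(N/(D\log n))^{1-1/(D+1)},n/k\})$, and combining with the $\Omega(N/k)$ and $\Omega(D)$ bounds finishes the proof.

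The main obstacle I expect is the combinatorial heart of the previous paragraph: bounding, for each leaf cut, how many bit-paths can cross it ``cheaply'' through the low-diameter tree skeleton, and then translating ``cannot cross cheaply'' into ``must be long.'' Its two ingredients are the $\le Dw$ bound on blue edges per leaf cut (already available from Theorem~\ref{thm:diameterlowerbound}) and the observation that each matching edge crosses at most one leaf cut, so that a path crossing many leaf cuts via matchings necessarily contains many matching edges and is therefore long. Secondary technical points needing care are: defining ``the path of a bit'' under message duplication (handled above via broadcast trees), choosing $w$ as a valid integer across all parameter regimes (degenerate cases such as $2^D\gtrsim n/k$ give only an $O(1)$ bound and are vacuous), and reconciling the $2D+O(1)$ diameter of $G'_{w,D}$ with the exact value $2D$ by a minor modification of the padding or by shifting $D$ by a constant. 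One could instead reduce to Claim~\ref{cl:hlp} after randomly subsampling the $\pi_b$'s down to edge-congestion $\eta$, but that route additionally needs $\eta=\Omega(\log n)$ for the Chernoff step, so the direct counting argument above is cleaner.
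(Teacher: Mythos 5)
Your proposal is correct in substance, but it reaches the main term by a genuinely more direct route than the paper. The paper keeps the same construction ($G'_{w,D}$ with $s,t$ the extreme leaves) but argues via a congestion--dilation tradeoff: it sets $\congestion(A)=\eta N/k$, randomly subsamples the $N$ bit-paths down to $\approx k/\log n$ paths of edge-congestion at most $\eta$, invokes the cut-counting bound behind Theorem~\ref{thm:diameterlowerbound} (with $\alpha=\log n$) to force $\dilation(A)\geq \min\{\tfrac14(k/(2\eta D\log n))^D,n/k\}$, and then optimizes $\dilation+\congestion$ over $\eta$ with a case analysis on $k$; note that in the paper the width $w$ of the graph is tied to $\eta$, i.e.\ to the algorithm's behaviour. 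You instead fix a single graph by choosing $w=\Theta(\min\{(N/(D\log n))^{1/(D+1)},k/D,(n/2k)^{1/D}\})$ up front and run a direct double count over the $\approx w^D$ leaf cuts: each cut has at most $Dw$ blue edges, each edge carries $O(T\log n)$ of the $N$ bits, and each matching edge crosses exactly one leaf cut, which yields $T\geq\min\{cN/(Dw\log n),(w^D-1)/2\}$ for every algorithm on that one graph. This buys you two things the paper glosses over: the graph no longer depends on a parameter of the algorithm, and you avoid the concentration step in the subsampling (which, as you note, is problematic for small $\eta$); the price is that you must do the regime analysis through the truncation of $w$ rather than through an optimization over $\eta$, but your three regimes do check out (in the $w=\Theta(k/D)$ regime one verifies that $N/k\geq\log n\cdot(N/(D\log n))^{1-1/(D+1)}$, so the cut bound indeed dominates the min-term there --- state this explicitly rather than asserting it). Two small caveats, both shared with the paper: your cut argument yields $\Omega(N/(k\log n))$ rather than $\Omega(N/k)$ (the paper's Observation~\ref{obs:N/k} gets $N/k$ only by counting messages rather than bits, so the two proofs are equally loose about bandwidth on this term), and the simple construction has diameter $2D+2$ rather than $2D$, which, as you say, is absorbed by reindexing $D$; neither affects the asymptotic statement.
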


\begin{proof}
We start with the following simple observation.
\begin{observation}\label{obs:N/k}
Let $G$ be a graph of diameter $D$, such that some vertex $s$ in $G$ has degree at most $k$, and let $t$ be any vertex at distance at least $D/2$ from $s$. Then any algorithm for sending $\Omega(N)$ bits from $s$ to $t$ must send at least $N/k$ messages on at least one of the graph edges, therefore requires $N/k$ rounds. 
\end{observation}

\begin{proof}
Let $A$ be any algorithm for the problem, and consider its execution on a given sequence $X$ of $N$ bits. Assume for contradiction that the number of messages sent on each edge is strictly less than $N/k$. Then every edge that is incident to $s$ may be used to carry at most $N/k-1$ bits, and, since the degree of $s$ is $k$, fewer than $N$ bits are sent from $s$, a contradiction.
\end{proof}


We next show a lower bound of $\Omega(\min\{N/(D \log n)^{1-1/(D+1)},n/k\})$ rounds. 
To do that, we show that for every $\eta \in [1,k]$, there is a $k$-edge connected $D$-diameter graph $G_{\eta}$ and $s,t \in V(G_{\eta})$ such that if the algorithm sends the $N$ bits from $s$ to $t$ with a total edge congestion at most $\eta \cdot N/k$, then one of the $s$-$t$ paths used by the algorithm must have length at least $\min\{1/4 \cdot (k/(2\eta D\log n))^D,n/k\}$. 
The graph $G_{\eta}$ is given by taking the lower bound graph of Section \ref{sec: lower bound proof sketch} (Theorem~\ref{thm:diameterlowerbound}) with $\alpha=\log n$ and the congestion bound is $\eta$. We then choose $s$ and $t$ as the left-most and right-most leaf nodes. 

For every algorithm $A$ for this problem, let $\congestion(A)$ be the maximal number of messages sent through a given edge in $G$ (i.e., there is a graph edge on which the algorithm $A$ passes $\congestion(A)$ messages, and on all other edges at most $\congestion(A)$ messages are sent). By Observation \ref{obs:N/k}, $\congestion(A)\geq N/k$. 
Assume that there exists an algorithm $A$ for which $\congestion(A)=\lfloor \eta \cdot N/k \rfloor$ for some $\eta \in [1,k]$. Let $\mathcal{P}=\{P_1,\ldots, P_N\}$ be the $s$-$t$ paths on which these $N$ bits are sent, and 
let $\dilation(A)=\max_{j} |P_j|$ be the length of the longest path in $\mathcal{P}$. Our goal is to bound $\dilation(A)$ from below.

To do that, we claim, using a probabilistic argument, that $\mathcal{P}$ contains a subset $\mathcal{P}'$ of at least $k/\log n$ paths such that each edge $e$ appears on at most $\eta$ paths in $\mathcal{P}'$. To see this, sample each path $P_i \in \mathcal{P}$ with probability $p=k/(N\cdot \log n)$. The expected number of sampled paths is $pN/k=k/\log n$ and the expected congestion is at most $\eta/\log n$ and w.h.p. at most $\eta$. Therefore such a collection $\mathcal{P}'\subseteq \mathcal{P}$ exists. By Theorem~\ref{thm:diameterlowerbound} it then holds that one of the $s$-$t$ paths must have length $\dilation(A)=\min\{1/4 \cdot (k/(2 \eta \cdot D\log n))^D,n/k\}$.
Therefore the total running time of the algorithm $A$ is at least 
$$\dilation(A)+\congestion(A)\geq \min\{1/4 \cdot (k/(2 \eta \cdot D\log n))^D,n/k\}+\eta\cdot N/k-1.$$
 
First, observe that for every $\eta$ satisfying that $n/k=O((k/(2 \eta \cdot D\log n))^D)$, we get that $\dilation(A)=\Omega(n/k)$, as desired. So from now on assume that $n/k \geq (k/(2 \eta \cdot D\log n))^D$ and therefore 
\begin{equation}\label{eq:eta-bound}
\eta \geq k^{1+1/D}/(n^{1/D}\cdot 2D\log n)~.
\end{equation}

We distinguish between two cases depending on the value of $k$. 

\textbf{Case 1, $k \geq \frac{n}{(N/2D\log n)^{D/(D+1)}}$:} The round complexity is at least 
$$\eta N/k \geq k^{1+1/D}/(n^{1/D}\cdot 2D\log n)\cdot N/k= (k/n)^{1/D}\cdot N/(2D\log n)\geq (N/(2D\log n))^{1-1/(D+1)},$$
as desired.

\textbf{Case 2, $k \leq \frac{n}{(N/2D\log n)^{D/(D+1)}}$:} The round complexity is $\Omega((k/(2 \eta \cdot D\log n))^D+\eta\cdot N/k)$. Our goal is to find $\eta^*$ that minimizes this expression subject to the constraint of satisfying Ineq. (\ref{eq:eta-bound}). The minimum value is obtained for setting 
$$\eta^*=k/((2D\log n)^{1-1/(D+1)} \cdot N^{1/(D+1)})~.$$
One can verify that Ineq. (\ref{eq:eta-bound}) holds for $k \leq \frac{n}{(N/2D\log n)^{D/(D+1)}}$. 
Therefore the running time of $A$ is at least $\eta^* \cdot N/k=\Omega(N/(D \log n)^{1-1/(D+1)})$ as desired.

\end{proof}

\subsection{Implications to Secure Distributed Computation}
A cycle cover of a bridgeless graph $G$ is a collection $\cset$ of simple cycles in $G$, such that each edge $e\in E(G)$ belongs on at least one cycle in $\cset$. Motivated by applications to distributed computation, Parter and Yogev \cite{parter2019low} introduced the notion of \emph{low-congestion} cycle covers, in which all cycles in  $\cset$ are required to be both \emph{short} and nearly \emph{edge-disjoint}. Formally, a $(\dilation,\congestion)$-cycle cover of a graph $G$ is a collection $\cset$ of cycles in $G$, such that each cycle $C\in \cset$ has length at most $\dilation$, and each edge $e\in E(G)$ participates in at least one cycle and at most $\congestion$ cycles of $\cset$. Parter and Yogev \cite{parter2019low} showed that, using a $(\dilation, \congestion)$-cycle cover of a graph $G$, one can compile any $r$-round distributed algorithm $\mathcal{A}$ into a
\emph{resilient} one, while only incurring a multiplicative overhead of $\widetilde{O}(\dilation+\congestion)$ in the round complexity.  
Two types of adversaries were considered in \cite{parter2019low}: (i) a \emph{Byzantine} adversary, who can corrupt a \emph{single} message in each round; and (ii) an 
\emph{eavesdropper} adversary, who  can listen to \emph{one} of the graph edges of its choice  in each round.
The common to both of these types of adversaries is that they are restricted to manipulating only a \emph{single} edge of the graph in a given round. This restriction follows from the fact that the cycle
cover provides each edge $e$ with only two edge-disjoint paths connecting its endpoints: a direct one using the edge $e$, and an indirect
one using the cycle $C_e$ that covers $e$. It is noteworthy that this is the best that one can hope for if the graph is two-edge connected. Handling stronger adversaries, who collude on $k$ edges in a single round, requires that the communication graph is at least $(k+1)$-edge connected. We then need a generalization of low-congestion cycle cover that leverages this high connectivity, by covering each edge with many (nearly) edge-disjoint cycles, rather than a single one. 

To illustrate our ideas in the cleanest way, we consider an eavesdropper adversary in $k$-edge connected graphs. The adversary is allowed to eavesdrop to a fixed set of at most $k'$ edges (unknown to the graph participants) in each round during the simulation. Ideally, we would want to make $k'$ as large as possible. Our goal is to compile any given distributed algorithm $\cA$ into a $k'$-secure algorithm $\cA'$, that has the same output as $\cA$, but provides resilience against such adversaries. In other words, the resilient algorithm $\cA'$ must  guarantee that the adversary learns nothing by eavesdropping to any fixed collection of $k'$ edges of the graph.  Towards that goal, we cover each edge $e=(u,v)$ not by a single cycle (as in the standard cycle cover), but rather by a collection of $(k'\cdot\eta+1)$ cycles with overlap $\eta$. In other words, there is a collection $\cset_e\subseteq\cset$ of $(k'\cdot\eta+1)$ cycles containing $e$, such that each edge $e'\in E(G)\setminus\set{e}$ appears on at most $\eta$ cycles of $\cset_e$. This provides $u$ and $v$ with a communication backbone that is $k'$-connected and thus resilient to any adversary who takes over $k'$ edges of the graph. The efficiency of this scheme depends on several parameters that are captured by the following generalization of cycle covers. 
%
\begin{definition}[Cycle Covers of Higher Connectivity]
A $(\dilation,\congestion,\congedge,k)$-cycle cover for a graph $G=(V,E)$ is a collection $\cset$ of cycles satisfying:
\begin{itemize}
\item{\textbf{Congestion:}} Each edge $e\in E(G)$ appears on at most $\congestion$ cycles in $\mathcal{C}$.
\item{$(k,\eta)$ \textbf{Covering:}} For each edge $e\in E(G)$, there is a collection $\mathcal{C}_e=\{C_{e,1},\ldots, C_{e,k}\}$ of $k$ cycles  that contain $e$, with overlap $\eta$; that is, each edge $e'\in E(G)\setminus\set{e}$ appears on at most $\eta$ cycles of $\mathcal{C}_e$. 
\item{\textbf{Length:}} Each cycle in $\mathcal{C}$ has length at most $\dilation$. 
\end{itemize}
\end{definition}

Our key contribution is in providing an algorithm that computes a $(\dilation,\congestion,\eta, k)$-cycle cover, given a tree packing of size $k$, congestion $\eta$ and diameter $O(d/\log n)$, where $\congestion=O(k \cdot \eta \cdot \log^3 n)$.

\begin{lemma}\label{lem:cc-high-conn-teepath}[From Tree Packing to High-Connectivity Cycle Cover]
There is an efficient randomized algorithm, that, given a tree packing $\tset$ of $k$ spanning trees with congestion $\eta-1$ and diameter $D$, where $\eta \in [2,k+1]$, computes a $(\dilation,\congestion,\congedge,k')$-cycle cover with $\congestion=O(k \cdot \eta \cdot \log^3 n), \dilation=O(D\log n)$ and $k'= k-\eta+1$. The bounds on the congestion $\congestion$ and the $(k,\eta)$ covering property hold w.h.p., while the cycle length bound holds with probability $1$.
\end{lemma}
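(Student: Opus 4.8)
The plan is to build the cycle cover on top of the collection of \emph{fundamental cycles} of the trees in $\tset$, and then fix the single property that does not come for free, namely the global congestion. Root every tree $T_i\in\tset$ arbitrarily, and for vertices $u,v$ write $P_i(u,v)$ for the unique $u$--$v$ path in $T_i$, so $|P_i(u,v)|\le \diam(T_i)\le D$. For an edge $e=(u,v)$ of $G$ let $I(e)=\{i:e\notin T_i\}$ and, for $i\in I(e)$, let $C^0_{e,i}=\{e\}\cup P_i(u,v)$; this is a simple cycle through $e$ of length at most $D+1$. Because the packing has congestion $\eta-1$, $|I(e)|\ge k-(\eta-1)=k'$, so each edge already lies on at least $k'$ such cycles. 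Moreover any edge $e'\ne e$ on $C^0_{e,i}$ is a tree edge of $T_i$, hence belongs to at most $\eta-1\le\eta$ of the cycles $\{C^0_{e,i}\}_{i\in I(e)}$; thus the $(k',\eta)$-covering property already holds with slack, and the length bound $D+1=O(D\log n)$ is immediate. The only issue is congestion: deleting a tree edge $e^*$ from $T_i$ defines a cut of $G$, and $e^*$ lies on $C^0_{e,i}$ for every $G$-edge $e$ crossing that cut, which may be as many as $\Theta(n^2)$ — far above the target $\congestion=O(k\eta\log^3 n)$.

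To repair this I would keep the skeleton ``one hosting tree per (edge, slot)'' but \emph{reroute} any cycle that would overload a tree edge, using the redundancy that each edge is absent from $\ge k'$ trees. Concretely: for each edge $e$ fix a set $J(e)\subseteq I(e)$ of exactly $k'$ hosting trees; build, for each $T_i$, a recursive edge-separator decomposition of depth $O(\log n)$, so that every path $P_i(u,v)$ splits into $O(\log n)$ level-subpaths; and process levels from top to bottom, maintaining a running load on each tree edge. While forming $C_{e,i}$, whenever its current portion would traverse a tree edge $e^*$ whose load has already reached the per-tree target $\Theta(k\log^3 n)$, replace that edge by a short detour across the cut of $e^*$ taken from a uniformly random tree among the $\ge k'-(\eta-1)$ trees that miss both $e$ and $e^*$, restricting the detour to stay within the ``already-light'' part of the decomposition so that no tree edge is rerouted at more than one level. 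Then each cycle undergoes $O(\log n)$ detours, each of length $O(D)$, so the final $C_{e,i}$ has length $O(D\log n)$ and still passes through $e$; and since $\{C_{e,i}\}_{i\in J(e)}$ are built from essentially disjoint tree material, their pairwise overlap stays $O(\eta)$, within the budgeted polylogarithmic slack. When $\eta=\Omega(k)$ the bounds $k'=O(1)$ and $\congestion=\Omega(k^2\log^3 n)$ are so loose that the rerouting step can be largely skipped, so it suffices to treat the main regime.

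The congestion analysis would then fix an edge $e^*$ and bound the number of cycles $C_{e,i}$ through it: at most $\eta-1$ trees $T_i$ contain $e^*$ as a tree edge, contributions from distinct trees are handled separately, and within one tree the detour choices for different source edges $e$ are mutually independent given the decomposition, so the expected load is controlled and a Chernoff bound gives $O(k\log^3 n)$ per tree with high probability; a union bound over the $O(|E(G)|)$ edges and $O(\log n)$ levels yields $\congestion=O(k\eta\log^3 n)$. The hard part is making this dependency argument honest: a single reroute of $C_{e,i}$ simultaneously changes the load on many edges, so the straightforward randomized-rounding calculation has strong correlations. I expect the crux to be the design of the reroute rule itself — processing the separator hierarchy strictly level by level and tree by tree, and charging the recursion so that an edge once declared light is never rerouted again, so that all the randomness affecting a given edge's final load reduces to an independent choice of replacement tree per source edge. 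Carrying out the reroutes while simultaneously preserving the $O(D\log n)$ length bound, the $O(\eta)$ overlap among the $k'$ cycles of each edge, and the guarantee that every edge still lies on at least $k'$ cycles is where the real work lies.
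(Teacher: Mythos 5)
Your first step is sound: the fundamental cycles $C^0_{e,i}=\{e\}\cup P_i(u,v)$ over the trees avoiding $e$ do give at least $k'=k-\eta+1$ short cycles per edge with overlap at most $\eta-1$, and you correctly identify that the whole difficulty is the congestion on tree edges. But the repair you propose is a plan, not a proof, and its unresolved points are exactly the load-bearing ones. (i) A detour replacing an overloaded tree edge $e^*$ must itself run along a tree path of some other tree $T_j$; those paths are forced (you cannot choose them to ``stay within the already-light part''), they have length up to $D$ each, and they can overload edges of $T_j$ in turn, so the rerouting can cascade and nothing in the proposal shows the process terminates with per-edge load $O(k\eta\log^3 n)$ or even that a suitable light detour exists at all. (ii) The overlap guarantee degrades: after rerouting, two of the $k'$ cycles covering the same edge $e$ may pick the same detour tree and traverse the same tree path in it, and your own statement only claims overlap ``$O(\eta)$ within polylogarithmic slack,'' whereas the lemma requires overlap at most $\eta$ (the paper gets exactly this because each $e'\neq e$ is a non-tree edge of at most one cycle in $\cset_e$ and a tree edge of at most $\eta-1$). (iii) You explicitly defer the concentration argument (``the hard part is making this dependency argument honest''), i.e.\ the claimed Chernoff bound over correlated reroutes is not established. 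Since the congestion bound is the one property that does not come for free, these gaps mean the statement is not proved.

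For comparison, the paper avoids per-edge rerouting altogether: for each tree $T_i$ it runs $O(k\log^2 n)$ independent rounds in which every non-tree edge is sampled with probability $1/(2k\log n)$, and applies the $\NonTreeCover$ procedure of \cite{parter2019low} to $G[p]\cup T_i$. That primitive already yields cycles of length $O(D\log n)$ in which each non-tree edge lies on exactly one cycle and each tree edge on $O(\log n)$ cycles per application, so congestion $O(k\eta\log^3 n)$ follows by summing over rounds and trees; the $(k',\eta)$ covering follows by induction, since w.h.p.\ some round samples $e$ while avoiding the at most $2k\log n$ non-tree edges used by the cycles chosen for $e$ in earlier trees. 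If you want to salvage your approach, you would need an analogous mechanism that decouples the cycles' non-tree material up front rather than fixing congestion after the fact.
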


By combining Theorem \ref{thm:random_tree_planting} and Theorem \ref{thm:packing_spanning_trees_in_(k,D)_with_congestion} respectively with Lemma \ref{lem:cc-high-conn-teepath}, we obtain the following immediate corollary:
\begin{corollary}\label{lem:cc-high-conn}
(1) There is an efficient randomized algorithm that, given a $k$-edge connected $D$-diameter graph $G$,  w.h.p.  computes a $(\dilation,\congestion,\congedge,k/2)$ cycle cover with $\congedge=O(1)$, $\dilation=O((101k \ln n)^D\cdot \log n)$ and $\congestion=O(k\log^3 n)$ (all these properties hold w.h.p). \\ 
(2) There is an efficient randomized algorithm, that given a $(k,D)$-connected graph $G$,  w.h.p. computes a $(\dilation,\congestion,\eta,k)$ cycle cover with $\congestion=O(k\log^4 n), \congedge=O(\log n)$ and $\dilation=O(D\log^2 n)$.  
\end{corollary}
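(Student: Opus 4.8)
The plan is to obtain both parts of Corollary~\ref{lem:cc-high-conn} by running the appropriate tree‑packing construction from the earlier sections and then feeding the resulting packing into Lemma~\ref{lem:cc-high-conn-teepath}, tracking how the three parameters of the packing (number of trees, edge‑congestion, diameter) propagate into the four parameters of the cycle cover. In both cases the argument is essentially parameter bookkeeping plus a two‑term union bound over the failure events of the two ingredients; there is no genuinely hard step.

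For Part~(1), first I would apply the efficient randomized algorithm of Theorem~\ref{thm:small-depth-congestion-two} (whose analysis rests on Theorem~\ref{thm:random_tree_planting}) to the given $k$‑edge connected $n$‑vertex graph $G$ of diameter at most $D$. With high probability this yields a collection $\tset'$ of $\floor{k/2}$ spanning trees of $G$, each of diameter $O((101k\ln n)^{D})$, in which every edge of $G$ lies in at most two trees; that is, a tree packing of $\floor{k/2}$ trees with edge‑congestion $2$ and diameter $D_0=O((101k\ln n)^{D})$. I would then apply Lemma~\ref{lem:cc-high-conn-teepath} to $\tset'$ with the lemma's parameter set to the value $3$ (so that ``congestion $\eta-1$'' matches the value $2$); the side condition $\eta\in[2,\floor{k/2}+1]$ holds once $k$ is at least a constant. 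Reading off the lemma's conclusions gives a cycle cover with $\dilation=O(D_0\log n)=O((101k\ln n)^{D}\cdot\log n)$, $\congestion=O(\floor{k/2}\cdot 3\cdot\log^{3}n)=O(k\log^{3}n)$, overlap $\congedge=O(1)$, and a protected‑edge count $k'=\floor{k/2}-3+1=\floor{k/2}-2=\Omega(k)$; the ``$k/2$'' in the statement is understood up to this additive constant, which is harmless for the downstream secure‑simulation applications. A union bound over the failure probability of Theorem~\ref{thm:small-depth-congestion-two} and that of Lemma~\ref{lem:cc-high-conn-teepath} establishes that the congestion bound and the $(k',\congedge)$‑covering property hold w.h.p., while the cycle‑length bound holds deterministically once the packing is fixed, exactly as in Lemma~\ref{lem:cc-high-conn-teepath}.

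For Part~(2), the same recipe applies with a different first ingredient: I would run the algorithm of Theorem~\ref{thm:packing_spanning_trees_in_(k,D)_with_congestion} on the given $(k,D)$‑connected $n$‑vertex graph $G$, obtaining w.h.p.\ a collection $\tset$ of $k$ spanning trees of $G$, each of diameter $O(D\log n)$, with total edge‑congestion $O(\log n)$ --- i.e.\ a tree packing of $k$ trees with congestion $O(\log n)$ and diameter $D_0=O(D\log n)$. Applying Lemma~\ref{lem:cc-high-conn-teepath} with $\eta=\Theta(\log n)$ (legitimate since $k=\Omega(\log n)$ in the meaningful regime, so $\eta\in[2,k+1]$) produces a cycle cover with $\dilation=O(D_0\log n)=O(D\log^{2}n)$, $\congestion=O(k\cdot\eta\cdot\log^{3}n)=O(k\log^{4}n)$, overlap $\congedge=O(\log n)$, and $k'=k-\eta+1=k-O(\log n)=\Omega(k)$, the stated ``$k$'' again being understood up to the lower‑order loss inherited directly from Lemma~\ref{lem:cc-high-conn-teepath}. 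Union‑bounding the two failure events gives the w.h.p.\ guarantees.

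The only points that require any care --- and where a careless write‑up could go wrong --- are: (i) extracting the \emph{constant} value of $\eta$ from the ``congestion $\le 2$'' guarantee of Theorem~\ref{thm:small-depth-congestion-two}, which is what makes $\congedge=O(1)$ in Part~(1); (ii) checking the side condition $\eta\in[2,k+1]$ of Lemma~\ref{lem:cc-high-conn-teepath} in each application (a constant lower bound on $k$ in Part~(1), and $k=\Omega(\log n)$ in Part~(2)); and (iii) noting that the number of protected edges produced by Lemma~\ref{lem:cc-high-conn-teepath} is ``packing size minus $\eta-1$'', so it is $\Omega(k)$ rather than literally $k/2$ or $k$. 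None of these is an obstacle; the corollary is indeed immediate from the cited results.
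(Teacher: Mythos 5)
Your proposal is correct and follows essentially the same route the paper intends: feed the tree packing from Theorem~\ref{thm:small-depth-congestion-two} (congestion~$2$, i.e.\ $\eta=3$ in the lemma) into Lemma~\ref{lem:cc-high-conn-teepath} for Part~(1), and feed the tree packing from Theorem~\ref{thm:packing_spanning_trees_in_(k,D)_with_congestion} (congestion $O(\log n)$) for Part~(2). The paper's phrasing ``combining Theorem~\ref{thm:random_tree_planting}'' in Part~(1) is really a shorthand for the packing of Theorem~\ref{thm:small-depth-congestion-two}, which is built on Theorem~\ref{thm:random_tree_planting}; your reading is the right one, and your observations about the side condition $\eta\in[2,k+1]$ and the $O(1)$ (respectively $O(\log n)$) additive loss in the number of covering cycles relative to the nominal ``$k/2$'' and ``$k$'' are accurate and match how the lemma's $k'=k-\eta+1$ propagates.
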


\emph{Proof of Lemma \ref{lem:cc-high-conn-teepath}}
We start with a short overview of the cycle cover algorithm from \cite{parter2019low}.
\paragraph{Overview of Algorithm $\GraphCover$ of \cite{parter2019low}.}
The algorithm starts by constructing a BFS tree $T$ of the graph $G$, and then proceeds in two stages. In the first stage, it uses Procedure $\NonTreeCover$ to construct a cycle cover $\cC_1$ for all non-tree edges -- the edges of $E(G)\setminus E(T)$.  In the second stage, Procedure $\TreeCover$ is employed in order to cover the remaining tree edges by a new collection $\cC_2$ of cycles. 

The initial collection $\cset_1$ of cycles has some 
useful properties that will be exploited in our algorithm.

\begin{fact}\label{fc:nontreecover}[Properties of Algorithm \ $\NonTreeCover$]
Let $T$ be the tree used in the algorithm.  Then: (i) each non-tree edge $e\in E(G)\setminus E(T)$ belongs to a 
single cycle in $\cC_1$; (ii) the length of each cycle in $\cset_1$ is $O(\depth(T)\cdot\log n)$; 
(iii) each cycle $C\in \cC_1$ contains at most $2\log n$ edges of $E(G)\setminus E(T)$; and (iv) each tree edge $e\in E(T)$ belongs to at most $O(\log n)$ cycles in $\cC_1$.
\end{fact}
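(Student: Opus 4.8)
\emph{Proof proposal.} The plan is to recall Procedure $\NonTreeCover$ of \cite{parter2019low} and to read properties (i)--(iv) directly off its construction. The procedure is built around a hierarchical decomposition of the tree $T$ (recursive splitting, e.g. at centroids) into $O(\log n)$ levels, so that at each level every current fragment shrinks by a constant factor and after $O(\log n)$ levels each fragment is a single vertex. First I would fix such a decomposition and record the elementary fact that every tree edge $e\in E(T)$ lies in the fragment of exactly one splitting vertex at each level, hence is ``active'' at only $O(\log n)$ levels in total; this is the source of the eventual $O(\log n)$ factor in (iv).

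Next, for a splitting vertex $c$ (at some level) whose fragment $F$ has child fragments $F_1,\dots,F_q$ (the connected components of $F\setminus\{c\}$), I would form the auxiliary multigraph $H_c$ on vertex set $\{F_1,\dots,F_q\}$ that contains one edge $(F_a,F_b)$ for each non-tree edge of $G$ with one endpoint in $F_a$ and the other in $F_b$. These are precisely the non-tree edges ``charged'' to $c$, and, crucially, every non-tree edge of $G$ is charged to exactly one splitting vertex. Pairing up the odd-degree vertices of $H_c$ by virtual edges $(F_a,F_b)$, each realized by the two-edge tree path $F_a\!-\!c\!-\!F_b$, makes $H_c$ Eulerian while adding $O(1)$ extra load to the tree edges incident to $c$; an Eulerian multigraph decomposes into edge-disjoint simple cycles. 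I would then cut each such cycle into consecutive pieces, each carrying at most $2\log n$ non-tree edges. Finally a piece $F_{a_1},F_{a_2},\dots,F_{a_p}$ ($p\le 2\log n$) is turned into a cycle of $G$ by inserting, between the relevant endpoints of consecutive non-tree edges that lie inside a common fragment $F_{a_j}$, a tree path through $c$, and by closing the two open ends through $c$ as well; every inserted tree path has length at most $2\depth(T)$.

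With this construction, the four properties are verified as follows. (i): a non-tree edge is charged to a unique $c$, lies on a unique cycle of the Eulerian decomposition of $H_c$, hence on a unique piece and a unique cycle of $\cC_1$. (iii): immediate, since each piece carries at most $2\log n$ non-tree edges by definition of the cut. (ii): each resulting cycle is at most $2\log n$ non-tree edges interleaved with at most $2\log n$ tree paths of length $\le 2\depth(T)$ each, so its length is $O(\depth(T)\log n)$. (iv) is where the real work lies and is the main obstacle I would have to handle carefully: a tree edge $e$ is active at only $O(\log n)$ levels, and at each active level it lies in the fragment of a single splitting vertex $c$, so it suffices to show that the cycles produced for one $c$ load each tree edge of its fragment $O(1)$ times. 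The naive routing of piece-endpoints straight up to $c$ fails this badly (tree edges near $c$ would be loaded proportionally to the number of non-tree edges meeting $F$); the remedy of \cite{parter2019low} is to route these connecting paths in a balanced way -- equivalently, to recurse the whole construction inside each child fragment rather than routing directly to $c$ -- so that each tree edge of $F$ carries only $O(1)$ connecting paths. Combining $O(1)$ load per active level with $O(\log n)$ active levels yields the $O(\log n)$ bound of (iv), and the balanced routing respects the length bound of (ii). The remaining details are exactly those worked out in \cite{parter2019low}, so the final write-up can either reproduce this sketch or simply invoke the corresponding lemma there.
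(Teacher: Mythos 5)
The paper does not actually prove Fact~\ref{fc:nontreecover}; it states it as a black-box property of Procedure $\NonTreeCover$ and cites \cite{parter2019low} for the construction. So there is no internal proof to match against, and your reconstruction attempt is doing more than the paper asks. That said, your outline -- a hierarchical (centroid-style) decomposition of $T$ into $O(\log n)$ levels, charging each non-tree edge to the unique splitting vertex that first separates its endpoints, forming an auxiliary multigraph $H_c$ on the child fragments, taking an Eulerian cycle decomposition, and cutting it into pieces of at most $2\log n$ non-tree edges -- is a reasonable and recognizable reconstruction of how these cycle covers are built, and properties (i)--(iii) do follow from it essentially as you say.

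The genuine gap is exactly where you flag it: property (iv). You correctly observe that routing all connecting paths straight up to $c$ within a fragment would overload the tree edges near $c$ (their load would scale with the number of non-tree edges charged to $c$, not with $O(1)$), and you correctly observe that the fix is some form of balanced or recursive routing inside each child fragment. But you then simply assert that this fix works, exists in \cite{parter2019low}, ``respects the length bound of (ii),'' and move on. That is the entire technical content of (iv), and nothing in your sketch establishes it. In particular, you have not shown that the balanced routing yields $O(1)$ load per tree edge per level, nor that it preserves the $O(\depth(T))$ length of each inserted subpath; both claims need an argument and are precisely what distinguishes this construction from the naive one you discard. There is also a small imprecision in the piece-to-cycle step: consecutive non-tree endpoints that land in the same child fragment $F_{a_j}$ should be connected by a tree path \emph{inside} $F_{a_j}$ (that is where the recursion lives), not necessarily ``through $c$'' as written. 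As it stands, your proposal is an honest outline with the hard part (iv) deferred back to the very reference the paper cites, so it does not constitute an independent proof.
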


\paragraph{From Low-Diameter Tree Packing to Highly Connected Cycle Cover.}
Let $\tset=\set{T_1,\ldots, T_{k}}$ be the given tree packing of congestion $\eta-1$ and diameter $D$. 

Our algorithm will compute, for each tree $T_i\in \tset$, a collection $\cset^i$ of cycles that covers all edges of $E(G)\setminus E(T_i)$ -- that is, the edges that are non-tree edges for $T_i$, using Algorithm  $\NonTreeCover$. Recall that an edge $e\in E(G)$ may belong to at most $\eta-1$ trees, and so $e$ is a non-tree edge for at least $k-\eta+1$ trees. For each such tree $T_i$, at least one cycle in $\cset^i$ will cover it.

Specifically, our algorithm processes each tree $T_i$ one-by-one, and for each such tree it computes a collection $\cC^i$ of cycles, using Algorithm $\NonTreeCover$. The final collection $\cC$ of cycles is obtained by taking the union over all resulting sets of cycles: $\cC=\bigcup_{i=1}^k \cC^i$. 
Consider now some index $1\leq i\leq k$. Every cycle $C \in \cC^i$ contains two types of edges: the non-tree edges, that lie in  $G\setminus T_i$ and the tree edges, that appear in $T_i$. When processing the tree $T_i$, for $i\geq 2$, our goal will be to compute a cycle-cover $\cC^i$ that covers each edge $e \notin T_i$ by a short cycle $C^i_e$, such that if $e'\in E(C^i_e)\setminus\set{e}$ is a non-tree edge (that is, $e'\not\in E(T_i)$,)  then it may not belong, as a non-tree edge, to any other cycle that covered the edge $e$ in previous iterations.

The key observation is that due to property (iii) of Fact \ref{fc:nontreecover}, there are at most $(i-1)\cdot 2\cdot \log n=O(k\log n)$ non-tree edges that need to be avoided when covering the edge $e$ using the tree $T_i$. In order to avoid such edges, we apply an algorithm that is inspired by a sampling procedure that is mostly used in the setting of fault tolerant network design \cite{weimann2013replacement,dinitz2011fault}. 


We now describe in details the $i^{th}$ phase, where the collection $\mathcal{C}^i$ of cycles is computed.

\textbf{Computing the cycle collection $\cC^i$ using the tree $T_i$.} 
We compute the collection $\cset^i$ of cycles in $\ell=\ceil{30k\log^2n}$ iterations, where in each iteration $j\in \set{1,\ldots,\ell}$, we compute a collection $\cset^i_j$ of cycles, as follows. We let $G_{i,j}$ be the subgraph of $G$ obtained by sampling each edge $e \in E(G)\setminus T_i$ with probability $p=1/(2k \cdot \log n)$, and adding all edges of $T_i$ to the resulting graph. In other words, $G_{i,j}=G[p]\cup T_i$. Let $\cC^i_j$ be the cycle collection obtained by applying Algorithm $\NonTreeCover$ to the graph $G_{i,j}$ and the tree $T_i$. After we complete the $\ell$th iteration, we set $\cC^i=\bigcup_{j=1}^\ell \cC^i_j$. This concludes the description of the algorithm. Lastly, we set $\cset=\bigcup_{i=1}^k\cset^i$. We now show that set $\cset$ of cycles has all required properties with high probability.

\textbf{Cycle Length:} It is easy to verify that the length of every cycle in $\cset$ is bounded by $O(D\log n)$, from Property (ii) of Fact \ref{fc:nontreecover}.

We next consider the covering property by showing that each edge $e$ is covered by at least $k-\eta+1$ cycles with overlap (at most) $\eta$.

\textbf{$(k-\eta+1,\eta)$ Covering:} We prove the $(k-\eta+1,\eta)$-covering property in the following claim.

\begin{claim}\label{claim: covering individual edges}
With high probability, for every edge $e\in E(G)$, there is a collection $\cset(e)\subseteq \cset$ of at least $k-\eta+1$ cycles, such that each edge $e'\in E(G)\setminus\set{e}$ appears on at most $\eta$ cycles of $\cset(e)$.
	
\end{claim}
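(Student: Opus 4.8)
The plan is to fix an edge $e\in E(G)$ and build the family $\cset(e)$ explicitly, processing the trees of the packing one at a time while maintaining an auxiliary ``bad set'' $B\subseteq E(G)$ that records edges we want future cycles to avoid. Since $\tset$ has congestion $\eta-1$, the edge $e$ lies in at most $\eta-1$ trees, so the index set $I_e=\{\,i : e\notin E(T_i)\,\}$ has size at least $k-\eta+1$; these are exactly the trees for which $e$ is a non-tree edge. Initialize $B=\emptyset$ and go over the indices of $I_e$ in increasing order. When we reach $i\in I_e$, scan the $\ell=\ceil{30k\log^2 n}$ sampled graphs $G_{i,1},\dots,G_{i,\ell}$ that the algorithm used for $T_i$, and pick the first iteration $j$ in which (a) $e$ is sampled into $G_{i,j}$, and (b) no edge of $B\setminus E(T_i)$ is sampled into $G_{i,j}$. (Edges of $B$ that happen to be tree edges of $T_i$ are unavoidably present, but cause no harm: such an edge can only appear on a cycle of $\cC^i_j$ as a \emph{tree} edge.) If such a $j$ exists we say $T_i$ is handled: by Fact~\ref{fc:nontreecover}, Algorithm $\NonTreeCover$ applied to $(G_{i,j},T_i)$ puts $e$ on a cycle $C^i_e\in\cC^i_j\subseteq\cset$; we append $C^i_e$ to $\cset(e)$ and insert into $B$ all non-tree edges of $C^i_e$ (with respect to $T_i$) other than $e$ itself. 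Property (iii) of Fact~\ref{fc:nontreecover} bounds this contribution by $2\log n$, so $|B|\le 2k\log n$ throughout.

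The probabilistic part is to show that with high probability every $i\in I_e$ is handled. Fix $i\in I_e$ and condition on all randomness used by the algorithm for $T_1,\dots,T_{i-1}$; this determines the value of $B$ at the start of step $i$ (call it $B^{(i)}$, with $|B^{(i)}|\le 2k\log n$), while $G_{i,1},\dots,G_{i,\ell}$ remain fresh and mutually independent. For a fixed iteration $j$, since edges are sampled independently with probability $p=1/(2k\log n)$ and $e\notin B^{(i)}$,
\[
\Pr\bigl[e\in G_{i,j}\ \text{and}\ (B^{(i)}\setminus E(T_i))\cap G_{i,j}=\emptyset\bigr]
\ \ge\ p\,(1-p)^{|B^{(i)}|}\ \ge\ p\,(1-p)^{2k\log n}\ \ge\ \frac{p}{4}\ =\ \frac{1}{8k\log n}.
\]
Hence $\Pr[\,T_i\text{ not handled}\,]\le\bigl(1-\tfrac{1}{8k\log n}\bigr)^{30k\log^2 n}\le n^{-c}$, where the constant $c$ grows with the constant in the definition of $\ell$; this bound holds for every value of $B^{(i)}$, hence unconditionally. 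A union bound over the at most $n^2$ edges $e$ (the graph is simple in this setting) and the at most $k\le n$ indices $i$ shows that with high probability every tree in every $I_e$ is handled; we condition on this event from now on.

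On that event the covering property is purely combinatorial. We have $|\cset(e)|\ge|I_e|\ge k-\eta+1$ and every $C^i_e$ contains $e$. Fix $e'\ne e$ and split $\{i\in I_e : e'\in E(C^i_e)\}$ into $I_1=\{i : e'\in E(T_i)\}$ and $I_2=\{i : e'\in E(C^i_e)\setminus E(T_i)\}$. Each $i\in I_1$ requires $e'\in E(T_i)$, so by the congestion of $\tset$, $|I_1|\le\eta-1$. If $i_1<i_2$ both lie in $I_2$, then $e'\ne e$ is a non-tree edge of $C^{i_1}_e$, hence was inserted into $B$ at step $i_1$, so $e'\in B^{(i_2)}$; moreover $e'\notin E(T_{i_2})$ since $i_2\in I_2$; but the iteration selected for $T_{i_2}$ contained no edge of $B^{(i_2)}\setminus E(T_{i_2})$ while containing all of $E(C^{i_2}_e)\ni e'$, a contradiction. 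Thus $|I_2|\le 1$, and $e'$ lies on at most $(\eta-1)+1=\eta$ of the cycles $\{C^i_e\}_{i\in I_e}$, which proves Claim~\ref{claim: covering individual edges}. (If two selected cycles coincide as edge sets this changes nothing, since we count them by index.) I expect the main difficulty to be the second step: one must keep the sequentially growing bad set of size $O(k\log n)$ — exactly the inverse of the per-edge sampling probability $p$ — and arrange the conditioning so that, for each tree, its $\ell$ iterations are independent of the accumulated history, which is what makes the per-tree failure probability small enough to survive the union bound over all edges.
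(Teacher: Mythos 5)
Your proposal is correct and follows essentially the same route as the paper's proof: for each of the $\ge k-\eta+1$ trees not containing $e$, you find (w.h.p.) a sampled iteration that contains $e$ while avoiding the $O(k\log n)$ non-tree edges of previously chosen cycles, invoke Fact~\ref{fc:nontreecover}(i),(iii) to extract the covering cycle and bound the growth of the forbidden set, and then count at most one appearance of any $e'\ne e$ as a non-tree edge plus at most $\eta-1$ appearances as a tree edge. Your only deviations are presentational refinements — making the conditioning on past randomness explicit, and requiring the sampled graph to avoid only $B\setminus E(T_i)$ (with tree edges of $T_i$ absorbed into the $\eta-1$ count), which cleanly handles a point the paper's definition of a ``successful iteration'' glosses over.
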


\begin{proof}
%
For each iteration $1\leq i\leq \ell$, we call the edges of $E(G)\setminus E(T_i)$ \emph{non-tree edges for iteration $i$}, and we call the edges of $E(T_i)$ \emph{tree edges for iteration $i$}. Similarly, for a cycle $C\in \cset^i$, the edges of $C$ that lie in $T_i$ are called \emph{tree edges}, and the remaining edges of $C$ are called \emph{non-tree edges}.

Let $\set{T_{i_1},\ldots, T_{i_{k-\eta+1}}}$ be a set of $k-\eta+1$ trees of $\tset$, that do not contain the edge $e$, where $j_1<j_2 \ldots <j_{k-\eta+1}$. Such collection of trees must exist, since the trees in $\tset$ cause edge-congestion at most $\eta-1$.

We show by induction that for all $1\leq z\leq k-\eta+1$, set $\cset^{i_z}$ of cycles with high probability contains a cycle $C^z(e)$, such that $e\in C^z(e)$, and moreover, if $e'\in C^z(e)\setminus\set{e}$ is a non-tree edge, and $e'\in C^{z'}(e)$ for some $z'<z$, then $e'$ must be a tree edge for $C^{z'}(e)$. In other words, the non-tree edges of $C^z(e)$ are disjoint from the non-tree edges of $C^1(e),\ldots,C^{z-1}(e)$.
The proof proceeds by induction.

For $z=1$, we show that with high probability there is some cycle $C^1(e)\in \cset^{i_1}$ that contains $e$. 
Indeed, for all $1\leq j\leq \ell$, the edge $e$ is added to graph $G_{i_1,j}$ with probability $p=1/(2k\log n)$. Since $\ell=\ceil{30k\log^2n}$, with probability at least $1-1/n^{10}$, there exists an iteration $j$ with $e \in  G_{i_1,j}$. By property (i) of Fact \ref{fc:nontreecover}, the edge $e$ is covered by exactly one cycle in $\mathcal{C}^{i_1}_j$, that we denote by $C^1(e)$.



Assume now that we have defined the cycles $C^1(e), \ldots, C^{z-1}(e)$ in cycle sets $\cC^{i_1},\ldots, \cC^{i_{z-1}}$ respectively, such that the non-tree edges of all these cycles are disjoint. We show that w.h.p. $\cC^{i_z}$ contains a cycle $\cC^z(e)$, that covers $e$, such that its non-tree edges are disjoint from the non-tree edges (except for the mutual edge $e$) of all the previous cycles, namely, $C^1(e) \ldots, C^{z-1}(e)$. 
%
%

Let $F_z(e)$ be the collection of all non-trees edges of the cycles  $C^1(e), \ldots, C^{z-1}(e)$, excluding the mutual edge $e$. By property (iii) of Fact \ref{fc:nontreecover}, $|F_z(e)|\leq 2(z-1)\log n\leq 2\cdot k\log n$. An iteration $q$ in phase $i_z$ is said to be \emph{successful} for the edge $e$, if the following two events hold: (a) $e \in G_{i_z,q}$; and (b) $F_z(e)\cap G_{i_z,q}=\emptyset$. We next show that with probability at least $1-1/n^8$, at least one iteration of phase $i_z$ is successful for the edge $e$.

Indeed, 
the probability that a specific iteration $q$ is successful for the edge $e$ is at least:
$$(1-p)^{|F_z(e)|}\cdot p\leq (1-p)^{2\cdot k\log n}\cdot p\leq p/e=1/(2ek\log n)~.$$
Since there are at least $30\cdot k \log^2 n$ iterations, with probability at least $1-n^8$, there is at least one successful iteration for $e$ in phase $i_z$. (By applying the union bound over all $k$ phases, this holds for the edge $e$ with probability $1-n^7$).

Let $q$ be such a successful iteration for the edge $e$ in phase $i_z$. Since $e$ is a non-tree edge of $T_{i_z}$, when applying Algorithm\ $\NonTreeCover$ to the tree $T_{i_z}$ and the subgraph $G_{i_z,q}$, by property (i) of Fact  \ref{fc:nontreecover} the resulting cycle collection $\cC^{i_z}_q$ contains exactly one cycle covering $e$, that we denote by $C^z(e)$. By the definition of a successful iteration, $C^z(e)\cap F_z(e)=\emptyset$, as desired. 

We conclude that with high probability, there is a collection $C^1(e),\ldots, C^{k-\eta+1}(e)$ of cycles in $\cset$, that contain $e$ and are disjoint in their non-tree edges (except for sharing the edge $e$). We set $\cset(e)=\set{C^1(e),\ldots, C^{k-\eta+1}(e)}$

Finally, we claim that each edge $e' \neq e$ appears on at most $\eta$ cycles of $\cset(e)$. By definition, each edge $e' \neq e$ can serve as a non-tree edge of at most \emph{one} cycle. 
In addition, recall that each cycle $C^z(e)$ is computed in phase $i_z$ by applying Algorithm $\NonTreeCover$ to the tree $T_{i_z}$. Since each edge $e'$ may belong to at most $\eta-1$ trees of $\tset$, it may appear as a tree edge on at most $\eta-1$ cycles. 
\end{proof}

\textbf{Congestion:} 
We first bound the number of cycles in $\mathcal{C}$ that contains a given edge $e$ as a non-tree edge.
Fix a phase $i$ such that $e \notin T_i$. There are $\ell=O(k \log^2 n)$ iterations in phase $i$. In each iteration, every edge $e$ is sampled independently with probability $p=1/(2k\log n)$. Therefore, by the Chernoff bound, with high probability, edge $e$ is sampled in at most $O(\log n)$ iterations of a given phase. In each iteration $q$ of phase $i$, we apply Algorithm\ $\NonTreeCover$. By property (i) of Fact \ref{fc:nontreecover}, every non-tree edge appears on exactly one cycle in $\mathcal{C}^i_q$. 
Therefore, $e$ appears on at most $O(\log n)$ cycles in the cycle collection $\cC^i$. By summing over all $k$ phases, an edge $e$ may appear as a non-tree edge on at most $O(k\log n)$ cycles.

We next turn to bound the number of cycles that contains a fixed edge $e$ as a tree edge. 
Fix a tree $T_i$ where $e \in T_i$. By property (iv) of Fact \ref{fc:nontreecover}, $e$ appears on $O(\log n)$ cycles in each application of Algorithm $\NonTreeCover$.  Since there are $O(k\log^2 n)$ applications of this algorithm on $T_i$, and since $e$ appears on at most $\eta$ trees, overall it appears as a tree edge in $O(k \cdot \eta \cdot \log^3 n)$ cycles. 
%
%
This concludes the proof of Lemma \ref{lem:cc-high-conn-teepath}.
\qed


We conclude by showing an immediate application of $(\dilation,\congestion,\congedge,k)$-cycle cover to resilient computation in the presence of eavesdropper.
\begin{lemma}[Compiler for Eavesdropping in Highly Connected Graphs]\label{lem:ecomp}
Assume that a $(\dilation,\congestion,\congedge,k)$-cycle cover $\cset$ is computed in a pre-processing phase. Then, any distributed algorithm $\cA$ can be compiled into an  algorithm $\cA'$ that is \emph{resilient} to an eavesdropping adversary listening on at most $k'\leq \lfloor k/(2a\congedge\cdot \log n)-1 \rfloor$ edges in the graph, for some constant $a$, in every round. W.h.p. this incurs a multiplicative overhead of $\widetilde{O}(\dilation\cdot \congestion\cdot\congedge)$ in the number of rounds. 
\end{lemma}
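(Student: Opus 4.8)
The plan is to follow the secret-sharing compilation paradigm of Parter and Yogev~\cite{parter2019low}, but to replace the single backup route that an ordinary cycle cover provides for each edge with the $k$ nearly edge-disjoint backup routes provided by a $(\dilation,\congestion,\congedge,k)$-cycle cover $\cset$. Fix one round of the original algorithm $\cA$, and let $e=(u,v)$ be an edge on which $\cA$ wants to send an $O(\log n)$-bit message $m_e$. The endpoint $u$ draws $k$ uniformly random strings $s_{e,1},\dots,s_{e,k}$ of the same length, sets $s_{e,0}=m_e\oplus s_{e,1}\oplus\cdots\oplus s_{e,k}$, and then delivers the $k+1$ shares to $v$ over $k+1$ routes: the share $s_{e,0}$ is sent directly across the edge $e$, while for $1\le i\le k$ the share $s_{e,i}$ is routed along $C_{e,i}\setminus\{e\}$, where $\cset_e=\{C_{e,1},\dots,C_{e,k}\}$ is the family of $k$ cycles through $e$ guaranteed by the $(k,\congedge)$-covering property. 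Once $v$ has collected all $k+1$ shares it reconstructs $m_e=\bigoplus_{i=0}^{k}s_{e,i}$ and feeds it into the simulated execution of $\cA$; since $s_{e,0},\dots,s_{e,k}$ is an additive (i.e.\ $(k{+}1)$-out-of-$(k{+}1)$) secret sharing of $m_e$, any proper subset of the shares is jointly uniform and carries no information about $m_e$.

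Next I would establish secrecy. An adversary listening to a physical edge $e'\neq e$ in a given round learns only what is transmitted over $e'$ in that round, hence can learn the share $s_{e,i}$ only if $e'$ is carrying (a packet of) the route $C_{e,i}\setminus\{e\}$ at that moment. The $(k,\congedge)$-covering property says that $e'$ lies on at most $\congedge$ of the cycles in $\cset_e$, so \emph{statically} $e'$ can expose at most $\congedge$ of the shares $s_{e,1},\dots,s_{e,k}$; because the routing schedule (described below) spreads the passage of each packet across a given route-edge over an $O(\log n)$-round window, over the whole simulation of one round of $\cA$ a single adversarial edge exposes at most $a\congedge\log n$ of the $k$ shares for a suitable constant $a$, while $e$ itself only ever exposes $s_{e,0}$. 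Summing over the $k'$ edges the adversary may listen to in any round, its view of the simulation of this round of $\cA$ depends on at most $1+k'\cdot a\congedge\log n$ of the $k+1$ shares of $m_e$; the hypothesis $k'\le \lfloor k/(2a\congedge\log n)-1\rfloor$ makes this at most $k$, so at least one share is always missing and the adversary's view is independent of $m_e$. Since a fresh sharing is drawn in every simulated round, this independence holds jointly over the whole execution; and $\cA'$ produces the same output as $\cA$ because every $m_e$ is reconstructed exactly.

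It then remains to schedule all these routings efficiently in the \congest\ model. Simulating one round of $\cA$ reduces to the routing task that, for every edge $e$ and every $1\le i\le k$, forwards one $O(\log n)$-bit packet along $C_{e,i}$ (the direct shares $s_{e,0}$ cost a single extra round). Each packet travels a path of length at most $\dilation$, and a fixed physical edge $e'$ participates in at most $O(\congestion\cdot\congedge)$ of these packets, since $e'$ lies on at most $\congestion$ cycles of $\cset$ and each cycle of $\cset_e$ through $e'$ accounts for at most $\congedge$ of the packets it is asked to carry. Applying the random-delay scheduling theorem (Theorem~\ref{thm:delay}) to this task set, after $O(\dilation\log^2 n)$ rounds of precomputation all packets are delivered within $\widetilde{O}(\dilation\cdot\congestion\cdot\congedge)$ rounds with high probability, which is the claimed multiplicative overhead; the same argument fixes the $O(\log n)$ per-route-edge spread used in the secrecy bound, as a packet crosses each edge of its route within an $O(\log n)$-round window of its scheduled time.

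The main obstacle is the interaction between the adversary's per-round adaptivity and the fact that a single simulated round of $\cA$ unfolds over $T=\widetilde{O}(\dilation\cdot\congestion\cdot\congedge)$ actual rounds: a naive bound would let the adversary inspect $k'T\gg k$ distinct edges during one simulated round and thereby collect \emph{every} share of some $m_e$. The delicate point is to argue that this cannot happen, because a share $s_{e,i}$ is a \emph{single} packet that occupies each edge of its route for only an $O(\log n)$-round window, so the amount it can be exposed is still controlled by the static overlap $\congedge$ inflated only by that window --- which requires checking that the chosen routes do not revisit a physical edge, and that packets of $\cset_e$ sharing a physical edge are still accounted for by the overlap count --- and then taking a union bound over all edges $e$ and all $\poly(n)$ rounds of $\cA$ to conclude. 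Once this counting is made rigorous, together with the secret-sharing and scheduling steps above, the lemma follows.
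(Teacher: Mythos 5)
There is a genuine gap in your security argument, and it sits exactly at the point you flag as ``the main obstacle.'' Your scheme splits each message $m_e$ into only $k+1$ shares, one per cycle of $\cset_e$ plus the direct edge, and then argues secrecy by a \emph{static} count: each eavesdropped edge $e'$ lies on at most $\congedge$ cycles of $\cset_e$, so the $k'$ edges expose at most $1+k'\cdot a\congedge\log n\le k$ shares. This count is only valid if the adversary listens to the \emph{same} $k'$ edges throughout the simulation of the round; but the threat model (and the paper's proof explicitly) allows the adversary to choose a different set of $k'$ edges in \emph{every actual round}, and one simulated round of $\cA$ stretches over $T=\widetilde{O}(\dilation\cdot\congestion\cdot\congedge)$ actual rounds. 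Over that window the adversary can observe $k'\cdot T\gg k$ distinct (edge, round) slots, and since each of your $k+1$ shares is a single packet that must cross some specific edge at some specific time, a schedule-aware (or even adaptively rotating) eavesdropper can in principle harvest every share; the ``$O(\log n)$-round window'' does not rescue the count, because the bound you would need is on the number of \emph{distinct edges} the adversary ever uses, not on how long each packet dwells on an edge. At best your scheme could be salvaged by a probabilistic argument that the adversary, not knowing the private random delays, is unlikely to guess when each packet crosses its route; you do not make that argument, and it would in any case not give the clean guarantee the lemma asserts.

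The paper closes this hole with a different quantitative idea that your proposal is missing: instead of one share per cycle, the sender pipelines $2\dilation-d_i+1$ \emph{distinct} shares along the $i$th cycle (one share per phase of $\congedge$ rounds), so the total number of shares is $\ell=\sum_i(2\dilation-d_i+1)\ge \dilation\cdot k$. Secrecy then follows from a pure counting argument that is robust even against an adaptive adversary with full knowledge of the schedule: during the scheduled simulation of one round, the total number of (edge, round) slots the adversary can observe is $r'\cdot k'$ with $r'=a\log n\cdot 2\dilation\congedge$, and the hypothesis $k'\le\lfloor k/(2a\congedge\log n)-1\rfloor$ forces $r'k'<\dilation k\le\ell$, so at least one share is never transmitted on any observed slot and the additive sharing reveals nothing. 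This is precisely why the threshold on $k'$ has the form it does. Your routing/congestion accounting (per-edge load $\widetilde{O}(\congestion\cdot\congedge\cdot\dilation)$ and the random-delay scheduling giving overhead $\widetilde{O}(\dilation\cdot\congestion\cdot\congedge)$) is essentially in line with the paper, but the secret-sharing layer needs to be redesigned along the pipelined, $\Theta(\dilation k)$-share scheme for the security claim to hold.
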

\begin{proof}
The compiler works round-by-round. Fix a round $i$ of algorithm $\cA$. Observe that the round is fully specified by the collection of messages sent on the edges at this round. We will simulate this round in $\cA'$ using a total of $\widetilde{O}(\dilation\cdot \congestion\cdot\congedge)$ rounds. Consider an edge $e=(u,v)$\footnote{As before, we view the edge $e=(u,v)$ as a directed edge where the message is sent from $u$ to $v$. Thus we will use the cycles covering $e$ twice: to send the message from $u$ to $v$ and vice-verse.} and let $M=M_e$ be the message sent on the edge $e$ in this round from $u$ to $v$. 


For each edge $e=(u,v)$ we will have a subalgorithm $A_e$ that securely sends the original message $M$ from $u$ to $v$
in a total of $O(\dilation \cdot \congedge)$ rounds. The algorithm will run all these subalgorithm $A_e$ for every edge $e$ in parallel using the random delay approach. We first describe $A_e$ and then show that it is indeed secure even if the adversary listens over $k'$ edges in each round, possibly picking a different set of edges in every round. Then we will show how to run all these $A_e$ algorithm in parallel while maintaining the security of each of them.

Algorithm $A_e$ consists of $2\dilation$ phases, each containing $\eta$ rounds. Each of the $k$ cycles $C_1,\ldots, C_k$ covering $e$ corresponds to a distinct $u$-$v$ path, where the length of the path from cycle $C_i$ is denoted by $d_i$. By definition, $d_i \leq \dilation$. The message $M$ is secret shared by the sender $u$ into $\ell=\sum_{i=1}^k (2\dilation-d_i+1)$ shares, where on the $i^{th}$ path (i.e., $C_i \setminus \{(u,v)\}$), $u$ sends a distinct set of $2\dilation-d_i+1\leq 2\dilation$ shares for every $i \in \{1,\ldots, k\}$. These shares are sent on each path in a pipeline manner at a speed of one share per phase on each of the paths. Since each phase consists of $\congedge$ rounds, we will be able to pass the at most $\eta$ messages that need to go through an edge in a given phase (due to at most $\congedge$ different paths that go through this edge). In addition, since the messages are sent in a pipeline manner on each path, in $2\dilation$ phases, $v$ receives $2\dilation-d_i+1 \geq \dilation$ messages from $u$ via the $i^{th}$ path of length $d_i$. 
Overall, $v$ receives at least $\ell \geq \dilation \cdot k$ shares during the $2\dilation$ phases of algorithm $A_e$. Since the adversary can manipulate at most $k'$ messages in a round, it can listen to at most $k' \cdot 2\dilation \cdot \congedge < \ell$ messages, and therefore there is at least one missing share that it did not receive.  

Finally, we show that all these secure subalgorithms $A_e$ can be run in parallel without compromising  security. 
We use the scheduling algorithm of Ghaffari \cite{ghaffari2015near} (see Theorem \ref{thm:delay}), that proceeds in phases, where each phase has $a\cdot \log n$ rounds for some constant $a$. Each subalgorithm $A_e$ has a random starting point and proceeds at a speed of one phase at a time. The argument shows that due to the random start of each algorithm, there are at most $a \log n$ algorithms that send a message on a fixed edge in a given phase, hence all these messages sent through an edge fit within the phase window of $a\log n$ rounds. In the final scheduling, the total number of rounds in which the messages of $A_e$ are sent is at most $r'=a\cdot \log n \cdot 2\dilation\cdot \eta$ (this is because $A_e$ has $2\dilation \eta$ rounds, and thus simulated in $2\dilation \eta$ phases, each of length $a\log n$ rounds). Thus the adversary can listen to at most $r' \cdot k'$ messages of the $A_e$ algorithm. Since $r'\cdot k' < \ell$, the adversary did not receive at least one of the shares.
%
%

Finally we bound the running time of the whole algorithm. Each algorithm $A_e$ takes $O(\dilation \eta)$ rounds and the dominant part in the round complexity of the final compiler is dominated by the edge congestion of all these algorithms. A single subalgorithm $A_e$ sends $O(\dilation \cdot \eta)$ messages over each edge $e'$. Since each cycle $C$ is used by at most $O(\log n)$ edges (the non-tree edges of that cycle), and since each edge $e'$ appears on at most $\congestion$ other cycles, over all each edge $e'$ participates on $\widetilde{O}(\congestion)$ algorithms $A_e$. Thus overall at most $\widetilde{O}(\dilation \cdot \congestion\cdot \eta)$ messages are sent over a single edge. By Theorem \ref{thm:delay}, we get that running all the $A_e$ subalgorithms in parallel takes $\widetilde{O}(\congestion \cdot \dilation \cdot \eta)$ rounds using the random delay approach. 
\end{proof}

By combining with Corollary \ref{lem:cc-high-conn} we get the following:
\begin{corollary}\label{cor:cc-high-conn-dropper-stronger}
(1) Given a $k$-edge connected $D$-diameter graph $G$ with a $(\dilation,\congestion,\congedge,k/2)$ cycle cover satisfying $\congedge=O(1)$, $\dilation=O((101k \ln n)^D\cdot \log n)$, and $\congestion=O(k\log^3 n)$ computed in a preprocessing step. Any distributed algorithm $\cA$ can be compiled into an  algorithm $\cA'$ that is resilient to an eavesdropping adversary listening on at most $k'=O(k/\log n)$ edges in the graph, in every round. W.h.p. this incurs a multiplicative overhead of $\widetilde{O}((101k \ln n)^D\cdot k)$ in the number of rounds. 

\par (2) Given a $(k,D)$-connected graph $G$ with a $(\dilation,\congestion,\eta,k)$ cycle cover satisfying 
$\congedge=O(\log n), \dilation=O(D\log^2 n)$, and $\congestion=O(k\log^4 n)$ computed in a preprocessing step. 
Any distributed algorithm $\cA$ can be compiled into an  algorithm $\cA'$ that is resilient to an eavesdropping adversary listening on at most $k'=O(k/\log^2 n)$ edges in the graph. W.h.p. this incurs a multiplicative overhead of $\widetilde{O}(D\cdot k)$ in the number of rounds. 
\end{corollary}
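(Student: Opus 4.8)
Both parts of Corollary~\ref{cor:cc-high-conn-dropper-stronger} follow by feeding the cycle covers produced by Corollary~\ref{lem:cc-high-conn} directly into the eavesdropping compiler of Lemma~\ref{lem:ecomp}; the only work is to substitute the parameters and to check that the $\widetilde{O}(\cdot)$ bounds absorb the extra polylogarithmic factors. Recall that Lemma~\ref{lem:ecomp}, applied to a $(\dilation,\congestion,\congedge,k)$-cycle cover, compiles any distributed algorithm into one resilient to an eavesdropper listening on at most $k'\le\lfloor k/(2a\congedge\cdot\log n)-1\rfloor$ edges per round, at a multiplicative cost of $\widetilde{O}(\dilation\cdot\congestion\cdot\congedge)$ rounds. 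So in each case we read off $\dilation,\congestion,\congedge$ and the connectivity index from Corollary~\ref{lem:cc-high-conn} and plug them in.

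For part~(1), Corollary~\ref{lem:cc-high-conn}(1) supplies (w.h.p.) a $(\dilation,\congestion,\congedge,k/2)$-cycle cover with $\congedge=O(1)$, $\dilation=O((101k\ln n)^D\cdot\log n)$ and $\congestion=O(k\log^3 n)$. Here the connectivity index playing the role of ``$k$'' in Lemma~\ref{lem:ecomp} is $k/2$, so the compiled algorithm tolerates an eavesdropper on $k'\le\lfloor (k/2)/(2a\congedge\cdot\log n)-1\rfloor=O(k/\log n)$ edges, since $\congedge=O(1)$; the overhead is $\widetilde{O}(\dilation\cdot\congestion\cdot\congedge)=\widetilde{O}((101k\ln n)^D\cdot\log n\cdot k\log^3 n)=\widetilde{O}((101k\ln n)^D\cdot k)$, the $\log^4 n$ factor being hidden in $\widetilde{O}$. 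Part~(2) is the same computation with the $(k,D)$-connected cycle cover of Corollary~\ref{lem:cc-high-conn}(2): $\congedge=O(\log n)$, $\dilation=O(D\log^2 n)$, $\congestion=O(k\log^4 n)$, connectivity index $k$; hence $k'\le\lfloor k/(2a\congedge\cdot\log n)-1\rfloor=O(k/\log^2 n)$ (one $\log n$ from $\congedge$ and one from the explicit $\log n$ in the denominator), and the overhead is $\widetilde{O}(\dilation\cdot\congestion\cdot\congedge)=\widetilde{O}(D\log^2 n\cdot k\log^4 n\cdot\log n)=\widetilde{O}(D\cdot k)$.

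The steps that deserve care, rather than constituting a genuine obstacle, are the following. First, one must confirm that the fourth parameter of the cycle cover (which is $k/2$ in part~(1) and $k$ in part~(2)) is exactly the quantity appearing in the bound $k'\le\lfloor k/(2a\congedge\log n)-1\rfloor$ of Lemma~\ref{lem:ecomp}, so that halving the connectivity in part~(1) costs only a constant factor in $k'$. Second, one must check that the high-probability guarantees compose: the cycle cover of Corollary~\ref{lem:cc-high-conn} is correct w.h.p., and, conditioned on that event, the compiler of Lemma~\ref{lem:ecomp} treats it as a fixed combinatorial object, so the only additional randomness is in the random-delay scheduling, whose round bound also holds w.h.p. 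Third, the cost of constructing the cycle cover is incurred once in the pre-processing phase and is not counted in the per-round multiplicative overhead, consistent with the statement. I expect the most error-prone point to be the second one: ensuring that no circularity arises between the randomized construction of the cycle cover and the randomized scheduling inside the compiler — but since the two stages use independent randomness and the compiler's analysis in Lemma~\ref{lem:ecomp} is stated for an arbitrary given cycle cover, this is routine.
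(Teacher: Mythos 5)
Your proposal is correct and follows exactly the paper's route: the corollary is obtained by plugging the cycle-cover parameters of Corollary~\ref{lem:cc-high-conn} (with connectivity index $k/2$ resp.\ $k$) into the compiler of Lemma~\ref{lem:ecomp}, and your parameter substitutions for both $k'$ and the $\widetilde{O}(\dilation\cdot\congestion\cdot\congedge)$ overhead match the paper's intended computation. The additional remarks about composing the high-probability events and not charging the pre-processing cost are consistent with how the paper treats these issues.
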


\paragraph{Secure Broadcast.} The task of broadcasting a single message $M$ from a source vertex $s$ over a spanning tree is arguably one of the most fundamental communication primitives. We consider the task of secure broadcast in the  setting of \emph{store-and-forward} routing algorithms. In this setting all nodes, except for the source node $s$, can only forward the messages they receive and cannot send messages formed by any combination of the previously received messages. This class of algorithms follows the classical paradigm of message routing, that is used in telecommunications and wireless networks. The security of the broadcast scheme is with respect to an eavesdropper 
adversary who listens to messages sent over at most $k'$ edges in the graph. Our goal is to perform a simple store-and-forward broadcast with a guarantee that the adversary learns nothing -- in the information-theoretic sense -- about the original message $M$.

We show, using the low-diameter tree packing from Claim \ref{cl:basic-dist-tool}, that there exists a simple store-and-forward \emph{secure} broadcast algorithm that w.h.p. runs in $\widetilde{O}((101k\log n )^D)$ rounds. 
Let $\tset=\set{T_1,\ldots, T_{k}}$ be the collection of trees with congestion $c'\log n$ for some constant $c'$ obtained by using Claim \ref{cl:basic-dist-tool} with $\eta=1$. The diameter of these trees is $\widetilde{O}((101k\log n)^{D})$ with high probability. The sender $u$ secret shares the message $M$ to $k$ random shares $M_1,\ldots,M_{k}$ such that $M_1 \oplus \dots \oplus M_{k} = M$. Each share $M_i$ is broadcast on the tree $T_i$. 
The collection of all $k$ shares can be sent in $\widetilde{O}((101k\log n )^D)$ rounds over the trees.
Since all trees $T_1,\ldots, T_k$ are spanning, every vertex $v \in G$ receives the $k$ shares and can compute $M$. However, since each edge appears on at most $c'\log n$ trees, any adversary that listens on $k'=\lfloor k/(c'\log n+1)\rfloor$ edges can learn at most $k-1$ shares, and hence learns nothing on the original message $M$.  
Overall, by Claim 5.2 and the random delay approach of Theorem \ref{thm:delay}, the total round complexity is bounded by $\widetilde{O}((101k\log n)^D)$ with high probability.

\subsection{Useful Lemmas for Bounded Independence}
We first need the notion of $d$-wise independent hash functions as presented in \cite{Vadhan12}.
\begin{definition}[Definition 3.31 in \cite{Vadhan12}]
\label{def: d-wise independent}
For	$N,M,d \in \mathbb{N}$ such that $d \leq N$, a family of functions $\mathcal{H} = \set{h : [N] \rightarrow [M]}$ is $d$-wise independent if for all distinct $x_1,x_2,...,x_d \in [N],$ the
random variables $H(x_1),...,H(x_d)$ are independent and uniformly distributed
in $[M]$ when $H$ is chosen randomly from $\mathcal{H}$.
\end{definition}
Vadhan \cite{Vadhan12} presented an explicit construction of $\mathcal{H}$, with the following parameters.
\begin{lemma}[Corollary 3.34 in \cite{Vadhan12}]
\label{lem: d-wise independent}
For every $\gamma,\beta,d \in \mathbb{N},$ there is a family of $d$-wise independent functions $\mathcal{H}_{\gamma,\beta} = \set{h : \set{0,1}^\gamma \rightarrow \set{0,1}^\beta}$ such that choosing a random function from $\mathcal{H}_{\gamma,\beta}$ takes $d \cdot \max \set{\gamma,\beta}$ random bits, and evaluating
a function from $\mathcal{H}_{\gamma,\beta}$ takes time $\poly(\gamma,\beta,d)$.
\end{lemma}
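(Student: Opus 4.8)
\textbf{Proof proposal for Lemma~\ref{lem: d-wise independent}.}
The plan is to use the classical polynomial construction of $d$-wise independent hash families over a finite field of characteristic $2$. First I would set $m=\max\{\gamma,\beta\}$ and $q=2^m$, fix once and for all an irreducible polynomial of degree $m$ over $\mathbb{F}_2$ so that elements of $\mathbb{F}_q$ are represented as length-$m$ bit strings with $\poly(m)$-time field arithmetic, and fix an injection of $\set{0,1}^\gamma$ into $\mathbb{F}_q$ (e.g.\ pad each $\gamma$-bit string with $m-\gamma$ zeros, which is legitimate since $m\ge\gamma$). For a tuple $\vec a=(a_0,\ldots,a_{d-1})\in\mathbb{F}_q^d$, define $h_{\vec a}\colon\set{0,1}^\gamma\to\set{0,1}^\beta$ by first evaluating the polynomial $\sum_{i=0}^{d-1}a_i x^i\in\mathbb{F}_q$ at the field element $x$ corresponding to the input, and then truncating the resulting length-$m$ bit string to its first $\beta$ bits. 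I would then let $\mathcal H_{\gamma,\beta}=\set{h_{\vec a}:\vec a\in\mathbb{F}_q^d}$, where a uniformly random member is obtained by sampling each $a_i$ independently and uniformly from $\mathbb{F}_q$; this consumes exactly $d\cdot m=d\cdot\max\{\gamma,\beta\}$ random bits, matching the claimed seed length.

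The main step is to verify the $d$-wise independence property of Definition~\ref{def: d-wise independent}. I would fix distinct inputs $x_1,\ldots,x_d$ and identify them with distinct field elements (possible since $d\le N=2^\gamma\le q$), and observe that the evaluation map $\vec a\mapsto(h_{\vec a}(x_1),\ldots,h_{\vec a}(x_d))\in\mathbb{F}_q^d$, taken before truncation, is $\mathbb{F}_q$-linear with matrix the $d\times d$ Vandermonde matrix $(x_j^{\,i})_{1\le j\le d,\,0\le i\le d-1}$, which is invertible precisely because the $x_j$ are distinct. Hence for $\vec a$ uniform on $\mathbb{F}_q^d$ the tuple $(h_{\vec a}(x_1),\ldots,h_{\vec a}(x_d))$ is uniform on $\mathbb{F}_q^d$, i.e.\ these $d$ field values are independent and uniform. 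Truncating each value to its first $\beta$ bits is a coordinatewise map, so independence is preserved; and, viewing $\mathbb{F}_q$ as $\set{0,1}^m$ with the uniform distribution, the projection onto the first $\beta$ coordinates is uniform on $\set{0,1}^\beta$. Thus $H(x_1),\ldots,H(x_d)$ are independent and uniform in $\set{0,1}^\beta$ when $H$ is drawn from $\mathcal H_{\gamma,\beta}$, as required.

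For the running time, evaluating $h_{\vec a}$ reduces to Horner's-rule evaluation of a degree-$(d-1)$ polynomial over $\mathbb{F}_q$, which costs $O(d)$ field additions and multiplications, each $\poly(m)$ time, followed by an $O(m)$-time truncation; the total is $\poly(m,d)=\poly(\gamma,\beta,d)$. I do not expect any genuine obstacle: the only points needing care are the bookkeeping of input/output encodings in the three cases $\gamma<\beta$, $\gamma=\beta$, $\gamma>\beta$, and the routine observations that a truncated uniform field element is a uniform bit string and that truncation preserves independence; invertibility of the Vandermonde matrix over an arbitrary field is classical.
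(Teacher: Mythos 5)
Your proposal is correct and takes essentially the same route as the source the paper relies on: the paper states this lemma by citing Corollary 3.34 of Vadhan's survey without proof, and that corollary is obtained from exactly the polynomial-evaluation construction you give, over $\mathbb{F}_{2^{\max\{\gamma,\beta\}}}$ with the domain padded and the range truncated, with the Vandermonde argument yielding $d$-wise independence and $d\cdot\max\{\gamma,\beta\}$ seed bits. The one step you gloss over --- obtaining an irreducible degree-$\max\{\gamma,\beta\}$ polynomial over $\mathbb{F}_2$ in $\poly(\gamma,\beta)$ time so that field arithmetic is efficient --- is standard (deterministic polynomial-time algorithms exist) and does not affect correctness.
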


We use the following Chernoff bound for $d$-wise independent random variables from \cite{SchmidtSS95}.
\begin{theorem}
\label{thm:d-wise chernoff}
Let $X_1,...,X_n$ be $d$-wise independent random variables taking values in $[0,1]$, where $X = \sum_{i=1}^n X_i$ and $\mathbb{E}[X]=\mu$. Then for all $\epsilon \leq 1$ we have that if $d \leq \lfloor \epsilon^2 \mu e^{-1/3} \rfloor$ then:
$$\Pr[|X- \mu | \geq \epsilon \mu] \leq e^{-\lfloor d/2 \rfloor}.$$
And if $d > \lfloor \epsilon^2 \mu e^{-1/3} \rfloor$ then:
$$\Pr[|X- \mu | \geq \epsilon \mu] \leq e^{-\lfloor \epsilon^2 \mu /3 \rfloor}.$$
\end{theorem}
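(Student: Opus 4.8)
The plan is to prove the bound by the method of bounded moments, which is the standard substitute for the moment–generating–function argument when only limited independence is available. First I would reduce to the case where $d$ is even: since any $d$-wise independent family is also $d'$-wise independent for every $d'\le d$, I may replace $d$ by $d'=2\lfloor d/2\rfloor$, which is precisely why the floor functions already appear in the statement. So assume $d$ is even, set $Y_i = X_i - \mathbb{E}[X_i]$, so that $\mathbb{E}[Y_i]=0$ and $X-\mu=\sum_i Y_i$.

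The heart of the argument is to bound the central moment $\mathbb{E}\big[(X-\mu)^d\big]=\mathbb{E}\big[(\sum_i Y_i)^d\big]$. Expanding by the multinomial theorem yields a sum over $d$-tuples $(i_1,\dots,i_d)$ of terms $\mathbb{E}[Y_{i_1}\cdots Y_{i_d}]$. Each such term involves at most $d$ distinct indices, so $d$-wise independence lets us factor the expectation over the distinct indices into a product of marginal moments of the $Y_i$'s; in particular any term in which some index occurs exactly once contributes $0$, since we can pull that $Y_i$ out (it is independent of the remaining $\le d-1$ factors) and use $\mathbb{E}[Y_i]=0$. Hence only terms in which every index occurs at least twice survive, and the resulting combinatorial sum is identical to the one obtained for fully independent $Y_i$'s. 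Therefore $\mathbb{E}[(X-\mu)^d]$ is at most the $d$-th central moment of a sum of $n$ genuinely independent $[0,1]$-valued variables with the same means, and I may quote (or re-derive) the classical moment bound for such sums, namely a bound of the form $\big(c_1 d\mu\big)^{d/2}$ in the regime $d\lesssim\mu$ and $\big(c_2 d^2\big)^{d/2}$ otherwise, for explicit constants $c_1,c_2$.

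Finally I would apply Markov's inequality to the nonnegative random variable $(X-\mu)^d$: $\Pr[|X-\mu|\ge\epsilon\mu]=\Pr[(X-\mu)^d\ge(\epsilon\mu)^d]\le\mathbb{E}[(X-\mu)^d]/(\epsilon\mu)^d$. Substituting the moment bound, the right-hand side becomes roughly $\big(d/(c\,\epsilon^2\mu)\big)^{d/2}$ in the small-$d$ regime and $\big(d/(c\,\epsilon\mu)\big)^{\Theta(d)}$ in the other. The case split in the theorem is exactly the split according to whether $d\le\lfloor\epsilon^2\mu e^{-1/3}\rfloor$: in the first case the quantity inside the $(d/2)$-th power is at most $e^{-1}$, which yields $e^{-\lfloor d/2\rfloor}$, while in the second case, re-optimizing the order of the moment one uses gives the $e^{-\lfloor\epsilon^2\mu/3\rfloor}$ bound.

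The main obstacle is the third ingredient above: extracting the classical moment bound with constants sharp enough to reproduce exactly the thresholds $e^{-1/3}$, $\lfloor d/2\rfloor$, and $\lfloor\epsilon^2\mu/3\rfloor$. Concretely one must bound, for each $j$, the number of set partitions of $\{1,\dots,d\}$ into $j$ blocks each of size at least $2$, weight this count by $\mu^j$ (using $|Y_i|\le 1$ to bound the higher moments of each $Y_i$ by its variance), sum over $j$, optimize, and then carry the constants carefully through the Markov step — routine in spirit but delicate enough that the cleaner route is to compare the independent version's moment generating function with that of a Poisson random variable of the same mean and read off the moments from there. The parity reduction and the "index-occurs-once" cancellation, by contrast, are immediate.
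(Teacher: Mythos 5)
The paper never proves this statement: Theorem \ref{thm:d-wise chernoff} is imported verbatim from Schmidt, Siegel and Srinivasan \cite{SchmidtSS95}, so there is no in-paper argument to compare yours against. Judged on its own, your outline follows the standard moment-method route, which is also the spirit of the proof in the cited source: reduce to even $d$, expand $\mathbb{E}\bigl[(X-\mu)^d\bigr]$, note that any monomial in which some index occurs exactly once vanishes and that the surviving terms factor exactly as in the fully independent case (both observations are correct, since every monomial involves at most $d$ distinct variables, which are then mutually independent), and finally apply Markov's inequality to $(X-\mu)^d$.

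However, the proposal has a genuine gap exactly where the content of this particular theorem lies. The specific constants --- the threshold $\lfloor\epsilon^2\mu e^{-1/3}\rfloor$ and the exponents $\lfloor d/2\rfloor$ and $\lfloor\epsilon^2\mu/3\rfloor$ --- come entirely from the quantitative moment estimate $\mathbb{E}\bigl[(X-\mu)^d\bigr]\le (c\,d\mu)^{d/2}$ (and its analogue when $d$ is large relative to $\mu$), and you never establish that estimate with an explicit constant; you only gesture at a partition-counting argument or a Poisson comparison and call it ``routine but delicate.'' To see that this is not cosmetic: in the first regime your Markov step gives $\bigl(c\,d/(\epsilon^2\mu)\bigr)^{d/2}$, and to conclude $e^{-\lfloor d/2\rfloor}$ under the hypothesis $d\le\lfloor\epsilon^2\mu e^{-1/3}\rfloor$ you need $c\le e^{-2/3}$, which is sharper than the generic $(O(d\mu+d^2))^{d/2}$ bounds one usually quotes. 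The second regime additionally requires dropping to a lower even moment order $d'\approx\epsilon^2\mu$ (legitimate, since $d$-wise independence implies $d'$-wise independence) and redoing the computation with the same sharp constant, which is also not carried out. So what you have is a correct plan rather than a proof; to match how the paper actually uses the result, the cleanest course is to cite \cite{SchmidtSS95} as the paper does, or else execute their moment/symmetric-polynomial computation in full with the constants tracked.
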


\newpage









\bibliographystyle{alpha}

\bibliography{REF}

\end{document}